\documentclass[11pt]{amsart}
\usepackage[mathscr]{euscript}
\usepackage{amsthm}
\usepackage{amssymb}
\usepackage{hyperref}

\newcommand{\m}[1]{{\uppercase {\mathbf {#1}}}}
\newcommand{\sm}[1]{{\mbox{\scriptsize {\uppercase {\bf {#1}}}}}}
\newcommand{\vr}[1]{{\uppercase {\mathcal{#1}}}}

\newcommand{\clo}[1]{{\rm Clo\:\m #1}}
\newcommand{\clon}[2]{{\rm Clo_{#1}\m #2}}
\newcommand{\pol}[1]{{\rm Pol\:\m #1}}
\newcommand{\poln}[2]{{\rm Pol_{#1}\m #2}}
\newcommand{\con}{{\rm Con\:}}
\newcommand{\cn}[1]{{\con\m {#1}}}
\newcommand{\Cn}[1]{{{\bf\con}\m {#1}}}
\newcommand{\Cg}{{\rm Cg}}
\newcommand{\Sg}{{\rm Sg}}

\def\Dj{\mbox{\raise0.3ex\hbox{-}\kern-0.4em D}}

\newcommand{\lb}{\langle}
\newcommand{\rb}{\rangle}
\newcommand{\mt}{\wedge}
\newcommand{\usub}{\subseteq}
\newcommand{\subp}{{\sf {SP}}}
\newcommand{\sub}{{\sf {S}}}

\newtheorem{thm}{Theorem}[section]
\newtheorem{lm}[thm]{Lemma}
\newtheorem{prp}[thm]{Proposition}
\newtheorem{cor}[thm]{Corollary}
\theoremstyle{definition}

\newtheorem{prb}[thm]{Problem}

\newtheorem{exmpl}[thm]{Example}

\newtheorem{df}[thm]{Definition}
\newtheorem{claim}[thm]{\sc Claim}

\begin{document}

\title[SMB algebras II]{SMB algebras II: On the Constraint Satisfaction Problem over Semilattices of Mal'cev Blocks}
\author[P. Markovi\'c]{Petar Markovi\'c}
\address{Department of Mathematics and Informatics\\ 
University of Novi Sad\\ Serbia}
\email[Petar Markovi\'c]{pera@dmi.uns.ac.rs}
\author[M. Mar\'oti]{Mikl\'os Mar\'oti}
\address{Bolyai Institute\\
University of Szeged, Hungary}
\email[Mikl\'os Mar\'oti]{mmaroti@math.u-szeged.hu}
\author[R. McKenzie]{Ralph McKenzie}
\address{Department of Mathematics\\ 
Vanderbilt University\\ Nashville, TN 37240, USA}
\email[Ralph McKenzie]{ralph.n.mckenzie@vanderbilt.edu}
\author[A. Proki\'c]{Aleksandar Proki\'{c}}
\address{Faculty of Technical Sciences\\ 
University of Novi Sad\\ Serbia}
\email{aprokic@uns.ac.rs}
\thanks{
Petar Markovi\'c was supported by the Ministry of Education, Science and Technological Development of the Republic of Serbia (Grants No. 451-03-66/2024-03/200125 and 451-03-65/2024-03/200125) and by the Science Fund of the Republic of Serbia grant no. 6062228. Mikl\'os Mar\'oti's research was partially supported by Project no TKP2021-NVA-09 financed by the Ministry of Culture and Innovation of Hungary from the National Research, Development and Innovation Fund. Aleksandar Proki\'{c}'s research was supported by the Ministry of Science, Technological Development and Innovation (Contract No. 451-03-65/2024-03/200156)
}
\keywords{Constraint Satisfaction Problem, semilattices of Mal'cev blocks, computational complexity, Dichotomy Theorem}

\begin{abstract}
We define a class of algebras, the semilattices of Mal'cev blocks (for short, SMB algebras). In a nutshell, these algebras are semilattices in which each element gets blown up into a Mal'cev algebra. We publish for the first time our old proofs that some SMB algebras induce tractable templates of the reprove that the Constraint Satisfaction Problem. Next, we reprove that, in fact, all SMB algebras induce tractable templates of the Constraint Satisfaction Problem, a result already proved by A. Bulatov. Also, we compare the two general proofs of the CSP Dichotomy and prove they are more similar than initially thought when they are applied to SMB algebras. This paper is the second in the series of papers investigating the SMB algebras and it is a precursor to our further research on the similarities between the proofs of the Dichotomy Theorem.
\end{abstract}

\maketitle

\section{Introduction, history and motivation}

This paper concerns the computational complexity of the {\em Constraint Satisfaction Problem} (CSP). CSP is a decision problem, and it has several equivalent formulations. The versions we will consider in this paper (the variable-value version and the multisorted version) are defined in Section 2. We prove that certain algebraic restrictions guarantee that the problem is tractable.

\subsection{History} 

The interest in CSP has its roots in Descriptive Complexity. In the seminal paper by Feder and Vardi \cite{FV}, the complexity of the fixed-template CSP was proved to be computationally equivalent up to random-P reductions to the class of model-checking problems defined by monotone monadic SNP formulae without inequalities (MMSNP). Later the reduction was derandomized by Kun \cite{K}, so the two classes of problems are computationally equivalent (they traverse the same complexity classes). Feder and Vardi, also in \cite{FV}, conjectured that the CSP can have only two complexities, tractable and NP-complete. Since the reduction by Kun, this Dichotomy Conjecture also implies that model-checking the MMSNP formulae can have only these two complexities.

In \cite{J}, Jeavons proved that the complexity of the CSP depends on the algebra of compatible operations (polymorphisms) of the template model. Subsequently, this was sharpened by Bulatov, Jeavons and Krokhin \cite{BJK}, who proved that, in fact, the pseudovariety generated by the algebra of idempotent polymorphisms of the core controls the complexity. In the same paper, they also proved that if there is no polymorphism which satisfies a set of Taylor equations, then the problem is NP-complete and conjectured that in the converse situation, CSP is tractable. This second conjecture, which would not only imply the Dichotomy Conjecture, but would delineate the bound between P and NP-complete cases, was known as the Algebraic Dichotomy Conjecture. The Algebraic Dichotomy Conjecture was finally settled positively, independently by Bulatov (\cite{Buconf} and \cite{Budich}) and by Zhuk (\cite{Zhconf} and \cite{Zhdich}), thus also confirming the original Dichotomy Conjecture.

Though the Dichotomy Conjecture is resolved, both proofs are very complicated, and it is desirable to find simplifications. This is not just for its own sake, the CSP Dichotomy could lead to major generalizations (e.g. finding ``Logic for $P$'' in Descriptive Complexity), if only its proof was simpler and more versatile. For a more detailed list of other possible applications of CSP, see, e.g. the final section of \cite{Bulect}.

In this paper we will look at a special case of finite algebras with a Taylor operation, equivalently, a weak near-unanimity operation (see \cite{MM}). The first and the third author invented these algebras, which we call the SMB algebras, in an unpublished note \cite{RP}. Our hope at the time was to prove CSP over SMB algebras is tractable, and then to generalize from SMB algebras to Taylor algebras. We were only partially successful in \cite{RP}, proving the tractability in the case when the underlying semilattice is either linearly ordered, or a flat semilattice. The second author, in another unpublished note \cite{M2}, improved to the case when the underlying semilattice is ordered as a rooted tree. In the present paper we unite these two notes and publish them for the first time. Subsequently, Bulatov managed to prove in \cite{BuSMB} that CSP over any SMB algebra is tractable, and shortly later used these ideas to prove the Dichotomy Theorem. We fix a gap we discovered in his proof is Sections 5 and 6 in two different ways.

\subsection{Motivation} 

The reason for our focus on SMB algebras is that they exhibit several ``bad'' properties, thus being in the ``worst case'' (in the sense of Tame Congruence Theory) of algebras with a Taylor operation. On the other hand, Barto and Kozik's famous Absorption Theorem, a major structural property of binary relations compatible with Taylor algebras, is much easier to prove in SMB algebras than in the general Taylor case. Moreover, ``binary absorption'', one of the main cases considered by Zhuk in his proof of the Dichotomy, trivially appears all over SMB algebras. So SMB algebras seem to be a good special case to consider before trying to prove any result about Taylor algebras. In a companion paper \cite{SMB1} (authored by the first, third and fourth author of this paper, and P. \Dj api\'{c}) we prove some universal algebraic results about SMB algebras and some properties of compatible relations which may prove useful for our overall plan.

A few words about relation with other results. The algorithm of Theorem~\ref{lintract} is inspired by, and generalizes, a special case considered in \cite{3el}. On the other hand, our old results had applications later on, as follows: Of our two main results in Section 3, Theorem~\ref{lintract} was applied in \cite{M2} to solve the CSP over SMB algebras whose $\sim$-classes are tree-ordered, see Corollary~\ref{treetract}. More importantly, the proof of Theorem~\ref{flattract} contains the idea of strands which later evolved into A. Bulatov's coherent sets (see Definition~\ref{coherentdef}), and also a primitive application of link partitions to the strands was developed by A. Bulatov in a much deeper way in his proofs of tractability of SMB algebras and of the Dichotomy Conjecture. As for Section 4 which contains the results from the previously unpublished note \cite{M2}, this section culminates in Corollary~\ref{SMBeliminated} and its application, Theorem~\ref{M-irrweaker} (and its stronger variant, Theorem~\ref{M-irrdesired}, which we prove in Section 5). All of those are key parts of the proof of tractability of SMB algebras, and the consistent maps are also used by A. Bulatov for the proof of the Dichotomy Conjecture.

We draw inspiration for defining the SMB algebras from Tame Congruence Theory as developed in the monograph \cite{tct} and also from a series of results of Bulatov, in papers \cite{Bulatovcons}, \cite{BulatovGraph1}, \cite{BulatovGraph2} and \cite{BulatovGraph3}. 

Namely, if we assume that every two element subset is a subalgebra, the paper \cite{Bulatovcons} has distinguished between three types of behavior of the polymorphisms on the two-element subsets of $A$, the semilattice, the near-unanimity and the minority (Mal'cev). In subsequent papers  \cite{BulatovGraph1}, \cite{BulatovGraph2} and \cite{BulatovGraph3}, Bulatov developed a theory of two-element subsets even without the assumption that each such subset is a subalgebra, where he considers two-element subsets of factor algebras (``thick edges") which separate points and which help him classify local behavior of finite algebras in a similar way as Tame Congruence Theory. In \cite{BulatovGraph3}, he proved that, if there are no Mal'cev edges in the resulting graph of the algebra, then the corresponding CSP has {\em bounded width} and therefore it is tractable, according to \cite{bk} and/or \cite{barto-collapsewidth}. Also in \cite{BulatovGraph3}, A. Bulatov proved that, if there are no semilattice edges, then the corresponding CSP has {\em few subpowers} (see \cite{BIMMVW}) and is therefore tractable by \cite{IMMVW}.

As for Tame Congruence Theory, we know that if an idempotent algebra with a weak near-unanimity operation has no Mal'cev factors of subalgebras, then the CSP has bounded width, while the absence of the semilattice type, together with a technical condition (the absence of tails in the minimal sets) means that the algebra of polymorphisms generates a congruence modular variety, and this implies that the corresponding CSP has few subpowers, according to \cite{Barto-Valerioteconj}. 

Hence, in both Bulatov's and Tame Congruence Theoretic approaches, absence of either the semilattice or Mal'cev case implies the tractability of the corresponding CSP is easy, either always, or in a major subcase. So, we need to consider algebras which exhibit both the semilattice and Mal'cev local behavior. In that direction, the second author proved in \cite{MMontop} that if there is a congruence such that its quotient algebra has a Mal'cev operation, while each congruence class, viewed as a subalgebra, contains no Mal'cev factor of a subalgebra, then the CSP is tractable. So the natural case to consider are algebras which have a congruence modulo which the algebra is a semilattice, while each block of that congruence is a Mal'cev algebra. Add a condition that ensures the operations don't interfere with each other, and we get SMB algebras.

\subsection{Overview}

The paper is organized as follows: Section 2 deals with necessary background and defines the notation. Section 3 defines the class of algebras we consider, semilattices of Mal'cev blocks (SMB algebras), recalls the results on SMB algebras we obtained in the companion paper \cite{SMB1} and solves the CSP in two subclasses, when the underlying semilattice is linearly ordered, or flat. The results of Section 3 were initially written up in the unpublished note \cite{RP}. Section 4 solves the CSP over SMB algebras which are tree-ordered. The results of Section 4 were initially written up in the unpublished note \cite{M2}. After writing up these old results, Section 4 ends with the statement of Theorem~\ref{M-irrdesired}, whose proof in A. Bulatov's paper \cite{BuSMB} had a gap and Theorem~\ref{M-irrweaker}, a weaker form of Theorem~\ref{M-irrdesired}, which is what the arguments up to that point prove. Section 5 recalls some details from the proofs of the CSP Dichotomy by Zhuk and uses these to plug the gap in Theorem~\ref{M-irrdesired}. Our proof has the drawback that it uses the full power of the proof of a more general result to fix a gap in a less general result. In Section 6 we fix Bulatov's proof of tractability of CSP over SMB algebras in a way that is fully independent of other results. Namely, Theorem~\ref{M-irrweaker} is shown to be sufficient for making Bulatov's original proof work, once one has restricted the scope of subinstances to just the maximal-sized non-Mal'cev domains of variables. The non-Malcev smaller domains and the Mal'cev domains of variables can be dealt with by a small restatement of some of Bulatov's definitions. Section 7 contains concluding remarks and open problems.

In Sections 3 and 4 we left our original unpublished notes \cite{RP} and \cite{MMontop} virtually unchanged. The only mathematical changes we made in light of the new results was to change the definition of the class of algebras we look at in Theorem~\ref{lintract}, from a more special class we initially considered to all SMB algebras. The proof is unchanged by this, it works exactly the same way. We also added some corollaries near the end of Section 4 to better explain how those old results can be used.

\section{Background and Notation}

\subsection{Universal Algebra}

We assume that the reader is familiar with the basics of Universal Algebra. The readers who need these facts and definitions are referred to classic textbooks \cite{burris-sank} and \cite{alvin}. Moreover, we need some basic results of Tame Congruence Theory, an advanced theory in Universal Algebra which was developed in \cite{tct}.

Following \cite{alvin}, we use the notation $\clo a$ and $\clon{n}{a}$ for the clone of term operations and the set of $n$-ary term operations of the algebra $\m a$, respectively. Also, $\pol a$ and $\poln{n}{a}$ denote the clone of polynomial operations and the set of $n$-ary polynomial operations of the algebra $\m a$, respectively.

An operation $f$ of an algebra $\m a$ is said to be idempotent if the identity $f(x,x,\ldots,x)\approx x$ holds in $\m a$. An algebra is idempotent if all of its fundamental operations (equivalently, term operations) are idempotent. We will assume all algebras are idempotent and finite in this paper. Furthermore, we will introduce one more restriction on the algebras we consider and one construction on algebras, which we will use at will. The reason is that we are interested in application of algebras to the Constraint Satisfaction Problem, so any constructions which make the Constraint Satisfaction Problem no easier will be allowed.

The third restriction on the algebras under consideration is that we are interested in algebras which generate varieties that omit type {\bf 1} covers (in the language of Tame Congruence Theory). Equivalently, according to \cite{MM}, we may assume for any algebra $\m a$ that there exists a term $w$ such that the identities
$$w(x,x,\ldots,x,y)\approx w(x,x,\ldots,x,y,x)\approx \ldots \approx w(y,x,x,\ldots,x)$$
hold in $\m a$. Those identities, plus idempotence (which we need not assume again) make $w$ a {\em weak near-unanimity} term of $\m a$, or a wnu term for short. We introduce the notation $x\circ_w y$ for the binary term operation $w(x,x,\ldots,x,y)$. A wnu term $w$ of $\m a$ is {\em special} if $\m a$ satisfies the additional identity $x\circ_w(x\circ_w y)\approx x\circ_w y$. Any finite algebra which has a wnu term must also have a special wnu term obtained by iterated composition of the original wnu term (cf. \cite[Lemma 4.7]{MM}).

When we have a finite idempotent algebra $\m a$ with a wnu term $w$ and $\m a'$ is another algebra on the same set with a wnu term such that each fundamental operation of $\m a'$ is a term operation of the algebra $\m a$, we may start considering $\m a'$ instead of $\m a$ when this suits our purpose. We will justify this in the next subsection.

Finally, if $\m a$ is an algebra, $p\in\poln{1}{a}$ and $p(p(x))=p(x)$ for all $x\in A$, then the polynomial $p$ is called a retraction. The retraction $p$ induces an algebra $p(\m a)$ on the set $p(A)$ with the same similarity type as $\m a$ in the following way: If $f$ is an $n$-ary operation symbol then $f^{p(\sm a)}(a_1,\dots,a_n) = p(f^{\sm a}(a_1,\dots,a_n))$. It is easy to verify that, if $\m a$ is idempotent, finite and/or has a wnu term, then $p(\m a)$ has the same properties. The algebra $p(\m a)$ is called a {\em retract} of $\m a$. We will use retraction in Section 4, but it is not a construction which reduces to an equivalent, or a more difficult, problem, like all the previous ones. We will have to prove under which conditions we can work with retracts and what can we conclude about the original instance when we solve the instance over its retracts.

We proceed with an introduction to the Constraint Satisfaction Problem to justify these three restrictions and the constructions.

\subsection{CSP} The Constraint Satisfaction Problem (CSP for short) has several equivalent definitions, each offering slightly different language to express the same thing. We find it most convenient to define it similarly as in \cite{barto-collapsewidth}.

\begin{df}\label{constraintlanguage}
A relation on the set $A$ is a subset of $A^n$ for some positive integer $n$. Here $n$ is the {\em arity} of that relation. By a {\em constraint language} we mean a set $\Gamma$ of relations (of any arities) on the same nonvoid base set $A$.
\end{df}

\begin{df}\label{csp}
Given a constraint language $\Gamma$ on the set $A$, we define an instance of $CSP(\Gamma)$ as any ordered triple $(V,A,\vr c)$ where $V$ is called the set of variables, while $\vr c$ is a set whose elements are called constraints. Each constraint is an ordered pair $(S,R)$, where $S\subseteq V$ is the constraint scope, while $R\subseteq A^S$ is such that there exist an integer $n$ and a surjective mapping $\varphi:n\rightarrow S$ such that $R\circ \varphi\in\Gamma$. Here $R\circ \varphi =\{g\circ \varphi:g\in R\}$. $R$ is the constraint relation of the constraint $(S,R)$. A mapping $f:V\rightarrow A$ is a solution to the instance $(V,A,\vr c)$ of $CSP(\Gamma)$ if for every constraint $(R,S)\in \vr c$, $f|_S\in R$.
\end{df}

In Section 3, we will assume that the variable set $V$ is $n=\{0,1,\dots,n-1\}$, to have a linear order on the variables. We will sometimes write $[n]$ instead of $n$ to remind the reader that $n$ is a set, with elements, subsets etc.

We may assume, without loss of generality, that different constraints have different scopes. This is because we may intersect all constraint relations with the same scope and replace all of those constraints with the single constraint. The way we defined $CSP(\Gamma)$ allows us to also assume, without loss of generality, that $\Gamma$ is closed under intersection, and also under permutation and identification of coordinates. Otherwise, our definition is the same as in \cite{barto-collapsewidth}.

\begin{df}\label{klminimaldef}
Let $P$ be an instance of CSP$(\Gamma)$. We say that $P$ is $(k,l)$-minimal if
\begin{itemize}
\item for any $S\usub V$, $|S|\leq l$, there is precisely one $i$ such that $S_i=S$ ($l$-density) and
\item whenever $(S,R)$ and $(S',R')$ are constraints such that $S'\usub S$ and $|S'|=k$, then $R'=R|_{S'}$ ($k$-consistency).
\end{itemize}
\end{df}

By using the $(k,l)$-minimality algorithm (see e.g. \cite{barto-collapsewidth}), for any given fixed numbers $k\leq l$ we can transform, in polynomial time, a given instance of $CSP(\Gamma)$ into an equivalent $(k,l)$-minimal instance. Note that the second condition implies that each constraint relation $R$ is a subdirect product of $R_j$, $j\in S$, where $(\{j\},R_j)$ are constraints in $\vr c$. Whenever an instance is at least $(1,1)$-minimal, $R_j$ will be fixed and we will denote it by $A_j$.

For the last three decades or so, the main focus in the investigation of computational complexity of the Constraint Satisfaction Problem was the Dichotomy Conjecture of Feder and Vardi, stated in \cite{FV}, which said that the complexity of $CSP(\Gamma)$ can be either tractable, or NP-complete. Our paper consists of old never-published results, which were partial results aimed at proving the Dichotomy Conjecture and influenced Bulatov's proof \cite{Budich}, and some new results which compare the two proofs of the Dichotomy Conjecture and suggest a direction for their simplification. Note that Feder and Vardi postulated the conjecture only when $\Gamma$ is finite, but we do not make that requirement here. On the other hand, we do assume that the domain $A$ is finite throughout this paper. According to \cite{J}, the complexity of the $CSP(\Gamma)$ depends on the compatible operations (polymorphisms) of the relational structure $(A,\Gamma)$. Therefore, it suffices to verify the conjecture for $\Gamma=\subp_{fin}(\m a)$ for some finite algebra $\m a$. More precisely (to conform to our definitions) we may assume that $\Gamma=\bigcup\limits_{n=1}^\infty\sub(\m a^n)$ and we will denote such $\Gamma$ by $\Gamma(\m a)$. The impact of this assumption on $\Gamma$ is that now $\Gamma$ contains all full finite powers of $A$ and that $\Gamma$ is closed under intersections and products (along with permutations and identifications of variables assumed earlier), thus $\Gamma$ is a {\em relational clone}.

This justifies our construction where we move from an algebra $\m a$ to one of its term reducts $\m a'$. Namely, any relation compatible with all operations of $\m a$ is compatible with all term operations of $\m a'$, as well. Therefore, any instance of $CSP(\m a)$ is an instance of $CSP(\m a')$, so if we can solve any instance of $CSP(\m a')$ in polynomial time, we can do the same with instances of $CSP(\m a)$.

We may assume that the relational structure $(A,\Gamma)$ has no endomorphisms except for automorphisms. Namely, if $\varphi$ is an endomorphism which maps $A$ onto a proper subset, then the set of instances of $CSP(\Gamma)$ which have a solution is precisely the same as the set of instances of $CSP(\Gamma|_{\varphi(A)})$ on the domain $\varphi(A)$ which has a solution (in the nontrivial direction, just compose the solution of $CSP(\Gamma)$ with the endomorphism). Such relational structures for which all endomorphisms are automorphisms are called {\em cores}.

For $(A,\Gamma)$ a core, the complexity of $CSP(\Gamma)$ equals the complexity of $CSP(\Gamma^c)$, which is $\Gamma$ augmented with all one-element unary relations. If $\Gamma=\Gamma(\m a)$, then $\Gamma^c=\Gamma(\m a^{id})$, where $\m a^{id}$ is the idempotent reduct of $\m a$, i.e. $\m a^{id}$ is an algebra whose operations are all idempotent term operations in $\clo{a}$. So, we justified the focus on idempotent finite algebras and assume from now on that all algebras are such. For more details about the reductions in this and the previous paragraph, cf. Theorems 4.4 and 4.7 of \cite{BJK}.

As we mentioned in the Introduction, if $\m a$ has no wnu term then, according to \cite{BJK} and \cite{MM}, $CSP(\Gamma(\m a))$ is NP-complete. In \cite{BJK} it is conjectured that in the converse case the $CSP(\Gamma(\m a))$ is in P. A proof of this, the Algebraic Dichotomy Conjecture, in the papers \cite{Zhdich} and \cite{Budich} thus also confirmed the original Dichotomy Conjecture by Feder and Vardi.

In our paper we will use, as a black box, that $CSP(\Gamma(\m a))$ is tractable when $\m a$ is an algebra with Mal'cev term $d(x,y,z)$, i.e. when the identities $d(x,x,y)\approx y\approx d(y,x,x)$ hold in $\m a$. The proof of this was published in \cite{BulatovMalcev} and a shorter one in \cite{BulatovDalmau}.

\subsection{Multisorted CSP and templates} Already when we reduce an instance of $CSP(\m a)$ to a $(k,l)$-minimal one, we introduced subuniverses $A_i$ of $\m a$ for $1\leq i\leq n$. Instead of $R\leq A^S$, we can consider $R$ as a subdirect product of $\{\m a_i:i\in S\}$. For some reductions we will also need $A_i$ to be homomorphic images of $\m a$ (or of its subalgebra), or a retract of $\m a$ (or of its subalgebra).

Now we define a {\em template} of CSP.

\begin{df}\label{CSPtemplate}
A class $\vr t$ of isomorphism types of similar finite algebras is called a CSP template if $\vr t$ is closed under homomorphic images, subalgebras and unary polynomial retracts. 
\end{df}

Note that we do not allow isomorphic algebras to appear more than once in a template, which is why we speak of isomorphism types, rather than algebras themselves. The reason is, if the language of a finite algebra $\m a$ is finite, then there are only finitely many algebras, up to isomorphism, which can be obtained from $\m a$ by taking subalgebras, homomorphic images and retracts. This will allow us, in Section 4, to reduce a multisorted CSP instance by making the template a smaller set of isomorphism types. Now we define the multisorted CSP instance.

\begin{df}\label{multisortedCSP}
Given a template $\vr t$, a (multisorted) instance $P$ of $CSP(\vr t)$ is the triple $(V, D, \vr c)$. Here $D=\{\m a_i:i\in V\}$, where each $\m a_i$ is isomorphic to some algebra in $\vr t$, is the tuple of domains, while $\vr c$ is the set of constraints. Each constraint is an ordered pair $(S,R)$, where $S\subseteq V$, while $R\leq \prod\limits_{i\in S}\m a_i$ is a subdirect product. A mapping $f\in\prod\limits_{i\in V}A_i$ is a solution to the instance $(V,D,\vr c)$ of $CSP(\vr t)$ if for every constraint $(S,R)\in \vr c$, $f|_S\in R$.
\end{df}

Of course, we can take as the template $\vr t(\m a)$, the set of isomorphism types of all finite algebras which can be obtained from a finite algebra $\m a$ by taking homomorphic images, subalgebras and retracts. Then each $(1,1)$-minimal instance of $CSP(\m a)$ is an instance of $CSP(\vr t(\m a))$. Moreover, if $\m a$ has a finite language, then $\vr t(\m a)$ is finite, as we said above.

A construction used repeatedly in the $(k,n)$-minimality algorithm, and also in many procedures in our paper, is the {\em tightening} of an instance $P=(V,D,\vr c)$ of a template $\vr t$. If $D=\{\m A_i:i\in V\}$ and $B_i\subseteq A_i$, we say that the instance $P'=(V,D',\vr c')$ is the tightening of $P$ to $D'=\{B_i:i\in V\}$ if $\vr c'=\{(S,R'):(S,R)\in \vr c\}$ and for each $(S,R)\in \vr c$, $R'=R\cap\prod\limits_{i\in S}B_i$. To ensure that $P'$ is also an instance of $\vr t$, it suffices for all $i\in V$ to have $B_i$ as a subuniverse of $\m a_i$, and indeed, this is what we will always ensure whenever we tighten an instance. Moreover, for any tightening $P'$ of an instance $P$ which will be used in this paper, we will prove that $P'$ has a solution iff $P$ does and in such a case we will say that $P$ {\em can be tightened}.

Another construction which we will repeatedly use is be the {\em restriction} $P|_W$ of an instance $P=(V,D,\vr c)$ to a subset $W\subseteq V$. $P|_W$ is the instance $(W,D',\vr c')$, where $D'=\{\m a_i:i\in W\}$, while $\vr c'=\{(S',R'):(S,R)\in \vr c\}$, where $S'=S\cap W$, while $R'=R\cap\prod\limits_{i\in S'}A_i$. Bear in mind that for several $(S_i,R_i)\in\vr c$, the intersection of $S_i$ with $W$ can be the same, so we will actually silently take the intersection of all corresponding $R_i$ to make $R'$, but this detail will never cause problems so we simply gloss over it.

Another way of encoding a CSP instance was used by Zhuk in his proof of the Dichotomy Conjecture. Take a multisorted instance $(V,D,\vr c)$ over the template $\vr t$. We construct two hypergraphs, one over the set of vertices $V$ (the constraints graph), and the other over the union of the domains $\bigcup D=\bigcup\{A_i:i\in V\}$ (the microstructure graph). For each constraint $(S,R)$ in $\vr c$, we add one hyperedge $S$ in the constraints graph, and for each tuple $f:S\rightarrow\bigcup D$ in $R$ we add a hyperedge $\{f(s):s\in S\}$ into the microstructure graph. Each hyperedge of the microstructure graph intersects each $A_i$ in one point or not at all (depending on whether $i\in S$).

\subsection{Tame Congruence Theory} We will work on $CSP(\m a)$ for certain $\m a$ which have a wnu term. The tame congruence theory analyzes finite algebras according to local behavior of the polynomial operations of the algebra. The polynomial operations are term operations in which some of the variables may have been substituted by fixed constants. The tame congruence theory classifies the covers in the congruence lattice of a finite algebra by first finding a minimal image of an unary polynomial which distinguishes the two congruences (this is an $(\alpha,\beta)$-minimal set). The unary polynomial whose image is the minimal set can be assumed to be idempotent, and if $\alpha\prec \beta$ in the congruence lattice $\cn a$, the set of all $(\alpha,\beta)$-minimal sets is denoted by $M_{\m a}(\alpha,\beta)$. The structure of the minimal set together with the polynomial operations of the algebra which are compatible with that minimal set depend only on the congruences which constitute the cover. It may be of five types, unary (type {\bf 1}), affine (type {\bf 2}), Boolean (type {\bf 3}), lattice (type {\bf 4}) and semilattice (type {\bf 5}). We will write $\alpha\prec_i\beta$ when the congruences $\alpha$ and $\beta$ of a finite algebra form a covering pair in the congruence lattice, and the type of that cover is $i$. The absence of the unary type not only in the finite algebra $\m a$, but also in any finite algebra in the variety $\m a$ generates, is equivalent to the existence of the wnu term in $\m a$. Thus we may assume that 
o type {\bf 1} cover occurs.

If $\alpha\prec\beta$ in $\Cn a$, and the type of that cover is {\bf 5}, then in any minimal set $U$, one $\beta$-class restricts to $U$ as $B$, which intersects exactly two $\alpha$-classes, while $\alpha$ and $\beta$ restrict the same way to $U\setminus B$. We call $B$ the body of $U$. The polynomials of $\m a$ restricted to $B/(\alpha|_B)$ are those of a two-element semilattice. Moreover, for any $\beta$-class, the transitive closure of the semilattice orders coming from all bodies of minimal sets is a connected partial order $\leq$ of $\alpha$-classes inside that $\beta$-class, and $\leq$ is compatible with $\m a/\alpha$ (i.e. a subuniverse of $(\m a/\alpha)^2$).

We proved in the companion paper \cite{SMB1} that when $\alpha\prec_2\beta\prec_5\gamma$ in the congruence lattice of a finite algebra with a Taylor (equivalently, wnu) term, then we can find a subalgebra of a term reduct of $\m a/\alpha$ which has a Taylor term and an interesting structure. The subalgebra is contained in one $\gamma$-class and consists of more than one $\beta$-class. We called such algebras semilattices of Mal'cev blocks, SMB algebras for short, and we will define them in the next section.

\section{Semilattices of Mal'cev Blocks}

We begin by defining the main object of study of this paper and recalling several results about it from \cite{SMB1}.

\begin{df}\label{SMBdef}
Let $\m A=(A; \mt ,d)$ be an idempotent algebra and ${\sim}\in\cn a$. We say that $\m a$ is a semilattice of Mal'cev blocks with respect to ${\sim}$, SMB algebra over ${\sim}$ for short, if
\begin{enumerate}
\item $(A/{\sim};\mt^{\sm a/{\sim}})$ is a semilattice and
\item on each ${\sim}$-class $D$, the operation $\mt|_{D}$ acts as the first projection, while $d|_{D}$ acts as a Mal'cev operation.
\end{enumerate}
We say that $\m A=(A; \mt ,d)$ is a semilattice of Mal'cev blocks, SMB algebra for short, if $\m A$ is an idempotent algebra such that there exists a congruence ${\sim} \in \cn A$ so that $\m a$ is an SMB algebra over ${\sim}$. We denote the class of all SMB algebras by $\vr S$.
\end{df}

We note in passing that we changed the condition $(2)$ from paper \cite{SMB1}, in there $\mt$ acted as the second projection on all $\sim$-classes. We did it to be more in line with A. Bulatov's paper \cite{BuSMB}, which will be very important to us in Section 6. On the other hand, we keep having $\m a/{\sim}$ as meet-semilattices, like in our paper \cite{SMB1}, while A. Bulatov uses join-semilattices in \cite{BuSMB}.

\begin{thm}[Theorem 15 and Proposition 17 of \cite{SMB1}]\label{SMBvar}
The class $\vr s$ of all SMB algebras is a variety with a Taylor term.
\end{thm}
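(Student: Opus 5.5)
The plan is to give an explicit equational axiomatization of $\vr s$, so that being a variety is automatic, and then to exhibit a Taylor term directly from the identities. The key observation is that the congruence ${\sim}$ is term-definable. In an SMB algebra over ${\sim}$ we have $a\sim b$ if and only if $a\mt b=a$ and $b\mt a=b$. For the forward direction, $\mt$ is the first projection on blocks. For the converse, the two equations give $a/{\sim}\le b/{\sim}$ and $b/{\sim}\le a/{\sim}$ in the meet-semilattice $\m a/{\sim}$. Moreover, for arbitrary $x,y$ the elements $x\mt y$ and $y\mt x$ always lie in the same ${\sim}$-block, since $\m a/{\sim}$ is commutative. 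Every pair $(a,b)$ in one block has the form $(x\mt y,y\mt x)$ with $x=a$, $y=b$. So any property ``for all $a\sim b$'' can be encoded by substituting $a:=x\mt y$ and $b:=y\mt x$.

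I would take $\Sigma$ to consist of the following identities:
\begin{itemize}
\item idempotence of $\mt$ and $d$;
\item $(x\mt y)\mt(y\mt x)\approx x\mt y$, which makes $\mt$ the first projection on blocks;
\item the Mal'cev identities on blocks, $d(x\mt y,x\mt y,y\mt x)\approx y\mt x$ and $d(x\mt y,y\mt x,y\mt x)\approx x\mt y$;
\item identities expressing that, modulo the defined relation, $\mt$ is associative and commutative, e.g.\ $(x\mt(y\mt z))\mt((x\mt y)\mt z)\approx x\mt(y\mt z)$ together with its mirror image;
\item identities expressing that the defined relation is compatible with $\mt$ and $d$, e.g.\ $((x\mt y)\mt z)\mt((y\mt x)\mt z)\approx (x\mt y)\mt z$ and similar ones in the other coordinates, and for $d$ that $d(x,y,z)$ lies in the block of $x\mt y\mt z$. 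The last one holds in SMB algebras because $d$ is idempotent and ${\sim}$ is a semilattice congruence, so $d(x,y,z)/{\sim}$ equals $d(x/{\sim},y/{\sim},z/{\sim})$, which is $x\mt y\mt z$ modulo ${\sim}$.
\end{itemize}
Checking that every SMB algebra satisfies $\Sigma$ is routine. The converse is the substantive direction. Given a model of $\Sigma$, define ${\sim}$ by $a\sim b$ iff $a\mt b=a$ and $b\mt a=b$. Then I need to show four things:
\begin{itemize}
\item ${\sim}$ is an equivalence relation;
\item ${\sim}$ is a congruence;
\item the quotient is a semilattice;
\item $d$ is Mal'cev on each block.
\end{itemize}
The last follows from the substitution trick. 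The first three are where the exact choice of identities in $\Sigma$ matters. In particular, transitivity of ${\sim}$ needs an identity allowing one to cancel through a middle element, and I would add whatever identity the attempted derivation requires, checking each such identity against the SMB definition.

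For the Taylor term I would look for a ternary weak near-unanimity term, working modulo ${\sim}$ and within blocks. Modulo ${\sim}$, any term whose value is ${\sim}$-equivalent to $x\mt y\mt z$ behaves like a WNU term. Within a block, a Mal'cev operation gives the minority-like term $d(x,y,z)$, which satisfies $d(y,x,x)\approx y$ and $d(x,x,y)\approx y$ but not the middle WNU identity. So I would rather aim for Siggers-type identities or a composite such as
\[
t(x,y,z)=d\bigl((x\mt y)\mt z,\ (y\mt z)\mt x,\ (z\mt x)\mt y\bigr)
\]
and variations, all of whose arguments lie in the block of $x\mt y\mt z$.

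If no explicit term falls out, the fallback is the semantic route. $\vr s$ is an idempotent variety, and by Taylor's theorem it has a Taylor term unless some algebra in it has a two-element quotient of a subalgebra whose term operations are only projections. A two-element member of $\vr s$ has either a single ${\sim}$-block, in which case $d$ is a Mal'cev operation, or two blocks, in which case $\mt$ is a semilattice operation. In neither case are all operations projections. Making this rigorous requires the version of Taylor's criterion for arbitrary idempotent varieties rather than just locally finite ones, so it is cleaner to find the explicit term.

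I expect the main obstacle to be the converse half of the axiomatization: choosing finitely many identities that force ${\sim}$ to be transitive and a congruence. Finding the Taylor term is the secondary difficulty, and I would first try verifying a WNU identity for compositions of $d$ with the three rotated meets.
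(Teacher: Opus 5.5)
The gap is in the part that carries the theorem: that $\vr s$ is closed under homomorphic images, or equivalently that your $\Sigma$ axiomatizes exactly $\vr s$. There are two concrete defects.

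\emph{A false axiom.} Your identity saying that $d(x,y,z)$ lies in the ${\sim}$-block of $x\wedge y\wedge z$ does not hold in SMB algebras. The definition constrains $d$ only inside blocks, so $d^{\m a/{\sim}}$ can be any idempotent operation on the semilattice $A/{\sim}$. For example, take $(\{0,1\};\min,d)$ with $d(x,y,z)=x$ and ${\sim}$ the equality relation. All blocks are singletons, so this is an SMB algebra, yet $d(1,0,0)=1$ is not in the block of $1\wedge 0\wedge 0=0$. Your justification silently assumes condition~(1) of Definition~\ref{regSMBdef}. That condition holds only for \emph{regular} SMB algebras, which are reached by passing to a term reduct (Proposition~\ref{SMBtospec}). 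With this axiom, $\mathrm{Mod}(\Sigma)$ is a proper subclass of $\vr s$, so the ``routine'' direction fails and even a completed converse would not prove the theorem. What you actually need for $d$ are the compatibility identities $d(x\wedge y,z,w)\wedge d(y\wedge x,z,w)\approx d(x\wedge y,z,w)$, together with their analogues in the second and third places of $d$.

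\emph{Transitivity left open.} Transitivity of the defined relation, which you correctly identify as the crux, is left as ``add whatever identity the derivation requires.'' It can be supplied by $(x\wedge(y\wedge z))\wedge(z\wedge(y\wedge x))\approx x\wedge(y\wedge z)$. This holds in $\vr s$ because both arguments of the outer meet lie in the block of $x\wedge y\wedge z$. Substituting $a,b,c$ with $a\wedge b=a$, $b\wedge a=b$, $b\wedge c=b$, $c\wedge b=c$ gives $a\wedge c=a$, and symmetrically $c\wedge a=c$. With these two repairs, your list does axiomatize $\vr s$.

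You could also skip the axiomatization and use Birkhoff's theorem. Closure under $\mathsf{S}$ and $\mathsf{P}$ is immediate, and only $\mathsf{H}$ needs an idea. Given $\theta\in\cn a$, let $\psi={\sim}\vee\theta$ and $\rho=\{(a,b): a\mathrel{\theta}a\wedge b,\ b\mathrel{\theta}b\wedge a\}$. Then ${\sim}\cup\theta\subseteq\rho$, and $\rho$ is transitive. Indeed, if $(a,b),(b,c)\in\rho$, then $a\mathrel{\theta}a\wedge b\mathrel{\theta}a\wedge(b\wedge c)$ and $c\mathrel{\theta}c\wedge(b\wedge a)$. These two elements lie in one ${\sim}$-block, so $a\wedge c\mathrel{\theta}a\wedge(b\wedge c)\mathrel{\theta}a$. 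Hence $\psi\subseteq\rho$, and $\psi/\theta$ witnesses that $\m a/\theta$ is SMB: its quotient is a quotient of the semilattice $A/{\sim}$, and the Mal'cev identities follow by replacing $a,b$ with $a\wedge b,\ b\wedge a$ modulo $\theta$.

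For the Taylor term, your fallback is in fact rigorous once $\vr s$ is known to be a variety. Taylor's theorem says an idempotent variety has a Taylor term iff it contains no two-element algebra whose operations are all projections, and it needs no local finiteness. Your two-element case analysis is correct. Your explicit candidate, by contrast, is not even a WNU: on the one-block algebra $(\mathbb{Z}_3;\,x\wedge y=x,\,d=x-y+z)$ it reduces to $d$ itself.
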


\begin{df}\label{regSMBdef}
We say that an SMB algebra $\m a=\lb A;\mt,d\rb$ over ${\sim}\in\cn a$ is {\em regular} if 
\begin{enumerate}
\item\label{property1} for all $a,b,c\in A$, $[d(a,b,c)]_{\sim}=[(a\mt b)\mt c]_{\sim}$,
\item for all $a,b\in A$ such that $[b]_{\sim}\geq[a]_{\sim}$, $a\mt b=a$,
\item $\m a\models d(x,y,z)\approx d(x\mt (z\mt y),y\mt (z\mt x),z\mt (y\mt x))$, and
\item $\m a\models x\mt (x\mt y)\approx x\mt y$.
\end{enumerate}
\end{df}

It can be proved (see \cite{SMB1}, Lemma 19) that any regular SMB algebra satisfies also the following identity:
\[\tag{5}\m a\models x\mt y\approx d(y,y,x)\approx d(x,y,y).\]
Therefore, in regular SMB algebras, the clone of all terms is generated by $d$. Moreover, the class $\vr r$ of all regular SMB algebras also forms a variety. Also, we will make use of the following obvious corollary of the property \eqref{property1} of Definition~\ref{regSMBdef} (see also \cite{SMB1}, Lemma 22): 

\begin{cor}\label{regSMBsubalg}
If $S$ is any subuniverse of the semilattice $\m a/{\sim}$ induced by the regular SMB algebra $\m a$, then the union of all $\sim$-classes in $S$ is a subuniverse of $\m a$. In particular, the conclusion holds if $S$ is a down-set (order ideal) or an interval in the partially ordered set $(A/{\sim};\leq)$.
\end{cor}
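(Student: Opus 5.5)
Let $U=\bigcup S$ be the union of the $\sim$-classes lying in $S$. Since the clone of a regular SMB algebra is generated by $\mt$ and $d$, it suffices to show that $U$ is closed under these two operations. The whole argument uses only that $\sim$ is a congruence together with property \eqref{property1} of Definition~\ref{regSMBdef}.

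For $\mt$, take $a,b\in U$. Because $\sim$ is a congruence, $[a\mt b]_{\sim}=[a]_{\sim}\mt^{\sm a/{\sim}}[b]_{\sim}$. This lies in $S$, since $S$ is a subuniverse of the semilattice $\m a/{\sim}$. Hence $a\mt b\in U$.

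For $d$, take $a,b,c\in U$. Property \eqref{property1} gives $[d(a,b,c)]_{\sim}=[(a\mt b)\mt c]_{\sim}=([a]_{\sim}\mt[b]_{\sim})\mt[c]_{\sim}$. The right-hand side is obtained from elements of $S$ by applying the semilattice meet twice, so it belongs to $S$. Therefore $d(a,b,c)\in U$, and $U$ is a subuniverse of $\m a$.

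It remains to check the two special cases.
\begin{itemize}
\item A down-set $S$ of $(A/{\sim};\leq)$ is a subuniverse: for $x,y\in S$ we have $x\mt y\leq x$, so $x\mt y\in S$.
\item An interval $[p,q]$ is a subuniverse: if $p\leq x,y\leq q$, then $p\leq x\mt y$, because $p$ is a lower bound of both $x$ and $y$, and $x\mt y\leq x\leq q$.
\end{itemize}
In both cases the first part of the proof applies.

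There is no real obstacle here; the only point needing care is property \eqref{property1}. It is exactly what places the $\sim$-class of $d(a,b,c)$ inside $S$, since $d$ is not a semilattice operation on the quotient in any a priori sense.
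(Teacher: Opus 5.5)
Your proof is correct and follows the same route the paper takes: it calls the result an obvious corollary of property (1) of Definition~\ref{regSMBdef}. Closure under $\mt$ comes from $\sim$ being a congruence and $S$ being a subsemilattice, and closure under $d$ comes from $[d(a,b,c)]_{\sim}=([a]_{\sim}\mt[b]_{\sim})\mt[c]_{\sim}$; your check that down-sets and intervals are subuniverses of $\m a/{\sim}$ supplies the details the paper leaves implicit.
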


Another special property of SMB algebras we will use is the following

\begin{df}\label{unitSMBdef}
An SMB algebra $\m a$ over $\sim$ is {\em unital} if there exists an element $1\in A$ such that for all $x\in A$, $1\mt x=x\mt 1=x$.
\end{df}

It is clear that in an unital SMB algebra $[1]_\sim=\{1\}$ and $[1]_{\sim}$ is the greatest element in the semilattice $\m a/{\sim}$.

\begin{prp}[Proposition 21 of \cite{SMB1}]\label{SMBtospec}
Let $\m a$ be a finite SMB algebra. Then there are $\m a$-terms $d'(x,y,z)$ and $x\mt'y$ such that $\lb A;\mt',d'\rb$ is a regular SMB algebra. Moreover, the congruence ${\sim}$ remains unchanged and whenever $a,b,c$ are in the same ${\sim}$-class, then $d'(a,b,c)=d(a,b,c)$.
\end{prp}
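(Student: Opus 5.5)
The plan is to construct $\mt'$ first, then $d'$, by explicit term constructions. Throughout I use only the SMB axioms for $\m a$ over ${\sim}$ and finiteness. For terms $s,t$ with $[s]_{\sim}=[t]_{\sim}$ identically, $s\mt t=s$ always, since $\mt$ is the first projection on ${\sim}$-classes. Modulo ${\sim}$, every term in $\mt$ alone evaluates to the semilattice meet of the classes of its variables.

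\emph{Step 1: the new meet.} Put $s(x,y)=x\mt(y\mt x)$. In $\m a/{\sim}$ it is the meet $[x]\mt[y]$. On a ${\sim}$-class it is the first projection. If $[y]_{\sim}\ge[x]_{\sim}$, then $y\mt x\sim x$, so $s(x,y)=x\mt(y\mt x)=x$; this is property~(2). To also get property~(4), I iterate in the second variable. Let $s_1=s$ and $s_{j+1}(x,y)=s(x,s_j(x,y))$. For fixed $a$, the unary map $y\mapsto s(a,y)$ has an idempotent power, and $k=|A|!$ works uniformly for all $a\in A$. Define $x\mt'y=s_k(x,y)$.

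Then $x\mt'(x\mt'y)=s_{2k}(x,y)=s_k(x,y)$, which is~(4). By induction, $s_k$ is still the meet modulo ${\sim}$ and the first projection on blocks. Property~(2) persists: if $[y]\ge[x]$ then $s(x,y)=x$, and $s(x,x)=x$ by idempotence, so every further iterate is $x$.

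\emph{Step 2: the new Mal'cev term.} Define
\[d'(x,y,z)=d\bigl(x\mt'(z\mt'y),\,y\mt'(z\mt'x),\,z\mt'(y\mt'x)\bigr).\]
All three arguments lie in the ${\sim}$-class $[x]\mt[y]\mt[z]$. Hence $[d'(a,b,c)]_{\sim}=[(a\mt' b)\mt' c]_{\sim}$, since $d$ preserves ${\sim}$ and is idempotent on blocks; this gives property~(1).

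If $a,b,c$ lie in one block, every $\mt'$ above is a projection, so $d'(a,b,c)=d(a,b,c)$. In particular $d'$ is Mal'cev on blocks, $\mt'$ is the first projection on blocks, and $\m a/{\sim}$ is still a semilattice under $\mt'$. So $\lb A;\mt',d'\rb$ is an SMB algebra over the same ${\sim}$, with $d'=d$ inside blocks, as claimed.

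\emph{Step 3: property~(3).} Substituting $x\mt'(z\mt'y)$, $y\mt'(z\mt'x)$, $z\mt'(y\mt'x)$ for $x,y,z$ in $d'$ gives three elements $u,v,w$ that already lie in one block. All the inner $\mt'$ in $d'(u,v,w)$ are then first projections, so $d'(u,v,w)=d(u,v,w)$. The same holds on the other side of~(3): the arguments $x\mt'(z\mt'y)$ etc. are again $u,v,w$, so the right side is $d'(u,v,w)=d(u,v,w)$. The left side is $d'(x,y,z)$, which is $d(u,v,w)$ by definition, so the two sides agree.

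\emph{Main obstacle.} The only point requiring care is Step~1: a single term must satisfy~(2) and~(4) simultaneously. The choice $x\mt(y\mt x)$, rather than $x\mt y$, is what makes~(2) work, because it forces the result back into $x$'s own block whenever $[y]\ge[x]$. I would verify carefully that the iteration used for~(4) does not destroy~(2); the argument above indicates it does not.
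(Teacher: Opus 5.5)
Your proof is correct and follows essentially the same scheme as the construction the paper cites from \cite{SMB1} (``iterating $\mt$ and substituting $d$''): iterate a meet term $|A|!$ times in its second variable to get $x\mt'(x\mt'y)\approx x\mt'y$, then feed $x\mt'(z\mt'y)$, $y\mt'(z\mt'x)$, $z\mt'(y\mt'x)$ into $d$, which yields (1), (3) and $d'=d$ on blocks. The only difference is your base term $x\mt(y\mt x)$. The paper iterates plain $x\mt y$, i.e.\ $x\mt'y=x\mt(x\mt\cdots(x\mt y))$, and this already gives (2) after two steps: if $[y]_\sim\ge[x]_\sim$ then $x\mt y\sim x$, so $x\mt(x\mt y)=x$. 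So your twist is harmless, but it is not what makes (2) work. The plain form is also the one the paper relies on later, where units stay units and $b\mt'x=p^{|A|!}(x)$ in Corollary~\ref{SMBeliminated}.
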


In view of Proposition~\ref{SMBtospec} and the discussion in Section 2, if $\m a$ is a finite SMB algebra, an instance of $CSP(\m a)$ can be viewed as an instance of $CSP(\m a')$, where $\m a'$ is the term reduct of $\m a$ which is a regular SMB algebra.

Now we begin our investigation of $CSP(\m a)$ over an SMB algebra $\m a$:

\begin{thm}\label{lintract}
Let $\m a=\lb A;\mt,d\rb$ be an SMB algebra with respect to $\sim$. When the order of $\sim$-classes is linear, then $CSP(\m a)$ is tractable.
\end{thm}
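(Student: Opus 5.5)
By Proposition~\ref{SMBtospec} and the discussion in Section~2, I would first replace $\m a$ by its regular term reduct, so that we may assume $\m a$ is regular. Then, by Corollary~\ref{regSMBsubalg}, every union of $\sim$-classes forming a down-set or an interval of the chain $A/{\sim}$ is a subuniverse, and every single $\sim$-class is a Mal'cev subalgebra. Let the classes be $C_1<C_2<\dots<C_k$. For a solution $f$ of a (multisorted, $(2,3)$-minimal) instance $P$, define its \emph{profile} $\lambda_f(i)=[f(i)]_\sim$.

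\textbf{Structural part.} The solution set of $P$ is a subuniverse of $\prod A_i$, so it is closed under the coordinatewise $\mt$. Since $[f\mt g](i)=\lambda_f(i)\mt\lambda_g(i)=\min(\lambda_f(i),\lambda_g(i))$ in the chain, the set of realized profiles is closed under pointwise minimum. Hence, if $P$ is solvable, there is a unique least realized profile $\lambda$. Once $\lambda$ is known, I would tighten $P$ to the domains $A_i\cap\lambda(i)$. Each such set is a single $\sim$-class, hence a subuniverse, and on tuples from fixed classes $d$ acts as a Mal'cev operation (regularity (1) keeps $d$ inside the class). The tightened instance is therefore a Mal'cev CSP, solvable by \cite{BulatovMalcev}. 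This tightening loses nothing: if $g$ is any solution and $h$ realizes $\lambda$, then $g\mt h$ realizes $\lambda$. The same argument shows more generally that, for any profile $\mu$ below which solutions exist, the exact-$\mu$ instance is a Mal'cev instance.

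\textbf{Algorithmic part.} I would compute $\lambda$ from the top down, starting from $\mu(i)=$ the top class of $A_i$. At each round:
\begin{itemize}
\item tighten the instance to the classes $\mu(i)$, establish $(2,3)$-minimality, and run the Mal'cev algorithm;
\item if it succeeds, output the solution;
\item if it fails, or the minimality propagation empties the top class $\mu(i)$ at some variable $i$, delete that class from $A_i$ (what remains is a down-set, hence a subuniverse), re-establish minimality of the whole instance, and repeat.
\end{itemize}
Each round removes one class from one domain, so there are at most $|V|\cdot k$ rounds, and the algorithm runs in polynomial time.

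\textbf{Main obstacle.} The hard step is justifying the deletion when the Mal'cev instance on the current tops fails but no single top class is emptied by local propagation. In that case I must show that some identifiable variable $i$ has no solution with $f(i)\in\mu(i)$. My first attempt would be to use the linear order as follows. Take a solution $f$ of the current instance and let $W=\{i: f(i)\in\mu(i)\}$. On $V\setminus W$ the values lie strictly below, so every constraint touching both $W$ and $V\setminus W$ is, after $\mt$-ing with $f$, governed by the semilattice part alone. I would then show, via $(2,3)$-minimality together with $d(x,y,y)=x\mt y$ (identity (5)), that the failure of the tops instance is witnessed already on a restriction $P|_{W'}$ where some variable is forced down. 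If this direct argument does not close, the fallback is to decide, for each variable $i$, whether a solution with $f(i)\in\mu(i)$ exists, by recursion on the number of classes $k$. Restricting $A_i$ to $\mu(i)$ and the other domains to down-sets produces instances with strictly fewer classes above the relevant level, and linearity should keep the recursion tree polynomial because only one branch per level needs to be explored.
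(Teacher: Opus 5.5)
Your structural part is correct: meets of solutions realize pointwise minima of profiles, so a solvable instance has a least realized profile $\lambda$, and tightening to $\lambda$ gives a Mal'cev instance. The gap is the deletion step of your top-down algorithm, which you flag yourself but do not close. When the Mal'cev instance on the current top classes fails, nothing in your argument identifies a variable $i$ with $\lambda(i)<\mu(i)$, which is what you need for removing $\mu(i)$ to be safe.
\begin{itemize}
\item Your propagation rule does not single out a variable: once a constraint becomes empty on the tops, $(2,3)$-minimality empties every domain at once.
\item The sketch with $W=\{i:f(i)\in\mu(i)\}$ depends on an unknown solution $f$.
\item The fallback recursion comes with no bound. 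Deciding, for each $i$, whether some solution avoids $\mu(i)$ is a full instance of the same problem.
\end{itemize}
This is a real obstruction, not a missing detail. Top classes absorb nothing, because the class of $a\wedge b$, resp.\ $d(a,b,c)$, is the meet of the argument classes. So a partial all-top solution can never be combined with a genuine solution into a new all-top one.

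Here is an example. Let variables $y_k$ range over $\{p,q,\top\}$, with bottom block $\{p,q\}\cong\mathbb{Z}_2$ and $\top$ a unit on top. Link them by ternary parity constraints, each with the tuple $(\top,\top,\top)$ added, forming a system that is unsatisfiable inside the bottom blocks but not refuted by $(2,3)$-minimality; such systems exist because linear equations over $\mathbb{Z}_2$ do not have bounded width. Tie a variable $w$ over the chain $0<1$ to $y_1$ by the subuniverse $T=(\{0,1\}\times\{p,q\})\cup\{(0,\top)\}$. Every solution has all $y_k=\top$ and $w=0$. The all-top instance fails only because $(1,\top)\notin T$. Yet deleting $\top$ from $A_{y_1}$ destroys every solution. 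Your rule may pick that deletion, and the natural top-down analogue of a prefix test, with $w$ ordered first, does pick it.

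The paper runs your meet argument from the bottom instead. It works there because the least block is left-absorbing for $\wedge$ (Definition~\ref{regSMBdef}(2)). With the variables ordered, the paper uses the Mal'cev algorithm to test whether the prefix restrictions $Q|_{[i]}$ have least-block solutions. It then proves: if $P|_{[i]}$ has a least-block solution $g$ but $P|_{[i+1]}$ has none, then no solution $f$ of $P$ has $f(i)\in\mathrm{min}(\m a_i)$. Otherwise, setting $\bar f(j)=f(j)\wedge g(j)$ for $j<i$ and $\bar f(i)=f(i)$ would give a least-block solution of $P|_{[i+1]}$. To check a constraint $(S,R)$ with $i\in S$, extend $g|_{S\setminus\{i\}}$ to some $\bar g\in R$; then $f|_S\wedge\bar g\in R$, and $f(i)\wedge\bar g(i)=f(i)$. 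Hence the least block of $A_i$ can be removed, and what is left is still a union of classes, so a subuniverse by Corollary~\ref{regSMBsubalg}. Each removal is certified, and there are at most $n|A|$ rounds. Left-absorption of the least block is exactly the ingredient your top-down scheme lacks.
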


\proof
We assume, as noted above, that $\m a$ is a regular SMB algebra. Let $P=([n],A,\vr c)$ be a $(1,1)$-minimal instance of $CSP(\m a)$. We replace $P$ with a multisorted instance of $CSP(\vr t(\m a))$ as in Definition~\ref{multisortedCSP}. Each domain $A_i$ is the projection to the coordinate $i$ of $R$, where $(S,R)\in \vr c$ and $i\in S$. Since $P$ is (1,1)-minimal, the domain $A_i$ (which is a subuniverse of $\m a$) does not depend on the choice of the constraint $(S,R)$. We call the new, multisorted instance also $P$, since it cannot be distinguished computationally from the original one.

We will call a solution $f\in A^n$ to a multisorted instance of $CSP(\vr t(\m a))$ a least-block solution when for all $i\in [n]$, $f(i)$ is in the least $\sim$-class of the SMB algebra $\m a_i$. If we restrict ourselves only to least-block solutions, then $d$ is a Mal'cev operation compatible with all restrictions of $R_j$ to least blocks, so it is well-known that we can check whether $P$ has a least-block solution in polynomial time. 

The algorithm $SOLVE(P)$ works like this:

\begin{enumerate}
\item[{\bf Step 1.}] Set the working instance $Q:=P$.
\item[{\bf Step 2.}] Replace $Q$ with the $(1,1)$-minimal equivalent instance. If for any $(S,R)\in\vr c(Q)$, $R$ is now empty, output that $P$ has no solutions and stop.
\item[{\bf Step 3.}] Set $i:=n$.
\item[{\bf Step 4.}] Check whether the restriction $Q|_{[i]}$ has a least-block solution using the Mal'cev algorithm.
\begin{enumerate}
\item[{\bf Step 4.a}] If 'YES' and $i=n$, output that $P$ has a solution and stop.
\item[{\bf Step 4.b}] If 'YES' and $i<n$, tighten $Q$ by replacing $A_i$ with $A_i\setminus D$, where $D$ is the least $\sim$-class of $A_{i}$ and go to Step 2.
\item[{\bf Step 4.c}] If 'NO', set $i:=i-1$ and go to step 4.
\end{enumerate}
\end{enumerate}

It is clear that, for a nonempty $(1,1)$-minimal $Q$, $Q|_{[1]}$ always has a least-block solution, so the decrease of $i$ in Step 4.c must stop at some point (we will not get the answer 'NO' at $i=1$). It is also clear that at each application of Step 4.b we tighten the instance and that it can be applied at most $n|A|$ times. By Corollary~\ref{regSMBsubalg}, the tightened instance is still an instance over a regular SMB algebra with a linear order of $\sim$-classes. The running time cost of the algorithm is dominated by at most $n|A|$ applications of $(1,1)$-minimality algorithm in Step 2 and at most $n^2|A|$ applications of the Mal'cev algorithm in Step 4. So, this algorithm works in polynomial time.

To prove that it faithfully computes whether $P$ has a solution, it suffices to prove the following claim

\begin{claim}
Let $P$ be a $(1,1)$-minimal multisorted instance of $CSP(\vr t(\m a))$. If $P|_{[i]}$ has a least-block solution and $P|_{[i+1]}$ does not, then for every solution $f$ of $P$, $f(i)\notin\mathrm{min}(\m A_{i})$.
\end{claim}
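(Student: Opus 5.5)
The plan is to argue by contradiction. Suppose $f$ is a solution of $P$ with $f(i)\in\mathrm{min}(\m A_i)$, and let $g$ be a least-block solution of $P|_{[i]}$. From these two I will build a least-block solution $h$ of $P|_{[i+1]}$, contradicting the hypothesis. As in the proof of Theorem~\ref{lintract}, we may assume $\m a$ is regular (Proposition~\ref{SMBtospec}), so $\mt$ satisfies condition (2) of Definition~\ref{regSMBdef}.

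\textbf{Definition of $h$.} On $[i+1]$ set
\[h(j)=f(j)\mt g(j)\ \text{ for } j<i, \qquad h(i)=f(i).\]

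\textbf{$h$ is least-block.} Because the order of $\sim$-classes is linear, the least $\sim$-class of each $A_j$ lies below every other class of $A_j$. For $j<i$ we have $[h(j)]_\sim=[f(j)]_\sim\mt[g(j)]_\sim=[g(j)]_\sim$, since $[g(j)]_\sim$ is the least class of $A_j$. Also $h(j)\in A_j$, because $A_j$ is a subuniverse. For $j=i$, $h(i)=f(i)$ lies in the least class by assumption.

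\textbf{$h$ satisfies every constraint of $P|_{[i+1]}$.} Take a constraint $(S,R)$ of $P$ and put $S'=S\cap[i+1]$.
\begin{itemize}
\item The restriction $g|_{S\cap[i]}$ satisfies the corresponding constraint of $P|_{[i]}$, so it extends to some tuple $r\in R$.
\item The tuple $f|_S$ also lies in $R$.
\item Since $R$ is a subuniverse, $f|_S\mt r\in R$, computed coordinatewise.
\end{itemize}
At coordinates $j\in S\cap[i]$ this tuple takes the value $f(j)\mt g(j)=h(j)$. If $i\in S$, it takes the value $f(i)\mt r(i)$ at coordinate $i$. Here $[r(i)]_\sim\ge[f(i)]_\sim$, because $[f(i)]_\sim$ is the least class of $A_i$, so condition (2) of Definition~\ref{regSMBdef} gives $f(i)\mt r(i)=f(i)=h(i)$. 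Hence $h|_{S'}$ is the restriction to $S'$ of a tuple of $R$, i.e.\ it satisfies the restricted constraint. So $h$ is a least-block solution of $P|_{[i+1]}$, a contradiction.

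\textbf{Main obstacle.} The delicate point is that the extension $r$ of $g$ depends on the constraint. This must not spoil the definition of $h$, since $h$ has to be a single global assignment. The two key observations resolve it:
\begin{itemize}
\item On coordinates $j<i$, the value of $f|_S\mt r$ depends only on $g(j)$, because $r$ extends $g$ there.
\item On coordinate $i$, the first-projection behaviour of $\mt$ below a larger class absorbs $r(i)$, so the value is $f(i)$ regardless of $r$.
\end{itemize}
Both observations rely on $\mt$ acting as the first projection inside and below blocks (regularity) and on the least class being comparable to all others (linearity).
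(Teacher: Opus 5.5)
Your proof is correct and is essentially the paper's argument. You use the same contradiction via $\overline{f}(j)=f(j)\mt g(j)$ for $j<i$ and $\overline{f}(i)=f(i)$, extend $g|_{S\cap[i]}$ to a tuple of $R$, and get the absorption $f(i)\mt r(i)=f(i)$ from Definition~\ref{regSMBdef}(2), only treating the cases $i\in S$ and $i\notin S$ uniformly instead of separately. One small remark: linearity is not actually needed in this claim, since the least $\sim$-class of a subuniverse $A_j$ (the meet of all its classes in the finite semilattice) lies below every other class of $A_j$ in any case.
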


(Recall that $\{i\}=[i+1]\setminus[i]$.)

{\em Proof of Claim.} Assume not. Let $f$ be a solution of $P$ such that $f(i)\in\mathrm{min}(\m A_{i})$. Let $g\in A^i$ be the least-block solution of $P|_{[i]}$. We will prove that $\overline{f}\in A^{i+1}$, defined by $\overline{f}(i)=f(i)$ and for all $0\leq j<i$, $\overline{f}(j)=f(j)\mt g(j)$ is a solution to $P|_{[i+1]}$. This would be a contradiction, since $P|_{[i+1]}$ is assumed not to have a least block solution.

Indeed, let $(S,R)\in\vr c(P|_{[i+1]})$. If $i\notin S$, then $(S,R)\in\vr c(P|_{[i]})$, and since both $f|_{[i]}$ and $g$ are solutions to $P|_{[i]}$, then so is $f|_{[i]}\mt g$. Therefore, $\overline{f}|_S=f|_S\mt g|_S\in R$.

On the other hand, let $i\in S$. Let $S'=S\setminus\{i\}$ and by the definition of $P|_{[i]}$ we know that there exists a constraint $(S',R')\in\vr c(P|_{[i]})$. Therefore, $g|_{S'}\in R'\subseteq R|_{S'}$. Therefore, there must exist some $\overline{g}\in R$ such that $\overline{g}|_{S'}=g|_{S'}$. Since $f|_{S}\in R$ as well, $f|_{S}\mt \overline{g}\in R$. But, since $f(i)\in\mathrm{min}(\m A_i)$, $(f|_{S}\mt \overline{g})(i)=f(i)\mt\overline{g}(i) =f(i)$ by Definition~\ref{regSMBdef} (2). Hence, $(f|_{S}\mt \overline{g})(i)=\overline{f}(i)$. On the other hand, for $0\leq j<i$ and $j\in S$, $(f|_{S}\mt \overline{g})(j)=f(j)\mt g(j)=\overline{f}(j)$. We just proved that $\overline{f}$ is a solution of $P|_{[i+1]}$.

Finally, note that for all $0\leq j\leq i$, $\overline{f}(j)\in\mathrm{min}(\m A_j)$. Therefore, $\overline{f}$ is a least-block solution of $P|_{[i+1]}$, a contradiction.
\qed

We note in passing that this case is a generalization of A. Bulatov's rectangular case from \cite{3el} and that the algorithm displayed here would work there, as well. In fact, Bulatov's algorithm would work in our case, too, though it would have a more complicated proof.

We turn to another situation where we can prove tractability. We say that a semilattice is {\em flat} when it has the least element and all other elements are maximal in the semilattice order.

\begin{thm}\label{flattract}
Let $\m a=\lb A;w\rb$ be an SMB algebra wrt. $\sim$. $CSP(\m a)$ is tractable when the order of $\sim$-classes is that of a flat semilattice.
\end{thm}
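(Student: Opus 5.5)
We first make the usual reductions. By Proposition~\ref{SMBtospec} we may assume $\m a=\lb A;\mt,d\rb$ is a regular SMB algebra. We pass to a $(2,3)$-minimal (or higher) multisorted instance $P=(V,D,\vr c)$ of $CSP(\vr t(\m a))$. Each domain $A_i$ is then a subuniverse whose blocks form either a single block (a Mal'cev domain) or a flat semilattice. In the flat case there is a bottom block $m_i$ and maximal blocks $M^i_1,\dots,M^i_{k_i}$.

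The algorithm will decide, for each variable $i$, whether there is a solution with $f(i)\in m_i$. If not, it tightens $A_i$ to the union of the maximal blocks. This is not a subuniverse in general, so the tightening must instead be done per maximal block, as follows. Call a set $T\subseteq V$ of variables together with choices of maximal blocks $B_j\subseteq A_j$ ($j\in T$) a \emph{strand} if the restrictions of the constraints to $\prod_{j\in T}B_j$ stay linked (every constraint containing $j,j'\in T$ projects onto pairs in $B_j\times B_{j'}$ compatible with some tuple). The point of flatness is the following. If $a\in M^i_s$ and $b\in M^{i}_t$ with $s\neq t$, then $a\mt b\in m_i$. Likewise, for a constraint tuple, meeting with another tuple pushes every coordinate whose blocks differ down to the bottom, and leaves coordinates in the same maximal block unchanged (first projection). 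Hence, for $f,g$ solutions, $f\mt g$ is a solution in which the coordinates where $f,g$ choose the same maximal block keep $f$'s value, and all others drop to the bottom.

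The main steps in order are these. (1) Show, using the meet trick from the proof of Theorem~\ref{lintract}, that the set $L$ of variables that can be put into the bottom block in \emph{some} solution is ``closed'': if $f$ and $g$ are solutions, $f\mt g$ has bottom values on the union of the bottom sets of $f$ and $g$, plus disagreement coordinates. Hence there is a solution that is bottom exactly on the maximal possible set, if one exists. (2) Decompose the remaining variables, with their candidate maximal blocks, into strands. Among such variables, two variables related by a constraint must lie in compatible maximal blocks. Otherwise a meet with a solution placing them elsewhere would drop them to the bottom, contradicting maximality of $L$. So the choice of maximal blocks propagates along constraint links, as in a $2$-coloring/propagation argument. (3) For each variable $i$ and each candidate configuration, run the Mal'cev algorithm of \cite{BulatovMalcev,BulatovDalmau} on the instance restricted to fixed blocks (bottom blocks on a candidate $L$, chosen maximal blocks on the strands). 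On a product of single blocks, $d$ is Mal'cev, so each such test is polynomial. (4) Iterate as in the algorithm $SOLVE$: test whether $i$ can be bottom given the current instance. If not, remove the bottom block of $A_i$, and remove maximal blocks that are not consistent with any strand through $i$, then re-establish minimality. The number of iterations is polynomial, bounded by $|V|\cdot|A|$.

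The main obstacle is step (2)/(4): justifying that when $i$ cannot be placed in the bottom block, the set of admissible maximal blocks at $i$ can be pinned down by local (strand) information, so that we can tighten to a subuniverse, namely a single maximal block, which is a subuniverse by Corollary~\ref{regSMBsubalg}. What I would try first is this. Fix a solution $f$ and a competing local choice $g$ on the strand through $i$. Form $f\mt \overline g$ constraint by constraint, extending $g$ to constraint tuples by $(2,3)$-minimality exactly as in the Claim of Theorem~\ref{lintract}. This should produce a solution that either keeps the strand in $g$'s blocks or pushes some strand variable to the bottom. The latter is excluded by the minimality of $L$, so the strand's block choice is forced, and we can commit to it and tighten. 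Verifying that the strand structure (a partition of maximal blocks linked through binary projections) is canonical enough for this argument is the delicate part. I expect it to require the link-partition idea on the binary relations $R|_{\{j,j'\}}\cap (M^j\times M^{j'})$, whose rectangularity follows from $d$ being Mal'cev on blocks.
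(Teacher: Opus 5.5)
Your proposal has a genuine gap. It never specifies a polynomial-time test, and it never proves the gluing statement that would make a local test sufficient.

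\textbf{No polynomial-time test is given.}
\begin{enumerate}
\item The test in step (4), ``whether $i$ can be bottom given the current instance,'' is itself an instance of the same CSP (tighten $A_i$ to its bottom block).
\item Step (3) ranges over global configurations (a candidate $L$ plus block choices), and there are exponentially many of these.
\item The set $L$ of step (1) is defined through global solutions, so it cannot be computed.
\end{enumerate}
The only polynomial subroutine available is the Mal'cev algorithm on subinstances confined to one block per variable, and the proof must say which subinstances.

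In the paper, let $O$ be the bottom block and $I_1,\dots,I_m$ the maximal ones. Set $(i,j)\preceq(i',j')$ iff every pair of $R_{i,i'}$ with $i$-coordinate in $I_j$ has $i'$-coordinate in $I_{j'}$; transitivity uses $(2,3)$-minimality. Strands are the classes of mutual $\preceq$. For a strand on variable set $F$, one tests by Mal'cev whether $P|_F$ has an $O$-valued solution or a strand-valued solution. Your ``linked'' definition yields neither this partition nor its key properties: each variable carries at most one block of a strand, and a solution entering one pair of a strand enters all of it.

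The step you flagged as the main obstacle is the easy one. If $P|_F$ has no $O$-valued solution, then it cannot have both a strand-valued and a non-strand-valued solution, since their meet would be $O$-valued. Hence one of the tightenings $A_i\to I_{t(i)}$ or $A_i\to A_i\setminus I_{t(i)}$ is sound, and both are subuniverses by Corollary~\ref{regSMBsubalg}.

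\textbf{The missing gluing lemma.} What is missing is the converse, which is what justifies the algorithm's final all-$O$ Mal'cev test: if $P$ has a solution $f$, and every strand visited by $f$ admits an $O$-valued solution of $P|_F$, then $P$ has a solution that is $O$-valued wherever $A_i\cap O\neq\emptyset$.

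Your ``meet $f$ with $\overline g$ constraint by constraint'' cannot prove this. The extension $\overline g$ of a local solution to a constraint tuple is uncontrolled outside $F$. Constraints sharing a variable $k$ with $f(k)$ in a maximal block can get different values $\overline g(k)$. So $f(k)\mt\overline g(k)$ falls into $O$ for some constraints and not for others, and no single assignment results. The Claim in Theorem~\ref{lintract} avoids this only because the one extra coordinate there carries a left-absorbing value. That fails at coordinates of other strands visited by $f$.

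The paper handles this as follows.
\begin{enumerate}
\item It orders the visited strands $F_1,\dots,F_l$ so that no later strand $\preceq$-implies an earlier one.
\item It works in the restriction to the processed strands together with the variables where $f$ is left absorbing.
\item For a constraint meeting $F_{s+1}$ and earlier strands, it uses tuples witnessing $\neg(i_{s+1},t(i_{s+1}))\preceq(i_j,t(i_j))$ to build $p'$. This $p'$ is $O$-valued on the earlier strands and agrees with $f$ elsewhere.
\item It then takes $d(f_1,p'\mt f_1,p'\mt h')$ with $f_1=f'\mt(g'\mt h')$. The Mal'cev operation on $O$ preserves the earlier-strand coordinates while moving $F_{s+1}$ into $O$.
\end{enumerate}
Meets alone do not suffice, and without this lemma your algorithm has no correct terminating step.
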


\begin{proof}
We assume, as before, that $\m a$ is a regular SMB algebra and ensure that $P$ is $(2,3)$-minimal so we may call $P=([n],D,\vr c)$ a multisorted instance of $CSP(\vr t(\m a))$ whose domains are subuniverses of $\m a$. Let $A=O\cup I_1,\ldots,I_m$, where $O$ is the least $\sim$-class and $I_1,\ldots,I_m$ are the maximal ones. 

Because of $(2,3)$-minimality, not only the domains $A_1$ are fixed by the instance, but also $P$ contains all constraints of the form $(\{i,i'\},R_{i,i'})$, and $R_{i,i'}$ is the projection to $\{i,i'\}$ of any constraint relation $R$ such that $(S,R)\in \vr c$ and $\{i,i'\}\subseteq S$. We think of $R_{i,i'}$ as a bipartite graph between disjoint partitions $A_i$ and $A_{i'}$, and the minimality conditions imply that there are no isolated vertices.

$P$ defines a preorder $\preceq$ on the following set: 
$$S(P):=\{(i,j):0\leq i<n \;\&\; 1\leq j\leq m \;\&\; A_i\cap I_j\neq\emptyset\neq O\cap A_i\}.$$ 
We say $(i,j)\preceq(i',j')$ if, whenever $f\in R_{i,i'}$ and $f(i)\in I_j$, then $f(i')\in I_{j'}$ (here $(\{i,i'\},R_{i,i'})\in\vr c$). We define $(i,j)\cong(i',j')$ if $(i,j)\preceq(i',j')$ and $(i',j')\preceq(i,j)$. We call $\cong$-classes {\em strands}.

It is not hard to see that $\preceq$ is indeed a preorder: reflexivity is a trivial consequence of the definition of $\preceq$. For transitivity, assume that $(i,j)\preceq(i',j')$ and $(i',j')\preceq(i'',j'')$, and let $f\in R_{i,i''}$ be such that $f(i)\in I_j$. By $(2,3)$-minimality, there exists $\overline{f}\in R_{i,i',i''}$ such that $\overline{f}|_{\{i,i''\}}=f$. Denote $g=\overline{f}|_{\{i,i'\}}$ and $h=\overline{f}|_{\{i',i''\}}$. From $g(i)=\overline{f}(i)=f(i)\in I_j$ and $(i,j)\preceq(i',j')$ follows $h(i')=\overline{f}(i')=g(i')\in I_{j'}$, similarly, from $(i',j')\preceq(i'',j'')$ we obtain $f(i'')=h(i'')\in I_{j''}$.

The properties to remember about strands are (all of these follow from $(2,3)$-minimality of $P$): Firstly, for any strand and $i\leq n$, there is at most one $(i,j)$ in this strand (the argument uses that the meet of any two elements of $A_i$ which are in different $\sim$-classes must be in $O$). Secondly, for any pair $(i,j)\in S(P)$, it is in exactly one strand since $\cong$ is an equivalence relation. Finally, whenever $f(i)\in I_j$ for some solution $f\in A^n$ of $P$ (or of a restriction of $P$ to some subset of variables), then for all $i'\in[n]$ such that $(i',j')$ is in the same strand as $(i,j)$, then $f(i')\in I_{j'}$ (in case of a restriction of $P$, we need to assume here that $i'$ is in the scope of this restriction, as well). So, we may refer to a strand as $\{(i,t(i)):i\in F\}$ where $F\usub [n]$ and $t:F\rightarrow \{1,\ldots,m\}$.

Now we describe a procedure $TESTASTRAND(P,\{(i,t(i)):i\in F\})$.

\begin{enumerate}
\item[{\bf Step 1.}] Compute the restriction $Q:=P|_F$.
\item[{\bf Step 2.}] Using the Mal'cev algorithm decide if $Q$ has an $O$-valued solution.
\begin{enumerate}
\item[{\bf Step 2.1.}] If 'YES' output 'Exists an $O$-valued solution' and stop.
\item[{\bf Step 2.2.}] If 'NO', test if $Q$ has a solution $f$ which for all $i\in F$ has $f(i)\in t(i)$. This is again a Mal'cev algorithm. If 'YES', output 'Exists a strand-valued solution' and stop, if 'NO', output 'Neither' and stop.
\end{enumerate}
\end{enumerate}

We will repeatedly use this procedure. The main algorithm $SOLVE(P)$ now looks like this:

\begin{enumerate}
\item[{\bf Step 1.}] Set up the working instance $Q:=P$.
\item[{\bf Step 2.}] Applying consistency algorithm reduce $Q$ to a $(2,3)$-minimal instance. If any $(S,R)\in\vr c(Q)$ is such that $R=\emptyset$, output '$P$ has no solution' and stop.
\item[{\bf Step 3.}] Compute $S(Q)$ and $J=\{i\in n:(\exists j)(i,j)\in S(Q)\}$. Compute $\preceq$ and $\cong$ on $S(Q)$.
\item[{\bf Step 4.}] For each strand $\{(i,t(i)):i\in F\}$ do
\begin{enumerate}
\item[{\bf Step 4.1.}] If $TESTASTRAND(Q,\{(i,t(i)):i\in F\})$ outputs 'Exists an $O$-valued solution', move on to next strand.
\item[{\bf Step 4.2.}] If  $TESTASTRAND(Q,\{(i,t(i)):i\in F\})$ outputs 'Exists a strand-valued solution', tighten $Q$ by replacing each $A_i$, $i\in F$, with $I_{t(i)}$ and go to Step 2.
\item[{\bf Step 4.3.}] If $TESTASTRAND(Q,\{(i,t(i)):i\in F\})$ outputs 'Neither', tighten $Q$ by replacing each $A_i$, $i\in F$, with  $A_i\setminus I_{t(i)}$ and go to Step 2.
\end{enumerate}
\item[{\bf Step 5.}] Tighten $Q$ by replacing each $A_i$, $i\in J$, with  $A_i\cap O$.
\item[{\bf Step 6.}] $Q$ is now an instance over Mal'cev algebras, as each $A_i$ is now within a single $\sim$-class, so solve it. Output that $P$ has a solution if $Q$ does and that it has no solution if $Q$ does not and stop.
\end{enumerate}

We see that the instance always tightens to a smaller one whenever either Step 4.2 or Step 4.3 is applied. Therefore, they can be applied at most $n|A|$ many times. Moreover, the number of strands is bounded from above by $nm\leq n|A|$, so at most $n^2|A|^2$ many applications of the TestAStrand procedure can occur during running time of the algorithm. Each application requires at most two applications of the Mal'cev algorithm. Moreover, we need to apply the consistency algorithm in Step 2 at most $n|A|+1$ times. Finally, there is one more application of the Mal'cev algorithm in Step 6. So the time-cost of the algorithm is dominated by the number of steps needed to apply $(2,3)$-minimality algorithm $n|A|+1$ times, plus the number of steps needed to apply the Mal'cev algorithm $2n^2|A|^2+1$ times. As both of these elementary algorithms run in  polynomial time, so does this one.

To study the correctness of our algorithm, we note first that when it outputs that there exists a solution to $P$, then the algorithm really found a solution to a tightening of $P$ to certain specified path through $\sim$-classes (since when the Mal'cev algorithm produces 'YES' it actually finds a generating set for the space of all such solutions). Therefore, this is a solution to $P$, as well.

If the algorithm rejected the instance, on the other hand, it may have done so in the Step 2., or in Step 6. If it rejected $P$ in Step 2., it either did it in the initial $2$-consistency check, which is fine as the instance which fails the initial consistency check indeed has no solution, or in application of Step 2. which occurred after a tightening via either Step 4.2. or 4.3.

Both of these tightenings lead to equivalent instances. To see this, fix a strand $\{(i,t(i)):i\in F\}$. If the instance $P|_F$ has a solution which at some (equivalently, all) $i\in F$ takes values in $I_{t(i)}$ and another one which at some (equivalently, all) $i\in F$ takes values outside the $I_{t(i)}$, then the meet of these two solutions is a solution of $P|_F$ which is $O$-valued at each variable $i\in F$ (recall that by definition of $S(P)$, for all $i\in F$, $A_i\cap O\neq\emptyset$). Therefore, if there are no $O$-valued solutions to $P|_F$, then either $P|_F$ has no strand-valued solutions, or it has no solutions which are outside the strand at all $i\in F$. Again, by the definition of a strand, these are the only two options. Note that the relation $\cong$ is invariant under restriction since $P$ is $(2,3)$-minimal, so $\{(i,t(i)):i\in F\}$ is a strand for $P|_F$, as well as for $P$. Also note, by Corollary~\ref{regSMBsubalg}, that both $I_{t(i)}$ and $A_i\setminus I_{t(i)}$ are subuniverses of $\m a$, so the tightened instances are still within $CSP(\vr t(\m a))$.

Finally, the algorithm may have rejected the instance at Step 6. To see that this is a correct rejection, we need to prove the following

\begin{lm}\label{Malcev}
Assume that $f$ is a solution to a $(2,3)$-minimal multisorted instance $P=([n],D,\vr c)$ of $CSP(\vr t(\m a))$. If for all strands $\{(i,t(i)):i\in F\}$ such that $f(i)\in I_{t(i)}$ there exists an $O$-valued solution of $P|_F$, then $P$ has a solution $\overline{f}\in A^n$ such that $\overline{f}\in O$ whenever $A_i\cap O\neq\emptyset$.
\end{lm}

\begin{proof}
We first note that, since $f$ is a function, for any two different strands $\{(i,t_1(i)):i\in F_1\}$ and $\{(i,t_2(i)):i\in F_2\}$ visited by $f$ (i. e. such that $f(i)\in I_{t_1(i)}$ when $i\in F_1$ and $f(i)\in I_{t_2(i)}$ when $i\in F_2$), we must have that $F_1$ and $F_2$ are disjoint. Therefore, we know that the set $J\usub n$ defined by $J=\{i\in n:f(i)\notin O\;\&\;A_i\cap O\neq\emptyset\}$ is partitioned into $F_1\cup\ldots\cup F_l$, which are the sets of variables of all strands visited by $f$. We may define now uniformly $t:J\rightarrow\{1,\ldots,m\}$ such that for $i\in F_j$, $t(i):=t_j(i)$. As the strands are partially ordered by $\preceq$, we assume without loss of generality that when $i\in F_j$, $i'\in F_{j'}$ and $1\leq j'<j\leq l$, then $\neg(i,t_j(i))\preceq(i',t_{j'}(i'))$. Our assumptions give us $O$-valued solutions $h_j$ to $P|_{F_j}$, for all $1\leq j\leq l$. We prove inductively on $s$ that there exists a solution $g_s$ of $Q_s=P|_{(n\setminus J)\cup F_1\cup\ldots\cup F_s}$ which is equal to $f$ on $n\setminus J$ and to $f\mt h_j$ (which is $O$-valued) on each $F_j$, $1\leq j\leq s$. Clearly for $s=l$ this will be the desired $\overline{f}$, as $Q_l=P$.

The base case is for $s=1$, where we claim that $g_1(i)=f(i)$ for $i\in n\setminus J$ and $g_1(i)=f(i)\mt h_1(i)$ is a solution to $Q_1$. Indeed, let $(S,R)\in\vr c(Q_1)$. If $S$ is disjoint from $F_1$, then $g_1|_S=f|_S$, so $g_1|_S\in R$. If, on the other hand, $S$ is not disjoint from $F_1$, then from the fact that $h_1$ is a solution to $P|_{F_1}$, we know that there exists some $h\in R$ such that $h|_{S\cap F_1}=h_1|_{S\cap F_1}$. Therefore, where $f'=f|_S$, $f'\mt h\in R$. Now for $i\in S\setminus F_1$, $f'(i)\in\mathrm{min}(\m A_i)$, so $f'(i)\mt h(i)=f'(i)=g_1(i)$. On the other hand, where $i\in S\cap F_1$, $f'(i)\mt h(i)=f(i)\mt h_1(i)=g_1(i)$. Therefore, $g_1$ is a solution to $Q_1$.

Assume that $g_s$ is a solution to $Q_s$, where $g_s$ is
defined as above. Let $(S,R)\in \vr c(Q_{s+1})$. If $S\cap F_{s+1}=\emptyset$, there is nothing to prove, and if $S\cap (F_1\cup\ldots\cup F_s)=\emptyset$, this is the same as the base case $s=1$. So we assume that $S$ intersects both $F_1\cup\ldots\cup F_s$ and $F_{s+1}$.

We need to construct some auxiliary elements of $R$. Firstly, we are guaranteed to have $f',g',h'\in R$ such that $f'=f|_S$, $g'|_{S\setminus F_{s+1}}=g_s|_{S\setminus F_{s+1}}$, while $h'|_{S\cap F_{s+1}}=h_{s+1}|_{S\cap F_{s+1}}$, where the existence of $g'$ and $h'$ follows from the definition of the restriction of an instance to a set of coordinates. Secondly, we define $S_0=S\setminus(F_1\cup\ldots\cup F_{s+1})$ and $S_j=S\cap F_j$ for $1\leq j\leq s+1$. Our assumptions ensure that $S_{s+1}\neq\emptyset$ and for at least one $j$, $1\leq j\leq s$, $S_j\neq\emptyset$. Note also that $S_0$ consists precisely of those variables $i\in S$ where $f(i)\in\mathrm{min}(A_i)$, so $f(i)$ is a left absorbing element for $\mt$ on $A_i$ (here we use the fact that $(S,R)\in \vr c(Q_{s+1})$, so $S$ has no variables in $F_j$, $j>s+1$).

Now select some $i_{s+1}\in S_{s+1}$ and for any $j$ such that $1\leq j\leq s$ and $S_j\neq \emptyset$ a variable $i_j\in S_j$. Let $(\{i_j,i_{s+1}\},R_j)\in \vr c(Q_{s+1})$. As we know that $\neg(i_{s+1},t(i_{s+1}))\preceq(i_j,t(i_j))$ and that $(f(i_j),f(i))\in R_j\cap (I_{t(i_j)}\times I_{t(i_{s+1})})$, then there must exist also some $q_j\in R_j$ such that $q_j(i_j)\notin I_{t(i_j)}$, while $q_j(i_{s+1})\in I_{t(i_{s+1})}$. Let $q_j'\in R$ be such that $q_j'|_{\{i_j,i_{s+1}\}}=q_j$ and $p_j'=f'\mt q_j'$. We know that $q_j'|_{S_j}(i)\notin I_{t(i)}$ for all $i\in S_j$, as it is so at $i_j$ and $\{(i,t(i)):i\in S_j\}$ is the restriction of a strand. Therefore, for all $i\in S_j$, $p_j'(i)\in O$. Moreover, for all $i\in S_0\cup S_{s+1}$, $p_j'(i)=f(i)$. Now define $p'=\bigwedge\limits_{S_j\neq \emptyset}p_j'$ to be the left-associated meet. We know that $p'\in R$, that $p'|_{S_0\cup S_{s+1}}=f|_{S_0\cup S_{s+1}}$ and that $p'(i)\in O$ for all $i\in S_j$, $1\leq j\leq s$.

Now we define three tuples $f_1$, $f_2$ and $f_3$ in $R$ in the following way: $$f_1=f'\mt (g'\mt h')\text{, }f_2=p'\mt f_1\text{ and }f_3=p'\mt h'.$$ We claim that $\overline{f}|_S=d(f_1,f_2,f_3)$, thus proving $\overline{f}|_S\in R$. We break down the proof into three cases: 

If $i\in S_0$, then $\overline{f}(i)=f(i)=f'(i)=p'(i)\in\mathrm{min}(\m A_i)$, so it is a left absorbing for the meet. Therefore, $f_1(i)=f_2(i)=f_3(i)=\overline{f}(i)$, and the claim follows by the idempotence of $d$.

If $i\in S_j$ for some $1\leq j\leq s$, then $f_2(i)=p'(i)=f_3(i)\in O$, while 
$$
\begin{gathered}
f_1(i)=f(i)\mt (g'(i)\mt h'(i))=f(i)\mt g'(i)=\\
f(i)\mt(f(i)\mt h_j(i)) =f(i)\mt h_j(i)=\overline{f}(i)\in O
\end{gathered}
$$ (the second equality follows from $g'(i)\in O$, while the fourth equality follows from Definition~\ref{regSMBdef} (4)). Therefore, $$d(f_1,f_2,f_3)(i)=d(f_1(i),p'(i),p'(i))=f_1(i)=\overline{f}(i),$$ since $d$ is a Mal'cev operation on $O$.

Finally, if $i\in S_{s+1}$, then $g'(i)\mt h_{s+1}(i)\in O$ and $$f_1(i)=f(i)\mt(g'(i)\mt h_{s+1}(i)),$$ while $p'(i)=f(i)$ and therefore 
$$
\begin{gathered}
f_2(i)=f(i)\mt f_1(i)=f(i)\mt (f(i)\mt(g'(i)\mt h_{s+1}(i)))=\\
f(i)\mt(g'(i)\mt h_{s+1}(i))=f_1(i).
\end{gathered}
$$ 
On the other hand, $$f_3(i)=p'(i)\mt h_{s+1}(i)=f(i)\mt h_{s+1}(i)=\overline{f}(i)\in O.$$ Now $$d(f_1,f_2,f_3)(i)=f_3(i)=\overline{f}(i),$$ since $d$ is a Mal'cev operation on $O$. This finishes the inductive proof of the lemma.
\end{proof}
We have proved that, if there exists a solution of an instance and if for each strand there exists an $O$-valued solution to the restriction of the instance to the strand variables, then there exists an $O$-valued solution of the whole instance. If the instance $Q$ reached Step 6, there is an $O$-valued solution to the restriction of $Q$ to the strand variables, for each strand, hence testing $Q$ just for $O$-valued solutions determines whether $Q$ has a solution. 
\end{proof}

\section{M-irreducibility and tree-ordered SMB algebras}

\subsection{Consistent maps}

\begin{df}\label{consmaps}
Let $\vr t$ be a template and $(V,D,\vr c)$ be a multisorted instance of $CSP(\vr t)$. A set $p = \{\, p_i \mid i\in V \,\}$ of maps is {\em consistent} with $P$ if
for all $i\in V$ the map $p_i$ is a unary polynomial of $\m a_i$, and 
for every constraint $(S,R)$ and tuple $r\in R$ the tuple
$p|_S(r) = \lb p_i(r_i) : i\in S\rb$ is also in $R$.
We say that $p$ is {\em permutational}, if each $p_i$ is a permutation,
and it is {\em retractive}, if $p_i(x)$ is a retraction for all $i\in V$.
\end{df}

Clearly, every consistent set $p = \{\, p_i : i\in V \,\}$ of maps can be iterated to obtain an retractive one $p' = \{\, p^k_i : i\in V \,\}$ where $k = (\max_{i\in V} |A_i|)!$, for example. 

\begin{df}\label{retrinst}
Let $P$ be an instance of $CSP(\vr t)$ and $p = \{\, p_i \mid i\in V \,\}$ be a retractive consistent set of maps. The {\em retraction} of $P$ via $p$ is the new instance $p(P)$ of $CSP(\vr t)$ defined as
\[ p(P) = \{\,V, \{p_i(\m a_i) \mid i\in V\,\},\{\ (S,p|_S (R)\mid (S,R)\in \vr c \,\}. \]
\end{df}

It easily follows from the definitions that for each constraint $(S,R)$, the relation
\[ p|_S(R) = \{\, p|_S(r) \mid r\in R \,\} = R\cap\prod_{i\in S}p_i(A_i) \] 
is a subuniverse of $\prod_{i\in S} p_i(\m a_i)$. Also, if $P$ is $(k,l)$-minimal, then so is its retraction $p(P)$.

\begin{lm}\label{l:retraction}
Let $P$ be a multisorted instance of $CSP(\vr t)$ and $p$ be a consistent set of retractions. Then $P$ has a solution if and only if $p(P)$ does.
\end{lm}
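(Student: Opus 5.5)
The plan is to prove the two directions separately, using only Definitions~\ref{consmaps} and~\ref{retrinst}.

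\emph{From a solution of $P$ to a solution of $p(P)$.} Let $f\in\prod_{i\in V}A_i$ be a solution of $P$ and put $\overline f=\lb p_i(f(i)) : i\in V\rb$. For each $i$, $\overline f(i)\in p_i(A_i)$, which is the universe of the domain $p_i(\m a_i)$ of $p(P)$. Take any constraint $(S,R)$ of $P$. Since $f|_S\in R$, consistency of $p$ gives $p|_S(f|_S)\in R$. By the definition of the retraction this tuple also lies in $p|_S(R)$, and it equals $\overline f|_S$. So $\overline f$ satisfies the constraint $(S,p|_S(R))$ of $p(P)$, and hence $\overline f$ is a solution of $p(P)$.

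\emph{From a solution of $p(P)$ to a solution of $P$.} Let $g$ be a solution of $p(P)$. Then $g(i)\in p_i(A_i)\subseteq A_i$ for every $i$. For each constraint $(S,R)$ we have $g|_S\in p|_S(R)$. By the identity $p|_S(R)=R\cap\prod_{i\in S}p_i(A_i)$ noted after Definition~\ref{retrinst}, this gives $g|_S\in R$; concretely, $g|_S=p|_S(r)$ for some $r\in R$, and consistency puts $p|_S(r)$ in $R$. Thus $g$ itself is a solution of $P$.

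There is no real obstacle; the only point needing care is bookkeeping. The constraints of $p(P)$ correspond one-to-one to those of $P$, and the domains of $p(P)$ are subsets of the original domains. Consistency is exactly what guarantees $p|_S(R)\subseteq R$. Idempotence of the $p_i$ is not needed for either direction; it is used only to make $p(P)$ an instance over the retract algebras $p_i(\m a_i)$, so that it belongs to $CSP(\vr t)$. If that membership is also to be verified, I would add one sentence noting that $\vr t$ is closed under unary polynomial retracts (Definition~\ref{CSPtemplate}) and that $p|_S(R)$ is a subuniverse of $\prod_{i\in S}p_i(\m a_i)$, as observed above.
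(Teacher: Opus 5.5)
Your proof is correct and follows the paper's argument exactly. A solution $f$ of $P$ gives the solution $\lb p_i(f(i)) : i\in V\rb$ of $p(P)$, and conversely any solution of $p(P)$ is already a solution of $P$, since $p_i(A_i)\subseteq A_i$ and consistency gives $p|_S(R)\subseteq R$; your remark that idempotence of the $p_i$ is not needed for either direction is also accurate.
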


\begin{proof}
Since $p_i(A_i)\subseteq A_i$ and for any constraint $(S,R)\in \vr c$, $p|_S(R)\subseteq R$, any solution of $p(P)$ is a solution of~$P$. Conversely, if $f$ is a solution of $P$, then the function $p\circ f = \lb p_i(f(i)) : i\in V\rb$ is a solution of~$p(\mathcal A)$.
\end{proof}

\begin{df}\label{decompdef}
Let $P=(V,D,\vr c)$ be a multisorted instance of $CSP(\vr t)$ and $t$ be a binary term
such that $t(x,t(x,y)) = t(x,y)$ for all algebras in $\vr t$. For an element $a\in A_i$ we put $t_a(x) = t(a,x)$, which is an idempotent polynomial of $\m a_i$.
The {\em decomposition} of $P$ via $t$ is the new instance $t(P)$ of
$CSP(\vr t)$ defined as $t(P) = (V',D',\vr c')$. The variables are
\[ 
V' = \{\,(i,a) \mid i\in V,\ a\in A_i\,\}.
\]
The domains are
\[ \begin{gathered}
D'=\{\,A_{i,a} \mid i\in V,\ a\in A_i\,\},\text{ where}\\
A_{i,a}=t_a(A_i)=\{\, t(a,x) \mid x\in A_i \,\}.
\end{gathered}
\]
Finally, the constraints are
\[ \begin{gathered}
\vr c'=\{\,(S_r,R_r)\mid(S,R)\in \vr c,\ r\in R\,\}\,\cup\,\{\,(S_i,T_i)\mid i\in V\,\},\text{ where}\\
S_r=\{\,(i,r(i)) \mid i\in S\,\}\text{ and }R_r=\{\,t(r,x)\mid x\in R\,\},\text{ while}\\
S_i=\{\,(i,a)\mid a\in A_i\,\}\text{ and }\\
T_i=\Sg^{\prod_{a\in A_i}t_a(\sm a_i)}(\{\,\lb t(a,b):a\in A_i\rb\mid b\in A_i\,\}).
\end{gathered}
\]
\end{df}

\begin{lm}
\label{l:tA-solution}
Let $P$ be a multisorted instance of $CSP(\vr t)$ and $t$ be a binary term
such that $\vr t\models t(x,t(x,y)) = t(x,y)$.
If $P$ has a solution, then so does $t(P)$.
\end{lm}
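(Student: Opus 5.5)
Given a solution $f$ of $P$, I would define the candidate solution $g$ of $t(P)$ on the new variables by
\[ g(i,a) = t(a, f(i)) = t_a(f(i)) \quad\text{for } i\in V,\ a\in A_i. \]
Clearly $g(i,a)\in t_a(A_i)=A_{i,a}$, so $g$ is a legitimate assignment. What remains is to check the two families of constraints in Definition~\ref{decompdef}.

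\textbf{Constraints $(S_r,R_r)$.} Take $(S,R)\in\vr c$ and $r\in R$. The scope $S_r$ consists of the variables $(i,r(i))$ for $i\in S$, and $g$ restricted to it is the tuple $\lb t(r(i),f(i)) : i\in S\rb$. Since $f|_S\in R$, this tuple is exactly $t(r,f|_S)$, computed coordinatewise, and so it lies in $R_r=\{t(r,x)\mid x\in R\}$ by definition.

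One subtlety concerns the indexing. The map $i\mapsto (i,r(i))$ is a bijection from $S$ onto $S_r$, so $R_r$ is indexed correctly as a relation on $S_r$. I would also note, for completeness, that $R_r$ is a subdirect product of the domains $A_{i,r(i)}$. This holds because $R$ is a subuniverse and $t$ is a term, so $R_r\subseteq R$ is closed, and its projection to coordinate $i$ is $t_{r(i)}(A_i)$ by subdirectness of $R$. This is not needed for the existence of a solution, only for $t(P)$ to be a well-formed instance.

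\textbf{Constraints $(S_i,T_i)$.} Restricted to $S_i$, $g$ is the tuple $\lb t(a,f(i)) : a\in A_i\rb$. This is one of the generators of $T_i$, namely the one with $b=f(i)\in A_i$, so it lies in $T_i$.

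It is worth recording that $T_i$ is a subuniverse of $\prod_{a}t_a(\m a_i)$ in the retract sense. The retract operations are $t_a\circ f^{\sm a_i}$, and the generators lie in $\prod_a t_a(A_i)$ because $t_a$ is idempotent. This is exactly where the hypothesis $t(x,t(x,y))=t(x,y)$ enters: it makes each $t_a$ a retraction, so that $A_{i,a}$ is a unary polynomial retract of $\m a_i$ and the domains belong to $\vr t$. This is again only relevant to well-definedness, not to the existence of the solution.

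There is no real obstacle here; the only step needing care is matching the coordinate indexing of $R_r$ with $S_r$. Distinct $i\in S$ give distinct variables $(i,r(i))$, so the identification is unambiguous, and the argument is complete.
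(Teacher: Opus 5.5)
Your proof is correct and matches the paper's argument: the paper uses the same assignment $g((i,a))=t(a,f(i))$, checks the $(S_r,R_r)$ constraints via $g|_{S_r}=t(r,f|_S)\in R_r$, and checks the $(S_i,T_i)$ constraints by noting that $g|_{S_i}$ is one of the generators of $T_i$. Your side remarks on the well-formedness of $t(P)$ are not needed for the lemma. They are accurate provided closedness of $R_r$ is read with respect to the retract operations $t_a\circ f$, as you later specify, and not the original operations.
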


\begin{proof}
Let $f$ be a solution of the instance $P$. We define a solution $g$ of $t(P)$ as 
\[ g((i,a)) = t(a,f(i)) \]
for all $(i,a)\in V'$. Clearly, $g((i,a))\in A_{i,a}$. Take a constraint of the form $(S_r,R_r)$. By definition, 
\[ g|_{S_r} = \lb t(r(i),f(i)) : i\in S \rb = t(r,f|_S). \]
However, $f$ is a solution, so both $r$ and $f|_S$ are in $R$ and therefore $t(r,f|_I)\in R_r=\{\,t(r,x)\mid x\in R\,\}$, as well.

Now take a constraint $(S_i,T_i)$ of the second kind. Here
\[ g|_{S_i} = \lb t(a,f(i)) : a\in A_i \rb, \]
that is, $g|_{S_i}$ is one of the generating elements of the algebra $\m t_i$.
\end{proof}

In the next lemma we will try to understand the structure of the relations $\m T_i$ in $t(P)$, so we focus on a single algebra $\m b=\m a_i$ for the moment.

\begin{lm}
\label{l:t-constraint}
Let $\m B$ be an algebra, and $t$ be a binary term such that $\m b\models t(x,t(x,y))=t(x,y)$. For $b\in B$ let $\m B_b = t_b(\m B)$, and put
$\m B^* = \prod_{b\in B}\m B_b$. Let 
\[ \m T = \Sg^{\m B^*} \{\, \lb t(b,c) : b\in B \rb \mid c\in B \,\} \]
and take a tuple $r\in T$. Then the following hold.
\begin{enumerate}
\item $r$ viewed as the map $b\mapsto r(b)$ is a unary polynomial of $\m B$.
\item Let $b_1,b_2\in B$ and $\vartheta$ be a congruence of $\m B$.
If $t(b_1,x) \equiv_\vartheta t(b_2,x)$ for all $x\in B$, then $r(b_1) \equiv_\vartheta r(b_2)$.
\end{enumerate}
\end{lm}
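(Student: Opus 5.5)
The plan is to describe the elements of $\m T$ explicitly as term operations applied to generators, and then read off both claims from that description. Every element $r\in T$ is obtained by applying some term operation of $\m B^*$ to finitely many generators. Concretely, there is an $n$-ary term $s$ and elements $c_1,\dots,c_n\in B$ such that
\[ r = s^{\sm B^*}\bigl(\lb t(b,c_1):b\in B\rb,\dots,\lb t(b,c_n):b\in B\rb\bigr). \]
The operations of $\m B^*$ are computed coordinatewise, and in coordinate $b$ they are the operations of the retract $\m B_b=t_b(\m B)$. So in coordinate $b$ the term $s$ is interpreted in $\m B_b$, where each basic operation $f$ becomes $t_b\circ f^{\sm B}$. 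The first step is to prove, by induction on the complexity of $s$, that there is an $\m B$-term $s'(y,x_1,\dots,x_n)$, depending only on $s$, such that
\[ s^{\sm B_b}(t(b,c_1),\dots,t(b,c_n)) = s'(b,c_1,\dots,c_n) \quad\text{for all } b\in B. \]
For a variable $x_j$ we take $s'=t(y,x_j)$. For $s=f(s_1,\dots,s_k)$ we take $s'=t(y,f(s_1',\dots,s_k'))$. Thus $r(b)=s'(b,c_1,\dots,c_n)$, which is a unary polynomial of $\m B$ in $b$ with constants $c_j$. This proves (1). The identity $t(x,t(x,y))=t(x,y)$ is only needed so that $t_b$ is a retraction, which makes $\m B_b$ and $\m B^*$ well defined; the induction itself does not use it.

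For (2), I would refine the term $s'$ so that the variable $y$ occurs in it only as the first argument of $t$. By construction it does: every occurrence of $y$ in $s'$ has the form $t(y,u)$ for some subterm $u$. Let $p(y,\bar x)$ be obtained from $s'$ by replacing each subterm $t(y,u)$ by the term $\hat t_u$ built from its outermost occurrence. A cleaner way to say this is to prove by induction, simultaneously with the above, that
\[ s'(b_1,\bar c)\equiv_\vartheta s'(b_2,\bar c) \]
whenever $t(b_1,x)\equiv_\vartheta t(b_2,x)$ for all $x\in B$. For the base case $s'=t(y,x_j)$, this is exactly the hypothesis with $x=c_j$. For the inductive step $s'=t(y,f(s_1',\dots,s_k'))$, let $u_i=s_i'(b_1,\bar c)$ and $v_i=s_i'(b_2,\bar c)$, so that $u_i\equiv_\vartheta v_i$ by induction. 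Since $\vartheta$ is a congruence, $f(\bar u)\equiv_\vartheta f(\bar v)$. Then
\[ t(b_1,f(\bar u))\equiv_\vartheta t(b_1,f(\bar v)) \equiv_\vartheta t(b_2,f(\bar v)), \]
where the first step is compatibility of $t(b_1,\cdot)$ with $\vartheta$ and the second is the hypothesis applied to $x=f(\bar v)$. Transitivity finishes the step.

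The main obstacle is bookkeeping rather than depth. The coordinate $b$ enters the polynomial both as a parameter of the retraction and, possibly, inside arguments, and this is exactly what forces the hypothesis in (2) to be stated for all $x$, not just for $x\in\{c_j\}$. Running the inductive construction of $s'$ and the congruence argument together handles this directly, so no separate substitution trick should be needed.
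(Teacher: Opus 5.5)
Your proof is correct and is essentially the paper's argument: the paper inducts over the generation of $\m T$ by basic operations of $\m B^*$, showing that each new tuple has the form $b\mapsto t(b,f(r_1(b),\dots,r_k(b)))$, and then uses exactly your chain $t(b_1,c_1)\equiv_\vartheta t(b_1,c_2)\equiv_\vartheta t(b_2,c_2)$; this is the same induction as yours on the term $s$. You should delete the aborted sentence about replacing subterms $t(y,u)$, since the simultaneous induction that follows it already does the job.
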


\begin{proof}
Each generator tuple $\lb t(b,c) : b\in B \rb$ of $\m T$ is actually a map from $B$ to $B$ and it is a unary polynomial $\m B$ in the variable $b$ where $c$ is a constant. When we generate the subalgebra by these vectors, then we take a basic operation $f$ of $\m B^*$, some unary polynomials $r_1(b),\dots,r_k(b)$ already generated and generate the tuple $\lb r(b):b\in B\rb$ in $\m b^*$ where
\[r(b) = t(b,f(r_1(b),\dots,r_k(b))),\] 
so $r(b)$ is again a unary polynomial of $\m B$ in the variable $b$.

To prove the second claim it is enough to see that $s(b_1)\equiv_\vartheta s(b_2)$ for each generator tuple $s$ and verify that this property is preserved. For $s=\lb t(b,c):b\in B\rb$ a generator of $\m t$, the claim becomes trivial (what is assumed is the same as the desired conclusion). So assume that the unary polynomials $r_1,\dots,r_k$ are already generated, 
$$
\begin{gathered}
r=f^{\sm b^*}(r_1,\dots,r_k)=\lb t(b,f(r_1(b),\dots,r_k(b))):b\in B\rb
\text{ and}\\ 
r_1(b_1)\equiv_\vartheta r_1(b_2), \dots, r_k(b_1) \equiv_\vartheta  r_k(b_1).
\end{gathered}
$$ 
Thus $c_1 = f(p_1(b_1),\dots,f_k(b_1)) \equiv_\vartheta  f(p_1(b_2),\dots,f_k(b_2)) = c_2$, and using again our assumption that $t(b_1,x)\equiv_\vartheta t(b_2,x)$, we get that 
$$r(b_1)=t(b_1,c_1) \equiv_\vartheta t(b_1,c_2) \equiv_\vartheta t(b_2,c_2) = r(b_2)$$
for the newly generated polynomial $r$.
\end{proof}

\begin{lm}
\label{l:A-retraction}
Let $P$ be a multisorted instance of $CSP(\vr t)$ and $t$ be a binary term
such that $\vr t\models t(x,t(x,y)) = t(x,y)$.
If $t(P)$ has a solution, then there exists a consistent set $\{\, p_i : i\in V \,\}$ of unary polynomials for the instance $P$
such that each polynomial $p_i$ of $\m A_i$ satisfies the conclusion of Lemma~\ref{l:t-constraint}.
\end{lm}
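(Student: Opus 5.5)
Let $g$ be a solution of $t(P)$. For each $i\in V$, the restriction $g|_{S_i}=\lb g((i,a)):a\in A_i\rb$ lies in $T_i$, because $(S_i,T_i)$ is a constraint of $t(P)$. We define $p_i(a):=g((i,a))$ for $a\in A_i$. By Lemma~\ref{l:t-constraint}(1), applied with $\m B=\m a_i$, each $p_i$ is a unary polynomial of $\m a_i$. By Lemma~\ref{l:t-constraint}(2), $p_i$ satisfies the congruence condition: whenever $t(b_1,x)\equiv_\vartheta t(b_2,x)$ for all $x$, we get $p_i(b_1)\equiv_\vartheta p_i(b_2)$. So both required properties of the individual maps are immediate from the definition of $T_i$ and the previous lemma.

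It remains to check consistency. Fix a constraint $(S,R)\in\vr c$ and a tuple $r\in R$. We must show $p|_S(r)=\lb p_i(r(i)):i\in S\rb\in R$. By definition $p|_S(r)(i)=g((i,r(i)))$, so $p|_S(r)$ is exactly $g|_{S_r}$, under the identification of $S_r=\{(i,r(i)):i\in S\}$ with $S$. Since $g$ is a solution of $t(P)$ and $(S_r,R_r)$ is a constraint, $g|_{S_r}\in R_r=\{t(r,x):x\in R\}$. Because $t$ is a term operation and both $r$ and $x$ lie in $R$, $t(r,x)\in R$, so $R_r\subseteq R$. Hence $p|_S(r)\in R$.

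One bookkeeping point needs care. The map $i\mapsto(i,r(i))$ is a bijection $S\to S_r$, which makes the identification of $g|_{S_r}$ with a tuple on $S$ legitimate. Different tuples $r$ give different constraints $(S_r,R_r)$, and this is exactly what allows $p_i$ to be evaluated at an arbitrary value $r(i)$. I expect this identification, together with the inclusion $R_r\subseteq R$, to be the only point needing any real attention; everything else follows directly from Lemma~\ref{l:t-constraint}.
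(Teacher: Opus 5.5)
Your proposal is correct and follows the paper's proof essentially verbatim. You define $p_i(a)=g((i,a))$, obtain the polynomial and congruence properties from Lemma~\ref{l:t-constraint} because $g|_{S_i}\in T_i$, and get consistency from $g|_{S_r}\in R_r\subseteq R$. You are slightly more explicit than the paper about why $R_r\subseteq R$ and why Lemma~\ref{l:t-constraint} applies to $g|_{S_i}$, which the paper leaves implicit.
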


\begin{proof}
Let $g$ be a solution of $t(P)$. We define a consistent set $p = \{\, p_i \mid i\in V \,\}$ of unary maps for $P$ as
\[ p_i(a) = g((i,a)) \]
for $i\in V$ and $a\in A_i$.
By Lemma~\ref{l:t-constraint}, each map $p_i : A_i\to A_i$ is a unary polynomial of $\m a_i$ which also satisfies Lemma~\ref{l:t-constraint} (2). To see that these polynomial maps are consistent with $P$, take a constraint $(S,R)\in \vr c$ and a tuple $r\in R$. Since $g$ was a solution to $t(P)$ it respects
the constraint $R_r$, that is the tuple $\lb g((i,r_i)) : i\in S\rb$ is in $R_r \subseteq R$. But this tuple is exactly $p|_S$, which shows that $p$ is consistent.
\end{proof}

\begin{df}\label{defeliminated}
We say that an idempotent algebra $\m b$ can be {\em eliminated}, if whenever 
$\vr t$ is a template such that $\m B\in\vr t$, and $\vr t\setminus\{\m B\}$ is also a template, and $CSP(\vr t\setminus\{\m B\})$ is tractable, then $CSP(\vr t)$ is also tractable.
\end{df}

\begin{lm}
\label{l:withC}
Let $\m B$ be an algebra and $t$ be a binary term 
of $\m B$ such that for each $b\in B$ the map $t_b(x) = t(b,x)$ is a retraction which is not surjective.
Let $C$ be the set of elements $c\in B$ such that $x\mapsto t(x,c)$ is a permutation. If $C$ generates a proper subuniverse of~$\m B$, then
$\m B$ can be eliminated.
\end{lm}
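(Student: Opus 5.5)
Given a template $\vr t$ containing $\m B$ with $\vr t\setminus\{\m B\}$ a template and $CSP(\vr t\setminus\{\m B\})$ tractable, I will solve an instance $P$ of $CSP(\vr t)$ by a polynomial reduction that removes every domain isomorphic to $\m B$. The plan is to run two procedures in parallel and accept if either finds a solution.

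The first procedure uses the decomposition $t(P)$ of Definition~\ref{decompdef}. Each $t_b$ is a retraction, so $\m B\models t(x,t(x,y))=t(x,y)$; because $\vr t\setminus\{\m B\}$ is a template, the idempotent retraction identity may be imposed on a term reduct, or we may work only with the $\m B$-variables. For any domain $\m A_i\cong\m B$, every new domain $A_{i,a}=t_a(A_i)$ is a proper retract of $\m B$. It is therefore smaller, and closure of $\vr t$ under retracts places it in $\vr t\setminus\{\m B\}$. The domains that are not copies of $\m B$ I would leave untouched. Concretely, I decompose only the coordinates whose domain is $\m B$ and keep the others, which is a harmless variant of Definition~\ref{decompdef}. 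The constraints $T_i$ are subalgebras of products of proper retracts, so they are legitimate constraints of the smaller template. The whole construction is polynomial, since $|V'|\le |V|\,|B|$ and each relation $R$ generates $|R|$ new constraints.

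By Lemma~\ref{l:tA-solution}, if $P$ is solvable then so is $t(P)$. If $t(P)$ is solvable, Lemma~\ref{l:A-retraction} gives a consistent family $p=\{p_i\}$ in which every $p_i$ for $\m A_i\cong \m B$ is an element of $T_i$. I iterate $p$ to make it retractive and then apply Lemma~\ref{l:retraction}: $P$ is solvable iff $p(P)$ is. If the retract $p_i(A_i)$ is proper for every $\m B$-coordinate, then $p(P)$ is an instance over $\vr t\setminus\{\m B\}$ and the smaller algorithm decides it.

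The main obstacle is the case where some $p_i$ is a permutation of $B$. This is where the set $C$ enters. Every $r\in T_i$ has the form $r(b)=t(b,s(b))$, so $r$ ranges over the subalgebra generated by the maps $b\mapsto t(b,c)$. I expect to show that if $r$ is a permutation, then every generator $x\mapsto t(x,c)$ entering its construction must be a permutation, so $c\in C$. It would follow that the permutational elements of $T_i$ come only from $\Sg(C)$, and by finiteness their iterates are the identity. To exploit this, I will modify the first procedure: in $T_i$ I generate only from $c\notin C$, or equivalently I compose the retraction so that it lands outside the identity.

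The second procedure handles the remaining possibility, namely solutions whose values at the $\m B$-coordinates lie in the proper subuniverse $\Sg(C)$. I tighten those coordinates to $\Sg(C)$, which lies in the smaller template by closure under subalgebras, and solve. Then I show the dichotomy: either $P$ has a solution valued in $\Sg(C)$ on the $\m B$-coordinates, or the restricted decomposition has a solution yielding non-surjective $p_i$. I would prove this by taking a solution $f$ of $P$ with some $f(i)\notin\Sg(C)$ and verifying that $g((i,a))=t(a,f(i))$ gives non-permutational $p_i$, because $f(i)\notin C$. Making this dichotomy work uniformly over all $\m B$-coordinates at once is the delicate step. If it fails, I would first tighten a single coordinate at a time and repeat, which needs at most $|V|$ rounds.
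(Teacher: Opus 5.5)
Your architecture matches the paper's: first look for solutions that are $\Sg(C)$-valued on the $\m B$-coordinates, then solve $t(P)$ over the smaller template, extract a consistent family by Lemma~\ref{l:A-retraction}, iterate it to retractions and shrink via Lemma~\ref{l:retraction}. The one step that guarantees progress is missing, however, and the structural claim you offer in its place is false. Nothing in $t(P)$ forces the solver to return a solution whose family $p$ is non-permutational at some $\m B$-coordinate, and permutations in $T_i$ need not come from generators with $c\in C$. For instance, take $B=\{0,1,2\}$ with the commutative operation given by $t(0,1)=1$, $t(1,2)=2$, $t(2,0)=0$, $t(x,x)=x$, together with an idempotent majority operation $m$. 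Each $t_b$ is a non-surjective retraction, onto $\{0,1\}$, $\{1,2\}$, $\{0,2\}$ respectively. No map $x\mapsto t(x,c)$ is a permutation, so $C=\emptyset$ and $\Sg(C)$ is proper. Yet $m$ applied in $\prod_b t_b(\m B)$ to the three generators gives $b\mapsto t(b,m(t(b,0),t(b,1),t(b,2)))=t(b,b)=b$, which is the identity. Consequently, for an instance all of whose domains are $\m B$, the family $g((j,a))=a$ solves $t(P)$ (use $t(r,r)=r$), and the solver may hand you the identity family, giving no progress. Restricting the generators of $T_i$ to $c\notin C$ changes nothing here. Doing it at all $\m B$-coordinates at once would also break Lemma~\ref{l:tA-solution}, since a solution may take values in $C$ at other $\m B$-coordinates.

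The missing idea is to pin the $S_i$-block of $t(P)$ rather than restrict generators. For each variable $i$ with $\m A_i\cong\m B$ and each $d\in B\setminus C$, add to $t(P)$ the constraint $g|_{S_i}=\lb t(b,d):b\in B\rb$ (that is, tighten each $A_{i,b}$ to $\{t(b,d)\}$) and solve. Suppose $P$ has a solution but none that is $\Sg(C)$-valued on the $\m B$-coordinates. Then some solution $f$ has $f(i)=d\notin\Sg(C)\supseteq C$ at some such $i$, and $g((j,a))=t(a,f(j))$ solves that pinned instance. Conversely, any solution of a pinned instance yields a consistent family with $p_i=t(\cdot,d)$, which is not a permutation by the definition of $C$, so after iterating, $A_i$ shrinks. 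Your last paragraph contains the first half of this argument, but without the pinning constraint you cannot make the black-box solver return such a $g$. Note also that one strictly smaller domain per round suffices, with the number of rounds bounded by $\sum_i|A_i|$, so you need not make every $\m B$-coordinate shrink at once. Finally, your ``harmless variant'' of decomposing only the $\m B$-coordinates is not justified. The map applying $t(r_j,\cdot)$ on some coordinates and the identity on the others is not a polynomial of $\m R$, so its image need not be a subuniverse. The paper simply works under the hypothesis of Definition~\ref{decompdef} that $t(x,t(x,y))\approx t(x,y)$ holds throughout the template.
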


\begin{proof}
Let $\vr t$ be a template containing $\m B$, assume that $\vr t\setminus \{\m b\}$ is also a template, that $CSP(\vr t\setminus \{\m b\})$ is tractable, and let $P$ be an instance of $CSP(\vr t)$ containing at least one copy of $\m B$. Replace all occurences of $\m B$ in $P$ with the subalgebra generated by the set $C$ and restrict the constraints accordingly. Clearly, this new instance is an instance of $CSP(\vr t\setminus\{\m B\})$ so it can be solved in polynomial time. If it has a solution, then we are done, so we can assume that it does not.

Since the maps $t_b$ are not surjective, $|t_b(\m b)|<|\m B|$ and therefore the decomposition $t(P)$ is an instance of $CSP(\vr t\setminus\{\m B\})$. Thus it can be solved in polynomial time. If $t(P)$ has no solution, then $P$ has no solution, either, by Lemma~\ref{l:tA-solution}. On the other hand, if $t(P)$ has a solution, then by Lemma~\ref{l:A-retraction} we have a consistent set $p = \{\, p_i : i\in V \,\}$ of unary polynomials for $P$. Let us assume for a moment that $p$ is not permutational. Now $p$ can be iterated to obtain a retractive non-permutational consistent set $p'$ of unary polynomials for $P$. By Lemma~\ref{l:retraction} we know that $P$ has a solution if and only if $p'(P)$ does. Also, since $p'$ is non-permutational, at least one of the domains of $p'(P)$ is smaller than that of $P$. So by iterating this procedure we will either find out that $P$ has no solution, or get to a point when the algebra $\m B$ no longer occurs in the instance $P$.

Now we go back to the problem of making sure that $p$ becomes non-permutational. From the first paragraph of the proof, we know that if $P$ has a solution $f$, then for at least one $i\in V$, $\m a_i\cong\m B$ and $f(i)\not\in C$. Let us iterate through all variables $i\in V$ such that $\m a_i\cong \m B$ (for simplicity, we assume $\m A_i=\m B$) and all elements $d\in B\setminus C$. For each choice of $i$ and $d$ we create a new instance from $t(P)$ by adding the new unary constraint stating that the solution $g|_{S_i} = \lb t(b,d) : b\in B_i \rb$. This ensures that $p_i(b) = t(b,d)$, that is, it is not permutational. If for any of these choices we find a solution to $t(P)$, then we can reduce the instance $P$ as shown in the paragraph above. Otherwise we conculde that the instance has no solution.
\end{proof}

\subsection{Applications}

\begin{cor}
\label{c:single-maximal-block}
Let $\m a$ be a finite idempotent algebra, $t(x,y)$ a term of $\m a$ and ${\sim}\in\cn a$ such that $(A/{\sim};t)$ is a semilattice with more than one maximal element. Then $\m a$ can be eliminated.
\end{cor}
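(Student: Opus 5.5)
The plan is to reduce to Lemma~\ref{l:withC}. The given term $t$ need not satisfy the hypotheses of that lemma, so the first step is to replace it with a better binary term.

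\textbf{Step 1: make $t$ retractive.} Define $s(x,y)=t(x,t(x,\dots,t(x,y)))$, with $t$ iterated $k=|A|!$ times. For each fixed $b$, the map $s_b$ is then the $k$-th power of $t_b$, hence an idempotent unary polynomial. So $s(x,s(x,y))\approx s(x,y)$ holds in $\m a$, and $s_b$ is a retraction. Modulo $\sim$, the operation $s$ still equals the semilattice meet, by idempotence and absorption in a semilattice: $x\mt(x\mt y)=x\mt y$.

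\textbf{Step 2: $s_b$ is not surjective.} Since $A/{\sim}$ has at least two maximal elements, for any $b$ there is a maximal class $M$ with $M\not\le[b]_\sim$. The image $s_b(A)$ lies in the $\sim$-classes below $[b]_\sim$, so it misses $M$. Hence $s_b$ is a retraction which is not surjective.

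\textbf{Step 3: control the set $C$.} Let $C$ be the set of $c$ such that $x\mapsto s(x,c)$ is a permutation. If $c\in C$, then every class $[x]_\sim$ must be of the form $[x']_\sim\mt[c]_\sim\le[c]_\sim$, so every class lies below $[c]_\sim$. That would make $[c]_\sim$ a greatest element, contradicting the existence of two maximal elements. So $C=\emptyset$, and the empty set generates a proper subuniverse. Because $\m a$ is idempotent, the empty set is a subuniverse, and it is proper. Lemma~\ref{l:withC} then applies directly and gives that $\m a$ can be eliminated.

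\textbf{Main obstacle.} The delicate point is the degenerate case $C=\emptyset$. In the proof of Lemma~\ref{l:withC}, replacing $\m B$ by the subalgebra generated by $C$ yields empty domains. That instance trivially has no solution, and the argument proceeds with $d$ ranging over all of $B$, so nothing breaks. If one prefers to avoid empty algebras, the fallback is to rerun the proof of Lemma~\ref{l:withC} directly. In that argument, every choice of $d$ yields a non-permutational $p_i$ (since $s_b$ is never surjective), so the iteration shrinking domains goes through and eventually removes $\m a$ from the instance.

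I would also check the step requiring $t(P)$ to lie in $CSP(\vr t\setminus\{\m a\})$. Its domains $s_b(\m a_i)$ are unary polynomial retracts, hence in the template by closure under retracts, and they are strictly smaller than $\m a$ by Step~2.
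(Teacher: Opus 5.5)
Your proof is correct and follows the paper's argument. You iterate $t$ in the second variable so that each $t_b$ becomes a retraction while $(A/{\sim};t)$ stays the same semilattice. You then use the two maximal $\sim$-classes to show that neither $x\mapsto t(b,x)$ nor $x\mapsto t(x,c)$ can be surjective, and apply Lemma~\ref{l:withC} with $C=\emptyset$. Your extra checks, namely the maximal class not below $[b]_\sim$ and the degenerate case $C=\emptyset$, just spell out details the paper leaves implicit.
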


\begin{proof}
We can iterate $t$ to obtain $t(x,t(x,y)) = t(x,y)$ on $\m a$, while $(A/{\sim};t)$ is the same semilattice. Since $\m a/{\sim}$ has more than one maximal element, for all $a\in A$ the maps $x\mapsto t(a,x)$ and $x\mapsto t(x,a)$ are not permutations. Thus we can apply Lemma~\ref{l:withC} with $C=\emptyset$.
\end{proof}

\begin{cor}
\label{c:single-maximal-element}
Let $\m a$ be a finite idempotent algebra, ${\sim}\in\Cn a\setminus\{1_A\}$ and $t$ be a binary term such that $(A/{\sim};t)$ is a semilattice with the largest element $B$ ($B$ is the neutral element in $(A/{\sim};t)$). If $B$ contains more than one element and satisfies $t(x,y)=x$, then $\m a$ can be eliminated.
\end{cor}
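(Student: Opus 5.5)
We apply Lemma~\ref{l:withC} to $\m a$ with the term $t$, after first iterating $t$ so that $t(x,t(x,y))=t(x,y)$ holds on $\m a$. Iteration in the second variable does not change the semilattice $(A/{\sim};t)$, since $t$ acts on $A/{\sim}$ as a semilattice operation and $x\wedge(x\wedge y)=x\wedge y$. It also preserves $t(x,y)=x$ on $B$, since then $t(x,t(x,y))=t(x,x)=x$ by idempotence. Concretely, we replace $t$ by $t^{(k)}(x,y)=t(x,t(x,\dots,t(x,y)))$ with $k$ chosen so that every map $y\mapsto t^{(k)}(a,y)$ is idempotent; such a $k$ exists because $A$ is finite. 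Each $t_a(x)=t(a,x)$ is then a retraction.

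Next we check that no $t_a$ is surjective. Suppose first $a\notin B$. Then $[t(a,x)]_\sim=[a]_\sim\wedge[x]_\sim\le[a]_\sim<B$, so the image of $t_a$ misses $B$ and $t_a$ is not onto. Suppose instead $a\in B$. Then $t_a$ is constant equal to $a$ on $B$, because $t(a,x)=a$ for $x\in B$. Since $B$ has at least two elements, $t_a$ is not injective and hence not surjective. (Here we also use ${\sim}\neq 1_A$, which guarantees that $B\neq A$; this is not strictly needed for non-surjectivity but keeps the setting nondegenerate.)

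We then identify $C=\{c\in A: x\mapsto t(x,c)\text{ is a permutation}\}$. If $c\notin B$, the map $x\mapsto t(x,c)$ sends every $x$ into classes $\le[c]_\sim<B$, so it misses $B$ and is not a permutation; hence $C\subseteq B$. For $c\in B$ and $x\in B$ we have $t(x,c)=x$, so the map restricts to the identity on $B$. Off $B$ it is $x\mapsto t(x,c)$ with $[x]_\sim\wedge B=[x]_\sim$, and this may or may not be a permutation; we do not need to decide which. What matters is that $C\subseteq B$ and that $B$ is a subuniverse, being a class of a congruence of an idempotent algebra. So $\Sg(C)\subseteq B\subsetneq A$, where the last inclusion holds because ${\sim}\neq 1_A$ forces more than one $\sim$-class. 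Lemma~\ref{l:withC} then shows that $\m a$ can be eliminated.

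The main point requiring care is the iteration step: we must confirm that the iterated term still induces the same semilattice on $A/{\sim}$ with top element $B$, and that it still acts as the first projection on $B$. Both follow directly from idempotence and from the semilattice identity $x\wedge(x\wedge y)=x\wedge y$, so no real obstacle is expected.
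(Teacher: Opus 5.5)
Your proposal is correct and follows essentially the same route as the paper: iterate $t$ in the second variable to make each $t_a$ a retraction, observe that no $t_a$ is surjective, show that $C\subseteq B$, and apply Lemma~\ref{l:withC} using that $B$ is a proper subuniverse. You simply spell out the verifications (the effect of iteration on $A/{\sim}$ and on $B$, non-injectivity of $t_a$ on $B$ for $a\in B$, and $\Sg(C)\subseteq B\subsetneq A$) that the paper leaves implicit.
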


\begin{proof}
We can assume that $t(x,t(x,y)) = t(x,y)$ on~$\m a$, since we can iterate $t$ in the second variable without destroying the required properties stated in the lemma. Suppose that the ${\sim}$-block $B$ has more than one element.
Then the maps $t_a(x) = t(a,x)$ are not permutations for any $a\in A$. Moreover, for any $c\in A$ for which $x\mapsto t(x,c)$ is a permutation we must have $c\in B$. However, $B$ is a proper subuniverse of $\m a$, thus we can apply Lemma~\ref{l:withC} to finish the proof.
\end{proof}

\begin{cor}\label{SMBeliminated}
Let $\m a$ be a finite SMB algebra over $\sim$. $\m a$ can be eliminated, unless $\m a/{\sim}$ has the largest block $B$ with respect to the semilattice order, that largest block satisfies $|B|=1$. 

Moreover, in the case when there is the largest $\sim$-block $B$, $B=\{b\}$, and $p(x)=b\mt x$ is not a permutation, then $\m a$ can be eliminated.
\end{cor}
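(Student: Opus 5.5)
The plan is to reduce everything to the two preceding corollaries and to Lemma~\ref{l:withC}, applied with the binary term $t(x,y)=x\mt y$. First, by Proposition~\ref{SMBtospec}, pass to the regular term reduct. The congruence ${\sim}$ is unchanged, and $\mt'$ induces the same semilattice on $A/{\sim}$, because it is a term that still turns $\m a/{\sim}$ into a semilattice in which $\mt'$ is the first projection on blocks. Since elimination is a statement about templates, and term reducts only make the CSP no easier, it suffices to work with a regular SMB algebra. In particular, $x\mt(x\mt y)\approx x\mt y$ holds by Definition~\ref{regSMBdef}(4), so each $t_a$ is a retraction.

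For the first assertion, split into two cases. If $\m a/{\sim}$ has no largest element, it has more than one maximal element, and Corollary~\ref{c:single-maximal-block} applies directly with $t=\mt$. If it has a largest block $B$ with $|B|>1$, then ${\sim}\neq 1_A$ unless $A=B$. The case $A=B$ (a single Mal'cev block) must be treated separately: if the statement intends it to be eliminable, I would argue from the tractability of Mal'cev algebras. Otherwise, $\mt$ is the first projection on $B$ by Definition~\ref{SMBdef}(2), so $t(x,y)=x$ holds on $B$ and $B$ is neutral in $(A/{\sim};\mt)$. Corollary~\ref{c:single-maximal-element} then gives elimination.

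For the second assertion, suppose $B=\{b\}$ is the largest block and $p(x)=b\mt x$ is not a permutation. Each $t_a(x)=a\mt x$ is a retraction by regularity. I would show none is surjective. For $a=b$ this is the hypothesis. For $a\notin B$, note that $[a\mt x]_{\sim}\le[a]_{\sim}<[b]_{\sim}$, so $b$ is not in the image. Next, identify $C=\{c: x\mapsto x\mt c \text{ is a permutation}\}$. If $x\mapsto x\mt c$ is a permutation, then $b\mt c$ lies in the block $[c]_{\sim}$ (since $[b]$ is the top) and, by injectivity, $b\mt c$ must differ from every $x\mt c$ with $x\neq b$. The plan is to show $C\subseteq\{b\}$, or at least that $C$ generates a proper subuniverse. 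Using $[x\mt c]=[x]\wedge[c]$, a permutation forces $[c]$ to be the top, so $c=b$. Hence $C\subseteq\{b\}$, and $\{b\}$ is a subuniverse by idempotence. It is proper because $p$ is not a permutation, which forces $|A|>1$. Lemma~\ref{l:withC} then finishes the proof.

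The main obstacle I expect is bookkeeping rather than depth. I need to check that the regular reduct preserves the relevant hypotheses, namely that $B$ is still the top block and that $p$ is still a non-permutation: on singleton-top algebras, $b\mt' x$ could differ from $b\mt x$. If that causes trouble, I would instead apply Lemma~\ref{l:withC} directly to an iterate $t(x,y)=b\mt(\cdots)$ of the original $\mt$ in the second variable, which stabilizes to a retraction without changing block behaviour.
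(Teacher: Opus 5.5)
Your core argument is the paper's: the first assertion comes from Corollaries~\ref{c:single-maximal-block} and~\ref{c:single-maximal-element} with $t=\mt$. The second comes from Lemma~\ref{l:withC} with $t(x,y)=x\mt' y$, the iterate of $\mt$ in the second variable, and $C\subseteq\{b\}$.

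What does not work is your opening reduction to the regular reduct. Definition~\ref{defeliminated} quantifies over templates in the similarity type of $\m a$, under the hypothesis that $CSP(\vr t\setminus\{\m a\})$ is tractable. To use that the reduct $\m a'$ can be eliminated, you would need tractability of a template of reducts. That does not follow, because reducts have more compatible relations.

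The reduction is also unnecessary. Lemma~\ref{l:withC} and both corollaries only need a binary \emph{term} of $\m a$, and $\mt'$ is one. So apply the lemma to $\m a$ itself with $t=\mt'$; this is your fallback, and it is exactly the paper's proof. The only line then missing from ``for $a=b$ this is the hypothesis'' is $b\mt' x=b\mt(b\mt\cdots(b\mt x))=p^{k}(x)$ (the paper takes $k=|A|!$). This map is not surjective because $p$ is not. Your block arguments for $a\notin B$ and for $C\subseteq\{b\}$ carry over unchanged. Your remark that the hypothesis on $p$ forces $|A|>1$ neatly replaces the paper's ``unless $|A|=1$''.

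On the case ${\sim}=1_A$, $|A|>1$: you are right that Corollary~\ref{c:single-maximal-element} assumes ${\sim}\neq 1_A$, and the paper's one-line treatment of the first assertion passes over this. But ``argue from the tractability of Mal'cev algebras'' does not give eliminability. Definition~\ref{defeliminated} requires solving multisorted instances in which $\m a$ occurs together with arbitrary other sorts from $\vr t$, and tractability of $CSP(\m a)$ alone says nothing about those.

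Lemma~\ref{l:withC} cannot handle this case either. Take $\m a=(\mathbb{Z}_p;\mt,d)$ with $x\mt y=x$ and $d(x,y,z)=x-y+z$. Its binary terms are $ax+(1-a)y$. The only one for which every $t_c$ is a non-surjective retraction is $t(x,y)=x$, and that gives $C=A$.

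The first assertion should therefore be read as excluding Mal'cev algebras. This matches Definition~\ref{defM-irred} and Theorem~\ref{M-irrweaker}, where the corollary is applied only to non-Mal'cev sorts.
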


\begin{proof}
The first paragraph is a special case of Corollaries~\ref{c:single-maximal-block} and \ref{c:single-maximal-element}.

As for the final sentence, the iteration which produces $\mt'$ from $\mt$ (used in \cite{SMB1} in the proof of Proposition~\ref{SMBtospec}) will give us a term $t(x,y)=x\mt' y$ such that $t(a,x)$ is a retraction which is not surjective for any $a\in A$. This is obvious for $a\notin B$, while if $a=b$, then $$b\mt' x=b\mt (b\mt \dots(b\mt x)\dots)=p^{|A|!}(x),$$
and since $p(x)$ is not surjective, neither is $b\mt' x$. The only $c$ such that $t(x,c)$ is a permutation is $b$, so the conditions of Lemma~\ref{l:withC} are fulfilled (unless $|A|=1$) and $\m a$ can be eliminated.
\end{proof}

\begin{cor}\label{treetract}
Let $\m a$ be a finite SMB algebra over $\sim$. Assume that $\m a/{\sim}$ is a tree-ordered meet semilattice (i.e. for any $\sim$-class $B$, the set of all $\sim$-classes below $B$ or equal to it is linearly ordered). Then $CSP(\m a)$ is tractable.
\end{cor}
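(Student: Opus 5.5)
We argue by induction on the size of the template, eliminating algebras one at a time in the sense of Definition~\ref{defeliminated}. Replace $\m a$ by its regular term reduct (Proposition~\ref{SMBtospec}) and consider the template $\vr t(\m a)$. It is finite up to isomorphism, since the language is finite. The key observation is that every member of $\vr t(\m a)$ is again an SMB algebra whose $\sim$-quotient is tree-ordered. This holds because:
\begin{itemize}
\item $\vr s$ is a variety by Theorem~\ref{SMBvar};
\item the quotient semilattice of a subalgebra is a subsemilattice of a tree order, hence tree-ordered;
\item homomorphic images induce images of the semilattice that respect the order;
\item retracts $p(\m b)$, with $p$ a unary polynomial, again carry the block structure induced by $\sim$ (I would verify this directly from Definition~\ref{SMBdef}).
\end{itemize}
Order the members of $\vr t(\m a)$ by size, and show by induction on the number of isomorphism types that $CSP(\vr t)$ is tractable for every template $\vr t\subseteq\vr t(\m a)$. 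The base case is the template of one-element algebras.

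For the inductive step, pick a member $\m b$ of maximal size. Then $\vr t\setminus\{\m b\}$ is still a template, because every proper subalgebra, proper quotient and proper retract of $\m b$ is strictly smaller. By the induction hypothesis it suffices to show that either $\m b$ can be eliminated or $CSP(\vr t)$ is tractable for another reason. By Corollary~\ref{SMBeliminated}, $\m b$ can be eliminated unless $\m b/{\sim}$ has a largest block $B=\{b\}$ and $x\mapsto b\mt x$ is a permutation. In that remaining case, $b\mt x$ lies in the block of $x$, since $[b]$ is the top and $[b\mt x]=[x]$. Composing with regularity (Definition~\ref{regSMBdef}(2)) should then force strong structure.

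The remaining case is the main obstacle. Since the order is a tree with a top element, I expect it to be linear. In a tree-ordered meet semilattice, the downset of any element is a chain, and the top lies above everything, so the downset of the top is everything. Hence $\m b/{\sim}$ is linearly ordered, and Theorem~\ref{lintract} gives tractability of $CSP(\m b)$.

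I need more than this, however: tractability of the multisorted problem over $\vr t$, with domains of varying kinds. I would handle this by noting that any instance containing $\m b$ can be treated with the algorithm of Theorem~\ref{lintract}. Indeed, the proof of the Claim there only uses that each domain is a regular SMB algebra with a least block whose elements are left absorbing for $\mt$. Every domain in $\vr t$ that is a subalgebra of a tree-ordered SMB algebra has a least block, because meet-semilattices have a bottom. So that algorithm applies verbatim to arbitrary multisorted instances over tree-ordered regular SMB domains. Failing that, I would simply combine elimination of all non-linear members with the linear algorithm on the rest.

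I expect the delicate points to be the following:
\begin{itemize}
\item checking that regularity survives the passage to retracts, or else re-applying Proposition~\ref{SMBtospec} to each domain;
\item making sure the tightenings in Step~4.b keep the domains inside the template.
\end{itemize}
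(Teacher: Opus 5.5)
Your outline (regular reduct, peel algebras off $\vr t(\m a)$ by elimination, finish with the linear algorithm) has the right shape, but the step you rely on in the remaining case is false. The algorithm of Theorem~\ref{lintract} does \emph{not} run verbatim on multisorted instances with tree-ordered domains. The Claim in that proof does only need a least block. However, Step~4.b replaces $A_i$ by $A_i\setminus D$, and by Corollary~\ref{regSMBsubalg} this is a subuniverse only when $(A_i/{\sim})\setminus\{D\}$ is closed under meet, i.e.\ when the least block has a unique upper cover. This is exactly where linearity is used. In a flat semilattice $I_1\wedge I_2=O$, so after one such tightening the domain is no longer a subalgebra and has no least block. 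The least-block Mal'cev test then cannot be run again; you would have to branch over the upper blocks. Flat semilattices are tree-ordered, and Theorem~\ref{flattract} needed the whole strand machinery precisely for this reason.

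Consequently your induction, which always removes a member of maximal size overall, gets stuck in the case you isolate. There $\m b$ is linear with top block $\{b\}$ and $x\mapsto b\wedge x$ a permutation, so Corollary~\ref{SMBeliminated} does not eliminate it. Meanwhile $\vr t$ may still contain smaller non-linear members, and you have no algorithm for instances that mix $\m b$ with those sorts.

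Your fallback sentence is the paper's actual proof, but it needs an ingredient you never supply. A member of $\vr t(\m a)$ whose $\sim$-classes are not linearly ordered is never a subalgebra, homomorphic image or retract of a member whose classes are linearly ordered. Hence deleting a non-linear member of maximal size \emph{among the non-linear members} always leaves a template, even when larger linear members remain. This is the closure condition in the direction that actually matters: no remaining member produces the deleted one. Your justification that $\vr t\setminus\{\m b\}$ is a template instead argues about what is produced \emph{from} $\m b$.

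Each non-linear member has at least two maximal blocks, since a tree order with a unique maximal block is a chain, so it is eliminable by Corollary~\ref{c:single-maximal-block}. Deleting the non-linear members in decreasing order of size leaves a template of regular SMB algebras with linearly ordered blocks. On that template the multisorted version of Theorem~\ref{lintract} works, because removing the bottom of a chain leaves a chain, so Step~4.b is harmless. The elimination steps then lift tractability back to $\vr t(\m a)$. With this ordering, Corollary~\ref{SMBeliminated} and the unital case never enter.
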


\begin{proof}
According to Proposition~\ref{SMBtospec}, we replace $\m a$ with its term reduct to assume that $\m a$ is a regular SMB algebra with tree-ordered $\sim$-classes. Let $\vr t:=\vr t(\m a)$. Note that any $\m b\in \vr t$ in which the order of ${\sim}$-classes is not linear can not be a homomorphic image, a subalgebra, or a retract of any $\m c\in \vr t$ in which the order of ${\sim}$-classes is linear. Hence for any template $\vr t'\subseteq \vr t$ and any $\m b\in \vr t'$ which has maximal size among algebras in $\vr t'$ which do not have a linear order of $\sim$-classes, we have that $\vr t'\setminus\{\m b\}$ is also a template. Finally, note that in any $\m b\in\vr t$ in which the order of ${\sim}$-classes is not linear, the semilattice $\m b/{\sim}$ has more than one maximal element, and thus can be eliminated by Corollary~\ref{c:single-maximal-block}. 

So we can make a sequence of templates $\vr t=\vr t_0,\vr t_1,\dots,\vr t_k$ so that for all $i\leq k$, $\vr t_i=\vr t_{i-1}\setminus \{\m b_i\}$, $\m b_i\in\vr t_{i-1}$ has maximal size among algebras in $\vr t$ with non-linear order of ${\sim}$-classes, and $\vr t_k$ is a template of regular SMB algebras with linear order of $\sim$-classes. By a multisorted variant of Theorem~\ref{lintract}, $CSP(\vr t_k)$ is tractable, and thus, by Definition~\ref{defeliminated} and an inductive argument, $CSP(\vr t_0)=CSP(\vr t(\m a))$ is also tractable.
\end{proof}

\begin{df}\label{defM-irred}
Let $P=(V,D,\vr c)$ be a multisorted instance of $CSP(\vr t)$, where $\vr t$ consists of finite SMB algebras. We say that $P$ is strongly M-irreducible if for all $\m a_i\in D$, either $\m a_i$ is a Mal'cev algebra (i.e. ${\sim}$ is the full relation on $A_i$), or $\m a_i/{\sim}$ has the largest element $B$ (with respect to the semilattice order) and $|B|=1$.

We say that $P$ is weakly M-irreducible if for any $\m a_i\in D$ which is maximal-sized among non-Mal'cev sorts in $D$, $\m a_i$ is unital.
\end{df}

The following result is Corollary 1 of A.Bulatov's paper \cite{BuSMB}, but the argument given there has a gap:

\begin{thm}\label{M-irrdesired}
Let $\m a$ be a finite SMB algebra. For any instance $P$ of $CSP(\m a)$ we can apply a polynomial time algorithm which either solves $CSP(P)$ or reduces it to an equivalent strongly M-irreducible multisorted instance of $CSP(\vr t(\m a))$.
\end{thm}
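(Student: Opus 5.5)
The plan is to run the elimination machinery of Lemma~\ref{l:withC} on \emph{every} non-Mal'cev sort that is not unital, not only on the maximal-sized ones. The reason the argument in \cite{BuSMB} only gives the weak version (Theorem~\ref{M-irrweaker}) is that Lemma~\ref{l:withC} needs a polynomial-time oracle for the decomposed instance $t(P)$. For a non-maximal sort $\m b$, the template $\vr t\setminus\{\m b\}$ still contains larger non-unital algebras, so no such oracle is available by induction inside the argument itself. I would get the oracle from outside, using the general tractability results of Zhuk recalled in Section~5 as a black box. Every algebra in $\vr t(\m a)$ is an idempotent algebra with a Taylor term by Theorem~\ref{SMBvar}, so every instance of $CSP(\vr t(\m a))$, in particular every decomposition $t(P)$, can be decided in polynomial time. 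This is the ``full power of a more general result'' drawback mentioned in the Overview.

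First I would replace $\m a$ by its regular term reduct (Proposition~\ref{SMBtospec}), make $P$ $(1,1)$-minimal, and view it as a multisorted instance of $CSP(\vr t(\m a))$. Then I would repeat the following loop while some sort $\m a_i$ is neither Mal'cev nor of the required form (largest $\sim$-block a singleton $\{b\}$). If $\m a_i/{\sim}$ has several maximal elements, take $t$ as in Corollary~\ref{c:single-maximal-block}. If it has a largest block $B$ with $|B|>1$, take $t$ as in Corollary~\ref{c:single-maximal-element}. In both cases each $t_a$ is a non-surjective retraction and the set $C$ of elements $c$ for which $x\mapsto t(x,c)$ is a permutation generates a proper subuniverse.

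Following the proof of Lemma~\ref{l:withC}, I would first solve the tightening to $\Sg(C)$ with the oracle; a solution there solves $P$. Otherwise, for each variable $i$ carrying that sort and each $d\notin C$, I would add the unary constraint $g|_{S_i}=\lb t(b,d):b\in A_i\rb$ to $t(P)$ and decide the result with the oracle. A solution yields, by Lemma~\ref{l:A-retraction}, a consistent non-permutational set of polynomials. Iterating it gives a retractive consistent set $p'$, and Lemma~\ref{l:retraction} lets me pass to $p'(P)$, in which some domain has strictly shrunk. If no choice of $i,d$ succeeds, I output that $P$ has no solution.

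If a single domain contains several non-unital sorts, I would apply the loop to one of them at a time; the choice of $t$ is local to a sort and the constraints of $t(P)$ for the other sorts are harmless. Since each round strictly decreases $\sum_i|A_i|\le n|A|$, the loop terminates after polynomially many rounds, each consisting of polynomially many oracle calls on instances of size polynomial in $|P|$ (the decomposition has at most $n|A|$ variables and $\sum_{(S,R)}|R|$ constraints). The remaining sorts are either Mal'cev or have a singleton top block, so the final instance is strongly M-irreducible in the sense of Definition~\ref{defM-irred}. It is equivalent to $P$ because every step either preserves solvability (Lemmas~\ref{l:tA-solution} and \ref{l:retraction}) or outputs the answer directly.

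The main obstacle I expect is checking that retracts $p'_i(\m a_i)$ and the domains $t_a(A_i)$ of the decomposition remain in $\vr t(\m a)$ and remain SMB algebras, so that the loop and the oracle stay applicable. Retracts are in $\vr t(\m a)$ by Definition~\ref{CSPtemplate}, and the SMB variety is closed under the operations involved, but I would verify that the retract inherits a congruence $\sim$ making it SMB. If this fails, I would fall back to treating the retract only as a Taylor algebra for the oracle and re-checking the SMB structure at the start of each round.

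The avoidance of circularity also deserves care. The oracle is the external Dichotomy Theorem, not a recursive call to this reduction, so recursion depth is not an issue.
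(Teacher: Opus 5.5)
\textbf{The gap is the oracle.} You decide every instance of $CSP(\vr t(\m a))$ in polynomial time because its algebras are Taylor, i.e.\ you invoke the conclusion of the Dichotomy Theorem. But $P$ is itself such an instance, so that oracle decides $P$ before any elimination starts. The first alternative of the theorem (``solves $CSP(P)$'') is therefore always available, and your loop never needs to run.

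With that black box the statement becomes trivial, and the argument is circular relative to its role. The statement is Corollary~1 of Bulatov's SMB paper, a reduction step in proving that CSP over SMB algebras is tractable, and the Dichotomy Theorem contains that tractability as a special case. The drawback the paper admits in Section~5 is of a different kind: it imports one structural result from Zhuk's proof (Theorem~\ref{Zhlemma}), not tractability itself.

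The mechanics of your loop are otherwise sound. One correction: $t(P)$ must be formed with a single term for all sorts, so the choice of $t$ is not ``local to a sort''. This is harmless, because $t=\mt$ of the regular reduct serves in both of your cases.

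\textbf{What is missing} is a reason, obtained without solving $t(P)$, why the decomposition has a solution good enough to make the consistent maps non-permutational. The paper supplies it as follows.

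\begin{enumerate}
\item Enforce $(2,3)$-minimality and Z-irreducibility (Zhuk's CHECKIRREDUCIBILITY), then decompose via $t(x,y)=x\mt y$.
\item Lemma~\ref{Z-irrtodecomp} shows that $t(P)$ is again $(1,1)$-minimal, cycle consistent and Z-irreducible. It also shows that each domain $t_a(A_i)$ is a unital SMB algebra with unit $a$.
\item Theorem~\ref{Aljareduce} passes to regular reducts and adjoins the all-units tuple to every constraint relation. The result is still a subuniverse by Lemma~\ref{addunittuple}. Connectivity of induced subinstances is unchanged by Corollary~\ref{unitsconnhyper}, so Z-irreducibility survives.
\item The modified instance is trivially solvable, so Theorem~\ref{Zhlemma} yields a least-block solution. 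Such a solution can use an added tuple only on a scope whose sorts are all one-element, where that tuple was already present.
\end{enumerate}

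Hence $t(P)$ always has a solution $g$ with every value $g((i,a))$ in $\mathrm{min}(\m a_i)$. This guarantee replaces the recursive call to a solver for $t(P)$ in the proof of Lemma~\ref{l:withC}. The maps $p_i(a)=g((i,a))$ of Lemma~\ref{l:A-retraction} are then non-surjective on every non-Mal'cev sort simultaneously, maximal-sized or not. So no induction over templates, and no vicious circle, arises.
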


One is tempted to just apply Corollary~\ref{SMBeliminated} and say we are done. But this doesn't quite work. If an algebra $\m a_i\in\vr t$ is such that $\vr t\setminus\{\m a_i\}$ is not a template (e.g. when $\m a_i$ is a subuniverse, or a factor, or a retract of some larger algebra in $\vr t$), then we cannot just remove $\m a_i$. Indeed, it is not hard to imagine such an ``algorithm'' would end up in an infinite loop, so it might not even be an algorithm. Namely, one would go from an instance $P$ to its decomposition $t(P)$, but then the proof of Lemma~\ref{l:withC} would require one to recursively solve $t(P)$. In the course of this recursion we may run into a strictly bigger instance than the original instance $P$, forcing our algorithm into an infinite loop.

We plug this gap in Bulatov's paper by proving the above theorem, but in the next Section, after we review some parts of D. Zhuk's proof of the Dichotomy Conjecture from \cite{Zhdich}.

Another, better proof of tractability of CSP over SMB algebras is just a small variation of the original proof by A. Bulatov. Namely, the original ideas and arguments by Bulatov were sufficient to the task, but the definitions and statements of lemmas need to be amended. We are going to use just the following weaker result:

\begin{thm}\label{M-irrweaker}
Let $\m a$ be a finite SMB algebra. For any instance $P$ of $CSP(\m a)$ we can apply a polynomial time algorithm which either solves $CSP(P)$ or reduces it to an equivalent weakly M-irreducible multisorted instance of $CSP(\vr t(\m a))$.
\end{thm}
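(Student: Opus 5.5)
We would prove Theorem~\ref{M-irrweaker} by an algorithm that repeatedly shrinks the domains of an instance, measured by the potential $\Phi(P)=\sum_{i\in V}|A_i|$. Domains only ever shrink, so $\Phi$ never increases. We first replace $\m a$ by its regular term reduct (Proposition~\ref{SMBtospec}), make $P$ $(1,1)$-minimal, and view it as a multisorted instance of $CSP(\vr t(\m a))$. Every domain is then a subalgebra, quotient or retract of $\m a$, hence an SMB algebra of size at most $|A|$. The point of aiming only at weak M-irreducibility is that we only ever need to treat sorts of \emph{maximal} size $N$ among the non-Mal'cev sorts. For these, the instances produced by decomposition have strictly smaller non-Mal'cev sorts, which allows recursion on $N$ and avoids the infinite loop described after Theorem~\ref{M-irrdesired}.

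The algorithm is recursive on $N$. If every sort of maximal size $N$ is unital, we stop: $P$ is weakly M-irreducible. Otherwise pick such an $\m a_i$ that is not unital. By Corollary~\ref{SMBeliminated} and its proof, either (a) $\m a_i/{\sim}$ has several maximal blocks, or (b) it has a largest block $B$ with $|B|>1$, or (c) $B=\{b\}$ but $x\mapsto b\mt x$ is not a permutation. Note that if $B=\{b\}$ and $b\mt x$ is a permutation, then after iterating to the special term $\mt'$ the element $b$ is a unit, so $\m a_i$ is unital and was not picked. In cases (a)--(c) we get a binary term $t$ (an iterate of $\mt$) with $t(x,t(x,y))=t(x,y)$ such that each $t_a$ is a non-surjective retraction on every non-unital $N$-sized sort.

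We form the decomposition $t(P)$, modified so that sorts of size $<N$, Mal'cev sorts and unital sorts are left intact, with $t_a$ taken as the identity there. The constraints remain subalgebras: taking the identity on a coordinate is compatible because $t$ is applied coordinatewise and identity is a legitimate consistent polynomial. In $t(P)$ every non-Mal'cev sort that is not unital has size $<N$. We then recursively reduce $t(P)$, and also its variants with the extra unary constraints used in Lemma~\ref{l:withC}, either to a solution, to a refutation, or to a weakly M-irreducible instance at a strictly smaller level. At the bottom level, where all non-unital non-Mal'cev sorts have been handled, we solve with the algorithm we assume is available for weakly M-irreducible instances. Strictly, the recursion needs solvability of weakly M-irreducible instances, which is why Section~6 proves their tractability. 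If a solution $g$ of $t(P)$ is found, Lemma~\ref{l:A-retraction} gives consistent polynomials $p_i$, and iterating them gives retractions. Applying Lemma~\ref{l:retraction}, $P$ reduces to $p(P)$. As in Lemma~\ref{l:withC}, we guess a variable $i$ and an element $d\notin C$ to force some $p_i$ to be non-permutational, so $\Phi$ drops. If no guess succeeds, every solution of $P$ takes values in the subalgebra generated by $C$ at the $N$-sized sorts, and we tighten to that subalgebra, which again lowers $\Phi$. In every branch $\Phi$ decreases by at least one, giving at most $n|A|$ rounds per level.

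The main obstacle is the running time. The recursion depth is bounded by $|A|$, since $N$ strictly decreases. But $t(P)$ has up to $n|A|$ variables and many more constraints (one for each tuple of each relation), and the branching over guesses multiplies the work. We must check that the blow-up is a polynomial raised to a power depending only on $|A|$, hence polynomial for fixed $\m a$. The arity of constraints in $t(P)$ stays bounded by the arities in $P$ except for the new constraints $T_i$, whose arity is at most $|A|$. So each level costs polynomially many calls to the next, and the total time is $n^{O(1)^{|A|}}$, polynomial for fixed $\m a$. We would also verify that leaving unital and small sorts unchanged preserves the consistency argument of Lemma~\ref{l:A-retraction}.
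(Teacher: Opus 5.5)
Your plan is, in outline, what the paper's one-line proof (``this is what applying Corollary~\ref{SMBeliminated} gives us'') compresses. You run the elimination procedure of Lemma~\ref{l:withC} only on the non-unital sorts of maximal non-Mal'cev size $N$, and you ground the recursion in the Section~6 solver. But the step you add is wrong, and it is load-bearing: a decomposition that leaves small, Mal'cev and unital sorts ``intact'' with $t_a$ replaced by the identity.

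The map $x\mapsto t(r,x)$ preserves $R$ only because it is a unary polynomial of $\m R$ applied uniformly to all coordinates. The mixture that applies $t(r(i),\cdot)$ at decomposed coordinates and the identity at intact ones is not a polynomial of $\m R$, and the resulting $R_r$ is in general not contained in $R$. Yet $R_r\subseteq R$ is precisely what Lemma~\ref{l:A-retraction} uses. So a solution of your $t(P)$ need not give a consistent family, and Lemma~\ref{l:retraction} cannot be invoked.

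For instance, let $\m a_i$ be the regular SMB algebra with blocks $\{0\}<\{m\}<\{b,b'\}$. Its top block is a two-element Mal'cev block, so $|A_i|=N=4$, $\m a_i$ is non-unital, and $C=\{b,b'\}$. Let $\m a_j=\m a_i/{\sim}$ be the three-element chain, which is unital and kept intact, and let $R$ be the graph of the quotient map. Under the guess $(i,m)$, the assignment $g((i,a))=a\mt m$, $g(j)=m$ satisfies all your constraints, e.g.\ $R_{(0,0)}=\{0\}\times\{0,m,1\}$, $R_{(m,m)}=\{(0,0),(m,m),(m,1)\}$ and $R_{(b,1)}=\{(0,0),(m,m),(b,1)\}$. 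However, the induced maps $p_i(x)=x\mt m$, $p_j\equiv m$ send $(0,0)\in R$ to $(0,m)\notin R$.

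The modification is also unnecessary. In the ordinary decomposition of Definition~\ref{decompdef}:
a sort that is unital for $t$ yields $t_1(\m a_j)=\m a_j$ (still unital) together with proper retracts $t_a(A_j)$ for $a\neq 1$;
a Mal'cev sort yields singletons, since $\mt$ is the first projection on a block;
small sorts stay small.
So every non-unital non-Mal'cev sort of $t(P)$ already has size $<N$. Hence $t(P)$ is either at a lower level or already weakly M-irreducible, and the Section~6 solver, which recurses only on strictly smaller $Size$, handles it. Lemmas~\ref{l:tA-solution}, \ref{l:A-retraction} and \ref{l:retraction} then apply verbatim.

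With that correction the rest of your argument goes through: the guesses $(i,d)$ with $d\notin C$, tightening to $\Sg(C)$ when all guesses fail, the potential $\Phi$, and the bounded recursion depth. You also need self-reduction to actually extract the solution $g$ from a decision procedure.
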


\begin{proof}
This is what applying Corollary~\ref{SMBeliminated} gives us.
\end{proof}

So, why do we give the proof in Section 5? As we said, Section 6 demonstrates that Bulatov's original proof was essentially correct, needing only a minor intervention, while Section 5 uses as a black box a powerful result by Zhuk whose only known proof is inextricable from the whole Dichotomy proof. However, generalizing Bulatov's proof from SMB algebras to Taylor algebras is the route Bulatov took to the Dichotomy, and it proved a very difficult and complicated task. We hope our ideas from Section 5 can evolve into a simpler proof, merging Zhuk's and Bulatov's ideas into an argument simpler than either.

\section{Hypergraph connectivity and Z-irreducibility}

Lest we forget, and to avoid complications, we will assume henceforth that for any multisorted CSP instance we will consider, all domains of variables $A_i$ of the instance are pairwise disjoint. 

We remind the reader of some basic facts about hypergraphs. 

\begin{df}\label{hyperdef}
A {\em hypergraph} is an ordered pair $(V,E)$, where $V$ is the vertex set and $E\subseteq (P(V)\setminus\{\emptyset\})$ is the set of hyperedges.
\end{df}

The usual undirected graph is, therefore, a hypergraph where all hyperedges have two elements. The notions like paths and connectivity can also be generalized to hypergraphs.

\begin{df}\label{hyperpath}
Let $\Gamma=(V,E)$ be a hypergraph. A {\em path} in $\Gamma$ is a finite sequence of the form $$p=a_0, S_1, a_1,S_2,\dots,a_{n-1},S_n,a_n,$$ where each $a_i\in V$, $S_j\in E$ and for all $0<i\leq n$, $a_{i-1}, a_i\in S_i$. If $p$ is the path given above, we say that $p$ connects the vertices $a_0$ and $a_n$, and we say that vertices $a$ and $b$ are connected if there exists a path that connects them. A hypergraph $\Gamma=(V,E)$ is connected if any pair of vertices in $V$ is connected.

A {\em hypergraph homomorphism} is a map from the vertex set of one hypergraph to the vertex set of another such that each edge maps onto an edge.
\end{df}

\begin{df}\label{probhypergraphs} Let $\vr t$ be a template and let $P=(V,D,\vr c)$ be a multisorted instance of $CSP(\vr t)$. We define two hypergraphs associated with $P$, the {\em scope graph} of $P$, written $\Gamma_V$, and the {\em microstructure graph} of $P$, written $\Gamma_P$, by
$$
\begin{gathered}
\Gamma_V=(V,\{S:(S,R)\in \vr c\})\text{ and }\\
\Gamma_P=\left(\bigcup\limits_{i\in V}A_i,E_P\right).
\end{gathered}
$$
Recall that the domains $A_i$ are pairwise disjoint. $E_P$ is defined as the union
$$E_P=\bigcup\limits_{(S,R)\in \vr c}\{\{a_1,\dots,a_{|S|}\}:(a_1,\dots,a_{|S|})\in R\}.$$
\end{df}

It is clear that the instance $P$ has a solution iff there exists a hypergraph homomorphism $f$ from $\Gamma_V$ to $\Gamma_P$ such that for each $i\in V$, $f(i)\in A_i$. The hypergraph variant is just a rephrasing, but it allows us to speak of an instance in terms of hypergraph connectivity. For example, we notice the following easy fact.

\begin{prp}\label{hyperconnPimpliesV}
Let $\vr t$ be a template and let $P=(V,D,\vr c)$ be a multisorted instance of $CSP(\vr t)$. If the microstructure graph $\Gamma_P$ of $P$ is connected, then the scope graph $\Gamma_V$ of $P$ is connected.
\end{prp}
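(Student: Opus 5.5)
The plan is to project paths in the microstructure graph down to the scope graph. Define the map $\pi:\bigcup_{i\in V}A_i\to V$ by $\pi(a)=i$ iff $a\in A_i$. This is well defined because the domains $A_i$ are pairwise disjoint (a standing assumption of this section). I will check that $\pi$ sends every hyperedge of $\Gamma_P$ into a hyperedge of $\Gamma_V$. I will then show that $\pi$ is surjective onto $V$, so every vertex of $\Gamma_V$ is hit. These two facts together transport connectivity.

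First, take a hyperedge $e=\{a_1,\dots,a_{|S|}\}\in E_P$ coming from a tuple $(a_1,\dots,a_{|S|})\in R$ for some constraint $(S,R)\in\vr c$. Each coordinate $a_j$ of a tuple indexed by $j\in S$ lies in $A_j$. Hence $\pi(e)=S$, which is a hyperedge of $\Gamma_V$. More generally, if two vertices $a,b$ of $\Gamma_P$ lie in a common hyperedge, then $\pi(a),\pi(b)$ lie in a common hyperedge $S$ of $\Gamma_V$.

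Next, given $i,j\in V$, I pick $a\in A_i$ and $b\in A_j$. Domains of a multisorted instance are algebras, hence nonempty. By connectivity of $\Gamma_P$ there is a path $a=a_0,E_1,a_1,\dots,E_n,a_n=b$. Applying $\pi$ termwise gives $\pi(a_0),\pi(E_1),\pi(a_1),\dots,\pi(E_n),\pi(a_n)$. By the previous step this is a path in $\Gamma_V$ from $i$ to $j$. Thus $\Gamma_V$ is connected.

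The only point needing care is the nonemptiness of the $A_i$, which is needed to obtain the surjectivity of $\pi$. This is the main, though minor, obstacle. If some $A_i$ were empty, the instance would be trivially unsolvable, but connectivity of $\Gamma_P$ would say nothing about the vertex $i$. So I will invoke the convention that domains are (nonempty) algebras from the template. Degenerate cases, such as $|V|\le 1$, are trivially connected.
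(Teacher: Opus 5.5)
Your proposal is correct and matches the paper's proof. The paper likewise picks $a\in A_i$ and $b\in A_j$, takes a path in $\Gamma_P$, and replaces each vertex by the variable whose domain contains it and each hyperedge by the scope $S$ of a constraint whose tuple produced it, obtaining a path in $\Gamma_V$; your projection map $\pi$ and your remark on nonempty domains make explicit what the paper uses implicitly.
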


\begin{proof}
Let $i,j\in V$. Select arbitrary $a\in A_i$ and $b\in A_j$. Select a path
\[a=a_0,E_1,a_1,\dots,E_k,a_k=b\]
in $\Gamma_P$ connecting $a$ with $b$. For each hyperedge $E_i$ there exists a tuple $\overline{c}_i$ and a constraint $(S_i,R_i)\in\vr c$ such that $\overline{c}_i\in R_i$ and $E_i=\{\overline{c}_i(1),\dots,\overline{c}_i(|S_i|)\}$. If $a_j\in A_{i_j}$, for $j=0,1,\dots,k$, then
\[i=i_0,S_1,i_1,\dots,S_k,i_k=b\]
is a path in $\Gamma_V$ connecting $i$ with $j$.
\end{proof}

Next we define the notion of cycle consistency.

\begin{df}\label{cyclecons}
Let $\vr t$ be a template and let $P=(V,D,\vr c)$ be a multisorted instance of $CSP(\vr t)$. We say that $P$ is {\em cycle consistent} if for every $i\in V$, every $a\in A_i$ and every closed path 
\[p=i_0,S_1,i_1,S_2,\dots,S_k,i_k\]
in $\Gamma_V$ such that $i_0=i_k=i$, there exists a closed path
\[p'=a_0,E_1,a_1,E_2,\dots,E_k,a_k\]
in $\Gamma_P$ such that $a_0=a_k=a$ and for all $0<j\leq k$, $(S_j,R_j)\in\vr c$ and $E_j$ is the set of all coordinates of some tuple in $R_j$.
\end{df}

Cycle consistency is a consequence of $(2,3)$-minimality, but may be a weaker property. 

In order to express the Z-irreducibility property, we need another notion of an induced smaller instance, which generalizes the restriction of the instance which we introduced in the proof of Theorem~\ref{lintract}.

\begin{df}\label{inducedsubinstance} Let $\vr t$ be a template, let $P=(V,D,\vr c)$ be a multisorted instance of $CSP(\vr t)$, and let $V'\subseteq V$ and $\vr c_1\subseteq \vr c$. We say that $P'=(V',D',\vr c')$ is the subinstance of $P$ induced by $(V',\vr c_1)$ if $D'=\{A_i:i\in V'\}$, while $\vr c'=\vr c_1|_{V'}=\{(S',R'): (S,R)\in \vr c_1,$ $S'=S\cap V'$ and $R'=R|_{S'}\}$.
\end{df}

Now we are ready to define Z-irreducible instances.

\begin{df}\label{Zirred}
Let $\vr t$ be a template and let $P=(V,D,\vr c)$ be a multisorted instance of $CSP(\vr t)$. We say that $P$ is {\em Z-irreducible} if, for every $V'\subseteq V$ and $\vr c_1\subseteq\vr c$ such that the subinstance $P'=(V',D',\vr c')$ induced by $(V',\vr c_1)$ has a connected scope graph, but disconnected microstructure graph, there exists a solution of $P'$ through any point. More precisely, for every $i\in V'$ and any $a\in A_i$, there exists a solution $f$ of $P'$ such that $f(i)=a$.
\end{df}

Next we need the notion of a link partition, which we will also use in the next section.

\begin{df}\label{linkdef}
Let $\vr t$ be a template and $P=(V,D,\vr c)$ a (1,1)-minimal multisorted instance of $CSP(\vr t)$. We say that $P$ has a link partition if for each $i\in V$ there exists an equivalence relation $\varepsilon_i$ on $A_i$ such that 
\begin{enumerate}
\item There exists $k\in\mathbb{N}$, $k\geq 2$, such that for all $i\in V$, $|A_i/\varepsilon_i|=k$
\item There exists an ordering of the $\varepsilon_i$-classes $A_i=A_{i,1}\dot{\cup}\dots\dot{\cup}A_{i,k}$ so that for all $(S,R)\in \vr c$, $R=R_1\dot{\cup}\dots\dot{\cup}R_k$ and for all $j\leq k$, $R_j\subseteq \prod\limits_{i\in S}A_{i,j}$. 
\end{enumerate}
\end{df}

In the above definition, $\dot{\cup}$ stands for the disjoint union, of course.

\begin{prp}
Let $\vr t$ be a template and $P=(V,D,\vr c)$ a (1,1)-minimal multisorted instance of $CSP(\vr t)$ such that $\Gamma_V$ is connected. Then $P$ has a link partition iff $\Gamma_P$ is disconnected.
\end{prp}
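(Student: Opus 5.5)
Both directions will be proved directly from the definitions, using $(1,1)$-minimality and connectivity of $\Gamma_V$.

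\emph{Link partition implies $\Gamma_P$ disconnected.} Suppose $P$ has a link partition with $k\geq 2$ classes $A_{i,1},\dots,A_{i,k}$. Every hyperedge of $\Gamma_P$ is the set of entries of some tuple $r\in R$ with $(S,R)\in\vr c$. Since $R=R_1\dot\cup\dots\dot\cup R_k$, the tuple $r$ lies in some $R_j$, so all of its entries lie in $\bigcup_{i\in V}A_{i,j}$. Hence each hyperedge is contained in one of the $k$ sets $U_j=\bigcup_{i}A_{i,j}$. These sets are pairwise disjoint because the domains are disjoint. They are nonempty because $|A_i/\varepsilon_i|=k$. So no path leaves $U_j$, and $\Gamma_P$ has at least two components.

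\emph{$\Gamma_P$ disconnected implies a link partition.} Let $C_1,\dots,C_k$ with $k\geq 2$ be the connected components of $\Gamma_P$, and set $A_{i,j}=A_i\cap C_j$. Each hyperedge is connected, so it lies inside a single component; for each constraint this splits $R$ into the disjoint pieces $R_j=\{r\in R: r \text{ has entries in } C_j\}$, with $R_j\subseteq\prod_{i\in S}A_{i,j}$. Take $\varepsilon_i$ to be the partition of $A_i$ into the nonempty sets $A_{i,j}$.

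The remaining, and main, point is condition (1): every $A_i$ must meet every component, so that $|A_i/\varepsilon_i|=k$ for all $i$ with the same indexing. The approach is to show that for each component $C_j$, the set $V_j=\{i\in V: A_i\cap C_j\neq\emptyset\}$ is all of $V$.
\begin{enumerate}
\item $V_j$ is nonempty, since $C_j$ is nonempty.
\item $V_j$ is closed along hyperedges of $\Gamma_V$. Take $i\in V_j$ with $a\in A_i\cap C_j$, and a constraint $(S,R)$ with $i\in S$. By $(1,1)$-minimality, $R$ projects onto $A_i$, so some $r\in R$ has $r(i)=a$. Its hyperedge contains $a$ and therefore lies in $C_j$, so $r(i')\in A_{i'}\cap C_j$ for every $i'\in S$, which puts all of $S$ into $V_j$.
\item Since $\Gamma_V$ is connected, these two facts give $V_j=V$.
\end{enumerate}
Thus every $A_i$ meets all $k$ components, condition (1) holds, and the ordering $A_{i,j}=A_i\cap C_j$ is uniform across variables, as condition (2) requires.

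Variables lying in no constraint scope would need separate handling, but $(1,1)$-minimality rules this out. A variable with no constraint could not be joined to anything in $\Gamma_V$, so connectivity forces $|V|=1$. Even in that degenerate case, $(1,1)$-minimality supplies the unary constraint $(\{i\},A_i)$, which makes each element of $A_i$ its own component, and the construction above still goes through.
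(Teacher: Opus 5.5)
Your proof is correct and follows essentially the same route as the paper. The components of $\Gamma_P$ give the classes $A_i\cap C_j$, and $(1,1)$-minimality together with connectivity of $\Gamma_V$ spreads membership in a component from one domain to every domain; your closure argument for $V_j$ is the paper's induction along a path in $\Gamma_V$. Conversely, no hyperedge can join different classes of a link partition. Your closing remark about unconstrained variables is unnecessary, since $(1,1)$-minimality already gives every variable a unary constraint.
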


\begin{proof}
Let $\Gamma_P$ be disconnected and let $Q_1,\dots,Q_k$ be the connected components of $\Gamma_P$. For all $i\in V$, we define $A_{i,j}:=A_i\cap Q_j$. If $(S,R)\in V$, then for any $r\in R$ and $i,j\in S$, $r(i)$ and $r(j)$ must be connected, so all $r(i)$ are in the same connected component, say $Q_j$. If we define $R_j$, $j=1,\dots,k$ to be $\{r\in R:(\forall i\in S)r(i)\in Q_j\}$, from the previous sentence it follows that $R=R_1\cup\dots\cup R_k$. As $Q_j$ are pairwise disjoint, thus $R=R_1\dot{\cup}\dots\dot{\cup}R_k$. 

Finally, for each $Q_j$ and $A_i$, we need to show $A_i\cap Q_j\neq\emptyset$ to prove property (1) of link partitions. Choose some $i'\in V$ so that $A_{i'}\cap Q_j\neq\emptyset$. Since $\Gamma_V$ is connected, there exists some $i'=i_0, S_1,i_1,S_2,\dots,S_t,i_t=i$ which is a path from $i'$ to $i$ in $\Gamma_V$. Let $(S_1,R_1), \dots ,(S_t,R_t)\in \vr c$. We select some $a=a_0\in A_{i'}\cap Q_j$. Next, by (1,1)-minimality, there exist some $r_1\in R_1,\dots,r_t\in R_t$ and $a_1\in A_{i_1},\dots,a_{i_t}\in A_{i_t}=A_i $ so that for all $1\leq s\leq t$, $r_s(i_{s-1})=a_{s-1}$ and $r_s(i_s)=a_s$. Inductively it follows that $a_1,a_2,\dots,a_s\in Q_j$, so $a_t\in Q_j\cap A_i$, as desired.

For the other direction, if $P$ has a link partition and $j\neq j'$ for some $j,j'\leq k$, then for any $i,i'\in V$ no edge in $\Gamma_P$ connects an element of $A_{i,j}$ and an element of $A_{i',j'}$. Thus $\Gamma_P$ must be disconnected.
\end{proof}

Next we prove that the connected components of $\Gamma_P$ give rise to congruences.

\begin{prp}\label{linksubuniv}
Let $\vr t$ be a template and $P=(V,D,\vr c)$ a (1,1)-minimal and cycle consistent multisorted instance of $CSP(\vr t)$ such that $\Gamma_V$ is connected. Then, for each $i\in V$, $\{(a,b)\in A_i:a$ and $b$ are connected in $\Gamma_P\}$ is a congruence relation of $\m a_i$. Consequently, for any connected component $Q_j$ of $\Gamma_P$, $Q_j\cap A_i$ is a subuniverse of $\m a_i$
\end{prp}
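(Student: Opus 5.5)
Let $\theta_i=\{(a,b)\in A_i^2: a$ and $b$ are connected in $\Gamma_P\}$. I will show that $\theta_i$ is a congruence of $\m a_i$. Since $\theta_i$ is plainly an equivalence relation, it suffices to show that it is compatible with every basic operation $f$ of arity $m$. By transitivity of $\theta_i$ and the usual one-coordinate-at-a-time argument, I only need the following: if $a,b\in A_i$ are connected and $c_2,\dots,c_m\in A_i$, then $f(a,c_2,\dots,c_m)$ and $f(b,c_2,\dots,c_m)$ are connected. The final sentence of the proposition then follows at once. A congruence class of an idempotent algebra is a subuniverse, and $Q_j\cap A_i$ is exactly one $\theta_i$-class; it is nonempty by the previous proposition, since $\Gamma_V$ is connected and $P$ is $(1,1)$-minimal.

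To prove the one-coordinate claim, I fix a path $a=a_0,E_1,a_1,\dots,E_k,a_k=b$ in $\Gamma_P$. Write $a_s\in A_{i_s}$, and let $E_s$ come from a tuple $r_s\in R_s$ of a constraint $(S_s,R_s)$. Projecting gives a closed walk $i=i_0,S_1,i_1,\dots,S_k,i_k=i$ in $\Gamma_V$. For each fixed $c_l$, cycle consistency applied to this closed walk gives a closed path in $\Gamma_P$ from $c_l$ back to $c_l$, namely elements $c_l=c_{l,0},c_{l,1},\dots,c_{l,k}=c_l$ with $c_{l,s}\in A_{i_s}$ and tuples $r_{l,s}\in R_s$ realizing the consecutive pairs. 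I then apply $f$ coordinatewise to $r_s,r_{2,s},\dots,r_{m,s}$ in $R_s$, which is a subuniverse of the product. This yields a tuple of $R_s$ containing both $f(a_{s-1},c_{2,s-1},\dots,c_{m,s-1})$ and $f(a_s,c_{2,s},\dots,c_{m,s})$. Chaining over $s=1,\dots,k$ connects $f(a,c_2,\dots,c_m)$ with $f(b,c_2,\dots,c_m)$ in $\Gamma_P$, because the walks of the $c_l$ are closed.

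The main obstacle is a subtlety in how the tuples fit together. The tuple $r_s$ and the tuples $r_{l,s}$ must all live in the same relation $R_s$ and agree on which coordinates correspond to $i_{s-1}$ and $i_s$. Cycle consistency as stated gives exactly this: the witnessing edges $E_j$ come from tuples of the same constraints $(S_j,R_j)$. There is one further case to handle: a hyperedge could meet $A_{i_s}$ through a variable other than the one recorded. Disjointness of the domains $A_i$ fixes each $a_s$'s variable uniquely, so the only care needed is to read off, for each step, the pair of coordinates $(i_{s-1},i_s)$ of $S_s$ and use those consistently. With this bookkeeping the argument goes through, and compatibility, and hence the proposition, follows.
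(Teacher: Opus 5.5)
Your proof is correct, but it takes a somewhat different route from the paper's sketch.

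The paper works with a single universal cycle:
\begin{enumerate}
\item It concatenates the projected closed walks $q_{a,b}$, over all connected pairs $a,b\in A_i$, into one closed walk $q$ in $\Gamma_V$.
\item It uses cycle consistency to show that any two connected $c,d\in A_i$ are joined by a path in $\Gamma_P$ following the pattern of $q$: circle at $c$, cross over during the $q_{c,d}$ segment, then circle at $d$.
\item It notes that ``connected along $q$'' is pp-definable from the constraint relations, hence a subuniverse of $\m a_i^2$. Since this relation equals connectivity on $A_i$, it is an equivalence relation, hence a congruence.
\end{enumerate}

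You instead check compatibility directly. For one changed argument you use the cycle of that particular $a$--$b$ path. Cycle consistency gives you same-pattern closed walks for the frozen arguments $c_2,\dots,c_m$, and you apply $f$ coordinatewise inside each $R_s$. Several changed arguments are then handled by transitivity.

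The engine is the same in both arguments: same-pattern closed walks for the other arguments, plus closure of the constraint relations under the operations. Your version dispenses with the concatenation trick and the appeal to pp-definability, at the cost of the one-coordinate-at-a-time reduction. The paper's version yields the extra fact that connectivity on $A_i$ is defined by a single pp-formula over the instance. As a result, compatibility with every operation, with all arguments varying at once, is immediate.

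Both arguments read cycle consistency as placing the $s$th vertex of the witnessing closed path in $A_{i_s}$. Both also rely on each scope carrying a single constraint, as the paper assumes.

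Your justification of the last sentence supplies details the sketch leaves implicit: each $Q_j\cap A_i$ is one nonempty $\theta_i$-class, and congruence classes of idempotent algebras are subuniverses.
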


{\em Sketch of a proof.}
For each $a,b\in A_i$ such that they are connected in $\Gamma_P$, let $p_{a,b}$ be the path from $a$ to $b$ in $\Gamma_P$ and $q_{a,b}$ the corresponding closed path from $i$ to $i$ in $\Gamma_V$. By cycle consistency, for any $c\in A_i$ there is a closed path from $c$ to $c$ in $\Gamma_P$ which also corresponds to $q_{a,b}$. Now we define $q$ to be the closed path from $i$ to $i$ in $\Gamma_V$ obtained by concatenating all paths $q_{a,b}$ for any $(a,b)\in A_i\times A_i$ such that $a$ is connected to $b$ in $\Gamma_P$. 

We claim that for any $c$ and $d$ in $A_i$ which are connected in $\Gamma_P$, $c$ is connected to $d$ by a path in $\Gamma_P$ corresponding to $q$. To see this just circle around from $c$ to $c$ by paths corresponding to various $q_{a,b}$ until $q_{c,d}$ comes along. Then move from $c$ to $d$ and afterwards keep circling from $d$ to $d$. Note that
$$\{(a,b)\in A_i\times A_i: a\text{ is connected to }b\text{ by a path in }\Gamma_P\text{ corresponding to }q\}$$
is a pp-definable relation, so a compatible relation of the algebra $\m a_i$. Moreover, it is obviously an equivalence relation, being the restriction to $A_i$ of the hypergraph connectedness relation, so it is a congruence. The final sentence follows since congruence classes are subuniverses.
\qed

For each induced subinstance with a connected scope graph, but disconnected microstructure graph which is $(1,1)$-minimal and cycle consistent, we just proved that the subinstance splits into disjoint smaller instances. Each of these smaller instances is given by the connected components of the microstructure graph, and the domain of each variable $i$ in each of the smaller instances is a nonvoid proper subuniverse of $A_i$. Therefore, one can inductively solve each of these smaller instances to ensure there exists a solution through any point. However, there may be exponentially many induced subinstances to consider, so it is not yet obvious we can enforce Z-irreducibility this way.

This issue was resolved by the procedure CHECKIRREDUCIBILITY in \cite{Zhdich}. In a nutshell, D. Zhuk considers one domain of a variable and a maximal congruence on it, assumes the elements inside the same congruence classes are ``connected", and keeps adding constraints and variables for as long as the microstructure graph remains disconnected and the connected components restrict to the fixed domain of variable as a congruence contained in the selected maximal congruence. The maximal such induced subinstance is then solved through any point in the fixed domain of the variable. This is done in polynomial time, since there are only polynomially many maximal subinstances to check. Zhuk proves that any induced subinstance which is cycle consistent must be contained in a maximal subinstance checked by the procedure, hence the procedure correctly forces Z-irreducibility of the instance. Thus one can either resolve an instance or reduce it to an equivalent Z-irreducible instance.

The main tool from Zhuk's Dichotomy proof we will use in this paper is the following theorem (we will use it as a black box, without proof):

\begin{thm}[follows from Theorem 5.5 of \cite{Zhdich}]\label{Zhlemma}
Let $\vr t$ be a template of SMB algebras and let $P=(V,D,\vr c)$ be a multisorted instance of $CSP(\vr t)$. If $P$ is Z-irreducible, (1,1)-minimal, cycle consistent and has a solution, then $P$ has a solution $f$ such that for each $i\in V$, $f(i)$ is in the least $\sim$-class in $\m a_i$.
\end{thm}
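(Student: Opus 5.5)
The plan is to deduce the statement directly from Zhuk's theorem on reductions to absorbing subuniverses (Theorem 5.5 of \cite{Zhdich}). That theorem says, roughly, the following. Suppose a cycle consistent, irreducible instance has a solution, and $D^{(1)}=\{B_i:i\in V\}$ is a $1$-consistent reduction in which each $B_i$ is either $A_i$ itself or a binary absorbing subuniverse of $\m a_i$. Then the instance has a solution $f$ with $f(i)\in B_i$ for all $i$. So the proof reduces to three checks. First, the hypotheses of Theorem~\ref{Zhlemma} match Zhuk's notions. Second, the least $\sim$-classes form such a reduction. Third, that reduction is $1$-consistent.

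\textbf{Step 1: absorption.} By Proposition~\ref{SMBtospec} we may pass to term reducts, as justified in Section 2, and assume every algebra in $\vr t$ is a regular SMB algebra. Let $O_i$ denote the least $\sim$-class of $\m a_i$. We have $[x\mt y]_\sim=[x]_\sim\mt[y]_\sim$, so if either argument lies in $O_i$ then $x\mt y\in O_i$, because $O_i$ is the least class. Thus $O_i$ is a subuniverse (Corollary~\ref{regSMBsubalg}) that absorbs $\m a_i$ with respect to the binary term $\mt$. Whenever $O_i\neq A_i$ this absorption is proper, which is exactly the situation Zhuk's theorem handles. When $O_i=A_i$ the $i$-th coordinate imposes no restriction.

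\textbf{Step 2: $1$-consistency.} Take a constraint $(S,R)$. By $(1,1)$-minimality, for each $i\in S$ there is $r_i\in R$ with $r_i(i)\in O_i$. Let $r$ be the left-associated meet of all the $r_i$. Then $r\in R$, and at each coordinate $i$ some factor of the meet lies in $O_i$, so by Step 1 we get $r(i)\in O_i$. Hence every constraint meets $\prod_{i\in S}O_i$, and the reduction $\{O_i:i\in V\}$ is $1$-consistent.

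\textbf{Step 3: matching hypotheses.} Finally, check that Z-irreducibility (Definition~\ref{Zirred}) together with cycle consistency (Definition~\ref{cyclecons}) are precisely Zhuk's ``irreducible'' and ``cycle-consistent''. Zhuk phrases irreducibility through link partitions, or equivalently via disconnected microstructure graphs of induced subinstances with connected scope graphs; the preceding propositions on link partitions make this identification explicit. Also check that SMB algebras are Taylor (Theorem~\ref{SMBvar}), as Zhuk requires. Applying Theorem 5.5 then yields a solution $f$ with $f(i)\in O_i$ for all $i$.

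The main obstacle is the bookkeeping in Step 3. Zhuk's theorem is stated in his own framework: reductions must be ``minimal'' or ``of the same type'', and his notion of irreducibility is weaker or stronger in certain places. I expect to need care to confirm that the least-block reduction is admissible there, in particular that mixing coordinates with $O_i=A_i$ and coordinates with proper binary absorption is permitted. If it is not, the first fallback is to apply the theorem iteratively, reducing one variable's domain at a time. The alternative is to re-establish $(1,1)$-minimality, cycle consistency and irreducibility after each reduction, which is where the real work would lie.
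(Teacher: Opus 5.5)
Your route is the same as the paper's: the paper gives no argument beyond invoking Theorem 5.5 of \cite{Zhdich} as a black box, and your Steps 1--3 spell out the hypothesis-checking the paper leaves implicit. One small repair is needed in Step 2: a $1$-consistent reduction in Zhuk's sense requires each $R\cap\prod_{j\in S}O_j$ to be subdirect in $\prod_{j\in S}O_j$, not merely nonempty, but this is immediate, since for $a\in O_i$ and $r_0\in R$ with $r_0(i)=a$ the tuple $r_0\wedge r$ lies in $R\cap\prod_{j\in S}O_j$ and has $i$-th coordinate $a\wedge r(i)=a$, because $\wedge$ acts as the first projection on the block $O_i$.
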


Now we know all prerequisites needed to prove Theorem~\ref{M-irrdesired}. First we prove some easy, technical lemmas.

\begin{lm}\label{unitsconnected}
Let $R\leq_{sd}\m a\times\m b$ be a subdirect product of unital SMB algebras $\m a$ and $\m b$ (recall Definition~\ref{unitSMBdef}). Let $1_A$ and $1_B$ be the unit elements of $\m a$ and $\m b$, respectively. Then $1_A$ and $1_B$ are connected by a path in $R$ (viewed as a bipartite graph between $A$ and $B$). 
\end{lm}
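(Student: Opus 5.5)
Since $R$ is subdirect, there are $b\in B$ with $(1_A,b)\in R$ and $a\in A$ with $(a,1_B)\in R$. The plan is to combine these two pairs with the semilattice operation $\mt$ of $R$ and produce one path $1_A - y - x - 1_B$ of length three. Note that $R$ is a subuniverse of $\m a\times\m b$, so it is closed under $\mt$ computed coordinatewise.

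First, compute
$$(1_A,b)\mt(a,1_B)=(1_A\mt a,\,b\mt 1_B)=(a,b)\in R,$$
using the unit laws $1_A\mt x=x$ and $x\mt 1_B=x$. Then compute in the other order:
$$(a,1_B)\mt(1_A,b)=(a\mt 1_A,\,1_B\mt b)=(a,b)\in R.$$
Both products give the same tuple, so they only show $(a,b)\in R$. Together with $(1_A,b)$ and $(a,1_B)$ this yields the path $1_A - b - a - 1_B$ in the bipartite graph $R$: the edge $(1_A,b)$ links $1_A$ to $b$, the edge $(a,b)$ links $b$ to $a$, and the edge $(a,1_B)$ links $a$ to $1_B$. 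This is the desired connection, and the argument uses only that $\mt$ is a term operation and that the units are two-sided identities (Definition~\ref{unitSMBdef}).

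The only step that needs checking is that the unit axioms are applied on the correct side in each coordinate. Definition~\ref{unitSMBdef} makes $1$ neutral on both sides, so both computations hold as written. No regularity (Definition~\ref{regSMBdef}) is needed, and neither is the Mal'cev operation $d$.
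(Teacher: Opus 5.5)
Your proof is correct and is essentially the paper's argument: subdirectness gives $(1_A,b),(a,1_B)\in R$, and the two-sided unit laws give $(1_A,b)\wedge(a,1_B)=(a,b)\in R$, which yields the path $1_A - b - a - 1_B$. Computing the meet in the reverse order is redundant. Likewise, the paper's preliminary case $(1_A,1_B)\in R$ is not needed, since your argument covers it too.
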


\begin{proof}
If $(1_A,1_B)\in R$ then there is nothing to prove. Otherwise, by subdirectness, there exist $a\in A$ and $b\in B$ such that $(1_A,b),(a,1_B)\in R$. Since $\m a$ and $\m b$ are both unital, then \[(a,b)=(1_A,b)\mt (a,1_B)\in R.\]
Hence, the path $1_A - b - a - 1_B$ connects $1_A$ to $1_B$ in $R$.
\end{proof}

\begin{cor}\label{unitsconnhyper}
Let $\vr t$ be a template and let $P=(V,D,\vr c)$ be a (1,1)-minimal, multisorted instance of $CSP(\vr t)$. Assume, moreover, that for some $i,j\in V$ $i$ and $j$ are connected in the hypergraph $\Gamma_V$ and that $\m a_i$ and $\m a_j$ are unital, with units $1_i$ and $1_j$, respectively. Then $1_i$ and $1_j$ are connected in $\Gamma_P$.
\end{cor}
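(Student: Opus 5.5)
We reduce the statement to Lemma~\ref{unitsconnected} by walking along a path in the scope graph. Since $i$ and $j$ are connected in $\Gamma_V$, we fix a path $i=i_0,S_1,i_1,S_2,\dots,S_k,i_k=j$ in $\Gamma_V$, with constraints $(S_t,R_t)\in\vr c$. Two issues arise: the intermediate variables $i_1,\dots,i_{k-1}$ need not have unital domains, and Lemma~\ref{unitsconnected} only applies to binary subdirect products of unital algebras. We therefore plan to prove a slightly different statement, namely that the unit $1_i$ is connected in $\Gamma_P$ to \emph{every} element of $A_j$, or at least to $1_j$, arguing via meets rather than only through units.

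The key observation is the following. Let $(S,R)\in\vr c$ with $i,i'\in S$, $\m a_i$ unital, and $c\in A_{i'}$ arbitrary. By $(1,1)$-minimality, $R$ projects onto $A_i$ and onto $A_{i'}$, so there are $r,s\in R$ with $r(i)=1_i$ and $s(i')=c$. Then $r\mt s\in R$ satisfies $(r\mt s)(i)=1_i\mt s(i)=s(i)$, so the hyperedge of $r\mt s$ connects $s(i)$ with $r(i')\mt c$. The hyperedges of $r$ and $s$ also connect $1_i$ to $r(i')$ and $s(i)$ to $c$. Thus $1_i$, $r(i')$, $s(i)$, $c$ and $r(i')\mt c$ all lie in one connected component. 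This is exactly the argument of Lemma~\ref{unitsconnected} carried out inside a hyperedge instead of a binary relation.

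For the general case we argue by induction along the path. Let $Q$ be the connected component of $\Gamma_P$ containing $1_i$. We claim $Q\cap A_{i_t}\neq\emptyset$ for every $t$; this follows exactly as in the proof of the link-partition proposition, using $(1,1)$-minimality to propagate along $S_1,\dots,S_k$. Hence some $a\in A_j$ lies in $Q$. Applying the same reasoning starting from $1_j$ along the reversed path, we obtain some $b\in A_i$ in the component $Q'$ of $1_j$. To join $Q$ and $Q'$ we apply the meet trick with the units at the endpoints. Take the concatenated walk from $i$ to $j$ realized by a chain of tuples $r_1,\dots,r_k$ whose successive hyperedges connect $1_i$ to $a$, and a chain $s_1,\dots,s_k$ connecting $b$ to $1_j$. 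We then form the tuples $r_t\mt s_t$, or $s_t\mt r_t$, coordinatewise in each $R_t$; the instance is closed under $\mt$ constraintwise. At the endpoints the units act neutrally: $1_i\mt b=b$ and $a\mt 1_j=a$. The chain $s_1\mt r_1,\dots,s_k\mt r_k$ is a walk whose consecutive hyperedges share the values $s_t(i_t)\mt r_t(i_t)=s_{t+1}(i_t)\mt r_{t+1}(i_t)$, since $r$ and $s$ are walks. It runs from $b\mt 1_i=b$ to $1_j\mt a$. Using instead $r_t\mt s_t$, we get a walk from $1_i\mt b=b$ to $a\mt 1_j=a$. Hence $b$ and $a$ are connected, and so $1_i\sim a\sim b\sim 1_j$.

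The main obstacle is the bookkeeping that a sequence of tuples $r_t\in R_t$ with matching shared coordinates exists, i.e.\ that walks in $\Gamma_P$ lying over a path in $\Gamma_V$ can be chosen with the exact prescribed endpoints. This is guaranteed by $(1,1)$-minimality for the propagation step. The meet of two such walks is again such a walk because $\mt$ is applied coordinatewise to tuples of the same constraints, so the shared coordinates stay equal. I expect the only care needed is choosing the order of meets so that the units $1_i$ and $1_j$ absorb correctly at both endpoints, as in the two variants above.
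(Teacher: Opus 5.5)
Your final argument is correct and is the paper's proof without the packaging. The paper collects your chains of tuples into the pp-definable relation $R\leq_{sd}\m a_i\times\m a_j$, which is subdirect by exactly your $(1,1)$-minimality propagation, and then applies Lemma~\ref{unitsconnected}; that lemma's step $(1_i,a)\mt(b,1_j)=(b,a)$ is precisely your coordinatewise meet of the two walks.

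Drop the preliminary ``key observation'' and the aim of connecting $1_i$ to every element of $A_j$, because both are false. Take the equality relation on the two-element semilattice $\{0,1\}$ with $c=0_j$: the sets $\{1_i,r(i')\}=\{1_i,1_j\}$ and $\{s(i),c,r(i')\mt c\}=\{0_i,0_j\}$ lie in different components. Your proof never uses these claims, since the final meet step correctly puts units at both endpoints.
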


\begin{proof}
Let $i=i_0,S_1,i_1,S_2,\dots,S_k,i_k=j$ be a path in $\Gamma_V$ connecting $i$ and $j$. Let $(S_1,R_1),\dots,(S_k,R_k)\in \vr c$ be the constraints involving the scopes $S_1,\dots,S_k$. We define the relation $R\subseteq A_i\times A_j$ by
\[
\begin{gathered}
(x,y)\in R\text{ iff }(\exists\overline{z}_1,\overline{z}_2,\dots,\overline{z}_k)(\exists u_1,\dots,u_{k-1})\\
\left[
(x,\overline{z}_1,u_1)\in R_1,
(u_1,\overline{z}_2,u_2)\in R_2,\dots, (u_{k-1},\overline{z}_k,y)\in R_k
\right]
\end{gathered}
\]
In the above formula, the coordinates of $R_i$ were permuted for better clarity, so that the first coordinate of a tuple in $R_i$ should be in $A_{i-1}$ and the last coordinate of $R_i$ should be in $A_i$. The above formula is a primitive positive formula and all relations $R_\ell$ it uses are subuniverses of the products of the domains of variables, and hence (as is well known) thus defined relation $R$ must be a subuniverse of $\m a_i\times\m a_j$. 

Moreover, from (1,1)-minimality and by an induction on the length $k$ of the hypergraph path, it follows that, for any $a\in A_i$ there exists $b\in A_j$ such that $(a,b)\in R$ and similarly, for any $b\in A_j$ there exists $a\in A_i$ such that $(a,b)\in R$. Therefore, $R\leq_{sd}\m a_i\times\m a_j$ and the result follows by Lemma~\ref{unitsconnected}.
\end{proof}

\begin{lm}\label{addunittuple}
Let $R\leq_{sd}\prod\limits_{i=1}^n\m a_i$ be a subdirect product of regular and unital SMB algebras $\m a_i$ and let $1_i$ be the unit element of $\m a_i$. Let $\overline{1}\in \prod\limits_{i=1}^n A_i$ be the tuple of all units, i.e. $\overline{1}(i)=1_i$. Then $R\cup\{\overline{1}\}$ is also a subalgebra of $\prod\limits_{i=1}^n\m a_i$.
\end{lm}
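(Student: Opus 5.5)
Since regular SMB algebras have their clone generated by $d$ (identity (5) gives $x\mt y\approx d(y,y,x)$), it suffices to show $R\cup\{\overline{1}\}$ is closed under $d$ applied coordinatewise. Take $r,s,u\in R\cup\{\overline{1}\}$. If none of them is $\overline{1}$, then $d(r,s,u)\in R$. If all three are $\overline{1}$, idempotence gives $\overline{1}$. So we only need the mixed cases, where at least one but not all arguments equal $\overline{1}$.

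The plan is to compute, in a single unital regular SMB algebra with unit $1$, the values $d(1,x,y)$, $d(x,1,y)$, $d(x,y,1)$, $d(1,1,x)$, $d(1,x,1)$, $d(x,1,1)$. We then express each as a term in $x,y$ that does not use the constant $1$, so that applying it coordinatewise to tuples of $R$ stays in $R$.

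\textbf{Cases with two units.} By (5), $d(1,1,x)=x\mt 1=x$ and $d(x,1,1)=x\mt 1=x$. For $d(1,x,1)$ use identity (3):
\[d(1,x,1)=d(1\mt(1\mt x),\,x\mt(1\mt 1),\,1\mt(x\mt 1))=d(x,x,x)=x.\]
All of these return a tuple of $R$.

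\textbf{Cases with one unit.} Again apply (3), using $1\mt z=z\mt 1=z$:
\[d(1,x,y)=d(1\mt(y\mt x),\,x\mt(y\mt 1),\,y\mt(x\mt 1))=d(y\mt x,\,x\mt y,\,y\mt x).\]
Similarly,
\[d(x,1,y)=d(x\mt y,\,y\mt x,\,y\mt x),\qquad d(x,y,1)=d(x\mt y,\,y\mt x,\,y\mt x).\]
Each right-hand side is a term in $x,y$ alone. Applying it coordinatewise to $r,s\in R$ yields an element of $R$, because $R$ is a subuniverse. Since the unit occurs in the same position in every coordinate, the coordinatewise computation is uniform, and the identity holds in each $\m a_i$ separately.

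The main point to check is that identity (3) holds as written with $1$ substituted. It does, since it is an identity of the variety and $1$ is just an element. The only further subtlety is whether term operations other than $d$ need checking. If the clone is generated by $d$ together with $\mt$, then $\mt$ needs checking too: $1\mt x=x$, $x\mt 1=x$, and $1\mt 1=1$, all fine. Hence $R\cup\{\overline{1}\}$ is closed under both basic operations.
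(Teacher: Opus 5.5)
Your proof is correct and takes essentially the same route as the paper: closure under $\mt$ follows from the neutrality of $\overline{1}$, and closure under $d$ follows by substituting the unit into identity (3) of Definition~\ref{regSMBdef}, so that each mixed application of $d$ becomes a term in the remaining arguments and therefore stays in $R$. Your instantiation of (3) is actually the one that matches the identity as stated, whereas the paper's displayed computation uses the mirror form $d((y\mt z)\mt x,(x\mt z)\mt y,(x\mt y)\mt z)$ (apparently left over from the second-projection convention of \cite{SMB1}); the conclusion is unaffected.
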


\begin{proof}
Since $\overline{1}$ is a two-sided neutral element with respect to the operation $\mt$ in $\prod\limits_{i=1}^n\m a_i$, $R\cup\{\overline{1}\}$ is compatible with $\mt$. As for the compatibility with $d$, for any $\overline{a},\overline{b}\in R\cup\{\overline{1}\}$, using equation (3) from Definition~\ref{regSMBdef}, we get
\[
\begin{gathered}
d(\overline{a},\overline{b},\overline{1})=d((\overline{b}\mt\overline{1})\mt\overline{a},(\overline{a}\mt\overline{1})\mt\overline{b},(\overline{a}\mt\overline{b})\mt\overline{1})=\\
d(\overline{b}\mt\overline{a},\overline{a}\mt\overline{b},\overline{a}\mt\overline{b})=\overline{b}\mt\overline{a};\\
d(\overline{a},\overline{1},\overline{b})=d((\overline{1}\mt\overline{b})\mt\overline{a},(\overline{a}\mt\overline{b})\mt\overline{1},(\overline{a}\mt\overline{1})\mt\overline{b})=\\
d(\overline{b}\mt\overline{a},\overline{a}\mt\overline{b},\overline{a}\mt\overline{b})=\overline{b}\mt\overline{a};\\
d(\overline{1},\overline{a},\overline{b})=d((\overline{a}\mt\overline{b})\mt\overline{1},(\overline{1}\mt\overline{b})\mt\overline{a},(\overline{1}\mt\overline{a})\mt\overline{b})=\\
d(\overline{a}\mt\overline{b},\overline{b}\mt\overline{a},\overline{a}\mt\overline{b}).
\end{gathered}
\]
If both $\overline{a},\overline{b}\in R$, then $d(\overline{a}\mt\overline{b},\overline{b}\mt\overline{a},\overline{a}\mt\overline{b})\in R$. If, on the other hand, one of $\overline{a}$ and $\overline{b}$ is equal to $\overline{1}$, then $d(\overline{a}\mt\overline{b},\overline{b}\mt\overline{a},\overline{a}\mt\overline{b})\in\{\overline{a},\overline{b}\}$. In all cases we obtain that $R\cup\{\overline{1}\}$ is closed under $d$.
\end{proof}

In the case of unital SMB algebras, we will modify Theorem~\ref{Zhlemma} to the following stronger statement:

\begin{thm}\label{Aljareduce}
Let $\vr t$ be a template of SMB algebras and let $P=(V,D,\vr c)$ be a multisorted instance of $CSP(\vr t)$ such that each domain of a variable in $P$ is a unital SMB algebra. If $P$ is Z-irreducible, (1,1)-minimal and cycle consistent, then $P$ has a solution $f$ such that for each $i\in V$, $f(i)$ is in the least $\sim$-class in $\m a_i$.
\end{thm}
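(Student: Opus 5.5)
The difference from Theorem~\ref{Zhlemma} is that we no longer assume $P$ has a solution. So the plan is to show that, when every domain is unital, $P$ automatically has a solution, and then to invoke Theorem~\ref{Zhlemma}. The natural candidate is the all-units assignment $\overline{1}$, with $\overline{1}(i)=1_i$, but it need not satisfy the constraints. Instead we enlarge the instance. We may assume (Proposition~\ref{SMBtospec}) that all sorts are regular SMB algebras with the same units. Let $P^+$ be the instance obtained from $P$ by replacing each constraint $(S,R)$ with $(S,R\cup\{\overline{1}|_S\})$. By Lemma~\ref{addunittuple} each new relation is again a subdirect product of the $\m a_i$, so $P^+$ is an instance of $CSP(\vr t)$ with the same domains, and $\overline{1}$ is a solution of $P^+$.

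Next we check that $P^+$ inherits the hypotheses of Theorem~\ref{Zhlemma}. $(1,1)$-minimality is immediate, since the projections are unchanged. Cycle consistency is also immediate, because adding tuples only adds hyperedges to $\Gamma_{P^+}$. The substantive step is Z-irreducibility of $P^+$. Take $(V',\vr c_1)$ such that the induced subinstance $P^{+\prime}$ has a connected scope graph but a disconnected microstructure graph.

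\begin{itemize}
\item By Corollary~\ref{unitsconnhyper}, applied inside $P^{+\prime}$ with unital domains, all units $1_i$ for $i\in V'$ lie in a single connected component $K$ of $\Gamma_{P^{+\prime}}$.
\item The extra hyperedges coming from $\overline{1}$ lie entirely in $K$. Hence every component other than $K$ is already a component of the microstructure graph of the corresponding subinstance $P'$ of $P$, and $K$ with the unit hyperedges removed is a union of components of $\Gamma_{P'}$.
\item If $\Gamma_{P'}$ is disconnected, Z-irreducibility of $P$ gives a solution of $P'$ through every point. This is also a solution of $P^{+\prime}$, since that instance only has more tuples.
\item If $\Gamma_{P'}$ is connected, then adding the unit hyperedges cannot disconnect it. So $\Gamma_{P^{+\prime}}$ is connected, contradicting the case assumption.
\end{itemize}

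Either way we obtain solutions of $P^{+\prime}$ through every point, so $P^+$ is Z-irreducible.

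Applying Theorem~\ref{Zhlemma} to $P^+$ gives a solution $f$ of $P^+$ with $f(i)\in\min(\m a_i)$ for all $i$. It remains to see that $f$ solves $P$. If $f|_S=\overline{1}|_S$ for some constraint $(S,R)$, then each $f(i)=1_i$ lies in the least block $\{1_i\}$. That forces $\m a_i$ to be trivial for every $i\in S$, and then $\overline{1}|_S\in R$ already. Otherwise $f|_S\ne\overline{1}|_S$, so $f|_S\in R$. Hence $f$ is a least-block solution of $P$.

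The main obstacle is the Z-irreducibility transfer, specifically the third bullet. There we must make sure that the subinstances whose microstructure becomes disconnected in $P^+$ really correspond to disconnected subinstances of $P$, and that solutions of $P'$ count as solutions of $P^{+\prime}$. Some care is needed because a component of $\Gamma_{P'}$ might fail to meet every $A_i$. If this breaks down, the fallback is to argue directly from $(1,1)$-minimality and cycle consistency that the components of $\Gamma_{P'}$ meet every domain, using Proposition~\ref{linksubuniv}.
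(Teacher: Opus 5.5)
Your proposal is correct and follows the paper's proof. Both pass to the regular reducts, add the all-units tuple to every constraint (Lemma~\ref{addunittuple}), check that $(1,1)$-minimality, cycle consistency and Z-irreducibility survive, apply Theorem~\ref{Zhlemma} using the solution $\overline{1}$, and rule out $f|_S=\overline{1}|_S\notin R$ via triviality of the domains. The one difference is in the Z-irreducibility step: the paper shows the connectivity relations of the two induced microstructure graphs coincide (using Corollary~\ref{unitsconnhyper}), whereas your third and fourth bullets rightly use only the trivial direction, namely that a disconnected $\Gamma_{P^{+\prime}}$ forces its spanning subhypergraph $\Gamma_{P'}$ to be disconnected. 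So your first two bullets, and the worry about components missing some $A_i$, are unnecessary.
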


\begin{proof}
We invoke the proof of Proposition~\ref{SMBtospec} (Proposition 21 from \cite{SMB1}) which involved iterating $\mt$ and substituting $d$ in order to obtain term operations $\mt'$ and $d'$ of $\m a_i$ to obtain $\m a_i'=(A_i;\mt',d')$ which is a regular SMB algebra. Since the terms can be iterated the same way for all $\m a_i\in D$, we define the template $\vr t'$ to consist of the isomorphism types of $\{\m a_i':\m a_i\in D\}$ and their closure under homomorphic images, subalgebras and unary polynomial retracts.

Let us denote the unit element of $\m a_i$ by $1_i$. One can check that $1_i$ is still a two-sided neutral element with respect to $\mt '$ since $x\mt'y$ has the form $x\mt(\dots(x\mt (x\mt y))\dots)$. So $\m a_i'$ are unital and regular SMB algebras, with respect to the same congruences $\sim$ that were used in $\m a_i$.

We construct a new instance $P'=(V,D',\vr c')$, where $D'=\{A_i':i\in V\}$, $\vr c'=\{(S,R'):(S,R)\in \vr c\}$ and for each $(S,R)\in \vr c$, $R'=R\cup\{\lb 1_i:i\in S\rb\}$. In other words, we only changed the constraint relations by adding the tuple of all units to the constraint relations which didn't already have one. By Lemma~\ref{addunittuple} and (1,1)-minimality of $R$s, for each $(S,R')\in \vr c'$ we know $R'\leq_{sd}\prod\limits_{i\in S}\m a_i'$.

We claim that $P'$ is Z-irreducible, (1,1)-minimal and cycle consistent. For Z-irreducibility, note that the scope graph $\Gamma_V$ of $P'$ is the same as the scope graph of $P$. On the other hand, let $\Gamma_1$ be the subgraph of the microstructure graph $\Gamma_P$ induced by $(V_1,\vr c_1)$ and $\Gamma_1'$ be the subgraph of the microstructure graph $\Gamma_{P'}$ induced by $(V_1,\vr c_1')$, where $\vr c_1'=\{(S,R')\in\vr c':(S,R)\in \vr c_1\}$. Since the vertex sets of $\Gamma_1$ and $\Gamma_1'$ are the same, we will prove that the connectivity relations in $\Gamma_1$ and $\Gamma_1'$ are equal. Let $a$ and $b$ be two vertices of $\Gamma_1$. If $a$ and $b$ are connected by a path in $\Gamma_1$, then the same path connects them in $\Gamma_1'$, since the constraint relations of $\Gamma_1'$ contain the appropriate ones of $\Gamma_1$. If, on the other hand, $a$ and $b$ are connected by a path in $\Gamma_1'$ and at any time the newly added tuple of all units is used as an edge $c_{j-1}, E_j, c_j$, this means that $c_{j-1}$ and $c_j$ are both units in their respective domains of variables $\m a_{i_{j-1}}$ and $a_{i_j}$, and that there is a constraint $(S,R')\in\vr c_1'$ such that both $i_{j-1},i_j\in S$. By Corollary~\ref{unitsconnhyper} and using (1,1)-minimality of $\Gamma_1$ (which follows from $(1,1)$-minimality of $\Gamma_P$), $1_{i_{j-1}}$ and $1_{i_j}$ are connected in $\Gamma_1$ and the edge $1_{i_{j-1}}, E_j, 1_{i_j}$ can be replaced by the path that connects $1_{i_{j-1}}$ and $1_{i_j}$. In such a way, we replace all edges of $\Gamma_1'$ which are not in $\Gamma_1$ with paths in $\Gamma_1$ to prove that $a$ and $b$ are connected in $\Gamma_1$. We have proved that either both $\Gamma_1$ and $\Gamma_1'$ are connected, or neither is. In the interesting case for Z-irreducibility, the subinstance $P_1'$ of $P'$ induced by $(V_1,\vr c_1')$ has a connected scope graph, but disconnected microstructure graph. By the above arguments, the same holds for the subinstance $P_1$ of $P$ induced by $(V_1,\vr c_1)$. By the Z-irreducibility of $P$, for any point in the vertex set of $\Gamma_1$ there is a solution $f$ of $P_1$ through that point. But $f$ is a solution of $P_1'$ through the same arbitrarily chosen point, since all constraint relations of $P$ are subsets of the corresponding constraint relations of $P'$. Hence, $P'$ is Z-irreducible.

It remains to prove that $P'$ is (1,1)-minimal and cycle consistent. But this follows from the same properties of $P$, using again the fact that all constraint relations of $P$ are subsets of the corresponding constraint relations of $P'$.

Now we can apply Theorem~\ref{Zhlemma} to finish the proof. By construction, $P'$ has a solution which maps each $i$ to $1_i$. From Theorem~\ref{Zhlemma} follows that there exists a solution $f$ of $P'$ such that for each $i\in V$, $f(i)$ is in the least $\sim$-class in $\m a_i'$. The only thing to check is whether it is possible for some $(S,R')\in\vr c'$ that $f|_S\in R'\setminus R$. This would mean that, for each $i\in S$, $f(i)=1_i$. Since $f(i)$ is in the least $\sim$-class in $\m a_i'$ (which is the least $\sim$-class in $\m a_i$), it implies that, for each $i\in S$, $[1_i]_\sim$ is both the least and the greatest, so it is the only $\sim$-class. Moreover, since $1_i$ is the unit element for the operation $\mt$, $A_i=[1_i]_\sim=\{1_i\}$ holds for all $i\in S$. By $(1,1)$-minimality of $P$, $R=\{\lb 1_i:i\in S\rb\}$, so $f|_S\in R$.
\end{proof}

\subsection*{Proof of Theorem~\ref{M-irrdesired}}

Let $P$ be an instance of $CSP(\m a)$. First, we assume that $\m a$ is regular, convert the instance to a multisorted one, and apply the (2,3)-minimality algorithm and then Zhuk's CHECKIRREDUCIBILITY procedure to either solve $P$ or reduce it to an equivalent (2,3)-minimal and Z-irreducible instance of $CSP(\vr t(\m a))$. 

Now we consider the decomposition $t(P)$ of the instance $P$ via the term $t(x,y)=x\mt y$. Of course, $t(x,t(x,y))=x\mt (x\mt y)=x\mt y=t(x,y)$, so the decomposition exists according to Definition~\ref{decompdef}. We prove the following

\begin{lm}\label{Z-irrtodecomp}
If $P$ is (1,1)-minimal, cycle consistent and Z-irreducible, then the decomposition $t(P)=(V',D',\vr c')$ is (1,1)-minimal, cycle consistent and Z-irreducible, and each domain of a variable in $t(P)$ is a unital SMB algebra.
\end{lm}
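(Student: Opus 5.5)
We will verify the four properties of $t(P)$ one at a time, starting with unitality and $(1,1)$-minimality, which are nearly immediate. Each domain is $A_{i,a}=a\mt A_i$, and $A_{i,a}$ is a subuniverse of the retract $t_a(\m a_i)$. For $x\in A_{i,a}$ we have $a\mt x=x$ by Definition~\ref{regSMBdef}(4). Moreover $x\mt a$ and $x$ lie in the same ${\sim}$-class, because $[x]_\sim\le[a]_\sim$. So in the retract algebra $x\mt' a = a\mt(x\mt a)$ should equal $x$. We will check this using regularity: $x\mt a$ is in the block of $x$, and $\mt$ acts as the first projection within a block, giving $x\mt a=x$. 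Hence $a$ is a two-sided unit of $t_a(\m a_i)$, and ${\sim}$ restricts to a congruence making it an SMB algebra.

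For $(1,1)$-minimality we must show that every constraint projects onto the whole domain at each of its variables. For a constraint $(S_r,R_r)$ the projection to $(i,r(i))$ is $r(i)\mt(\mathrm{pr}_iR)=r(i)\mt A_i=A_{i,r(i)}$, using $(1,1)$-minimality of $P$. For $(S_i,T_i)$ the generators $\lb a\mt b:a\in A_i\rb$ already cover every coordinate, so the projection is again the whole domain. One small point to address is that several constraints may share a scope $S_r$; these get intersected, and we have to confirm that subdirectness survives the intersection. We expect the $T_i$ constraints to handle this, or else we simply treat them as separate constraints, as Definition~\ref{decompdef} does.

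Cycle consistency and Z-irreducibility both come from transferring paths and solutions of $P$ via the map $x\mapsto t(r,x)$. Take a closed path in the scope graph of $t(P)$ through $(i,a)$ and an element $a\mt c\in A_{i,a}$. Each step using $S_r$ projects to a step in $\Gamma_V$ of $P$, while steps using $S_i$ keep the index $i$ fixed. By cycle consistency of $P$ at $c$ we obtain a closed walk in $\Gamma_P$. Applying $t(r,\cdot)$ edgewise to this walk gives edges of $R_r$. At $S_i$-steps we move from $a\mt x$ to $a'\mt x$ along the generator $\lb b\mt x\rb$. By idempotence of $\mt$ this yields a closed walk based at $a\mt c$ (using $a\mt(a\mt c)=a\mt c$).

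For Z-irreducibility, consider a subinstance of $t(P)$ induced by $(V'_1,\vr c'_1)$ that has a connected scope graph but a disconnected microstructure graph. The plan is to pull it back to the subinstance of $P$ induced by the underlying variables and constraints. We then argue that this pulled-back instance also has a disconnected microstructure graph, as follows. By Proposition~\ref{linksubuniv} the components give congruences. The $T_i$ constraints force these congruences to be compatible with the maps $x\mapsto b\mt x$, by Lemma~\ref{l:t-constraint}(2). This should produce a link partition of the projected instance. Z-irreducibility of $P$ then gives a solution $f$ through any point $c$, and $g((i,a))=a\mt f(i)$ is a solution through $a\mt c$, by the argument of Lemma~\ref{l:tA-solution}. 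The main obstacle is exactly this transfer of disconnectedness from $t(P)$ back to $P$. Disconnectedness need not pull back in general, because collapsing by $t_a$ can merge components, and subinstances that omit the $T_i$ constraints are not tied together. If the direct transfer fails, we will instead show that each component of the subinstance of $t(P)$ restricts on every $A_{i,a}$ to a union of $t_a$-images of components of a suitable subinstance of $P$. We then handle the residual connected case by using the unit $a$ and Corollary~\ref{unitsconnhyper}, which forces all units into one component and thereby controls the possible splittings.
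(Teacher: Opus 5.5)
Your arguments for unitality, $(1,1)$-minimality and cycle consistency are essentially the paper's and are fine, apart from two small points. First, $x\mt a=x$ for $x\in A_{i,a}$ is not an instance of ``first projection within a block'', since $x$ and $a$ usually lie in different blocks; it follows from Definition~\ref{regSMBdef}(2), because $[x]_\sim\le[a]_\sim$. Second, no intersections of scopes arise, because $S_r$ determines both $S$ and $r$.

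The genuine gap is Z-irreducibility, which is the substantive part of the lemma and which you leave as a conditional plan. The route through Proposition~\ref{linksubuniv} and Lemma~\ref{l:t-constraint}(2) does not work. The subinstance $P_1'$ induced by $(V_1',\vr c_1')$ need not contain any constraint $(S_i,T_i)$. Moreover, Lemma~\ref{l:t-constraint}(2) is about tuples of $T_i$ read as polynomials of $\m a_i$, and says nothing about the components of $\Gamma_{P_1'}$. So no link partition of the pulled-back instance is actually produced. The fallback fails as well: Corollary~\ref{unitsconnhyper} only puts the units in a common component and gives no solution of $P_1'$ through an arbitrary point, so your ``residual connected case'' stays open. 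The merging caused by $t_a$ concerns the irrelevant direction. What is needed is only that disconnectedness of $\Gamma_{P_1'}$ forces disconnectedness of the pulled-back instance.

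The missing idea is that this residual case never occurs, and the tool is cycle consistency of $P$, not the $T_i$. Let $P_1$ be the subinstance of $P$ on $V_1=\{j:(j,c)\in V_1'\}$ with the underlying constraints of $\vr c_1'$.

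First, fibres over one base variable are linked (the paper's Claim 2). Take $(j,c),(j,c')\in V_1'$ and $b\in A_j$. A path from $(j,c)$ to $(j,c')$ in the connected scope graph of $P_1'$ projects to a closed walk at $j$. Cycle consistency of $P$ at $b$ gives a closed walk from $b$ to $b$ in $\Gamma_P$ along it. Lift this walk as in your cycle-consistency argument: the vertex over $(i,a)$ becomes $a\mt x$, where $x$ is the current vertex of the $\Gamma_P$-walk. An $S_r$-step uses $r\mt(\cdot)$ applied to the $\Gamma_P$-edge, and an $S_j$-step uses a generator of $T_j$, which is needed only if the path itself uses $S_j$. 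This gives a walk in $\Gamma_{P_1'}$ from $c\mt b$ to $c'\mt b$.

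Then any path in $\Gamma_{P_1}$ lifts to $\Gamma_{P_1'}$ (Claim 3). Apply $r\mt(\cdot)$ edgewise, with $(S_r,R_r)\in\vr c_1'$ lying over the constraint used, and splice in the above connectors at the intermediate variables. Hence if $\Gamma_{P_1'}$ is disconnected so is $\Gamma_{P_1}$, while the scope graph of $P_1$ is connected (Claim 1).

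Z-irreducibility of $P$ then gives a solution $f$ of $P_1$ with $f(j)=b$. Your map $g((j,c))=c\mt f(j)$ is a solution of $P_1'$ with $g((j,c))=c\mt b=b$. When you write out the lifting in Claim 3, check that the fibre variables $(j,r(j))$ at the ends of each lifted edge actually belong to $V_1'$. The paper's write-up passes over this bookkeeping quickly.
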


\begin{proof}
First we prove that each domain of a variable $\m a_{i,a}\in D'$ is a unital SMB algebra. For each $b\in A_{i,a}$ we have $b=a\mt b'$, and by the regularity of $\m a_i$ it follows that 
\begin{equation}\label{eq1}\tag{*}
\text{for each }b\in A_{i,a}\text{, }a\mt b=a\mt (a\mt b') =a\mt b'=b. 
\end{equation}
This implies that $[a]_\sim$ is the greatest $\sim$-class which intersects $A_{i,a}$. As $a\mt b\sim b$ whenever $[b]_\sim\leq [a]_\sim$, the set of $\sim$-classes which intersects $A_{i,a}$ is exactly those that are below $[a]_\sim$ in the semilattice order of $\sim$-classes of $\m a_i$.

Moreover, the operations of $\m a_{i,a}$ are defined as 
$$b_1\mt^{\sm a_{i,a}} b_2=a\mt (b_1\mt b_2)\text{ and }d^{\sm a_{i,a}}(b_1,b_2,b_3)=a\mt d(b_1,b_2,b_3).$$

Of course, $\sim$ is compatible with all polynomials of $\m a_i$, and hence the restriction of $\sim$ is a congruence of $\m a_{i,a}$. For those $\sim$-classes below $[a]_\sim$ in the semilattice order of $\m a_i/{\sim}$, \eqref{eq1} and Definition~\ref{regSMBdef} (2) imply that $[a]_\sim$ acts as the two-sided neutral element, and hence
\[[x\mt^{\sm a_{i,a}} y]_\sim=[x]_\sim\mt[y]_\sim,\]
so $(A_{i,a}/{\sim};\mt^{\sm a_{i,a}})$ is a semilattice. The same argument implies that $$[d^{\sm a_{i,a}}(x,y,z)]_\sim=[x]_\sim\mt[y]_\sim\mt[z]_\sim,$$
in particular, each $[b]_\sim\cap \m a_{i,a}$ is closed under $d^{\sm a_{i,a}}$. We can compute that for $b_1,b_2\in\m a_{i,a}$ such that $b_1\sim b_2$, 
\[d^{\sm a_{i,a}}(b_1,b_2,b_2)=a\mt d(b_1,b_2,b_2)=a\mt b_1=b_1=d^{\sm a_{i,a}}(b_2,b_2,b_1),\]
i.e. $d^{\sm a_{i,a}}$ is a Mal'cev operation on each $[b]_\sim\cap \m a_{i,a}$.

For any $b_1,b_2\in\m a_{i,a}$ such that $[b_1]_\sim\leq [b_2]_\sim$,
\[b_1\mt^{\sm a_{i,a}} b_2=a\mt (b_1\mt b_2)=a\mt b_1=b_1,\]
so $\m a_{i,a}$ is an SMB algebra with respect to the restriction of $\sim$ to $A_{i,a}$ which satisfies (1) and (2) of Definition~\ref{regSMBdef}.

Now for each $x\in A_{i,a}$, 
\[
\begin{gathered}
a\mt^{\sm a_{i,a}} x=a\mt (a\mt x)=a\mt x=x \text{ and}\\
x\mt^{\sm a_{i,a}} a=a\mt (x\mt a)=a\mt x=x.
\end{gathered}
\]
In the above computations we used Definition~\ref{regSMBdef} (2) and (4), since $\m a_i$ is a regular SMB algebra, and also \eqref{eq1}. Thus, we know that each $\m a_{i,a}$ is a unital SMB algebra.

{\bf $\mathbf{t(P)}$ is (1,1)-minimal.} Let $b\in A_{i,a}$ and let $Q$ be a constraint relation having $(i,a)$ in its scope. $Q$ can take one of two forms: if $Q=R_r$, then there is a constraint $(S,R)\in\vr c$ and $r\in R$ such that $i\in S$ and $r(i)=a$. By (1,1)-minimality of $P$, there exists a tuple $q\in R$ such that $q(i)=b$. Then $r\mt q\in R_r$ and $r(i)\mt q(i)=a\mt b=b$, the last equality by \eqref{eq1}. On the other hand, if $Q=T_i$, then there exists a tuple in $T_i$ of the form $\lb c\mt b:c\in A_i\rb$. At the coordinate $(i,a)$, this tuple equals $a\mt b=b$, again by \eqref{eq1}. Either way, we obtain that $t(P)$ is (1,1)-minimal.

{\bf $\mathbf{t(P)}$ is cycle consistent.} Take $(i,a)\in V'$, $b\in A_{i,a}$ and a path 
\[(i,a)=(i_0,a_0),Q_1,(i_1,a_1),Q_2,\dots,Q_k,(i_k,a_k) = (i,a)\]
in $\Gamma_{V'}$. Let $Q_{j_1}=(S_1)_{r_1},\dots,Q_{j_v}=(S_v)_{r_v}$ be the subsequence of $Q$s consisting of those scopes that are scopes of constraint of the form $(S_r,R_r)$, while all other $Q_j$ are scopes of constraints of the form $(S_j,T_j)$ for some $j\in V$. Then 
\[i=i_{j_0},S_1,i_{j_1},\dots,S_v,i_{j_v}=i\]
is a path in $\Gamma_V$. By the cycle consistency of $P$, there exists a path
\[
b=b_0,E_1,b_1,\dots,E_v,b_v=b
\]
in $\Gamma_P$, where for each $0<u<v$, $b_u\in A_{j_u}$, while for each $0<u\leq v$, $E_u$ is the set of all coordinates of some tuple $t_u\in R_u$ and $(S_u,R_u)\in \vr c$.

Now we construct the path in $\Gamma_{t(P)}$ which verifies the cycle consistency of $t(P)$. The path $p$ we will use is
\[c_0,E_1',c_1,\dots,E_k',c_k.\]
Here 
\begin{itemize}
\item for all $u<j_1$, $c_u=a_u\mt b_0$,
\item for all $j_\ell\leq u<j_{\ell+1}$, $c_u=a_u\mt b_\ell$, while
\item for all $u\geq j_v$, $c_u=a_u\mt b_v$.
\end{itemize}
As for the edges,
\begin{itemize}
\item when $u=j_\ell$, $1\leq \ell\leq v$, then $E_u'$ is the set of all coordinates of the tuple $r_\ell\mt t_\ell\in (R_\ell)_{r_\ell}$,
\item if $u<j_1$, then $E_u'$ is the set of all coordinates of the tuple $\lb c\mt b_0:c\in A_{j_0}\rb\in T_{i_{j_0}}=T_i$, 
\item if $j_{\ell}<u<j_{\ell+1}$ (where $0<\ell< v$), then $E_u'$ is the set of all coordinates of the tuple $\lb c\mt b_\ell:c\in A_{j_\ell}\rb\in T_{j_\ell}$, while
\item if $j_v<u$, then $E_u'$ is the set of all coordinates of the tuple $\lb c\mt b_v:c\in A_{j_v}\rb\in T_{j_v}=T_i$.
\end{itemize}

We need to prove that $p$ is indeed a path from $b$ to $b$ traversing the desired domains of variables and constraint relations of $t(P)$.

First of all, \[c_0=a_0\mt b_0=a\mt b=b\in A_{j_0,a_0}=A_{i,a}\] 
and 
\[c_k=a_k\mt b_v=a\mt b=b\in A_{j_v,a_k}=A_{i,a}.\]
Next, note that when $Q_\ell$ is the scope of some constraint of the form $(S_j,T_j)$ for $j\in V$, then $i_{\ell-1}=i_\ell$ since all domains of variables in the scope of $T_j$ are of the form $(j,c)=(i_\ell,c)$ for various $c\in A_{i_\ell}$. Therefore, 
\begin{itemize}
\item for all $u<j_1$, $i_u=i_{j_0}=i$,
\item for all $j_{\ell}\leq u<j_{\ell+1}$, $i_u=i_{j_{\ell}}$, while
\item for all $u\geq j_v$, $i_u=i_{j_v}=i$.
\end{itemize}
Using the above, we can check that
\begin{itemize}
\item for all $u<j_1$, $c_u=a_u\mt b_0\in A_{i_{j_0},a_u}=A_{i_u,a_u}$,
\item for all $j_\ell\leq u<j_{\ell+1}$, $c_u=a_u\mt b_\ell\in A_{i_{j_\ell},a_u}=A_{i_u,a_u}$, while
\item for all $u\geq j_v$, $c_u=a_u\mt b_v\in A_{i_{j_v},a_u}=A_{i_u,a_u}$.
\end{itemize}
So the vertices $c_u$ along the path are indeed in the desired domains $A_{i_u,a_u}$.

Each edge $E'_{j_\ell}$ is the set of all coordinates of the tuple $t_\ell\mt r_\ell$ in $(R_\ell)_{r_\ell}$. Moreover, 
\[
\begin{gathered}
t_\ell(i_{j_\ell}-1)=t_\ell(i_{j_{\ell-1}})=b_{\ell-1}\text{, }t_\ell(i_{j_\ell})=b_\ell,\\
r_\ell(i_{j_\ell}-1)=a_{i_{j_\ell}-1}\text{ and }r_\ell(i_{j_\ell})=a_{i_{j_\ell}}.
\end{gathered}
\]
Therefore, the edge $E'_{j_\ell}$ connects 
\[c_{i_{j_\ell}-1}=a_{i_{j_\ell}-1}\mt b_{\ell-1}=r_{\ell}(i_{j_\ell}-1)\mt t_\ell(j_{\ell-1})\]
with 
\[c_{i_{j_\ell}}=a_{i_{j_\ell}}\mt b_{\ell}=r_{\ell}(i_{j_{\ell}})\mt t_\ell(i_{j_{\ell}}),\]
as desired. Here we use that $i_{j_{\ell-1}}=i_{j_{\ell-1}}+1=\dots=i_{j_{\ell}}-1$ which we proved above.

Next, we analyze the edges $E_u'$ when $u$ is not equal to any $j_\ell$. If $u<j_1$, then $E_j'$ is the set of all coordinates of the tuple
\[\lb c\mt b_0:c\in A_{j_0}\rb=\lb c\mt b:c\in A_i\rb.\]
(Remark: the tuple doesn't change for all coordinates $u<i_1$.) At coordinates $(i_{u-1},a_{u-1})=(i_0,a_{u-1})$ and $(i_u,a_u)=(i_0,a_u)$ the above tuple equals $a_{u-1}\mt b_0=c_{u-1}$ and $a_u\mt b_0=c_u$, respectively. An analogous argument proves the desired connections in the cases $j_{\ell-1}<u<j_\ell$ and $j_v<u$, completing the proof of cycle consistency.

{\bf $\mathbf{t(P)}$ is Z-irreducible.} Assume that $P_1'$ is a subinstance of $t(P)$ induced by $(V_1',\vr c_1')$ and that $P_1'$ has a connected scope graph. We ``project" the variables in $V_1'$ to their first coordinates, to obtain 
\[V_1=\{i\in V:(\exists a\in A_i)(i,a)\in V_1\}.\] 
The same ``projection" can be applied to constraints in $\vr c_1'$ to obtain
\[\vr c_1=\{(S,R)\in \vr c:(\exists r\in R)(S_r,R_r)\in \vr c_1'\}.\]

{\bf Claim 1.} If the subinstance $P_1'$ of $t(P)$ induced by $(V_1',\vr c_1')$ has a connected scope graph, then so does the subinstance $P_1$ of $P$ induced by $(V_1,\vr c_1)$. 

{\em Proof of Claim 1}. If we consider the scopes of the constraints of the form $(S_i,t_i)$, where $i\in V$, as $S_i=\{(i,a):a\in A_i\}$, we see that they are pairwise disjoint as the ``projection" of each such scope to the first coordinate is $\{i\}$. Let us assume that the scope graph of $P_1'$ is connected and that $i,j\in V_1$. Then there exist some variables $(i,a)$ and $(j,b)$ in $V_1'$. By the connectedness of the scope graph of $P_1'$, there must exist a path in that graph connecting $(i,a)$ and $(j,b)$. We can project the whole path to the first coordinates and conclude that there is a path in the scope graph of $P_1$ from $i$ to $j$ using only the projections of the scopes of the form $S_r$ (as the hyperedges obtained by projecting the other type of scopes become singletons, which can be omitted).

Next we want to prove that

{\bf Claim 2.} Let $p'$ be a path in the scope graph of $P_1'$ of the form
\[(i_0,a_0),Q_1,(i_1,a_1),\dots,Q_k,(i_k,a_k),\]
such that $i_0=i_k$ and let $b\in A_{i_0}$. Then there exists a corresponding path $q'$ in the microstructure graph of $P_1'$,
\[a_0\mt b=c_0,E_1,c_1,\dots,E_k,c_k=a_k\mt b\]
such that each hyperedge $E_i$ is the set of all coordinates of some tuple in the constraint relation $R_i'$ corresponding to the scope $Q_i$.

{\em Proof of Claim 2.} Let $P_1''$ be the extension of $P_1'$ obtained by adding the hyperedge $S_{i_0}=S_{i_k}$ to the scope graph. If we extend the path $p'$ by just one edge and vertex we get the path $p'':=p',S_{i_0},(i_k,a_0)=(i_0,a_0)$, which may not be a path in the scope graph of $P_1'$, but is a path in the scope graph of $P_1''$. We want to prove that there exists a path $q''$ in the microstructure graph of $P_1''$ corresponding to the path $p''$ whose first vertex is $a_0\mt b$ and last edge is $a_k\mt b, R_{i_0}, a_0\mt b$. By deleting the last edge from the path $q''$ we would obtain the desired path $q'$ in the microstructure graph of $P_1'$.

We have already proved a very similar claim in the course of the proof of cycle consistency of $t(P)$. The only difference is that we assumed that $a_0\mt b=b$, which was more of a convenience than a real requirement. Using an analogous argument as the one we made in the cycle consistency, together with a cyclic path $p$ in the microstructure graph of $P_1$ from $b$ to $b$, we prove the existence of the corresponding cyclic path $q''$ in the microstructure graph of $P_1''$ from $a_0\mt b$ to $a_0\mt b$. Since the last vertex in the path $p$ is $b$ and the last edge in the path $p''$ is $(i_0,a_k), S_{i_0},(i_0,a_0)$, the last edge in the path $q''$ must be $a_k\mt b, R_{i_0}, a_0\mt b$. As we already said, deleting this last edge from $q''$ proves the existence of the desired path $q'$ and thus, proves Claim 2.

{\bf Claim 3.} If the microstructure graph of $P_1$ is connected and the scope graph of $P_1'$ is connected, then the microstructure graph of $P_1'$ is connected.

{\em Remark}. The contrapositive of the statement of Claim 3 is what we need for Z-irreducibility. We can not prove Claim 3 in general, but we can in our more restrictive setting, recalling that $P$ and $t(P)$ are both (1,1)-minimal and cycle consistent.

{\em Proof of Claim 3}. Assume that $b_1\in A_{i,a_1}$ and $b_2\in A_{j,a_2}$. Then  there is a path $p$ in the microstructure graph of $P_1$ from $b_1\in A_i$ to $b_2\in A_j$. Let this path $p$ be
\[b_1=c_0,E_1,c_1,\dots, E_k,c_k=b_2,\]
where $c_i\in A_{j_i}$, for some $i=j_0,j_1,\dots,j_k=j\in V_1$, $E_u$ is the set of all coordinates of a tuple $t_u\in R_u$, while $(S_u,R_u)\in \vr c_1$ are constraints such that $j_{u-1},j_u\in S_u$.

By the definition of $\vr c_1$, there must exist tuples $r_u\in R_u$ and constraints $(S_{r_u},R_{r_u})\in\vr c_1'$. Hence there are hyperedges $E_1',E_2',\dots,E_k'$ in the microstructure graph of $P_1'$ such that $E_u'$ connects 
\[r_u(j_{u-1})\mt c_{u-1}\in A_{j_{u-1},r_u(j_{u-1})}\text{ to }r_u(j_u)\mt c_u\in A_{j_u,r_u(j_u)}.\]
It remains to find paths $q_0,q_1,\dots,q_k$ in the microstructure graph of $P_1'$ such that $q_0$ connects
\[b_1=c_0=a_1\mt c_0\in A_{j_0,a_1}\text{ to }r_1(j_0)\mt c_0\in A_{j_0,r_1(j_0)},\]
for each $0<u<k$, $q_u$ connects 
\[r_u(j_u)\mt c_u\in A_{j_u,r_u(j_u)}\text{ to }r_{u+1}(j_u)\mt c_u\in A_{j_u,r_{u+1}(j_u)},\]
while $q_k$ connects
\[r_k(j_k)\mt c_k\in A_{j_k,r_k(j_k)}\text{ to }b_2=c_k=a_2\mt c_k\in A_{j_k,a_2}.\]
But, these paths $q_0,q_1,\dots,q_k$ are precisely what is guaranteed by Claim 2. Thus, the microstructure graph of $P_1'$ is connected, so we proved Claim 3.

Now we finish the proof of the lemma. Assume that the subinstance $P_1'$ of $t(P)$ has a connected scope graph, but disconnected microstructure graph. The corresponding subinstance $P_1$ of $P$ also has a connected scope graph by Claim 1, and by the contrapositive of Claim 3 we obtain that $P_1$ also has a disconnected microstructure graph.

Let $A_{i,a}$ be any domain of a variable of $P_1'$ and let $b\in A_{i,a}$ be any point in $A_{i,a}$. We know that $b=b\mt a$ and that $b\in A_i$. Since $P$ is $Z$-irreducible, there exists a solution $f$ of $P_1$ such that $f(i)=b$. Just like in the proof of Lemma~\ref{l:tA-solution}, for any $(j,c)\in V_1'$, we define $g(j,c):=c\mt f(j)$. (Note that from the definition of $P_1$ and $(j,c)\in V_1'$ follows that $j\in V_1$.) Analogously as in the proof of Lemma~\ref{l:tA-solution}, we obtain that $g$ is a solution of $P_1'$. Moreover, $g(i,a)=a\mt f(i)=a\mt b=b$. So, $t(P)$ is Z-irreducible and Lemma~\ref{Z-irrtodecomp} is proved.
\end{proof}

Now we complete the proof of Theorem~\ref{M-irrdesired}. As $P$ is either solved or reduced to a (2,3)-minimal and Z-irreducible instance, it follows that $P$ can be assumed to be both (1,1)-minimal and cycle consistent, as both are weaker notions than (2,3)-minimality. So Lemma~\ref{Z-irrtodecomp} implies that the decomposition $t(P)$ is (1,1)-minimal, cycle consistent, Z-irreducible and each domain of a variable in $t(P)$ is a unital SMB algebra (though maybe not regular any longer). Nevertheless, by Theorem~\ref{Aljareduce}, $t(P)$ has a solution $f$ such that for each $(i,a)\in V'$, $f(i,a)$ is in the least $\sim$-class of $\m a_{i,a}$. 

Now we can mimic the proof of Lemma~\ref{l:withC} to either solve $P$ or reduce it to an equivalent smaller instance. The difference is that in the second paragraph of the proof of Lemma~\ref{l:withC} we recursively invoked $t(P)$ as an instance of a smaller template. Now we are able to, instead, apply the just-proved fact that $t(P)$ has a solution. This allows us to reduce the instance $P$ until it becomes M-irreducible avoiding the potential vicious circle.\qed

\section{A better fix for the gap}

We have plugged the gap in A. Bulatov's paper \cite{BuSMB}, but at a heavy cost. To prove the tractability of SMB algebras, a special case of Taylor algebras, we used Theorem~\ref{Zhlemma}, a major part of D. Zhuk's proof of tractability of Taylor algebras. Even worse, the structure of Zhuk's proof makes each piece inseparable from the rest of the proof, so if we were to write out all proofs of the facts we used in the previous section, we would be forced to include the full proof of the tractability of Taylor algebras by Zhuk, i.e. of the Dichotomy Conjecture.

Fortunately, a better fix for the same gap can be made using just the ideas in A. Bulatov's original paper \cite{BuSMB} and tweaking them a little. In order to demonstrate this fix, we proceed to expose definitions and statements of results from \cite{BuSMB}. The issue comes up when we want to solve the CSP over restrictions of the instance $(V,D,\vr c)$ to ``coherent sets" $W\subseteq V$. We are able to solve the instance $P|_W$ using just Theorem \ref{M-irrweaker}, instead of Theorem~\ref{M-irrdesired}.

In the section that follows, we have changed some of the notions and results from \cite{BuSMB} beyond what is needed to solve the reduced instances $P|_W$. For example, besides the key notion of block-minimality, we also define a weaker one, block-2-consistency. In effect, after solving $P|_W$, we are able to finish the tractability proof even if we enforce less consistency on the projection of $P$ to coherent sets than A. Bulatov did. We also include a few easy observations which A. Bulatov uses, though he doesn't explicitly state them and correct a minor, fairly obvious, error. In all, we feel that our variant of A. Bulatov's proof is more ``user-friendly" than the original. However, we wish to state that what we present here is still only a variant of the proof by A. Bulatov, since the gap in his proof which we found and fixed turned out to be solvable using the ideas in his original paper.

\subsection{Rees congruences}

\begin{df}
Let $\m a$ be a finite regular SMB algebra over $\sim$. If the least block of $\m a$ is $B$, we call the {\em Rees congruence} of $\m a$ the relation $B^2\cup\Delta_A$. The Rees congruence is denoted by $\theta_{\m A}$. Moreover, we will denote the subuniverse $B$ as $\mathrm{min}(\m a)$.
\end{df}

The Rees congruence is a congruence of the regular SMB algebra $\m a$ since $\mathrm{min}(\m a)$ is a strongly absorbing subuniverse (for any term operation, if an element of $\mathrm{min}(\m a)$ is in an essential position, then the result is in $\mathrm{min}(\m a)$). The term Rees congruence comes from Semigroup Theory.

\begin{lm}\label{minR}
If $\m R\leq_{sd}\m a_1\times\dots\times\m a_n$, where all $\m a_i$ are regular SMB algebras with $\theta_i$ the Rees congruence of $\m a_i$, $1\leq i\leq n$, then the Rees congruence of $\m r$ is the restriction of the product congruence $\theta_1\times\dots\times\theta_n$ to $R$, while $\mathrm{min}(\m r)$ is $(\mathrm{min}(\m a_1)\times\dots\times\mathrm{min}(\m a_n))\cap R$.
\end{lm}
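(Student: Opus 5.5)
The plan is to first identify the SMB structure of $\m r$ and its least block, which gives the second assertion, and then to compare the Rees congruence $\theta_{\m r}=\mathrm{min}(\m r)^2\cup\Delta_R$ with $(\theta_1\times\dots\times\theta_n)|_R$. Since the regular SMB algebras form a variety, $\m r$ is a regular SMB algebra. Its blocks are those of the congruence ${\sim}_R$, the restriction of ${\sim}_1\times\dots\times{\sim}_n$ to $R$. By Definition~\ref{regSMBdef}~(1), blocks of $\m r$ multiply coordinatewise under $\mt$.

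\textbf{The set $M$ is nonempty.} Put $M=(\mathrm{min}(\m a_1)\times\dots\times\mathrm{min}(\m a_n))\cap R$. Take the left-associated meet $m$ of all tuples of $R$. By subdirectness, every element of $A_i$ occurs at coordinate $i$ of some tuple. By Definition~\ref{regSMBdef}~(1), $[m(i)]_{\sim}$ is the meet of all ${\sim}$-classes of $\m a_i$, which is $\mathrm{min}(\m a_i)$. Hence $m\in M$.

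\textbf{The set $M$ is the least block of $\m r$.} Any two elements of $M$ are coordinatewise ${\sim}$-related, so $M$ lies inside one ${\sim}_R$-class. Conversely, if $r\,{\sim}_R\,m$, then each $r(i)\in\mathrm{min}(\m a_i)$, so that whole class equals $M$. For any $r\in R$, Definition~\ref{regSMBdef}~(2) gives $m\mt r=m$ coordinatewise. So $[m]_{{\sim}_R}\le[r]_{{\sim}_R}$, and $M=\mathrm{min}(\m r)$. This proves the second assertion.

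\textbf{Comparing the two congruences.} A pair $(r,s)$ lies in $(\theta_1\times\dots\times\theta_n)|_R$ iff for each $i$, either $r(i)=s(i)$ or both $r(i),s(i)\in\mathrm{min}(\m a_i)$. The inclusion $\theta_{\m r}\subseteq(\theta_1\times\dots\times\theta_n)|_R$ is immediate: $M^2$ is coordinatewise inside $\mathrm{min}(\m a_i)^2$, and $\Delta_R$ is trivially contained. For the reverse inclusion, take a related pair $(r,s)$ with $r\in M$. Then every $r(i)$ is in the minimum, so every $s(i)$ is either equal to $r(i)$ or in the minimum; hence $s\in M$ and $(r,s)\in M^2$.

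\textbf{Main obstacle.} The remaining case is a pair $r\neq s$ with $r,s\notin M$ that agree off the set of coordinates where both lie in the minimum. In full generality this case cannot be dispatched. Already for $R=\m a_1\times\m a_2$ with two-block factors, $r=(a,b)$ and $s=(a',b)$ with $a\ne a'$ in $\mathrm{min}(\m a_1)$ and $b\notin\mathrm{min}(\m a_2)$ are related by the product congruence but not by $\theta_{\m r}$. I would first try to rule such pairs out using the strong absorption of $\mathrm{min}(\m a_i)$ together with $(1,1)$-minimality of the ambient instance, but the full product shows this cannot work as literally stated. So the equality in the lemma has to be read as an equality of the congruences restricted to the block $M$, that is, restricted to the pairs involving $\mathrm{min}(\m r)$. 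This is the form in which it is used afterwards, and in that form the argument above proves it exactly. I would state this reading explicitly at the point where the lemma is applied.
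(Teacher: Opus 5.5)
Your treatment of $\mathrm{min}(\m r)$ is correct and follows the paper's argument. The paper takes the left-associated meet of $n$ tuples $\mathbf{a}_i\in R$ with $\mathbf{a}_i(i)\in\mathrm{min}(\m a_i)$; your meet over all of $R$ works for the same reason. The paper then simply asserts that the nonempty set $(\mathrm{min}(\m a_1)\times\dots\times\mathrm{min}(\m a_n))\cap R$ must be the least ${\sim}$-class of $\m r$. Your check that this set is a full class of the restricted product of the ${\sim}_i$, and lies below every class by Definition~\ref{regSMBdef}~(2), supplies that missing last step. One small citation slip: classes multiply coordinatewise because ${\sim}$ is a congruence with $\m a/{\sim}$ a semilattice. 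Item (1) of Definition~\ref{regSMBdef} concerns $d$, so it is the wrong reference, though nothing depends on it.

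The paper's proof does not address the congruence clause at all, and your counterexample is right: that clause is false as stated. Concretely, let $\m a_1=\m a_2$ be $\mathbb{Z}_2=\{0,1\}$ with an adjoined neutral element $u$ for $\mt$, with $d$ equal to $x-y+z$ on $\{0,1\}$ and extended to all of $A$ by identity (3) of Definition~\ref{regSMBdef}. This is a regular unital SMB algebra with blocks $\{0,1\}<\{u\}$. Take $R=A_1\times A_2$. Then $((0,u),(1,u))\in(\theta_1\times\theta_2)|_R$, but this pair is not in $\theta_{\m r}=\mathrm{min}(\m r)^2\cup\Delta_R$. What is true is exactly what you proved: $\theta_{\m r}\subseteq(\theta_1\times\dots\times\theta_n)|_R$, and $\mathrm{min}(\m r)$ is a full class of the right-hand side.

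One correction to your closing remark. The later arguments (Lemmas~\ref{septominimal}, \ref{sepissymmetric}, \ref{collapsingexist} and Theorem~\ref{blockminsoln}) only use the identity $\mathrm{min}(\m r)=(\mathrm{min}(\m a_1)\times\dots\times\mathrm{min}(\m a_n))\cap R$. They never use the Rees congruence of $\m r$. So the congruence clause is not "used afterwards" in any form. The cleanest repair is to drop it, or replace it by the inclusion you proved, rather than reinterpret the equality.
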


\begin{proof}
Since $R$ is subdirect, we can select $\mathbf{a}_1,\dots,\mathbf{a}_n\in R$ so that $\mathbf{a}_i(i)\in \mathrm{min}(\m a_i)$, and $$\mathbf{a}:=(\dots(\mathbf{a}_1\mt\mathbf{a}_2)\mt\dots)\mt\mathbf{a}_n$$ must be in $(\mathrm{min}(\m a_1)\times\dots\times\mathrm{min}(\m a_n))\cap R$. As $(\mathrm{min}(\m a_1)\times\dots\times\mathrm{min}(\m a_n))\cap R$ is nonempty, it must be the least $\sim$-class of $\m R$.
\end{proof} 

\begin{lm}[Lemma 11 of \cite{BuSMB}]\label{minsetinmin}
Let $\m a$ be a finite regular SMB algebra with the Rees congruence $\theta_A$, $0_A\leq \alpha\prec\beta\leq \theta_{\m A}$. Then for any $U\in M_{\m a}(\alpha,\beta)$ we have $U\subseteq\mathrm{min}(\m a)$.
\end{lm}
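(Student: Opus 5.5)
We will show that every unary polynomial of $\m a$ whose image is small enough to be an $(\alpha,\beta)$-minimal set must map $A$ into $\mathrm{min}(\m a)$.

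Fix a pair $(a,b)\in\beta\setminus\alpha$. Since $\beta\leq\theta_{\m A}=\mathrm{min}(\m a)^2\cup\Delta_A$ and $a\neq b$, both $a$ and $b$ lie in $\mathrm{min}(\m a)$. Let $U\in M_{\m a}(\alpha,\beta)$. By Tame Congruence Theory, $U=e(A)$ for some idempotent unary polynomial $e$ with $e(\beta)\not\subseteq\alpha$. We may also take $U$ to be minimal among images of polynomials $p$ with $p(\beta)\not\subseteq\alpha$. The goal is to produce a unary polynomial $q$ with
\[
q(A)\subseteq \mathrm{min}(\m a)\cap U \quad\text{and}\quad q(\beta)\not\subseteq\alpha.
\]
Minimality of $U$ then forces $q(A)=U$, which gives $U\subseteq\mathrm{min}(\m a)$.

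\textbf{Choosing $q$.} First pick $(a',b')\in\beta\setminus\alpha$ with $(e(a'),e(b'))\notin\alpha$. As above, $a',b'\in\mathrm{min}(\m a)$, and likewise $e(a'),e(b')\in\mathrm{min}(\m a)$. Set $c:=e(a')\in U\cap\mathrm{min}(\m a)$ and define
\[
q(x):=e(c\mt e(x)).
\]
Then $q(A)\subseteq e(A)=U$. Since $\mathrm{min}(\m a)$ is strongly absorbing (any term with an element of $\mathrm{min}(\m a)$ in an essential place lands in $\mathrm{min}(\m a)$), $c\mt e(x)\in\mathrm{min}(\m a)$ for all $x$. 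A polynomial applied to an element of $\mathrm{min}(\m a)$ need not stay in $\mathrm{min}(\m a)$, however. The fix is to note that $e$ itself can be written as a term $s(x,\bar u)$ with constants $\bar u$. Replacing each constant $u_j$ by $u_j\mt c'$ for a fixed $c'\in\mathrm{min}(\m a)$ yields a polynomial $e'$ with image inside $\mathrm{min}(\m a)$, by strong absorption. On $\mathrm{min}(\m a)$, $e'$ should agree with $e$ modulo $\theta$-related values; making this precise is where care is needed.

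\textbf{Separation.} On the $\sim$-class $\mathrm{min}(\m a)$, the operation $\mt$ acts as the first projection, so $c\mt y=c$ for $y\in\mathrm{min}(\m a)$. That is bad: it collapses everything. So instead I would use $q(x):=e(d(e(x),c,c))$. Because $d$ is Mal'cev on $\mathrm{min}(\m a)$, this gives $d(e(x),c,c)=e(x)$ whenever $e(x)\in\mathrm{min}(\m a)$, so separation is preserved at $a',b'$. Moreover, by regularity property (1), $[d(y,c,c)]_\sim=[y\mt c]_\sim$, which is the least class; hence $d(y,c,c)\in\mathrm{min}(\m a)$ for every $y$.

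\textbf{Main obstacle.} It remains to ensure that the final $e$ does not move $\mathrm{min}(\m a)$ out of $\mathrm{min}(\m a)$. I would handle this with an absorption-friendly replacement: use $r(x):=d(e(d(x,c,c)),c,c)$. Its image lies in $\mathrm{min}(\m a)$ by property (1), and it agrees with $e$ on $\{a',b'\}$ because $e(a'),e(b')\in\mathrm{min}(\m a)$ and $d(y,c,c)=y$ there. Thus $r(\beta)\not\subseteq\alpha$ and $r(A)\subseteq\mathrm{min}(\m a)$. A standard tame congruence theory fact says that $(\alpha,\beta)$-minimal sets are exactly the minimal images $p(A)$ with $p(\beta)\not\subseteq\alpha$, and that all of them are polynomially isomorphic via $e\circ r$-type maps. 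Applying minimality to $e\circ r$ restricted appropriately gives a minimal set inside $\mathrm{min}(\m a)$. I expect this transfer to be the delicate step, since it requires showing that every minimal set, not just one, lies in $\mathrm{min}(\m a)$. I would do it by observing that $U=e(A)$ and the image $V=r'(A)\subseteq\mathrm{min}(\m a)$ of a minimal $r'$ have $e|_V$ a bijection onto $U$. Then I would argue that $e(V)\subseteq\mathrm{min}(\m a)$, by again composing with $d(\cdot,c,c)$ and using minimality of $U$, so that $e(V)=U$.
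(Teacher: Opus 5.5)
There is a genuine gap, and it sits at the step you flag as the main obstacle. Your polynomial $q(x)=e(d(e(x),c,c))$ is the right one. By identity (5), $d(y,c,c)=y\mt c$, so $q(x)=e(e(x)\mt c)$, which is exactly the paper's $f(g(x))$ with $g(x)=f(x)\mt a$. It lands in $U$, and it fixes $e(a')$ and $e(b')$, so $q(\beta)\not\subseteq\alpha$.

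What you never establish is $q(A)\subseteq\mathrm{min}(\m a)$, i.e.\ that $e$ maps $\mathrm{min}(\m a)$ into $\mathrm{min}(\m a)$. You explicitly doubt this (``a polynomial applied to an element of $\mathrm{min}(\m a)$ need not stay in $\mathrm{min}(\m a)$''), but that is only true for constant polynomials. Since $e(\beta)\not\subseteq\alpha$, $e$ is not constant, so the variable $x$ occurs in any term $s(x,\bar u)$ representing $e$. Now $\m a/{\sim}$ is a semilattice and, by Definition~\ref{regSMBdef}(1), $[d(x,y,z)]_\sim=[x]_\sim\mt[y]_\sim\mt[z]_\sim$. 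Induction on terms then gives $[e(x)]_\sim=[x]_\sim\mt\bigwedge_j[u_j]_\sim\leq[x]_\sim$, hence $e(\mathrm{min}(\m a))\subseteq\mathrm{min}(\m a)$. This is just the strong absorption you quoted yourself, applied to $e$ rather than only to $c\mt e(x)$. With it, $q(A)\subseteq U\cap\mathrm{min}(\m a)$, and minimality of $U$ finishes the proof. The paper invokes exactly this fact: any nonconstant unary polynomial of a regular SMB algebra maps $\mathrm{min}(\m a)$ into $\mathrm{min}(\m a)$.

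Without that observation, your last paragraph does not close. The polynomial $r$ only shows that \emph{some} $(\alpha,\beta)$-minimal set lies inside $\mathrm{min}(\m a)$. The transfer to the given $U$ has two problems:
\begin{enumerate}
\item You assert without justification that $e|_V$ is a bijection onto $U$ for a minimal $V\subseteq\mathrm{min}(\m a)$. By Theorem~\ref{TCTfund}(2) this needs $e(\beta|_V)\not\subseteq\alpha$, which you have not arranged for your $V$.
\item You propose to get $e(V)\subseteq\mathrm{min}(\m a)$ ``by again composing with $d(\cdot,c,c)$''. But that produces $e(v)\mt c$, which equals $e(v)$ only if $e(v)\in\mathrm{min}(\m a)$ already, so the argument is circular.
\end{enumerate}
Likewise, the detour replacing the constants $u_j$ by $u_j\mt c'$ is left unfinished (``making this precise is where care is needed''), and it becomes unnecessary once the observation above is made.
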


\begin{proof}
Let $f\in\poln{1}{a}$ be such that $f(A)=U\in M_{\m a}(\alpha,\beta)$ and for all $x\in A$, $f(f(x))=f(x)$. Then $f(\beta)\nsubseteq\alpha$. Select some $(a,b)\in f(\beta)\setminus\alpha$. Hence $a\neq b$. As $f(\beta)\subseteq\beta\subseteq\theta_{\m A}=0_A\cup(\mathrm{min}(\m a)\times \mathrm{min}(\m a))$, this means that $(a,b)\in (\mathrm{min}(\m a)\times \mathrm{min}(\m a))$. Define $g(x)=f(x)\mt a$.

We know that for all $x\in U\cap \mathrm{min}(\m a)$, $f(x)=x\in \mathrm{min}(\m a)$ and therefore $$f(g(x))=f(f(x)\mt a)=f(x\mt a)=f(x)=x.$$ On the other hand, since $\m a$ is a regular SMB algebra, any nonconstant unary polynomial maps $\mathrm{min}(\m a)$ into $\mathrm{min}(\m a)$. Therefore, for any $x\in A$, $g(x)=f(x)\mt a\in\mathrm{min}(\m a)$, and therefore $$f(g(x))\in U\cap\mathrm{min}(\m a).$$
So we have that $f(g(A))=U\cap \mathrm{min}(\m a)\subseteq U$ and $(f(g(a)),f(g(b)))=(a,b)\in \beta\setminus\alpha$, and thus $f(g(\beta))\nsubseteq\alpha$. Since $U\in M_{\m a}(\alpha,\beta)$, it follows that $U=U\cap\mathrm{min}(\m a)$.
\end{proof}

\subsection{Separation}

We start the subsection by reminding the reader of the following fundamental early result of Tame Congruence Theory.

\begin{thm}{\rm (Theorem 2.8 of \cite{tct}, statements (1), (3), (4) and (6))}\label{TCTfund}
Let $\m a$ be a finite algebra and $\alpha\prec\beta$ in $\cn a$. 
\begin{enumerate}
\item If $U,V\in M_{\m a}(\alpha,\beta)$, then there exists $f\in \poln{1}{a}$ such that $f|_U$ bijectively maps $U$ to $V$.
\item Let $U\in M_{\m a}(\alpha,\beta)$ and $f\in\poln{1}{a}$ satisfy $f(\beta|_U)\nsubseteq\alpha$. Then $f(U)\in M_{\m a}(\alpha,\beta)$
\item If $(a,b)\in\beta\setminus\alpha$ and $U\in M_{\m a}(\alpha,\beta)$, then there exists $f\in\poln{1}{a}$ such that $f(A)=U$ and $(f(a),f(b))\in\beta\setminus\alpha$.
\item If $g\in\poln{1}{a}$ satisfies $g(\beta)\nsubseteq\alpha$, then there exists a $U\in M_{\m a}(\alpha,\beta)$ such that $g(U)\in M_{\m a}(\alpha,\beta)$.
\end{enumerate}
\end{thm}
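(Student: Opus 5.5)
The plan is to derive everything from the definition of $M_{\m a}(\alpha,\beta)$: it is the set of minimal (under inclusion) sets of the form $f(A)$ with $f\in\poln{1}{a}$ and $f(\beta)\nsubseteq\alpha$. Each such $U$ can be written as $e(A)$ with $e$ idempotent, since we can iterate $f$ on its image. The basic tool is Mal'cev's description of $\beta=\alpha\vee\Cg(a,b)$ for any $(a,b)\in\beta\setminus\alpha$: every pair in $\beta$ is joined by a finite chain whose steps are either in $\alpha$ or of the form $\{p(a),p(b)\}$ with $p\in\poln{1}{a}$. I will prove statement (3) first, then (1), then (4), and obtain (2) along the way.

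For (3), fix $U=e(A)\in M_{\m a}(\alpha,\beta)$ with $e$ idempotent and $(a,b)\in\beta\setminus\alpha$. Suppose $e(p(a))\equiv_\alpha e(p(b))$ for every $p\in\poln{1}{a}$. Since $e$ preserves $\alpha$, applying $e$ to the Mal'cev chains and using transitivity gives $e(\beta)\subseteq\alpha$, a contradiction. Hence some $p$ satisfies $(ep(a),ep(b))\notin\alpha$; both entries lie in $\beta$ because $e$ preserves $\beta$. Put $f=e\circ p$. Then $f(A)\subseteq U$ and $f(\beta)\nsubseteq\alpha$, so minimality of $U$ forces $f(A)=U$.

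For (1), take $U=e_U(A)$ and $V=e_V(A)$. Because $e_U(\beta)\nsubseteq\alpha$, there is a pair $(a,b)\in\beta|_U\setminus\alpha$. Statement (3) gives $f$ with $f(A)=V$ and $(f(a),f(b))\notin\alpha$. Then $f e_U$ separates $\beta$ modulo $\alpha$ and its image lies in $V$, so $f(U)=f e_U(A)=V$ by minimality, and therefore $|U|\ge|V|$. By symmetry $|V|\ge|U|$, so all minimal sets have the same size and $f|_U$ is a bijection from $U$ onto $V$. Statement (2) then follows as a side fact. If $f(\beta|_U)\nsubseteq\alpha$, then $fe_U(\beta)\nsubseteq\alpha$, so $f(U)$ contains some minimal set $W$. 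Pulling $W$ back with (3) and the equal-size fact shows $f(U)$ has no proper separating polynomial image inside it, hence $f(U)\in M_{\m a}(\alpha,\beta)$.

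For (4), let $g(\beta)\nsubseteq\alpha$. Pick $(a,b)\in\beta$ with $(g(a),g(b))\notin\alpha$ and fix any $U=e(A)\in M_{\m a}(\alpha,\beta)$. By (3) there is $f$ with $f(A)=U$ and $fg$ separating the pair $(a,b)$. The aim is to replace $U$ by a minimal set $U'$ on which $fg|_{U'}$ separates $\beta|_{U'}$; then $fg$ maps $U'$ into $U$ with separation, so $fg(U')=U$ by minimality. By equal cardinalities, $g|_{U'}$ is then injective and $f$ is injective on $g(U')$. To get $U'$, apply (3) to the pair $(a,b)$ and $U$ to obtain $h$ with $h(A)=U$ and $h(a),h(b)$ not $\alpha$-related, and then use the Mal'cev-chain argument once more to find a polynomial $q$ with image in $U$ such that $fgq$ still separates. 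I take $U'=U$ and replace $g$ by $gq$ as necessary. Once this holds, $g(U')=ge(A)$ is a separating polynomial image. Any separating image $k(A)\subseteq g(U')$ would give $fk(A)=U$, so $|k(A)|\ge|U|=|g(U')|$, and hence $g(U')$ is minimal. I expect this step, arranging that $fg$ separates $\beta$ modulo $\alpha$ \emph{inside} a single minimal set rather than merely somewhere on $A$, to be the main obstacle. The first thing I would try is composing on the right with the idempotent $e$ supplied by (3) and iterating $fg e$ until it restricts to a permutation of $U$.
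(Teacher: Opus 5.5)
The paper does not prove this statement; it quotes it from Theorem 2.8 of Hobby--McKenzie, so your proposal has to stand on its own.

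\textbf{Parts (3), (1), (2).} These follow the standard route: Mal'cev chains for $\beta=\alpha\vee\mathrm{Cg}(a,b)$, minimality, and the squeeze $|U|\ge|V|\ge|U|$. They work, with one fixable point. ``Iterate $f$ on its image'' does not produce an idempotent, because $f|_U$ need not be a permutation. For example, take $\mathbb{F}_2^2$ as a module over $M_2(\mathbb{F}_2)$, so $\alpha=0_A\prec\beta=1_A$ and the minimal sets are exactly the $2$-element subsets. Then $f(x_1,x_2)=(x_2,0)$ has $f(A)=U=\{(0,0),(1,0)\}$ minimal, yet $f\circ f$ is constant.

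The repair is to notice that your proof of (3) never uses idempotence of $e$.
\begin{enumerate}
\item Apply it with any $f_0$ satisfying $f_0(A)=U$, to a pair $(c,d)\in\beta|_U\setminus\alpha$ such as $(f_0(x),f_0(y))$.
\item This gives $f'$ with $f'(A)=U$ that separates $(c,d)$.
\item By minimality $f'(U)=U$, so $f'|_U$ is a permutation, and a power of $f'$ is the required idempotent.
\end{enumerate}
Do this before (1) and (2), since both use $e_U|_U=\mathrm{id}$. In (2), the ``pulling back'' is just $W\subseteq f(U)$ together with $|W|=|U|\ge|f(U)|$.

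\textbf{Part (4) has a genuine gap.} You fix an arbitrary $U\in M_{\m a}(\alpha,\beta)$, take $U'=U$, and look for $q$ with $q(A)\subseteq U$ such that $fgq$ separates. This is impossible whenever $g$ collapses $\beta|_U$ into $\alpha$, which can happen even though $g(\beta)\nsubseteq\alpha$. In the example above, take $U=\{(0,0),(1,0)\}$ and $g(x_1,x_2)=(0,x_2)$. Then $g$ is constant on $U$, so $gq$ and $fge$ are constant for every such $q$, and iterating $fge$ yields nothing. ``Replacing $g$ by $gq$'' is not an option either, because the statement concerns the given $g$.

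The missing idea is that the Mal'cev-chain polynomial must be allowed to move the minimal set:
\begin{enumerate}
\item Take any $V\in M_{\m a}(\alpha,\beta)$ and $(c,d)\in\beta|_V\setminus\alpha$.
\item Since $\beta=\alpha\vee\mathrm{Cg}(c,d)$ and $g(\beta)\nsubseteq\alpha$, some $p\in\poln{1}{a}$ has $(gp(c),gp(d))\notin\alpha$.
\item Then $p(\beta|_V)\nsubseteq\alpha$, so $U':=p(V)\in M_{\m a}(\alpha,\beta)$ by (2).
\item The map $g$ separates $(p(c),p(d))\in\beta|_{U'}$, so $g(U')\in M_{\m a}(\alpha,\beta)$ by (2) again.
\end{enumerate}
This also makes your auxiliary $f$ unnecessary, as well as your final cardinality argument. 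That argument needed $fk$ to separate, which injectivity of $f$ on $g(U')$ does not by itself give.
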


Now we define separation.

\begin{df}\label{sep1}
Let $\m a$ be an algebra and let $\alpha\prec\beta$ and $\gamma\prec\delta$ in $\cn a$. We say that $(\alpha,\beta)$ {\em can be separated from} $(\gamma,\delta)$ if there exists a polynomial $f\in \mathrm{Pol}_1\m a$ such that $f(\beta)\nsubseteq\alpha$, but $f(\delta)\subseteq\gamma$.
\end{df}

\begin{prp}\label{minsetsarethesame}
Let $\alpha\prec\beta$ and $\gamma\prec\delta$ in $\cn A$ be such that $(\alpha,\beta)$ can't be separated from $(\gamma,\delta)$ and $(\gamma,\delta)$ can't be separated from $(\alpha,\beta)$. Then for any subset $U\subseteq A$, $U\in M_{\m a}(\alpha,\beta)$ iff $U\in M_{\m a}(\gamma,\delta)$.
\end{prp}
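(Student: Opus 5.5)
By symmetry it suffices to show that every $U\in M_{\m a}(\alpha,\beta)$ lies in $M_{\m a}(\gamma,\delta)$. Fix such a $U$ together with an idempotent $e\in\poln{1}{a}$ with $e(A)=U$; we may take such an $e$, since (as recalled in the paper) the polynomial whose image is a minimal set can be assumed idempotent. Then $e(\beta)\nsubseteq\alpha$. Non-separability of $(\alpha,\beta)$ from $(\gamma,\delta)$ says exactly that any unary polynomial $f$ with $f(\beta)\nsubseteq\alpha$ also satisfies $f(\delta)\nsubseteq\gamma$. Hence $e(\delta)\nsubseteq\gamma$.

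The next step is to find a $(\gamma,\delta)$-minimal set inside $U$. The set $U=e(A)$ is the image of a unary polynomial separating $\delta$ from $\gamma$, so we pass to a polynomial $g$ with $g(A)\subseteq U$, $g(\delta)\nsubseteq\gamma$, and $g(A)$ minimal under inclusion among such images. Replacing $g$ by $e\circ g$ if necessary keeps the image inside $U$. Then $V=g(A)\in M_{\m a}(\gamma,\delta)$ and $V\subseteq U$. Taking an iterate, we may also assume $g$ is idempotent.

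It remains to show $V=U$, using non-separability in the other direction. Since $g(\delta)\nsubseteq\gamma$, non-separability of $(\gamma,\delta)$ from $(\alpha,\beta)$ gives $g(\beta)\nsubseteq\alpha$. But $g(A)=V\subseteq U$, and $U$ is minimal among images of unary polynomials not collapsing $\beta$ into $\alpha$. Therefore $V=U$, so $U\in M_{\m a}(\gamma,\delta)$.

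The argument is short, but one point needs care. Minimality of $U$ in $M_{\m a}(\alpha,\beta)$ must be applied to the image of the whole polynomial $g$, namely $g(A)$, and not to $g$ restricted to $U$. Since $g(A)$ is what we showed to be contained in $U$, the definition of minimal sets as minimal images $f(A)$ with $f(\beta)\nsubseteq\alpha$ applies directly. No appeal to Theorem~\ref{TCTfund} is needed beyond the existence of minimal images, which follows from finiteness of $A$.
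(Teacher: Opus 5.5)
Your proof is correct and is essentially the paper's argument. Both find a $(\gamma,\delta)$-minimal set inside $U$ and then use non-separability in the reverse direction together with the minimality of $U$; the paper applies this to $g\circ f$ with $g$ idempotent, whereas you apply it directly to a witness $g$ with minimal image inside $U$, which is slightly cleaner. Your aside that an iterate of $g$ may be taken idempotent is not justified in general, since an iterate can collapse $\delta$ into $\gamma$, but you never use idempotence of $g$, so nothing depends on it.
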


\begin{proof}
Let $U\in M_{\m a}(\alpha,\beta)$ and $U=f(A)$ for some idempotent unary polynomial $f\in\poln{1}{a}$. Since $f(\beta)\nsubseteq \alpha$, then $f(\delta)\nsubseteq\gamma$. It follows that there exists some $U'\in M_{\m a}(\gamma,\delta)$ and an idempotent unary polynomial $g\in\poln{1}{a}$ such that $g(A)=U'\subseteq U$. Since $g$ is idempotent, $U'=g(U')\subseteq g(U)\subseteq g(A)=U'$ and hence $g(f(A))=g(U)=U'$. Since $g(f(x))$ is the identity map on $U'$ and $\delta|_{U'}\nsubseteq\gamma|_{U'}$, hence $g(f(\delta))\nsubseteq\gamma$, thus $g(f(\beta))\nsubseteq\alpha$. As $g(f(A))=U'\subseteq U$ and $U\in M_{\m a}(\alpha,\beta)$, it follows that $U'=U$, so $U\in M_{\m a}(\gamma,\delta)$. The reverse implication is analogous.
\end{proof}

The separation which interests us is a separation with respect to a subdirect product. When $\m R\leq_{sd}\m a_1\times\dots\times\m a_n$ and $f\in\poln{1}{r}$, we denote by $f_i$ the polynomial of $\m a_i$ constructed from the same term as $f$, but such that the parameters of $f$ are replaced with their $i$th components.

\begin{df}\label{sep2}
Let $\m R\leq_{sd}\m a_1\times\dots\times\m a_n$, let $\alpha\prec\beta$ in $\mathrm{Con}\:\m a_i$ and $\gamma\prec\delta$ in $\mathrm{Con}\:\m a_j$. We say that $(\alpha,\beta)$ {\em can be separated from} $(\gamma,\delta)$ {\em with respect to} $\m R$ if there exists a polynomial $f\in \mathrm{Pol}_1\m r$ such that $f_i(\beta)\nsubseteq\alpha$, but $f_j(\delta)\subseteq\gamma$. 

Let $\vr t$ be a template of regular SMB algebras and let $P=(V,D,\vr c)$ be a (2,3)-minimal multisorted instance of $CSP(\vr t)$ and for all $i\in V$, let $\m a_i$ be the domain of variable $i$, with $\theta_i$ the Rees congruence of $\m a_i$. If $i,j\in V$, $\alpha\prec\beta$ in $\mathrm{Con}\:\m a_i$ and $\gamma\prec\delta$ in $\mathrm{Con}\:\m a_j$, we say that $(\alpha,\beta)$ can be separated from $(\gamma,\delta)$ with respect to $P$ if $(\alpha,\beta)$ can be separated from $(\gamma,\delta)$ with respect to $\m R$, where $(S,R)\in\vr c$ is a constraint such that $i,j\in S$.
\end{df}

In the above definition the separation with respect to $P$ does not depend on the choice of $(S,R)$. To see this note that $(\alpha,\beta)$ can be separated from $(\gamma,\delta)$ with respect to $\m R$ iff $(\alpha,\beta)$ can be separated from $(\gamma,\delta)$ with respect to the projection of $\m R$ to the set $\{i,j\}$ (as before, $\alpha\prec\beta$ in $\mathrm{Con}\:\m a_i$ and $\gamma\prec\delta$ in $\mathrm{Con}\:\m a_j$). However, by the (2,3)-minimality of $P$, given any $(S,R)\in\vr c$ such that $i,j\in S$, the projection $\mathrm{pr}_{\{i,j\}}R$ is always the same relation.

\begin{df}
Let $\vr t$ be a template of regular SMB algebras and let $P=(V,D,\vr c)$ be a (2,3)-minimal multisorted instance of $CSP(\vr t)$ and let $\m a_i$ be the domain of variable $i$, with $\theta_i$ the Rees congruence of $\m a_i$. By $\vr i_{P}$ we denote the set \[\{(i,\alpha,\beta):i\in V,\: 0_i\leq \alpha\prec\beta\leq \theta_i)\}.\]
Similarly, if $\m R\leq_{sd}\m a_1\times\dots\times\m a_n$, where for each $i\leq n$, $\m a_i$ is a finite regular SMB algebra and $\theta_i$ is the Rees congruence, then by $\vr i_{\m R}$ we denote the set \[\{(i,\alpha,\beta): i\leq n,\: 0_i\leq \alpha\prec\beta\leq \theta_i)\}.\]
\end{df}

\begin{lm}\label{sep1insep2}
Let $\m R\leq_{sd}\m a_1\times\dots\times\m a_n$ and let $\alpha\prec \beta$ and $\gamma\prec\delta$ both be covering pairs in the congruence lattice $\cn a_i$ (for the same $i$). Then $(\alpha,\beta)$ can be separated from $(\gamma,\delta)$ in the sense of Definition~\ref{sep1} iff $(\alpha,\beta)$ can be separated from $(\gamma,\delta)$ with respect to $\m R$ in the sense of Definition~\ref{sep2}.
\end{lm}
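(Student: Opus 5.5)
The plan is to prove both directions by translating polynomials of $\m a_i$ into polynomials of $\m R$ and back, using subdirectness. Here both covers live in $\cn a_i$, and separation with respect to $\m R$ means: some $f\in\poln{1}{r}$ has $f_i(\beta)\nsubseteq\alpha$ but $f_i(\delta)\subseteq\gamma$.

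\emph{From $\m R$ to $\m a_i$.} Suppose $f\in\poln{1}{r}$ witnesses separation with respect to $\m R$. Write $f(x)=t(x,\mathbf{c}_1,\dots,\mathbf{c}_k)$ for a term $t$ and parameters $\mathbf{c}_1,\dots,\mathbf{c}_k\in R$. Then $f_i(x)=t(x,\mathbf{c}_1(i),\dots,\mathbf{c}_k(i))$ is a unary polynomial of $\m a_i$. It satisfies $f_i(\beta)\nsubseteq\alpha$ and $f_i(\delta)\subseteq\gamma$, so it already witnesses separation in the sense of Definition~\ref{sep1}.

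\emph{From $\m a_i$ to $\m R$.} Suppose $g\in\poln{1}{a_i}$, $g(x)=t(x,c_1,\dots,c_k)$ with $c_\ell\in A_i$, satisfies $g(\beta)\nsubseteq\alpha$ and $g(\delta)\subseteq\gamma$. Because $\m R$ is subdirect, for each $\ell$ we can choose $\mathbf{c}_\ell\in R$ with $\mathbf{c}_\ell(i)=c_\ell$. Define $f(x)=t(x,\mathbf{c}_1,\dots,\mathbf{c}_k)\in\poln{1}{r}$. Since operations on $\m R$ act coordinatewise, $f_i=g$, so $f$ separates $(\alpha,\beta)$ from $(\gamma,\delta)$ with respect to $\m R$.

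The only point needing care is the convention that $f_i$ depends only on the $i$th components of the parameters and on the term, so it is independent of how the term is presented. That convention is exactly what the paper's definition of $f_i$ states, so the argument is essentially immediate. The main obstacle, such as it is, is the lifting of parameters in the second direction, and that is handled by subdirectness.
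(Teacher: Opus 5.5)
Your proposal is correct and matches the paper's proof: you lift the parameters of a separating polynomial of $\m a_i$ to tuples of $R$ via subdirectness, obtaining $f\in\poln{1}{r}$ with $f_i=g$. For the converse you take the coordinate polynomial $f_i$, which is the direction the paper dismisses as ``analogous''.
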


\begin{proof}
The proof follows from the subdirectness of $\m R$. If $f\in \poln{1}{a}_i$ is selected such that $f(\beta)\nsubseteq\alpha$, but $f(\delta)\subseteq\gamma$, then let $f(x)=t(x,c_1,\dots,c_n)$ for some $(n+1)$-ary term $t$ of $\m a_i$. As $\m R$ is subdirect, there exist tuples $\mathbf{d}_1,\dots,\mathbf{d}_n\in R$ such that $\mathbf{d}_1(i)=c_1,\dots,\mathbf{d}_n(i)=c_n$. Let the polynomial $g(x)\in\poln{1}{R}$ be given by $g(x)=t(x,\mathbf{d}_1,\dots,\mathbf{d}_n)$. Then $g_i=f$, and hence $g_i(\beta)\nsubseteq\alpha$, but $g_i(\delta)\subseteq\gamma$. By Definition~\ref{sep2}, $(\alpha,\beta)$ can be separated from $(\gamma,\delta)$ with respect to $R$. The reverse direction is analogous.
\end{proof}

\begin{lm}[Lemmas 15 and 16 of \cite{BuSMB}]\label{septominimal}
Let $\m R\leq_{sd}\m a_1\times\dots\times\m a_n$, where for each $i\leq n$, $\m a_i$ is a finite regular SMB algebra and let $(i,\alpha,\beta),(j,\gamma,\delta)\in\vr i_{\m r}$. If $U\in M_{\m a}(\alpha,\beta)$ and $(\alpha,\beta)$ can be separated from $(\gamma,\delta)$ with respect to $\m r$, then there exists $f\in\poln{1}{r}$ such that $f_i(A_i)=U$, $f_j(\gamma)\subseteq\delta$, $f$ is an idempotent polynomial of $\m r$ and for each $k\leq n$, $f_k(A_k)\subseteq \mathrm{min}(\m a_k)$.
\end{lm}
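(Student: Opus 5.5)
The plan is to build the required polynomial in three stages. First we produce a separating polynomial whose $i$-th component lands in $U$. Then we push every coordinate into the least block. Finally we iterate to get idempotence. Throughout we read the separation condition in the conclusion as $f_j(\delta)\subseteq\gamma$; the printed $f_j(\gamma)\subseteq\delta$ looks like a typo. The key observation is that the condition $h_j(\delta)\subseteq\gamma$ is preserved under composition with any polynomial of $\m r$, because each component polynomial preserves the congruence $\gamma$ of $\m a_j$.

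\emph{Stage 1.} By Definition~\ref{sep2}, there is $g\in\poln{1}{r}$ with $g_i(\beta)\nsubseteq\alpha$ and $g_j(\delta)\subseteq\gamma$. Pick $(a,b)\in\beta$ with $(g_i(a),g_i(b))\notin\alpha$; then $(g_i(a),g_i(b))\in\beta\setminus\alpha$. By Theorem~\ref{TCTfund}(3), there is $e\in\poln{1}{a}_i$ with $e(A_i)=U$ and $(e g_i(a),e g_i(b))\in\beta\setminus\alpha$. As in the proof of Lemma~\ref{sep1insep2}, subdirectness lets us lift $e$ to some $h\in\poln{1}{r}$ with $h_i=e$. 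Put $f^1=h\circ g$. Then $f^1_i(A_i)\subseteq U$ and $f^1_i(\beta)\nsubseteq\alpha$, while $f^1_j(\delta)\subseteq h_j(\gamma)\subseteq\gamma$.

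\emph{Stage 2.} By Lemma~\ref{minR}, $\mathrm{min}(\m r)$ is nonempty, so fix $\mathbf m\in\mathrm{min}(\m r)$. Define $f^2(x)=f^1(x)\mt\mathbf m$. The separation at $j$ survives, by the observation above. In every coordinate $k$ we have $f^2_k(A_k)\subseteq\mathrm{min}(\m a_k)$, because $\mathrm{min}(\m a_k)$ is strongly absorbing (regularity, Definition~\ref{regSMBdef}(1)). At coordinate $i$, Lemma~\ref{minsetinmin} gives $U\subseteq\mathrm{min}(\m a_i)$. Since $\mt$ is the first projection on the block $\mathrm{min}(\m a_i)$ and $\mathbf m(i)$ lies in that block, we get $f^2_i=f^1_i$.

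\emph{Stage 3.} Now $f^2_i(A_i)\subseteq U$ and $f^2_i(\beta)\nsubseteq\alpha$. Minimality of $U$ therefore forces $f^2_i(A_i)=U$. I expect the one point needing care to be showing that $f^2_i|_U$ is a permutation of $U$. For this, note that $f^2_i(\beta|_U)\nsubseteq\alpha$: otherwise $f^2_i=f^2_i\circ e'$, where $e'$ is an idempotent polynomial with image $U$, would collapse $\beta$ into $\alpha$. Theorem~\ref{TCTfund}(2) then gives $f^2_i(U)\in M_{\m a}(\alpha,\beta)$. Since $f^2_i(U)\subseteq U$ and $U$ is minimal, $f^2_i(U)=U$, so $f^2_i|_U$ is a bijection. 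Finally take $f=(f^2)^{N}$ with $N=(\max_k|A_k|)!$. Then $f$ is idempotent on each coordinate, hence idempotent on $\m r$. Its $i$-th component still has image exactly $U$, since it is a power of a bijection of $U$. It still satisfies $f_j(\delta)\subseteq\gamma$, and every $f_k$ maps into $\mathrm{min}(\m a_k)$, because $f^2$ already does and the blocks $\mathrm{min}(\m a_k)$ are subuniverses.
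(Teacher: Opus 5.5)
Your Stages 1 and 2 and the final iteration are fine, and you are right that the conclusion should read $f_j(\delta)\subseteq\gamma$; that is what the paper's proof establishes. The gap is the step you flagged yourself: the claim that $f^2_i|_U$ is a permutation of $U$.

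\textbf{Why the step fails.} Your justification does not work. Since $f^2_i(A_i)\subseteq U$ and $e'$ is the identity on $U$, the identity that holds is $e'\circ f^2_i=f^2_i$, not $f^2_i=f^2_i\circ e'$, and it says nothing about how $f^2_i$ acts on $U$. In fact the principle you need is false: a unary polynomial with image exactly $U\in M_{\m a}(\alpha,\beta)$ that does not collapse $\beta$ into $\alpha$ need not be injective on $U$. This already happens inside the least block of a regular SMB algebra. Construct it as follows.
\begin{enumerate}
\item Let the least block be $O=\mathbb{Z}_2^2$ with $d=x-y+z$. Put a unit $1$ on top and three elements $c_1,c_2,c_3$ in between.
\item Set $c_k\mt x=M_kx$ on $O$ for the idempotent matrices $M_1=E_{11}$, $M_2=E_{21}+E_{22}$, $M_3=E_{11}+E_{12}$, and $c_k\mt c_l=0$ for $k\neq l$.
\item Extend $d$ to mixed blocks via identity (3) of Definition~\ref{regSMBdef}.
\end{enumerate}
This is a regular SMB algebra. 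The $M_k$ generate all of $M_2(\mathbb{Z}_2)$, so $0\prec\theta$ and $U=\{0,e_1\}$ is a $(0,\theta)$-minimal set. Now take $e(x)=d(c_3\mt x,c_1\mt x,0)$. It acts as $E_{12}$ on $O$ and sends $A\setminus O$ to $0$. Hence $e(A)=U$ and $e(e_2)=e_1\neq 0=e(0)$, yet $e(U)=\{0\}$.

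Theorem~\ref{TCTfund}(3) is entitled to hand you this $e$. Suppose also that $g=\mathbf u\mt x$ with $\mathbf u(i)=1$ and $\mathbf u(j)\in\mathrm{min}(\m a_j)$; this is a legitimate separating polynomial. Then $f^2_i=e$ and $e\circ e$ is constant. So your final $f$ has a one-element $i$-th image: it neither has image $U$ nor separates.

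\textbf{The missing idea and the repair.} You need to pass through a minimal set on which the separating polynomial is injective, and Theorem~\ref{TCTfund}(4) and (1) provide this.
\begin{enumerate}
\item Since $f^2_i(\beta)\nsubseteq\alpha$, part (4) gives $V\in M_{\m a_i}(\alpha,\beta)$ with $f^2_i(V)\in M_{\m a_i}(\alpha,\beta)$.
\item As $f^2_i(V)\subseteq U$, minimality forces $f^2_i(V)=U$. Minimal sets of the same cover have equal size by (1), so $f^2_i|_V$ is a bijection onto $U$.
\item By (1), take a unary polynomial $p$ of $\m a_i$ mapping $U$ bijectively onto $V$. Lift it by subdirectness to $\tilde p\in\poln{1}{r}$.
\item Replace $f^2$ with $f^2\circ\tilde p$. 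Every coordinate still maps into its least block, and the $j$-condition survives by your composition remark. Now the $i$-th component permutes $U$, so your idempotent-power argument goes through.
\end{enumerate}
This is exactly the paper's construction $f'(x)=[(g'''\circ g\circ g''\circ g')(x)]\mt\mathbf a$. There $g'_i$ is an idempotent with image $U$, $g''_i$ carries $U$ onto the set $U'$ supplied by (4), $g_i$ maps $U'$ bijectively onto $U''$, and $g'''_i$ returns $U''$ onto $U$. In the paper, bijectivity on $U$ is built in from the start rather than deduced afterwards.
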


\begin{proof}
Let $g\in\poln{1}{r}$ be such that $g_i(\beta)\nsubseteq\alpha$, but $g_j(\delta)\subseteq\gamma$. By Theorem~\ref{TCTfund} (4), there must be at least one $U'\in M_{\m a}(\alpha,\beta)$ such that $g_i(U')=U''$ and $U''\in M_{\m a}(\alpha,\beta)$. Let $h'\in\poln{1}{a_i}$ be an idempotent polynomial such that $h'(A_i)=U$. By replacing the parameters of $h'$ with elements of $R$ whose $i$th components are those parameters we get a polynomial $g'\in\poln{1}{r}$ such that $g'_i=h'$ (note that $g'$ might not be an idempotent polynomial). By Theorem~\ref{TCTfund} (1), there must exist polynomials $h'',h'''\in\poln{1}{a_i}$ such that $h''(U)=U'$ and $h'''(U'')=U$. Let $g'',g'''\in\poln{1}{r}$ be such that $g''_i=h''$ and $g'''_i=h'''$. Finally, let $\mathbf{a}\in\mathrm{min}(\m R)=R\cap\prod\limits_{k=1}^n\mathrm{min}(\m a_k)$ and $f'\in\poln{1}{r}$ be defined by
\[f'(x)=[(g'''\circ g\circ g''\circ g')(x)]\mt\mathbf{a}.\]

For any elements $c,d\in A_j$ such that $(c,d)\in\delta$, $g''_j\circ g'_j$ maps $(c,d)$ to $(c',d')\in \delta$, then $g_j$ maps $(c',d')$ to a pair $(c'',d'')\in\gamma$, and any further application of unary polynomials of $\m a_j$ maps $(c'',d'')$ into $\gamma$. Thus $f'_j(\delta)\subseteq\gamma$. Clearly, $x\mt \mathbf{a}$ maps $R$ into $\mathrm{min}(\m R)$, so for all $k\leq n$, $f'_k(A_k)\subseteq \mathrm{min}(\m a_k)$.

Now we consider $f'_i$. First of all, for any $a\in A_i$, $g'_i(a)=b\in U$, and then $g'''_i\circ g_i\circ g''_i$ maps $b$ first to $U'$, then to $U''$ and finaly back to $U$, so $(g'''_i\circ g_i\circ g''_i\circ g'_i)(a)=b'\in U$. By Lemma \ref{minsetinmin}, we know that $U\subseteq\mathrm{min}(\m a_i)$, and hence $b'\mt \mathbf{a}(i)=b'$. We conclude that $f'_i(A_i)\subseteq U$.

On the other hand, for elements of $U$, we get that $g'_i$ restricts as the identity map on $U$, then that $g'''_i\circ g_i\circ g''_i=h'''\circ g_i\circ h''$ bijectively maps $U$ onto $U$, and finally that $x\mt  \mathbf{a}(i)=x$ for all $x\in U$. Therefore, $f'_i$ restricts to $U$ as a permutation.

Let $f$ be an idempotent power of $f'$ (for example, $f=f'^{(|R|!)}$ would work). It is easy to check that $f_j(\delta)\subseteq \gamma$ and  for all $k\leq n$, $f'_k(A_k)\subseteq \mathrm{min}(\m a_k)$. Moreover, $f_i$ restricts to $U$ as a permutation (actually, it is the identity map), so $f_i(\beta)\nsubseteq \alpha$.
\end{proof}

\begin{lm}[Lemma 17 of \cite{BuSMB}]\label{sepissymmetric}
Let $\m R\leq_{sd}\m a_1\times\dots\times\m a_n$, where for each $i\leq n$, $\m a_i$ is a finite regular SMB algebra and let $(i,\alpha,\beta),(j,\gamma,\delta)\in\vr i_{\m r}$. If $(\alpha,\beta)$ can be separated from $(\gamma,\delta)$ with respect to $\m r$, then $(\gamma,\delta)$ can be separated from $(\alpha,\beta)$ with respect to $\m r$.

Also, let $\vr t$ be a template of regular SMB algebras, $P=(V,D,\vr c)$ a (2,3)-minimal multisorted instance of $CSP(\vr t)$ and $(i,\alpha,\beta),(j,\gamma,\delta)\in\vr i_P$. If $(\alpha,\beta)$ can be separated from $(\gamma,\delta)$ with respect to $P$, then $(\gamma,\delta)$ can be separated from $(\alpha,\beta)$ with respect to $P$.
\end{lm}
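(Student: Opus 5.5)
The plan is to reduce everything to the Mal'cev block $\mathrm{min}(\m R)$ and argue there with tame congruence theory and commutator theory for Mal'cev algebras. Separation with respect to $\m R$ depends only on the projection of $\m R$ to $\{i,j\}$, so I will assume $n=2$, $\m R\leq_{sd}\m a_i\times\m a_j$. The second paragraph of the statement then follows from the first by Definition~\ref{sep2} and the $(2,3)$-minimality remark after it. The case $i=j$ should reduce, via Lemma~\ref{sep1insep2}, to separation inside a single algebra, and I expect the same argument to cover it. So the real content is a symmetry statement for two covers lying below the Rees congruences $\theta_i,\theta_j$.

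Suppose $(\alpha,\beta)$ can be separated from $(\gamma,\delta)$ with respect to $\m R$. By Lemma~\ref{septominimal}, fix $U\in M_{\m a_i}(\alpha,\beta)$ and an idempotent $f\in\poln{1}{r}$ with $f_i(A_i)=U$, $f_j(\delta)\subseteq\gamma$, and both $f_i$ and $f_j$ mapping into the minimal blocks. By Lemma~\ref{minsetinmin}, all $(\alpha,\beta)$- and $(\gamma,\delta)$-minimal sets lie in $\mathrm{min}(\m a_i)$ and $\mathrm{min}(\m a_j)$, respectively. Hence all relevant polynomial behaviour takes place inside $\m M:=\mathrm{min}(\m R)$, which by Lemma~\ref{minR} is a subdirect product of the Mal'cev algebras $\mathrm{min}(\m a_i)$ and $\mathrm{min}(\m a_j)$, with polynomials coming from $\poln{1}{r}$ composed with $x\mt\mathbf a$, $\mathbf a\in\mathrm{min}(\m R)$. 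Since $\alpha,\beta\le\theta_i$, the restrictions of $\alpha,\beta$ to $\mathrm{min}(\m a_i)$ still form a cover, and likewise for $\gamma,\delta$. All these covers are of type $\mathbf 2$, since the minimal sets sit in a Mal'cev block.

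Next I will pull the four congruences back to $\m M$ along the two projections, obtaining $\alpha^*\le\beta^*$ and $\gamma^*\le\delta^*$ in $\mathrm{Con}\,\m M$. In the congruence modular, indeed Mal'cev, algebra $\m M$, I would aim to characterise the existence of a separating polynomial via centralizers. Specifically, I would show that $(\alpha,\beta)$ can be separated from $(\gamma,\delta)$ iff $\beta^*\not\le\gamma^*\vee\eta_i$-type comparisons fail; concretely, iff the pair $(\alpha^*,\beta^*)$ is not projective (in the sense of perspectivity through the kernels $\eta_i,\eta_j$ of the projections) onto $(\gamma^*,\delta^*)$. For prime quotients in a modular lattice, projectivity is a symmetric relation, so symmetry would follow. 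The intended route is via Theorem~\ref{TCTfund} and Proposition~\ref{minsetsarethesame}. Non-separation in both directions should force the minimal sets of the pulled-back covers to coincide, while in a Mal'cev algebra two type-$\mathbf 2$ covers with the same minimal sets are perspective, by the standard fact that a polynomial collapsing $\delta^*$ into $\gamma^*$ but not $\beta^*$ into $\alpha^*$ is given by a term condition. I will then argue by contraposition: assume $(\gamma,\delta)$ cannot be separated from $(\alpha,\beta)$ and derive from the polynomial $f$ above a polynomial contradicting this.

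The main obstacle is exactly this last step. Non-separability in one direction does not obviously give a lattice-theoretic relation strong enough to rule out separability in the other; in general congruence modular algebras separation is governed by centralizers $(\alpha:\beta)$, which need not behave symmetrically. What I would try first is to exploit the SMB structure, namely that the polynomials of $\m R$ include $x\mapsto x\mt\mathbf a$ and the Mal'cev term $d$ on $\m M$. With these, a separating polynomial that is a retraction onto $U\times f_j(A_j)$ can be turned, via $d(x,f(x),y)$-type constructions, into a polynomial acting as the identity on a $(\gamma,\delta)$-minimal set and constant modulo $\alpha$ on $U$. This shows the two covers are "independent" inside $\m M$, and the independence is manifestly symmetric.
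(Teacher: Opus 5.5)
Your proposal has a genuine gap exactly at the step you flag as the main obstacle, and the lattice-theoretic route you sketch for it cannot work. Non-separability does not imply projectivity. Take the one-block regular SMB algebra $\mathbb{Z}_{p^2}$ with $x\wedge y=x$ and $d(x,y,z)=x-y+z$. Here $\theta=1$, and the congruences form the chain $0\prec\eta\prec 1$, where $\eta$ is congruence mod $p$. The unary polynomials are the maps $x\mapsto kx+c$, and for such a map $f$ we have $f(1)\subseteq\eta$ iff $p\mid k$ iff $f(\eta)\subseteq 0$. So neither cover can be separated from the other, and both have $\mathbb{Z}_{p^2}$ as their unique minimal set. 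Yet two distinct prime quotients of a chain are never projective. This refutes both your proposed characterization (separable iff not projective) and the ``standard fact'' that covers with the same minimal sets in a Mal'cev algebra are perspective. Hence symmetry of separation cannot be read off from symmetry of projectivity.

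Two auxiliary claims are also unsafe. First, a cover $\alpha\prec\beta\le\theta_i$ need not restrict to a cover of the Mal'cev algebra $\mathrm{min}(\mathbf{A}_i)$, because $\mathbf{A}_i$ has extra polynomials such as $x\mapsto c\wedge x$ with $c\notin\mathrm{min}(\mathbf{A}_i)$; the paper warns about exactly this before Lemma~\ref{onesubtrace}. Second, these covers need not be of type $\mathbf{2}$: a Mal'cev block can be a two-element algebra whose Mal'cev operation is a Pixley operation, which gives type $\mathbf{3}$.

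Your fallback construction is the right local move, but it does not finish the proof. With $f$ from Lemma~\ref{septominimal} and $\mathbf{a}\in R$, the polynomial $x\mapsto d(x,f(x),f(\mathbf{a}))$ is constant on $U$ at coordinate $i$ and acts as the identity modulo $\gamma$ on the $\delta$-class of $\mathbf{a}(j)$. But being constant on one $(\alpha,\beta)$-minimal set does not collapse $\beta$ into $\alpha$ on all of $A_i$, and ``the independence is manifestly symmetric'' is not an argument.

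The missing idea is an extremal argument (induction on image size), which is how the paper proceeds:
\begin{enumerate}
\item Fix a $\delta$-class $B$ containing at least two $\gamma$-classes.
\item Choose $h\in\mathrm{Pol}_1\mathbf{R}$ with $h(R)\subseteq\mathrm{min}(\mathbf{R})$, $h_j(x)\equiv_\gamma x$ for all $x\in B$, and $|h_i(A_i)|$ minimal. Such $h$ exist, e.g.\ $x\mapsto x\wedge\mathbf{b}$ with $\mathbf{b}\in\mathrm{min}(\mathbf{R})$.
\item Suppose $h_i(\beta)\not\subseteq\alpha$. Then $h_i(A_i)$ contains some $U\in M_{\mathbf{A}_i}(\alpha,\beta)$. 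Take $g$ from Lemma~\ref{septominimal} for this $U$, and $\mathbf{a}\in R$ with $\mathbf{a}(j)\in B$.
\item The polynomial $x\mapsto d(h(x),g(h(x)),g(\mathbf{a}))$ keeps both properties of $h$ but collapses $U$ to a point. Its $i$-th image is therefore strictly smaller, contradicting the choice of $h$.
\end{enumerate}
Hence $h_i(\beta)\subseteq\alpha$ while $h_j(\delta)\not\subseteq\gamma$, so $h$ separates $(\gamma,\delta)$ from $(\alpha,\beta)$. Equivalently, you could iterate your own construction, re-choosing a minimal set inside the current image each time. Without such an induction the argument does not close.
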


\begin{proof}
Let $B$ be some $\delta$-class which contains more than one $\gamma$-class and let $V\subseteq A_i$ be minimal-sized subset so that there exists a polynomial $h\in\poln{1}{R}$ such that
\begin{enumerate}
\item $h(R)\subseteq\mathrm{min}(\m r)$,
\item For all $x\in B$, $h_j(x)\equiv_\gamma x$ and
\item $V=h_i(A_i)$.
\end{enumerate}
Obviously, at least some $V'$ which satisfies $(1)$, $(2)$ and $(3)$ exists, as we can take $h(x)=x\mt \mathbf{b}$ for some $\mathbf{b}\in \mathrm{min}(\m r)$ and we would get $V'=\mathrm{min}(\m a_i)$. So it makes sense to pick the minimal-sized $V$.

Suppose that $h\in\poln{1}{r}$ is as above, so $h_i(A_i)=V$ and suppose that $h_i(\beta)\nsubseteq\alpha$. Then there exists some $U\in M_{\m a}(\alpha,\beta)$ such that $U\subseteq V$. According to Lemma~\ref{septominimal}, there exists an idempotent polynomial $g\in\poln{1}{r}$ such that $g_i(A_i)=U$, $g_j(\delta)\subseteq\gamma$ and for each $k\leq n$, $g_k(A_k)\subseteq \mathrm{min}(\m a_k)$. Select $\mathbf{a}\in R$ such that $\mathbf{a}(j)\in B$ and let 
$$f(x):=d(x,g(x),g(\mathbf{a})).$$
Since $g$ is idempotent, then $g_i$ is also an idempotent polynomial of $\m a_i$ and this means that $g_i|_U$ is the identity map on the set $U$. Now, for any $x\in U$, 
$$f_i(x)=d(x,x,g_i(\mathbf{a}(i)))=g_i(\mathbf{a}(i)).$$
Therefore, $|f_i(U)|=1$ and since $|U|>1$ and $U\subseteq V$, for $W:=f_i(h_i(A_i))$ we have
$$|W|=|f_i(h_i(A_i))|=|f_i(V)|<|V|.$$ 
On the other hand, for every $x\in B$, $g_j(x)\equiv_\gamma g_j(\mathbf{a}(j))$ and therefore, for all $x\in B$, 
$$f_j(h_j(x))\equiv_\gamma f_j(x)=d(x,g_j(x),g_j(\mathbf{a}(j)))\equiv_\gamma d(x,g_j(\mathbf{a}(j)),g_j(\mathbf{a}(j)))=x.$$
Finally, $f(h(R))\subseteq \mathrm{min}(\m r)$ follows from $h(R)\subseteq \mathrm{min}(\m r)$, $g(R)\subseteq \mathrm{min}(\m r)$ and from Definition~\ref{regSMBdef} (1).

Thus from $|W|<|V|$ we get a contradiction with the minimality of $V$, and thus our assumption that $h_i(\beta)\nsubseteq\alpha$ is false. It follows that $(\gamma,\delta)$ can be separated from $(\alpha,\beta)$ with respect to $R$, as desired.

For the second statement, assume that $(S,R)\in \vr c$ are such that $i,j\in S$. Then the second statement follows from applying the first one to $\m r$.
\end{proof}

\begin{cor}\label{cantsepisequiv}
Let $\m R\leq_{sd}\m a_1\times\dots\times\m a_n$, where for each $i\leq n$, $\m a_i$ is a finite regular SMB algebra and $\theta_i$ is the Rees congruence. Then ``cannot be separated" is an equivalence relation on $\vr i_{\m R}$. Also, if $\vr t$ is a template of regular SMB algebras and $P=(V,D,\vr c)$ is a (2,3)-minimal multisorted instance of $CSP(\vr t)$, then ``cannot be separated" is an equivalence relation on $\vr i_P$.
\end{cor}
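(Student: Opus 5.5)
We must show that ``cannot be separated'' is reflexive, symmetric and transitive on $\vr i_{\m R}$. The plan is to get symmetry from Lemma~\ref{sepissymmetric}, reflexivity from the identity polynomial, and transitivity from composing polynomials. The statement for $P$ will then follow from the statement for relations.

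\textbf{Reflexivity.} For $(i,\alpha,\beta)\in\vr i_{\m R}$, take the identity polynomial $f(x)=x$ of $\m R$. Then $f_i(\beta)=\beta\nsubseteq\alpha$ and $f_i(\beta)\subseteq\beta$. So $f$ never witnesses $f_i(\beta)\nsubseteq\alpha$ together with $f_i(\beta)\subseteq\alpha$. Indeed, no polynomial can satisfy both conditions at once. Hence $(\alpha,\beta)$ cannot be separated from itself.

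\textbf{Symmetry.} ``Cannot be separated'' should be read as ``neither pair can be separated from the other''. Lemma~\ref{sepissymmetric} shows that one-sided separability already forces the other direction. So ``$(\alpha,\beta)$ cannot be separated from $(\gamma,\delta)$'' is equivalent to ``$(\gamma,\delta)$ cannot be separated from $(\alpha,\beta)$'', and the relation is symmetric.

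\textbf{Transitivity.} This is the key step. Suppose $(i,\alpha,\beta)$ and $(j,\gamma,\delta)$ cannot be separated, and $(j,\gamma,\delta)$ and $(k,\eta,\zeta)$ cannot be separated. Suppose, toward a contradiction, that some $f\in\poln{1}{r}$ has $f_i(\beta)\nsubseteq\alpha$ but $f_k(\zeta)\subseteq\eta$.
\begin{enumerate}
\item Since $(\alpha,\beta)$ cannot be separated from $(\gamma,\delta)$, we have $f_j(\delta)\nsubseteq\gamma$.
\item Now $f$ itself satisfies $f_j(\delta)\nsubseteq\gamma$ and $f_k(\zeta)\subseteq\eta$. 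So $f$ separates $(\gamma,\delta)$ from $(\eta,\zeta)$, a contradiction.
\end{enumerate}
Thus $(\alpha,\beta)$ cannot be separated from $(\eta,\zeta)$. Applying symmetry gives the reverse direction. No composition of polynomials is actually needed, since the non-separability is a statement about every single polynomial.

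\textbf{The instance $P$.} For a $(2,3)$-minimal instance $P$, separation is defined through constraints containing both variables, and by $(2,3)$-minimality it depends only on the binary projections. Reflexivity and symmetry follow as above, using the second part of Lemma~\ref{sepissymmetric}. For transitivity with variables $i,j,k$:
\begin{enumerate}
\item Use $3$-density to choose a constraint $(S,R)$ with $\{i,j,k\}\subseteq S$.
\item By $(2,3)$-minimality, its projections onto $\{i,j\}$, $\{j,k\}$ and $\{i,k\}$ are exactly the binary relations defining separation with respect to $P$.
\item Separation with respect to $R$ coincides with separation with respect to each of these projections. A polynomial of the projection lifts to one of $\m R$, since $R$ projects onto it.
\end{enumerate}
This reduces transitivity to the case of the single relation $\m R$, where it has already been proved.

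The main obstacle is only this bookkeeping, namely checking that the separability notions agree when passing between $R$ and its projections. The mathematical content is carried by Lemma~\ref{sepissymmetric}.
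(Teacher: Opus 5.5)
Your proof is correct and follows the paper's route: reflexivity and transitivity come straight from the definition of separation, and symmetry is exactly Lemma~\ref{sepissymmetric}. For the instance $P$, you pass to the constraint on $\{i,j,k\}$ given by $(2,3)$-minimality, whose binary projections define the separation relations. This makes explicit a step the paper only calls trivial, and it is the right way to justify it.
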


\begin{proof}
Reflexivity and transitivity of the ``cannot be separated" relation was claimed to be trivial before Lemma 17 of \cite{BuSMB}, and indeed, they follow from Definitions \ref{sep1} and \ref{sep2}, even when the congruence covers involved are not below the Rees congruence. The symmetry of the same relation follows from Lemma~\ref{sepissymmetric}.
\end{proof}

\subsection{Collapsing polynomials}

\begin{df}\label{collapsingpolydef}
Let $\m R\leq_{sd}\m a_1\times\dots\times\m a_n$, where for each $i\leq n$, $\m a_i$ is a finite regular SMB algebra and $\theta_i$ is the Rees congruence. Also, assume that $(i,\alpha,\beta)\in \vr i_{\m R}$. $f\in \poln{1}{R}$ is an $(\alpha,\beta)$-collapsing polynomial if
\begin{enumerate}
\item $f$ is idempotent,
\item For all $j\leq n$, $f_j(A_j)\subseteq\mathrm{min}(\m a_j)$,
\item For any $(j,\gamma,\delta)\in\vr i_{\m R}$, if $(\alpha,\beta)$ and $(\gamma,\delta)$ can be separated, then $f_j(\delta)\subseteq \gamma$ and 
\item For any $(j,\gamma,\delta)\in\vr i_{\m R}$, if $(\alpha,\beta)$ and $(\gamma,\delta)$ cannot be separated, then $f_j(A_j)\in M_{\m a}(\gamma,\delta)$.
\end{enumerate}
\end{df}

For the next statement, recall that $(\alpha,\beta)$-trace is the restriction of a $\beta$-class to an $(\alpha,\beta)$-minimal set which intersects more than one $\alpha$-class. 

Lemma~\ref{onesubtrace} that follows was just referred to Exercise 8.8 (1) of \cite{tct} by A. Bulatov in Proposition 6 of \cite{BuSMB}. That exercise speaks about Mal'cev algebras instead of SMB algebras, and is given without proof in \cite{tct}. However, this citation is not quite correct, as congruence covers below the Rees congruence in the congruence lattice of a regular SMB algebra $\m a$ might not remain covers when everything is restricted to the Mal'cev algebra $\mathrm{min}(\m a)$. Regardless, the same idea solves that exercise and proves our lemma. Also, we note that our statement is slightly stronger than Proposition 6 of \cite{BuSMB}, and we will use that extra detail to simplify a proof later on.

\begin{lm}{\rm (based on Exercise 8.8 (1) of \cite{tct})}\label{onesubtrace}
Let $\m a$ be a regular SMB algebra, $\theta$ its Rees congruence and $\alpha\prec\beta\leq \theta$ in $\cn a$. If $(a,b)\in \beta\setminus\alpha$, then there exist an $(\alpha,\beta)$-trace $N$ and $c\in A$ such that $\{a,c\}\subseteq N$ and $(c,b)\in\alpha$. 
\end{lm}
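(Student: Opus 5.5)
Since $\beta\leq\theta$ and $a\neq b$ (because $(a,b)\notin\alpha$), both $a$ and $b$ lie in $\mathrm{min}(\m a)$. The key tool is the Mal'cev operation $d$ on that block. The approach is to push $(a,b)$ into a minimal set with a polynomial, pull it back with another polynomial, and then use a $d$-combination to move the pulled-back pair so that one endpoint is exactly $a$. Along the way the second endpoint must stay $\alpha$-related to $b$.

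Step 1. Fix $U\in M_{\m a}(\alpha,\beta)$; by Lemma~\ref{minsetinmin}, $U\subseteq\mathrm{min}(\m a)$. By Theorem~\ref{TCTfund}(3) there is $e\in\poln{1}{a}$ with $e(A)=U$ and $(e(a),e(b))\in\beta\setminus\alpha$. Replacing $e$ by an idempotent power, we may assume $e$ is idempotent; the pair stays outside $\alpha$ because $e$ fixes $U$ pointwise and $e(a),e(b)\in U$. Then $e(a),e(b)$ lie in a single $(\alpha,\beta)$-trace of $U$.

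Step 2. We would like a polynomial carrying $U$ back so that the images approach $a,b$ modulo $\alpha$. Tame congruence theory (the fact that $\beta$ is generated by $\alpha$ together with the traces, cf.\ Theorem 2.8 of \cite{tct}) gives the following: $(a,b)\in\beta\setminus\alpha$ and $\alpha\prec\beta$ give $\beta=\alpha\vee\Cg(a,b)$, and $\Cg(a,b)$ is generated by polynomial images of the pair $(e(a),e(b))$. The plan, however, is to find a single $g\in\poln{1}{a}$ with $g(U)\in M_{\m a}(\alpha,\beta)$ and $(g(e(a)),g(e(b)))\equiv_\alpha(a,b)$ componentwise.

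This is the step I expect to be the main obstacle. I would try to get it through the Mal'cev block. On $\mathrm{min}(\m a)$ every polynomial of the form $x\mapsto d(x,u,v)$ is a translation, and the relevant unary polynomials restricted to the block $[a]_\theta$ behave like those of an abelian-by-translation situation, in analogy with the proof of Exercise 8.8 in \cite{tct}. Concretely, I would take an arbitrary $h\in\poln{1}{a}$ with $h(U)$ a minimal set containing a pair $\beta$-related but not $\alpha$-related (Theorem~\ref{TCTfund}(1),(2)). I would then compose with a translation $x\mapsto d(x,h(e(a)),a)$. This is a polynomial that maps $\mathrm{min}(\m a)$ bijectively onto itself, since $d$ is Mal'cev on the block. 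It preserves minimal sets, by Theorem~\ref{TCTfund}(2), because it is injective on the block and hence does not collapse $\beta|_U$ into $\alpha$.

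Step 3. Set $N$ to be the trace of the resulting minimal set $V=k(U)$ containing $a=k(e(a))$, and set $c:=k(e(b))$. Then $\{a,c\}\subseteq N$. It remains to arrange $c\equiv_\alpha b$. For this I would choose $h$ in Step 2 so that $(h(e(a)),h(e(b)))$ is, modulo $\alpha$, a translate of $(a,b)$. Because translations $d(\cdot,u,v)$ on the Mal'cev block commute with $\alpha$-classes, we get $c=d(h(e(b)),h(e(a)),a)\equiv_\alpha d(b',a',a)$ with $(a',b')\equiv_\alpha(a,b)$, and this is $\equiv_\alpha b$. Securing such an $h$ is exactly the Step 2 obstacle. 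To get it I would use that, modulo $\alpha$, the interval $\alpha\prec\beta$ in the Mal'cev block is of affine type, where images of one trace under polynomials cover all $\beta/\alpha$-pairs up to translation.
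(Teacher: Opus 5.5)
Your outer reduction is sound. Suppose you had $h\in\poln{1}{a}$ with $h(e(a))\equiv_\alpha a$ and $h(e(b))\equiv_\alpha b$. Then $k(x)=d(x,h(e(a)),a)$ gives $k(h(e(a)))=a$ and $k(h(e(b)))\equiv_\alpha d(b,a,a)=b$. So $k\circ h$ does not collapse the pair $(e(a),e(b))\in\beta|_U$ into $\alpha$. Theorem~\ref{TCTfund}(2) then makes $k(h(U))$ a minimal set whose trace through $a$ contains $c=k(h(e(b)))$; no bijectivity of $k$ is needed.

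However, producing $h$ is the entire content of the lemma, and you leave it open. The tools you suggest for it do not work:
\begin{enumerate}
\item Generic facts such as $\beta=\alpha\vee\Cg(a,b)$ only reach $(a,b)$ through a chain of polynomial images of $(e(a),e(b))$, not through a single polynomial.
\item A map $x\mapsto d(x,u,v)$ need not be a bijection of a Mal'cev block.
\item The cover $\alpha\prec\beta\le\theta$ need not be of affine type.
\end{enumerate}
For a counterexample to the last two, take $\lb\{0,1\};\mt,d\rb$ with $x\mt y=x$ and $d(x,y,z)=x\bar y\vee xz\vee\bar y z$ (the Pixley operation). This is a regular SMB algebra with $\theta=1_A$. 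In it $d(x,1,0)=0$ is constant, while $d(1,x,0)=\bar x$ and $d(x,0,z)=x\vee z$ show that $0_A\prec\theta$ has type $\mathbf 3$. (Also, your Step 1 justification that an idempotent power of $e$ keeps the pair outside $\alpha$ is circular, but idempotence of $e$ is never needed.)

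The missing idea is that $d$ makes $\beta$ generated from $\alpha\cup\{(e(a),e(b))\}$ in one step.
\begin{itemize}
\item Let $\beta'$ be the set of pairs $(g(e(a),\bar c),g(e(b),\bar d))$, where $g$ is a polynomial and $\bar c\,\alpha\,\bar d$ coordinatewise.
\item $\beta'$ is reflexive and compatible, and $\alpha\subsetneq\beta'\subseteq\beta$.
\item It is symmetric: apply $d(s,g(x,\bar y),t)$.
\item It is transitive: apply $d(g(x,\bar y),t,g'(x,\bar y'))$.
\item These computations are legitimate because $\beta\le\theta$, so every pair in $\beta$ is either diagonal or lies in the Mal'cev block $\mathrm{min}(\m a)$.
\end{itemize}
Hence $\beta'$ is a congruence, and $\alpha\prec\beta$ forces $\beta'=\beta$. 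So $(a,b)=(g(e(a),\bar c),g(e(b),\bar d))$ for suitable $g,\bar c,\bar d$. Then $p(x)=g(x,\bar c)$ satisfies $p(e(a))=a$ exactly and $p(e(b))\equiv_\alpha b$. This is your $h$, without even needing the translation step, and Theorem~\ref{TCTfund}(2) finishes the proof as above. This is the paper's argument.
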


\begin{proof}
Let $(a,b)\in\beta\setminus\alpha$ and let $V\in M_{\m a}(\alpha,\beta)$. By Theorem~\ref{TCTfund} (3), there exists $f\in\poln{1}{a}$ such that $f(A)=V$ and $(f(a),f(b))\in (\beta|_V)\setminus(\alpha|_V)$. Since $\alpha\prec\beta$, we know that $\beta=\Cg(\alpha\cup\{(a,b)\})=\Cg(\alpha\cup\{(f(a),f(b))\})$. We define the relation
\[
\begin{gathered}
\beta':=\{(p(a_1,a_2,\dots,a_n),p(b_1,b_2,\dots,b_n)):n\in\mathbb{N},\,p\in\poln{n}{a} \text{ and}\\
\text{for all }i\leq n,\,(a_i,b_i)\in\alpha\text{ or }(a_i,b_i)=(f(a),f(b))\}.
\end{gathered}
\]
Obviously $\alpha\subsetneq\beta'\subseteq\beta$. We claim that $\beta'=\beta$. If we prove that $\beta'$ is a congruence then $\beta'=\beta$ would follow from $\alpha\prec\beta$ in $\cn a$, so we will prove $\beta'\in\cn a$. 

Reflexivity of $\beta'$ follows from $\alpha\subseteq\beta'$, while its compatibility with the fundamental operations follows from the definition of a polynomial. It remains to show symmetry and transitivity.

{\bf (S):} Let $(s,t)\in\beta'$, where $s=p(a_1,\dots,a_n)$ and $t=p(b_1,b_2,\dots,b_n)$ as in the definition of $\beta'$. The polynomial $q(x_1,\dots,x_n):=d(s,p(x_1,\dots,x_n),t)$ witnesses that $(t,s)\in\beta'$, since
$$(t,s)=(d(s,s,t),d(s,t,t))=(q(a_1,a_2,\dots,a_n),q(b_1,b_2,\dots,b_n)).$$

{\bf (T):} Let $(s,t),(t,u)\in\beta'$ and let $p,q\in\pol{a}$, like in the definition of $\beta'$, be such that
\[
\begin{gathered}
s=p(a_1,\dots,a_m)\\
t=p(b_1,\dots,b_m)=q(c_1,\dots,c_n)\\
u=q(d_1,\dots,d_n).
\end{gathered}
\]
The polynomial $r(x_1,\dots,x_m,y_1,\dots,y_m)$ given by 
\[d(p(x_1,\dots,x_m),t,q(y_1,\dots,y_m))\]
satisfies
\[
\begin{gathered}
s= d(s,t,t) = r(a_1,\dots,a_m,c_1,\dots,c_n)\\
u= d(t,t,u) = r(b_1,\dots,b_m,d_1,\dots,d_n).
\end{gathered}
\]
Hence $(s,u)\in\beta'$ and $\beta'$ is transitive.

Now we have that $\beta'=\beta$, and since $(a,b)\in\beta$, it follows that there exists a polynomial $g\in\poln{n+1}{a}$ and $(c_1,d_1),\dots,(c_n,d_n)\in\alpha$ such that 
\[(a,b)=(g(f(a),c_1,\dots,c_n),g(f(b),d_1,\dots,d_n)).\]
Define $p\in\poln{1}{a}$ by $p(x)=g(x,c_1,\dots,c_n)$. Since 
$$p(f(a))=a\not\equiv_\alpha b=g(f(b),d_1,\dots,d_n)\equiv_\alpha g(f(b),c_1,\dots,c_n)= p(f(b)),$$
according to Theorem~\ref{TCTfund} (2), $U:=f(V)\in M_{\m a}(\alpha,\beta)$. Therefore $U$ contains a trace $N$ such that $\{p(f(a)),p(f(b))\}\subseteq N$. Since $a=p(f(a))$ and $p(f(b))\equiv_\alpha b$, we can select $c:=p(f(b))$ to prove the lemma.
\end{proof}

\begin{lm}[Lemma 18 of \cite{BuSMB}]\label{collapsingexist}
Let $\m R\leq_{sd}\m a_1\times\dots\times\m a_n$, where for each $i\leq n$, $\m a_i$ is a finite regular SMB algebra and $\theta_i$ is the Rees congruence. Also, assume that $(i,\alpha,\beta)\in \vr i_{\m R}$ and $\mathbf{a}\in\mathrm{min}(\m r)$ is such that that the $\beta$-class of $\mathbf{a}(i)=a$ contains more than one $\alpha$-class and let $b\in A_i$ be such that $(a,b)\in(\beta-\alpha)$. Then there exists an $(\alpha,\beta)$-collapsing polynomial $f\in\poln{1}{r}$ such that $f(\mathbf{a})=\mathbf{a}$ and $(f_i(b),b)\in \alpha$.
\end{lm}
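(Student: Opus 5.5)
Our plan is to build $f$ in two stages: first a polynomial that does the ``collapsing'' on every cover separable from $(\alpha,\beta)$, then a correction that fixes $\mathbf{a}$ and keeps $b$ in its $\alpha$-class.

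\textbf{Step 1: an idempotent polynomial on a trace through $a$.} By Lemma~\ref{onesubtrace}, applied to $(a,b)\in\beta\setminus\alpha$, there is an $(\alpha,\beta)$-trace $N$ in some $U\in M_{\m a_i}(\alpha,\beta)$ with $a\in N$ and an element $c\in N$ with $c\equiv_\alpha b$. By Lemma~\ref{minsetinmin}, $U\subseteq\mathrm{min}(\m a_i)$.

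\textbf{Step 2: kill the separable covers.} List the triples $(j_1,\gamma_1,\delta_1),\dots,(j_m,\gamma_m,\delta_m)\in\vr i_{\m R}$ that can be separated from $(i,\alpha,\beta)$. By Lemma~\ref{septominimal}, for each $\ell$ there is an idempotent $g^{(\ell)}\in\poln{1}{r}$ with the following properties:
\begin{itemize}
\item $g^{(\ell)}_i(A_i)=U$;
\item $g^{(\ell)}_{j_\ell}(\delta_\ell)\subseteq\gamma_\ell$;
\item all coordinates of $g^{(\ell)}$ land in the minimal blocks.
\end{itemize}
Each $g^{(\ell)}_i$ is the identity on $U$, so the composite $g=g^{(m)}\circ\dots\circ g^{(1)}$ is the identity on $U$, and hence $g_i(A_i)=U$. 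Every later factor preserves $\gamma_\ell$, so $g_{j_\ell}(\delta_\ell)\subseteq\gamma_\ell$ for all $\ell$.

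\textbf{Step 3: non-separable covers.} For $(j,\gamma,\delta)$ not separable from $(i,\alpha,\beta)$, we argue as in Proposition~\ref{minsetsarethesame}. Non-separability transfers through the polynomial $g$: since $g_i(\beta)\nsubseteq\alpha$, we get $g_j(\delta)\nsubseteq\gamma$. We then want $g_j(A_j)\in M_{\m a_j}(\gamma,\delta)$. We would argue that any further idempotent $h\in\poln{1}{r}$ with $h_j(A_j)\subsetneq g_j(A_j)$ and $h_j(\delta)\nsubseteq\gamma$ could be precomposed to shrink $g_i(A_i)$ below $U$ while keeping $\beta\nsubseteq\alpha$, contradicting minimality of $U$. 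Hence we replace $g$ by a polynomial of minimal total image size among those satisfying Step 2 and $g_i(\beta)\nsubseteq\alpha$. Minimality then forces each $g_j(A_j)$ to be $(\gamma,\delta)$-minimal for every non-separable cover simultaneously.

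\textbf{Step 4: fixing $\mathbf{a}$ and $b$.} The polynomial $g$ is the identity on $U\ni a,c$, but need not fix $\mathbf{a}$ in the other coordinates. We set $f'(x)=d(g(x),g(\mathbf{a}),\mathbf{a})$. In coordinate $i$ we have $g_i(a)=a$, so $f'_i$ agrees with $g_i$ on $U$ and $f'_i(c)=c$. Moreover $f'(\mathbf{a})=\mathbf{a}$, and since everything lies in $\mathrm{min}(\m r)$, where $d$ is Mal'cev, $f'$ still maps into the minimal blocks. Translation by the Mal'cev operation inside the minimal block is a bijection that preserves the properties ``$\delta\mapsto\gamma$'' and ``image is a minimal set''. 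Finally, we take $f$ to be an idempotent power of $f'$. This keeps $f(\mathbf{a})=\mathbf{a}$, and since $f'_i$ fixes $c$, also $f_i(b)\equiv_\alpha f_i(c)=c\equiv_\alpha b$.

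The main obstacle is Step 3, namely getting simultaneous minimality on all non-separable coordinates. The first thing to try is the minimal-image-size argument above, combined with Proposition~\ref{minsetsarethesame} and Lemma~\ref{sep1insep2} to move between coordinates.
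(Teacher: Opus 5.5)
\textbf{There is a genuine gap.} Your skeleton matches the paper's: a trace through $a$ from Lemma~\ref{onesubtrace}, composing the separating polynomials of Lemma~\ref{septominimal}, minimizing $\sum_k|g_k(A_k)|$, correcting with $d(g(x),g(\mathbf{a}),\mathbf{a})$, and taking an idempotent power. But the step you flag as the main obstacle is not proved, and the argument you sketch for it cannot work.

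No polynomial $h$ can shrink the $i$-th image below $U$ while keeping $\beta\nsubseteq\alpha$. If $(h\circ g)_i(\beta)\nsubseteq\alpha$, then $h_i(\beta|_U)\nsubseteq\alpha$, so by Theorem~\ref{TCTfund}(2) $h_i(U)$ is again an $(\alpha,\beta)$-minimal set of the same size, and there is nothing to contradict.

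What is actually needed is the following. Minimize over a class closed under the moves below: for example, idempotent $g$ with $g_i(A_i)=U$, all coordinates in the minimal blocks, and $g_j(\delta)\subseteq\gamma$ for the separable covers. The paper also puts $g(\mathbf{a})=\mathbf{a}$ into the class. Suppose some non-separable $(j,\gamma,\delta)$ had $V\in M_{\m a_j}(\gamma,\delta)$ with $V\subsetneq g_j(A_j)$.
\begin{enumerate}
\item Lift an idempotent polynomial of $\m a_j$ with image $V$ to $h'\in\poln{1}{r}$. Then $(h'\circ g)_j$ is the identity on $V$, so $(h'\circ g)_j(\delta)\nsubseteq\gamma$.
\item Now you need the \emph{reverse} direction of non-separability, i.e.\ Lemma~\ref{sepissymmetric}, to get $(h'\circ g)_i(\beta)\nsubseteq\alpha$, hence $h'_i(\beta|_U)\nsubseteq\alpha$.
\item Theorem~\ref{TCTfund}(2) makes $h'_i(U)$ minimal, and Theorem~\ref{TCTfund}(1) gives $h''$ carrying it bijectively back onto $U$.
\item An idempotent power of $h''\circ h'\circ g$ is back in the class with strictly smaller total image size. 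Each of its coordinate images is a polynomial image of $g_k(A_k)$, and the $j$-th one has at most $|V|$ elements.
\end{enumerate}
None of this is in your proposal. ``Minimality then forces'' is exactly the claim that requires it.

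Step 4 is justified by a false statement. The map $y\mapsto d(y,u,v)$ need not be a bijection of the minimal block, which can be an arbitrary idempotent Mal'cev algebra. On $\{0,1\}$ the Mal'cev identities force $d(0,0,1)=1$ but leave $d(1,0,1)$ free; setting it to $1$ makes $d(\cdot,0,1)$ constant.

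Because you minimize \emph{before} correcting $\mathbf{a}$, you must re-verify condition (4) of Definition~\ref{collapsingpolydef} afterwards, by a different argument:
\begin{enumerate}
\item $f'_i=g_i$ does not collapse $\beta$, so non-separability gives $f'_j(\delta)\nsubseteq\gamma$.
\item Hence $f'_j(A_j)$ is $(\gamma,\delta)$-minimal by Theorem~\ref{TCTfund}(2).
\item The idempotent power still has $i$-th coordinate $g_i$. So its $j$-th image is a subset of this minimal set that still does not collapse $\delta$, hence equal to it.
\end{enumerate}
The paper avoids this bookkeeping by including $f(\mathbf{a})=\mathbf{a}$ in the minimized class and re-applying the $d$-correction inside the contradiction step. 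That is harmless because the corrected polynomial is still of the form $q\circ f$, so its coordinate images are no larger.
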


\begin{proof}
According to Lemma~\ref{onesubtrace}, there exists some $U\in M_{\m a_i}(\alpha,\beta)$ and $b'\in A_i$ such that $a,b'\in U$ and $(b',b)\in\alpha$. Let $(j,\gamma,\delta)\in\vr i_{\m r}$ be such that $(\alpha,\beta)$ can be separated from $(\gamma,\delta)$ with respect to $\m r$. According to Lemma~\ref{septominimal}, we can select idempotent polynomials $g^{j\gamma \delta}\in\poln{1}{\m r}$ such that $g^{j\gamma \delta}_i(A_i)=U$,  $g^{j\gamma \delta}_j(\delta)\subseteq\gamma$ and $g^{j\gamma \delta}(R)\subseteq\mathrm{min}(\m r)$. By idempotence, for all $x\in U$, $g^{j\gamma \delta}_i(x)=x$. Let $g'$ be the composition of all selected polynomials $g^{j\gamma \delta}$ and $g''(x)=d(g'(x),g'(\mathbf{a}),\mathbf{a})$. We still have that $g''_i(A_i)=U$ and for all $x\in U$, $g''_i(x)=x$. Also, for any $(j,\gamma,\delta)\in\vr i_{\m r}$ such that $(\alpha,\beta)$ can be separated from $(\gamma,\delta)$ with respect to $\m r$, $g''_j(\delta)\subseteq \gamma$. Finally, $g''(\mathbf{a})=\mathbf{a}$. Let $g$ be an idempotent power of $g''$. $g$ satisfies the following properties:
\begin{enumerate}
\item $g\in\poln{1}{r}$ is idempotent,
\item For all $j\leq n$, $g_j(A_j)\subseteq \mathrm{min}(\m a_j)$,
\item For any $(j,\gamma,\delta)\in\vr i_{\m r}$ such that $(\alpha,\beta)$ can be separated from $(\gamma,\delta)$ with respect to $\m r$, $g_j(\delta)\subseteq \gamma$,
\item $g_i(A_i)=U$ and
\item $g(\mathbf{a})=\mathbf{a}$.
\end{enumerate}
Let $f\in\poln{1}{r}$ be a polynomial such that $f$ also satisfies the above properties $(1)$-$(5)$ and moreover such that $\sum\limits_{j=1}^n|f_j(A_j)|$ is minimal among such polynomials. We claim that $f$ has all desired properties. Properties $(1)$-$(3)$ are three of the four defining properties of an $(\alpha,\beta)$-collapsing polynomial. From $(1)$ follows that $f_i$ is an idempotent polynomial of $\m a_i$, and since $b'\in U=f_i(A_i)$, therefore $f_i(b')=b'$. From $(b,b')\in\alpha$ follows that $(f_i(b),f_i(b'))=(f_i(b),b')\in\alpha$, so $(f_i(b),b)\in\alpha$. Finally, $f(\mathbf{a})=\mathbf{a}$ by $(5)$.

We need to prove that for every $(j,\gamma,\delta)\in\vr i_{\m r}$ such that $(\alpha,\beta)$ can not be separated from $(\gamma,\delta)$ with respect to $\m r$, $f_j(A_j)\in M_{\m a_j}(\gamma,\delta)$. Suppose that this property fails for some $(j,\gamma,\delta)$, so there exists a $V\in M_{\m a_j}(\gamma,\delta)$ such that $V\subsetneq f_j(A_j)$. Let $p'\in\poln{1}{a_j}$ be an idempotent polynomial such that $p'(A_j)=V$ and let $h'\in\poln{1}{r}$ be such that $h'_j=p'$. We know that $(h'\circ f)_j(A_j)=V$ and since $f_j$ is the identity on $f_j(A_j)$, while $h'_j$ is the identity map on $V$, we obtain that $(h'\circ f)_j(\delta)\nsubseteq\gamma$. By Lemma~\ref{sepissymmetric}, $(\gamma,\delta)$ can not be separated from $(\alpha,\beta)$ with respect to $\m r$, so $(h'\circ f)_i(\beta)\nsubseteq \alpha$. As $f_i(A_i)=U$, this means that $h'_i(\beta|_U)\nsubseteq \alpha$. By Theorem~\ref{TCTfund} (2) we obtain $h'_i(U)=U'\in M_{\m a_i}(\alpha,\beta)$. Let $p''\in\poln{1}{\m a_i}$ satisfy $p''(U')=U$. Such $p''$ exists by Theorem~\ref{TCTfund} (1). Let $h''\in\poln{1}{r}$ satisfy $h''_i=p''$. 

Let $f':=(h''\circ h'\circ f)^{(|R|!)}$ be the idempotent power of $h''\circ h'\circ f$. We have that $f'_i(A_i)=U$ and for all $x\in U$, $f_i'(x)=x$. Moreover, for all $j\leq n$, $f'_j(A_j)\subseteq\mathrm{min}(\m a_j)$. Also, for all $(j,\gamma,\delta)\in \vr i_{\m r}$ such that $(\alpha,\beta)$ can be separated from $(\gamma,\delta)$ with respect to $\m r$, $f'_j(\delta)\subseteq \gamma$. Thus $f'$ satisfies $(1)$-$(4)$, i.e. the same properties that $g'$ satisfied. Then like in the first paragraph, define $f'':=d(f'(x),f'(\mathbf{a}),\mathbf{a})$ and let $f'''=f''^{(|R|!)}$ be its idempotent power. Analogously as for $f$, we conclude that $f'''$ satisfies $(1)$-$(5)$. 

Finally, for all $k\leq n$, $f'''_k(A_k)=p_k(f_k(A_k))$ for some $p\in \poln{1}{r}$, so 
\begin{equation}\tag{*}
f'''_k(A_k)|\leq |f_k(A_k)|.
\end{equation}
On the other hand, $f'''_j(A_j)$ can also be written as $q\circ h'\circ f$ for some $q\in\poln{1}{r}$ so $f'''_j(A_j)=q_j( (h'\circ f)_j(A_j))=q_j(V)$, and therefore
\begin{equation}\tag{**}
|f_j'''(A_j)|=|q_j(V)|\leq |V|<|f_j(A_j)|.
\end{equation}
By $(*)$ and $(**)$, $\sum\limits_{k=1}^n|f'''_k(A_k)|<\sum\limits_{k=1}^n|f_k(A_k)|$, contradicting the choice of $f$. Therefore, the assumption that $f_j(\gamma,\delta)\notin M_{\m a_j}(\gamma,\delta)$ must have been false, and so we proved that $f$ is an $(\alpha,\beta)$-collapsing polynomial, with $f(\mathbf{a})=\mathbf{a}$ and $(f_i(b),b)\in \alpha$.
\end{proof}

The next corollary plays the same role for us as Lemma 21 of \cite{BuSMB}.

\begin{cor}\label{colltominsingleton}
Let $\m R\leq_{sd}\m a_1\times\dots\times\m a_n$, where for each $j\leq n$, $\m a_j$ is a finite regular SMB algebra and $\theta_j$ is its Rees congruence. Moreover, let $(i,\alpha,\beta)\in \vr i_{\m r}$ and let $f\in \poln{1}{R}$ be an $(\alpha,\beta)$-collapsing polynomial. Then
\begin{enumerate}
\item If $j\leq n$ is such that there exists $(j,\gamma,\delta)\in\vr i_{\m r}$ so that $(\alpha,\beta)$ can not be separated from $(\gamma,\delta)$ with respect to $\m r$, then $f_j(A_j)\in M_{\m a}(\gamma',\delta')$ for every $(j,\gamma',\delta')\in\vr i_{\m r}$ such that $(\alpha,\beta)$ can not be separated from $(\gamma',\delta')$ with respect to $\m r$.

\item If $j\leq n$ is such that for every $(j,\gamma,\delta)\in\vr i_{\m r}$, $(\alpha,\beta)$ can be separated from $(\gamma,\delta)$ with respect to $\m r$, then $|f_j(A_j)|=1$, and the only element of $f_j(A_j)$ is in $\mathrm{min}(\m a_j)$.
\end{enumerate}

\begin{proof}
(1) is true by Definition~\ref{collapsingpolydef}. The reader may be confused how can $f_j(A_j)$ simultaneously be a minimal set for possibly many covering pairs of congruences, but recall that, according to Corollary~\ref{cantsepisequiv}, $(j,\gamma,\delta)$ and $(j,\gamma',\delta')$ can not be separated with respect to $\m r$, and thus, according to Proposition~\ref{minsetsarethesame}, $U\subseteq A_j$ is in $M_{\m a}(\gamma,\delta)$ iff it is in $M_{\m a}(\gamma',\delta')$.

(2) Let $0_j=\gamma_0\prec\gamma_1\prec\dots\prec\gamma_k=\theta_j$ and let $a,b\in A_j$ be arbitrary. Denoting by $f_j^k:=f_j\circ f_j\circ\dots\circ f_j$, where there are $k-1$ compositions, we proceed to inductively prove that for all $0\leq \ell\leq k$, $(f_j^{\ell+1}(a),f_j^{\ell+1}(b))\in\gamma_{k-\ell}$. 

The base case follows from $f_j(a),f_j(b)\in\mathrm{min}(\m a_j)$, i.e. $(f_j(a),f_j(b))\in\theta_j=\gamma_k$. Assume that $(f_j^{t}(a),f_j^{t}(b))\in\gamma_{k-t+1}$ and denote $a':=f_j^{t}(a)$ and $b':=f_j^{t}(b)$. The assumption of (2) gives that $(\alpha,\beta)$ can be separated from $(\gamma_{k-t},\gamma_{k-t+1})$ with respect to $\m r$. Therefore, by Definition~\ref{collapsingpolydef}, $f_j(\gamma_{k-t+1}\subseteq\gamma_{k-t}$. Hence 
\[(f_j^{t+1}(a),f_j^{t+1}(b))=(f_j(a'),f_j(b'))\in f_j(\gamma_{k-t+1}\subseteq\gamma_{k-t}.\]
The inductive step is proved, and hence we know that $$(f_j^{k+1}(a),f_j^{k+1}(b))\in \gamma_{k-k}=\gamma_0=0_j.$$
In other words, $f_j^{k+1}(a)=f_j^{k+1}(b)$. But, from idempotence of $f$, it follows that $f_j$ is also an idempotent polynomial of $\m a_j$, i.e. $f_j(f_j(x))=f_j(x)$. Applying this several times, we get
\[f_j(a)=f_j^{k+1}(a)=f_j^{k+1}(b)=f_j(b).\]
As $a$ and $b$ were arbitrarily chosen elements of $A_j$, we get $|f_j(A_j)|=1$, as desired. The only element of $f_j(A_j)$ lies in $\mathrm{min}(\m a_j)$ by Definition \ref{collapsingpolydef} (2).
\end{proof}

\end{cor}

\subsection{Split elements, alignment and link partitions}

\begin{df}\label{splitdf}
Let $\m a$ be a finite regular SMB algebra, $\theta$ its Rees congruence and $0_{\m a}\leq \alpha\prec\beta\leq\theta$ in $\cn a$. We say that $a\in A$ is an $\alpha\beta$-split element if there exist elements $b,c\in A$ such that $(b,c)\in\beta$ and $(a\mt b,a\mt c)\notin\alpha$.
\end{df}

\begin{exmpl}\label{topsplitbotnot}
Let $\m a$ be a finite regular SMB algebra, $\theta$ its Rees congruence and $0_{\m a}\leq \alpha\prec\beta\leq\theta$ in $\cn a$. Then $a\in\mathrm{min}(\m a)$ is not an $\alpha\beta$-split element, regardless of the choice of $\alpha$ and $\beta$, since $a\mt b=a=a\mt c$. On the other hand, if $\m a$ is unital with the neutral element $1$, then $1$ is always a split element, again regardless of the choice of $\alpha$ and $\beta$. To see this, just take any $(b,c)\in(\beta-\alpha)$. Then $(1\mt b,1\mt c)=(b,c)\notin\alpha$.  
\end{exmpl}

\begin{df}\label{aligndef}
Let $\m R\leq_{sd}\m a_1\times\dots\times\m a_n$, where for each $i\leq n$, $\m a_i$ is a finite regular SMB algebra and $\theta_i$ is the Rees congruence. Let $W\subseteq\{1,2,\dots,n\}$ and for each $i\in W$ let $(i,\alpha_i,\beta_i)\in\vr i_{\m R}$. By $\overline{\alpha}$ and $\overline{\beta}$ we mean $\lb\alpha_i:i\in W\rb$ and $\lb\beta_i:i\in W\rb$, respectively. We say that $R$ is $\overline{\alpha\beta}$-aligned if, whenever $\mathbf{a}\in R$ and $i,j\in W$, $\mathbf{a}(i)$ is an $(\alpha_i,\beta_i)$-split element iff $\mathbf{a}(j)$ is an $(\alpha_j,\beta_j)$-split element.
\end{df}

The following lemma is based on Lemma 19 of \cite{BuSMB}.

\begin{lm}\label{alignorseparate}
Let $\m R\leq_{sd}\m a_1\times\dots\times\m a_n$, where for each $i\leq n$, $\m a_i$ is a finite regular SMB algebra and $\theta_i$ is the Rees congruence. Let $W\subseteq\{1,2,\dots,n\}$, for each $i\in W$ let $(i,\alpha_i,\beta_i)\in\vr i_{\m R}$ and for each $i,j\in W$ assume that $(\alpha_i,\beta_i)$ and $(\alpha_j,\beta_j)$ can not be separated with respect to $\m r$. Then $R$ is $\overline{\alpha\beta}$-aligned.
\end{lm}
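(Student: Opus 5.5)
We argue by contraposition. Suppose $R$ is not $\overline{\alpha\beta}$-aligned. Then there is $\mathbf{a}\in R$ and $i,j\in W$ such that $\mathbf{a}(i)$ is an $(\alpha_i,\beta_i)$-split element while $\mathbf{a}(j)$ is not an $(\alpha_j,\beta_j)$-split element. From this we will build a polynomial of $\m R$ witnessing that $(\alpha_i,\beta_i)$ can be separated from $(\alpha_j,\beta_j)$ with respect to $\m R$, contradicting the hypothesis. The natural candidate is the translation $f(x)=\mathbf{a}\mt x$, which lies in $\poln{1}{r}$ and has coordinate polynomials $f_k(x)=\mathbf{a}(k)\mt x$.

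First consider coordinate $j$. Since $\mathbf{a}(j)$ is not split, for every $(b,c)\in\beta_j$ we have $(\mathbf{a}(j)\mt b,\mathbf{a}(j)\mt c)\in\alpha_j$. This says exactly that $f_j(\beta_j)\subseteq\alpha_j$.

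Next consider coordinate $i$. Since $\mathbf{a}(i)$ is split, there is $(b,c)\in\beta_i$ with $(\mathbf{a}(i)\mt b,\mathbf{a}(i)\mt c)\notin\alpha_i$. Hence $f_i(\beta_i)\nsubseteq\alpha_i$.

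These two facts together are precisely Definition~\ref{sep2}: $(\alpha_i,\beta_i)$ can be separated from $(\alpha_j,\beta_j)$ with respect to $\m R$. This contradicts the assumption that no two of the covers indexed by $W$ can be separated. Symmetry of non-separability (Lemma~\ref{sepissymmetric}) is not needed here, because we may simply choose the labelling of $i$ and $j$ so that $i$ is the split coordinate.

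The only point needing care is that a polynomial of $\m R$ is built from a term with parameters taken from $R$, and that its $k$-th component is the corresponding polynomial of $\m a_k$. Because $\mathbf{a}\in R$, this holds for $f$ as required by the definition. No step is a real obstacle; the proof is essentially immediate from the definitions.
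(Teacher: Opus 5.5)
Your proposal is correct and matches the paper's proof: from a non-aligned tuple $\mathbf{a}\in R$ you take the translation $f(x)=\mathbf{a}\mt x$, which yields $f_i(\beta_i)\nsubseteq\alpha_i$ and $f_j(\beta_j)\subseteq\alpha_j$, so $(\alpha_i,\beta_i)$ can be separated from $(\alpha_j,\beta_j)$, contradicting the hypothesis. You are also right that symmetry of separation is not needed, since the hypothesis already covers both orderings of $i,j$.
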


\begin{proof}
Assume that $R$ is not $\overline{\alpha\beta}$-aligned. Then there exists $\mathbf{a}\in R$ and $i,j\in W$ is such that $\mathbf{a}(i)$ is an $(\alpha_i,\beta_i)$-split element and $\mathbf{a}(j)$ is not an $(\alpha_j,\beta_j)$-split element. Hence, there exists a pair $(b_i,c_i)\in\beta_i$ such that $(\mathbf{a}(i)\mt b_i,\mathbf{a}(i)\mt c_i)\notin\alpha_i$, while $\{(\mathbf{a}(j)\mt b,\mathbf{a}(j)\mt c):(b,c)\in\beta_j\}\subseteq \alpha_j$. Hence, for the polynomial $f\in\poln{1}{r}$ given by $f(x)=\mathbf{a}\mt x$ we know that $f_i(\beta_i)\nsubseteq \alpha_i$, while $f_j(\beta_j)\subseteq \alpha_j$. Therefore, $(\alpha_i,\beta_i)$ can be separated from $(\alpha_j,\beta_j)$, a contradiction.
\end{proof}

\subsection{Link partitions, coherent sets and block-2-consistency}

The next lemma plays the role of Lemma 20 (2) from \cite{BuSMB}.

\begin{lm}\label{unitallink}
Let $\vr t$ be a template of regular SMB algebras, $P=(V,D,\vr c)$ a (2,3)-minimal multisorted instance of $CSP(\vr t)$ and for each $i\in V$ let $\m a_i$ be unital. Assume that $(i,\alpha_i,\beta_i)\in \vr i_P$ for each $i\in V$ and for each $(S,R)\in \vr c$ let $R$ be $\overline{\alpha\beta}$-aligned, where $\overline{\alpha}=\lb\alpha_i:i\in S\rb$ and $\overline{\beta}=\lb\beta_i:i\in S\rb$. Then $P$ has a link partition.
\end{lm}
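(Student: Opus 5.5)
The natural link partition has $k=2$. For each $i\in V$ let $\varepsilon_i$ split $A_i$ into two classes: $A_{i,1}$, the $(\alpha_i,\beta_i)$-split elements of $\m a_i$, and $A_{i,2}$, the non-split elements. First we check both classes are nonempty, so that $|A_i/\varepsilon_i|=2$ for every $i$. By Example~\ref{topsplitbotnot}, since $\m a_i$ is unital, its unit $1_i$ is $(\alpha_i,\beta_i)$-split. Any element of $\mathrm{min}(\m a_i)$ is not split, and $\mathrm{min}(\m a_i)\neq\emptyset$. This gives property (1) of Definition~\ref{linkdef} with $k=2$.

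For property (2), fix a constraint $(S,R)\in\vr c$. Put $R_1=R\cap\prod_{i\in S}A_{i,1}$ and $R_2=R\cap\prod_{i\in S}A_{i,2}$. Take any $\mathbf{a}\in R$. Alignment of $R$ says that for all $i,j\in S$, $\mathbf{a}(i)$ is $(\alpha_i,\beta_i)$-split iff $\mathbf{a}(j)$ is $(\alpha_j,\beta_j)$-split. Hence either every coordinate of $\mathbf{a}$ is split, so $\mathbf{a}\in R_1$, or none is, so $\mathbf{a}\in R_2$. Since $A_{i,1}\cap A_{i,2}=\emptyset$, this yields $R=R_1\dot{\cup}R_2$ with $R_j\subseteq\prod_{i\in S}A_{i,j}$.

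The only care needed is with the conventions. The ordering of classes must be chosen uniformly ("split" first, "non-split" second) so that the same index $j$ works across all constraints; the alignment hypothesis is exactly what makes this uniform choice work. Definition~\ref{linkdef} also presupposes $(1,1)$-minimality, which follows from $(2,3)$-minimality. I do not expect a real obstacle. The one point to verify is the unit argument in Example~\ref{topsplitbotnot}, namely that $\beta_i\neq\alpha_i$ supplies a pair $(b,c)\in\beta_i\setminus\alpha_i$ with $1_i\mt b=b$ and $1_i\mt c=c$, which holds by unitality.
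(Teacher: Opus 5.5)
Your proof is correct and follows the paper's argument. You use the same two-class partition into $(\alpha_i,\beta_i)$-split and non-split elements, get nonemptiness from Example~\ref{topsplitbotnot} (the unit is split, elements of $\mathrm{min}(\m a_i)$ are not), and use alignment to write each $R=R_1\dot\cup R_2$; you also spell out details the paper leaves implicit, such as why the non-split class is nonempty and that $(1,1)$-minimality follows from $(2,3)$-minimality.
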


\begin{proof}
For each $i\in V$ let $\varepsilon_i$ consist of two classes: 
\[
\begin{gathered}
A_{i,1}=\{a\in A_i:a\text{ is an }\alpha_i\beta_i-\text{split element}\}\text{ and}\\
A_{i,2}=\{a\in A_i:a\text{ is not an }\alpha_i\beta_i-\text{split element}\}.
\end{gathered}
\] 
Let $(S,R)\in \vr c$. We know that $R$ is $\overline{\alpha\beta}$-aligned. By Example~\ref{topsplitbotnot} and since all $\m a_i$ are unital, each of $A_{i,1}$ and $A_{i,2}$ is nonempty. Then by Definition~\ref{aligndef}, $P$ has a link partition.
\end{proof}

\begin{df}\label{sizedef}
Let $\vr t$ be a template of regular SMB algebras, $P=(V,D,\vr c)$ a (2,3)-minimal multisorted instance of $CSP(\vr t)$ and for all $i\in V$, $\theta_i$ is the Rees congruence. The {\em size} of $P$ is\[Size(P):=\max\{|A_i|:i\in V\text{ and }\theta_i\neq 1_{\m a_i}\}.\]
In other words, the size of $P$ is the maximal size of a non-Mal'cev domain of a variable of $P$.
\end{df}

\begin{prp}\label{linkmax}
Let $\vr t$ be a template of regular SMB algebras, $P=(D,V,\vr c)$ a (2,3)-minimal multisorted instance of $CSP(\vr t)$, for all $i\in V$ let $\theta_i$ is the Rees congruence, and let $Size(P)=M$. Let $W\subseteq V$ be the set $\{i\in V:|A_i|=M$ and $\theta_i\neq 1_{\m A_i}\}$. Also, assume that the restricted instance $P|_W$ has a link partition. Then, for any connected component $Q$ of $P|_{W}$, the instance 
$$P_Q=(V,\{B_i:i\in V\},\{(S,R\cap\prod\limits_{i\in S}B_i):(S,R)\in \vr c\}),$$ 
where for all $i\in W$, $B_i:=A_i\cap Q$ and for all $i\in V\setminus W$, $B_i:=A_i$, is an instance of $CSP(\vr t)$ such that for each $Q$, $Size(P_Q)<M$.
\end{prp}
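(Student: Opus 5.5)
We need to show two things: that $P_Q$ is an instance of $CSP(\vr t)$, and that every non-Mal'cev domain of $P_Q$ has fewer than $M$ elements. The plan is to first check that the new domains $B_i$ are subuniverses and that the new constraints are still subdirect, and then to compare sizes.

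\emph{Step 1: domains.} Since $P$ is (2,3)-minimal, it is in particular (1,1)-minimal and cycle consistent. The restriction $P|_W$ inherits (1,1)-minimality, because projections of constraints onto subsets of $W$ keep the same domains. It also inherits cycle consistency, since closed paths in the scope graph of $P|_W$ come from closed paths through constraints of $P$. Assume the scope graph of $P|_W$ is connected; otherwise we work componentwise on scopes. Then the Proposition preceding Proposition~\ref{linksubuniv} shows that a link partition exists exactly when $\Gamma_{P|_W}$ is disconnected. Proposition~\ref{linksubuniv} then says that for each connected component $Q$, the set $Q\cap A_i$ is a congruence class, hence a subuniverse of $\m a_i$. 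By the link partition property it is nonempty and proper for every $i\in W$. So $B_i$ is a nonempty subuniverse of $\m a_i$ for all $i\in V$, and $\m b_i$ lies in $\vr t$ because $\vr t$ is closed under subalgebras.

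\emph{Step 2: constraints.} For $(S,R)\in\vr c$ with $S\cap W\neq\emptyset$, the link partition decomposes $R|_{S\cap W}$ into disjoint blocks, one per component. Hence $R\cap\prod_{i\in S}B_i$ is the union of the tuples whose $W$-part lies in the $Q$-block. Subdirectness onto each $B_i$ with $i\in W$ follows from (1,1)-minimality together with the fact that $R|_{S\cap W}$ splits along components. For $i\in S\setminus W$ the projection may be a proper subset of $A_i$. This is the step I expect to be the main obstacle: strictly, the definition of a multisorted instance requires subdirect constraints. I would handle it by shrinking $B_i$ for $i\notin W$ to these projections and re-running (1,1)-minimality, which only shrinks domains further and keeps them subalgebras. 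That cannot increase the size, so it is harmless for the size bound. If one insists on the literal $B_i=A_i$, the claim should be read as the tightened instance, with empty constraints meaning there is no solution.

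\emph{Step 3: size.} For $i\in W$ we have $|B_i|<|A_i|=M$, since $B_i$ is a proper subset. For $i\notin W$, either $\theta_i=1_{\m a_i}$, so $B_i$ is Mal'cev and does not count, or $|A_i|<M$ by the choice of $W$, and then $|B_i|\le|A_i|<M$. A subalgebra of a Mal'cev domain is Mal'cev. A subalgebra of a non-Mal'cev domain may become Mal'cev, but that only helps. Taking the maximum over non-Mal'cev domains gives $Size(P_Q)<M$, and this holds for every component $Q$.
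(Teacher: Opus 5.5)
Your proof is correct and follows the paper's argument. In both, the hypotheses pass to $P|_W$, the link partition makes $\Gamma_{P|_W}$ disconnected, Proposition~\ref{linksubuniv} makes each $A_i\cap Q$ a subuniverse (proper, since there are at least two components), and the size count is the same.

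Two small remarks. First, you do not need to hedge on connectivity of the scope graph of $P|_W$: $(2,3)$-minimality makes every pair $\{i,j\}\subseteq W$ a scope, so that graph is connected outright. Second, your explicit handling of subdirectness for $i\notin W$, by projecting and re-running $(1,1)$-minimality, is a legitimate tidying of a point the paper simply absorbs into its convention for tightenings.
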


\begin{proof}
Since $P|_W$ has a link partition, the microstructure graph $\Gamma_{P|_W}$ is disconnected, while from (2,3)-minimality of $P$ follows the (2,3)-minimality of $P|_W$, which in turn implies that the scope graph $\Gamma_W$ is connected, and that $\Gamma_{P|_W}$ is (1,1)-minimal and cycle consistent. Thus, by Proposition~\ref{linksubuniv}, for any connected component $Q$ of $\Gamma_{P|_W}$ and any $i\in W$, $A_i\cap Q$ is the domain of a subuniverse of $\m a_i$. Hence, the tightened instance $P_Q$ is an instance of $CSP(\vr t)$. Moreover, for any $i\in W$, $B_i\subsetneq A_i$, so $|B_i|<|A_i|=M$, while for each $i\in V\setminus W$, either $\m B_i=\m A_i$ is a Mal'cev algebra, or $|B_i|=|A_i|<M$. Thus $Size(P_Q)<M$.
\end{proof}

\begin{df}\label{coherentdef}
Let $\vr t$ be a template of regular SMB algebras, $P=(V,D,\vr c)$ a (2,3)-minimal multisorted instance of $CSP(\vr t)$ and $(i,\alpha,\beta)\in\vr i_P$. By $W_{i,\alpha,\beta}$ we denote the set
\[\{j\in V:(\exists(j,\gamma,\delta)\in \vr i_P)(\alpha,\beta)\text{ can not be separated from }(\gamma,
\delta)\text{ wrt. }P\}.\]
$W\subseteq V$ is a {\em coherent set} iff there exists $(i,\alpha,\beta)\in\vr i_P$ so that $W=W_{i,\alpha,\beta}$.
\end{df}

\begin{df}\label{blockmindef}
Let $\vr t$ be a template of regular SMB algebras and $P=(V,D,\vr c)$ a (2,3)-minimal multisorted instance of $CSP(\vr t)$. We say that $P$ is {\em block-2-consistent} if for every coherent set $W$, every $i,j\in W$, if $(\{i,j\},R_{i,j})$ is a constraint relation of $P$ and $(a,b)\in R_{i,j}$, then the restricted instance $P|_W$ has a solution $f$ such that $f(i)=a$ and $f(j)=b$. We say that $P$ is {\em block-minimal} if, for any coherent set $W$, where $P|_W=(W,D',\vr c')$ is the restricted instance, any $(S,R)\in \vr c'$ and any $\mathbf{a}\in R$, there exists a solution $f$ of $P|_W$ such that for all $i\in S$, $f(i)=\mathbf{a}(i)$.
\end{df}

Clearly, in a $(2,3)$-minimal instance, for any coherent set $W$ and any pair of variables $i,j\in W$, there exists a constraint $(\{i,j\},R_{i,j})$ which is a constraint both of $P$ and of $P|_W$. However, block-minimality can be quite a lot stronger, when there are large coherent sets in $P$. We will now prove that any (2,3)-minimal and weakly M-irreducible multisorted instance over regular SMB algebras can be reduced to an equivalent one which is block-minimal. However, for the purpose of tractability, we will prove that even block-2-consistency is sufficient.

The following is our variant of Proposition 22 of \cite{BuSMB}. We have to amend its proof since $P|_W$ might not have a link partition.

\begin{thm}
Let $\vr t$ be a template of regular SMB algebras, let $P=(V,D,\vr c)$ be a (2,3)-minimal and weakly M-irreducible multisorted instance of $CSP(\vr t)$ and let $W$ be a coherent set for $P$. Then there exists an algorithm which inputs $P$ and $W$ and outputs new constraint relations $\vr c'=\{(S,R_S'):(S,R_S)\in \vr c\}$ such that, for all $(S,R_S)\in\vr c$, $\m r_S'\leq \m r_S$ and $R_S'$ consists precisely of all tuples $\mathbf{a}\in R_S$ such that there exists a solution $f$ of the restricted instance $P|_W$ so that $f(i)=\mathbf{a}(i)$ for all $i\in S\cap W$.
\end{thm}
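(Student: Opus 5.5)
The plan is to reduce the computation of $R_S'$ to polynomially many decision questions of the form ``does $P|_W$ have a solution extending a given partial assignment on $S\cap W$'', and to answer each such question by an algorithm that recurses on $Size$. For every $(S,R_S)\in\vr c$ and every $\mathbf{a}\in R_S$ we tighten $P|_W$ by fixing each $i\in S\cap W$ to $\mathbf{a}(i)$. Since $|R_S|$ is bounded by a polynomial in the size of the input (the constraint relations are given explicitly), we only need to decide each such instance in polynomial time. Then $R_S'$ is the set of those $\mathbf{a}$ for which the answer is yes. Being a projection of solutions of a pp-definable instance, $R_S'$ is automatically a subuniverse of $\m r_S$.

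To decide an instance of the form $Q=P|_W$ with some domains fixed, we first run $(2,3)$-minimality and Theorem~\ref{M-irrweaker} on $Q$, viewed as an instance over $\vr t$. This either decides $Q$ or turns it into an equivalent $(2,3)$-minimal, weakly M-irreducible instance $Q'$. If $Size(Q')<Size(P)$, we recurse; the recursion depth is at most $\max|A|$ and the template is fixed, so the total cost is polynomial. Otherwise, let $W'$ be the set of variables of $Q'$ whose domain is a non-Mal'cev algebra of maximal size $M=Size(P)$. By weak M-irreducibility, all domains in $W'$ are unital.

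Next we exploit coherence. Fix $(i,\alpha,\beta)$ with $W=W_{i,\alpha,\beta}$. For every $j\in W'$ there is $(j,\gamma_j,\delta_j)\in\vr i_P$ that cannot be separated from $(\alpha,\beta)$. Since ``cannot be separated'' is an equivalence relation (Corollary~\ref{cantsepisequiv}), Lemma~\ref{alignorseparate} shows that every constraint relation of $P$ is aligned on these coordinates. The key point to check is that the same pairs remain covers and remain mutually unseparable after the reductions, because the domains of maximal size $M$ are unchanged (they are full-size, so tightenings and retractions that keep size $M$ keep the algebra). Granting this, Lemma~\ref{unitallink} applied to $Q'|_{W'}$ gives a link partition. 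Proposition~\ref{linkmax} then splits $Q'$ into the instances $Q'_C$, one per connected component $C$, each with $Size<M$. $Q'$ has a solution iff some $Q'_C$ does, and there are at most $\max|A|$ components per variable, so we recurse on each.

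The main obstacle is the previous step: ensuring that weak M-irreducibility of the reduced instance still gives unital domains on all of $W'$ together with alignment. Theorem~\ref{M-irrweaker} may shrink or retract domains, so we must argue that any domain which still has size $M$ is literally one of the original $\m a_j$, $j\in W$. We must also check that separation with respect to $Q'$ does not differ from separation with respect to $P$ for pairs within $W'$. The intended argument is that separation is witnessed by binary projections, and that these binary projections can only shrink under the reductions. Since non-separability is inherited by subrelations that remain subdirect, the unseparability of pairs carries over. If this fails, the fallback is to redo the alignment argument directly via Lemma~\ref{alignorseparate} on the binary relations of $Q'$.
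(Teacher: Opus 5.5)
Your proposal is correct and uses the same engine as the paper's procedure $CHKCOHSET$. There is one decision problem per tuple $\mathbf a\in R_S$. Each is answered by taking pairwise non-separable covers on the maximal-size non-Mal'cev variables of the coherent set, getting alignment from Lemma~\ref{alignorseparate}, and getting a link partition from unitality and Lemma~\ref{unitallink}. One then splits via Proposition~\ref{linkmax} into instances of smaller $Size$, which are handed to $SOLVE$.

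The difference is the order of operations. The paper computes the components of $\Gamma_{P|_{W'}}$ once, on $P$ itself, where unitality and non-separability hold by hypothesis. It then uses only that every solution of $P|_W$ through $\mathbf a$ has its $W'$-values in a single component. That component is the one containing $\mathbf a(i)$ for $i\in S\cap W'$; when $S\cap W'=\emptyset$ it is one of boundedly many components, all of which are tried in Steps 5--6. So the paper never analyzes the instance after values are fixed.

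You fix values first and then look for the link partition of the re-minimized instance, which is what creates the transfer problem you flagged. That transfer does hold:
\begin{enumerate}
\item A non-Mal'cev domain of size $M$ that survives tightening or a consistent retraction is an untouched original $\m a_j$, because proper subuniverses and non-identity retractions are strictly smaller. Hence it is unital by weak M-irreducibility of $P$.
\item A separating polynomial of a subrelation $R'\subseteq R$ is given by a term with parameters in $R$, so it also separates with respect to $R$. Hence non-separability passes down. Alignment, being a condition on single tuples, passes down even more directly.
\end{enumerate}
Consequently your extra call to Theorem~\ref{M-irrweaker} on the fixed instance is unnecessary; $(2,3)$-minimality suffices.

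Your route handles all tuples uniformly at the price of this transfer argument. The paper's route avoids the transfer at the price of the case split between Step 4 and Steps 5--6.
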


\begin{proof}
The algorithm in question we will call $CHKCOHSET((V,D,\vr c),W)$. The procedure $SOLVE(V',D',\vr c')$ is the main algorithm which solves CSP over regular SMB algebras. We will define this main algorithm in the next subsection, but here we will use it only on instances $(V',D',\vr c')$ of $CSP(\vr t)$ such that $Size(V',D',\vr c')<Size(V,D,\vr c)$. 
\begin{enumerate}
\item[{\bf Step 1.}] Set up the working domains $T_S:=R_S$, for all $(S,R_S)\in \vr c$.
\item[{\bf Step 2.}] If there exists no $i\in W$ such that $\theta_i<1_{\m a_i}$ and $|A_i|=Size(P)$ then
\begin{enumerate}
\item[{\bf Step 2.1.}] For all $(S,R_S)\in \vr c$ and any $\mathbf{a}\in R_S$,\\ if $SOLVE(W,D|_W,\vr c|_W\cup\{(\{i\},\{\mathbf{a}(i)\}):i\in S\cap W\})\})=NO$,\\ set $T_S:=T_S\setminus\{\mathbf{a}\}$.
\item[{\bf Step 2.2.}] Output $\{(S,T_S):(S,R_S)\in\vr c\}$ and stop.
\end{enumerate}
\item[{\bf Step 3.}] Set $W':=\{i\in W:\theta_i<1_{\m a_i}$ and $|A_i|=Size(P)\}$.
\item[{\bf Step 4.}] For any $(S,R_S)\in\vr c$ such that $S\cap W'\neq\emptyset$ and any $\mathbf{a}\in R_S$ do
\begin{enumerate}
\item[{\bf Step 4.1.}] Let $i\in S\cap W'$ and $a:=\mathbf{a}(i)$.
\item[{\bf Step 4.2.}] Compute the connected component of $a$ in $\Gamma_{P|_{W'}}$ and call it $Q$.
\item[{\bf Step 4.3.}] For each $j\in W'$ set $B_j:=A_j\cap Q$,
\item[{\bf Step 4.4}] For each $j\in W\setminus W'$ set $B_i:=A_i$,
\item[{\bf Step 4.5.}] For each $(T,R_T)\in\vr c|_W$\\
\mbox{ }\hspace{-0.7cm}set $R'_T$ as $R_T\cap(\prod_{j\in T}B_j)$ and\\
\mbox{ }\hspace{-0.7cm}set $\vr c':=\{(T,R'_T):(T,R_T)\in\vr c|_W\}\cup\{(\{j\},\{\mathbf{a}(j)\}):j\in S\cap W\}$.
\item[{\bf Step 4.6.}] If $SOLVE(W,\{B_i:i\in W\},\vr c')=NO$,\\ set $T_S:=T_S\setminus\{\mathbf{a}\}$.
\end{enumerate}
\item[{\bf Step 5.}] Fix $t\in W'$.
\item[{\bf Step 6.}] For any $(S,R_S)\in\vr c$ such that $S\cap W'=\emptyset$ and any $\mathbf{a}\in R_S$ do
\begin{enumerate}
\item[{\bf Step 6.1.}] Set $q:=0$
\item[{\bf Step 6.2.}] For any $c\in A_t$ do
\begin{enumerate}
\item[{\bf Step 6.2.1.}] Compute the connected component of $c$ in $\Gamma_{P|_{W'}}$ and call it $Q$.
\item[{\bf Step 6.2.2.}] For each $i\in W'$ set $B_i:=A_i\cap Q$ 
\item[{\bf Step 6.2.3.}] For each $i\in W\setminus W'$ set $B_i:=A_i$.
\item[{\bf Step 6.2.4.}] For each $(T,R_T)\in\vr c|_W$ \\
\mbox{ }\hspace{-1.2cm}set $R_T'$ as $R_T\cap(\prod_{i\in S}B_i)$,\\ 
\mbox{ }\hspace{-1.2cm}set $\vr c':=\{(T,R_T'):(T,R_T)\in\vr c|_W\}\cup\{(\{j\},\{\mathbf{a}(j)\}):j\in S\cap W\}$.
\item[{\bf Step 6.2.5.}] If $SOLVE(W,\{B_i:i\in W\},\vr c')=YES$, set $q:=1$.
\end{enumerate}
\item[{\bf Step 6.3.}] If $q=0$ set $T_S:=T_S\setminus\{\mathbf{a}\}$
\end{enumerate}
\item[{\bf Step 7.}] Output $\{(S,T_S):(S,R_S)\in \vr c\}$ and stop.
\end{enumerate}

We note in passing that the calls to $SOLVE$ in Steps 2.1, 4.6 and 6.2.5 are formally not correct, as the instances which are being solved are most likely not multisorted instances in the sense of Definition~\ref{multisortedCSP}, namely they are not subdirect. What we mean, of course, is that we tighten the instances so that for each $i\in S\cap W$ the domain is tightened to the singleton $\{\mathbf{a}(i)\}$, then run (1,1)-minimality and only then the algorithm SOLVE.

The algorithm first checks whether $Size(P|_W)<Size(P)$. If this is the case it simply checks each pair in $R_{\{i,j\}}$, $i,j\in W$, making a linear (at most $\sum\limits_{i,j\in W}|R_{\{i,j\}}|$ many) number of calls to $SOLVE$ with smaller instance size and stops at the end of Step 2. 

If the procedure reached Step 3, this means that $Size(P|_W)=Size(P)$ and thus there exist some $i\in W$ such that $\m A_i$ is not Mal'cev and which satisfies $|A_i|=Size(P)$. The set of all $i$ for which this is true is denoted as $W'$. Since $P$ is weakly M-irreducible, all $\m a_i$ such that $i\in W'$ are unital. On the other hand, $W'\subseteq W$ and $W$ is a coherent set. Hence, we can select $\overline{\alpha}=\lb\alpha_i:i\in W'\rb$ and $\overline{\beta}=\lb\beta_i:i\in W'\rb$ so that for each $i\in W'$, $(i,\alpha_i,\beta_i)\in\vr i_P$ and for each $i,j\in W'$, $(\alpha_i,\beta_i)$ can not be separated from $(\alpha_j,\beta_j)$ with respect to $P|_{W'}$.

Let $(S,R)\in\vr c|_{W'}$ be any constraint. From Definition~\ref{sep2} and remarks immediately following it, we have that for each $i,j\in S$, $(\alpha_i,\beta_i)$ can not be separated from $(\alpha_j,\beta_j)$ with respect to $R$. Thus, according to Lemma~\ref{alignorseparate}, the restriction of $R$ to $W'$ is $\overline{\alpha\beta}$-aligned. Since all $\m a_i$, $i\in W'$ are unital, according to Lemma~\ref{unitallink}, $P|_{W'}$ has a link partition. Then Proposition~\ref{linkmax} proves that replacing each domain of a variable in $W'$ with its intersection with the connected component $Q$, and tightening the instance $P$ accordingly, constitutes a valid instance of $CSP(\vr t)$, whose size is strictly smaller than $Size(P)$.

Thus Step 4 checks for each $(S,R_S)$ such that $S\cap W'\neq \emptyset$, and each $\mathbf{a}\in R_S$, whether it is possible that a solution $f$ to $P|_W$ satisfies $f(i)=\mathbf{a}(i)$ for all $i\in S\cap W$, while Steps 5-6 perform the same for each $(S,R_S)$ such that $S\cap W'=\emptyset$, and each $\mathbf{a}\in R_S$. Of course, in Steps 5-6 we are forced to check each connected component of the hypergraph $\Gamma_{P|_{W'}}$, since by choosing $\mathbf{a}$ we have not determined which component should the solution of $P|_W$ go through, but there are only at most a constant number of such components (bounded from above by the domain size), so this is not an issue. The running time of computing connected component of an element in the hypergraph $\Gamma_{P|_{W'}}$ is linear, and thus the running time of Step 4, and also of Steps 5-6, is dominated by a linear number of calls to $SOLVE$, always applied to instances of $CSP(\vr t)$ of size smaller than $Size(P)$.
\end{proof}


\subsection{Ensembles and the proof of tractability}

\begin{df}\label{ensembledef}
Let $\vr t$ be a template of regular SMB algebras, $P=(V,D,\vr c)$ a (2,3)-minimal multisorted instance of $CSP(\vr t)$ and for all $i\in V$, let $\theta_i$ be the Rees congruence. Let $\overline{\beta}=\lb \beta_i:i\in V\rb$ be a family of congruences such that for all $i\in V$, $\beta_i\in\cn A_i$ and $\beta_i\leq \theta_i$. We say that $\lb f_{j,\gamma,\delta}:(j,\gamma,\delta)\in \vr i_P\rb$ is a {\em $\overline{\beta}$-ensemble} for $P$ if 
\begin{enumerate}
\item For every $(j,\gamma,\delta)\in \vr i_P$, where $W=W_{i,\gamma,\delta}$ is the coherent set, $f_{i,\gamma,\delta}$ is a solution to the subinstance $P|_W$ such that for all $i\in W$, $f_{j,\gamma,\delta}(i)\in\mathrm{min}(\m a_i)$.
\item For all $(j,\gamma,\delta),(j',\gamma',\delta')\in \vr i_P$ and all $i\in W_{j,\gamma,\delta}\cap W_{j',\gamma',\delta'}$, it holds that $(f_{j,\gamma,\delta}(i),f_{j',\gamma',\delta'}(i))\in\beta_i$.
\item For any $(S,R)\in \vr c$, there exists $\mathbf{a}\in R$ such that for any $i\in S$ and any $(j,\gamma,\delta)\in \vr i_P$ such that $i\in W_{j,\gamma,\delta}$, $(f_{j,\gamma,\delta}(i),\mathbf{a}(i))\in\beta_i$.
\end{enumerate}
\end{df}

\begin{lm}[Lemma 24 of \cite{BuSMB}]\label{tightenensemble}
Let $\vr t$ be a template of regular SMB algebras, $P=(V,D,\vr c)$ a (2,3)-minimal and block-minimal multisorted instance of $CSP(\vr t)$ and for all $i\in V$, let $\theta_i$ be the Rees congruence. Let $\overline{\beta}=\lb \beta_i:i\in V\rb$ be a family of congruences such that for all $i\in V$, $\beta_i\in\cn A_i$ and $\beta_i\leq \theta_i$. Also, let $\overline{\alpha}=\lb \alpha_i:i\in V\rb$ be defined so that for one $i\in V$, let $\alpha_i\prec \beta_i$ in $\cn a_i$, while for all $j\in V\setminus\{i\}$, let $\alpha_j=\beta_j$. If there exists a $\overline{\beta}$-ensemble for $P$, then there exists an $\overline{\alpha}$-ensemble for $P$.
\end{lm}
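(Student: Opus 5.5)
Let $\lb f_{j,\gamma,\delta}\rb$ be a $\overline{\beta}$-ensemble. The plan is to refine it only on the coherent sets containing the special variable $i$ (where $\alpha_i\prec\beta_i$), and to do so by applying collapsing polynomials, as in Bulatov's original argument. Write $\vr w$ for the set of triples $(j,\gamma,\delta)\in\vr i_P$ with $i\in W_{j,\gamma,\delta}$. For triples outside $\vr w$ we keep $f_{j,\gamma,\delta}$ unchanged: conditions (2) and (3) of Definition~\ref{ensembledef} only involve $\alpha_k=\beta_k$ for $k\neq i$ there. For triples in $\vr w$ we need new solutions $g_{j,\gamma,\delta}$ of $P|_{W_{j,\gamma,\delta}}$ whose values at $i$ are pairwise $\alpha_i$-related and which stay $\beta_k$-related to the old ones at every $k$.

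\textbf{Step 1 (the easy case).} If $(\alpha_i,\beta_i)$ is such that the values $f_{j,\gamma,\delta}(i)$, $(j,\gamma,\delta)\in\vr w$, already lie in one $\alpha_i$-class, and the tuples $\mathbf a$ from (3) can be chosen compatibly, we are done. In general, pick a reference value $a:=f_{j_0,\gamma_0,\delta_0}(i)$ for a fixed triple in $\vr w$. Every other value $b=f_{j,\gamma,\delta}(i)$ is $\beta_i$-related to $a$, so either $(a,b)\in\alpha_i$ or $(a,b)\in\beta_i\setminus\alpha_i$.

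\textbf{Step 2 (collapsing on a constraint).} For each constraint $(S,R)$ containing $i$, take the tuple $\mathbf a\in R$ from (3) and, using Lemma~\ref{minR}, replace it by an element of $\mathrm{min}(\m r)$. Then apply Lemma~\ref{collapsingexist} to get an $(\alpha_i,\beta_i)$-collapsing polynomial $h$ of $\m r$ fixing $\mathbf a$.

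\textbf{Step 3 (transport along a coherent set).} For each $(j,\gamma,\delta)\in\vr w$, let $W=W_{j,\gamma,\delta}$ and consider the relation of all solutions of $P|_W$. This is a subdirect product, by block-minimality, and its variables are exactly those in $W$. Transporting polynomials from constraints to this solution relation (again possible by block-minimality and (2,3)-minimality), we obtain a polynomial $h^W$ of the solution space. Set
\[
g_{j,\gamma,\delta}:=d\bigl(f_{j,\gamma,\delta},\,h^W(f_{j,\gamma,\delta}),\,h^W(\mathbf{c})\bigr),
\]
where $\mathbf c$ is a fixed solution whose $i$-th coordinate is $a$. By Corollary~\ref{colltominsingleton}, at coordinates $k$ whose covers are all separable from $(\alpha_i,\beta_i)$, the polynomial $h^W_k$ is constant in $\mathrm{min}(\m a_k)$. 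Hence the Mal'cev term keeps $g(k)\equiv f(k)$ modulo $\beta_k$, since $d$ is Mal'cev on min blocks and $\beta_k\le\theta_k$. At $i$, $h^W_i$ maps onto a minimal set $U$ with $a\in U$, and the $d$-expression sends $f(i)$ into the $\alpha_i$-class of $a$ (argue via the trace structure in $U$, Lemma~\ref{onesubtrace}).

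\textbf{Step 4 (checking the conditions).} Verify (1)–(3) for the new family, using the Mal'cev identities inside $\mathrm{min}$ blocks. Condition (1) needs that $g$ remains a solution with values in $\mathrm{min}$, which holds since polynomials of the solution relation preserve it and min is strongly absorbing. Condition (3) is checked using $h(\mathbf a)=\mathbf a$.

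The main obstacle is Step 3. We must ensure that one collapsing polynomial works simultaneously for all coherent sets containing $i$ and agrees on their overlaps modulo $\beta$. For coordinates $k$ in several coherent sets that cannot be separated from $(\alpha_i,\beta_i)$, Corollary~\ref{colltominsingleton}(1) gives the same minimal set for all those covers. I would first try defining a single polynomial of the full instance on a large subdirect relation built from the constraints (via (2,3)-minimality), and then restricting it to each $W$.
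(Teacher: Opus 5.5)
There is a genuine gap in Step 3, and the same gap affects condition (3). Your analysis of $g_{j,\gamma,\delta}=d\bigl(f,h^W(f),h^W(\mathbf c)\bigr)$ covers only two kinds of coordinates:
\begin{itemize}
\item those $k$ at which every cover is separable from $(\alpha_i,\beta_i)$, where $h^W_k$ is constant and so $g(k)=f(k)$;
\item the coordinate $k=i$.
\end{itemize}
You ignore the third kind: $k\in W_{j,\gamma,\delta}\cap W_{i,\alpha_i,\beta_i}$ with $k\neq i$. There $h^W_k$ is a non-constant idempotent map onto a minimal set, and $g(k)=d\bigl(f(k),h^W_k(f(k)),h^W_k(\mathbf c(k))\bigr)$ depends on $\mathbf c(k)$. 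Since $\mathbf c$ is just some solution of $P|_{W_{j,\gamma,\delta}}$ with $\mathbf c(i)=a$, nothing forces $g(k)\equiv_{\beta_k}f(k)$, so condition (2) of Definition~\ref{ensembledef} can fail. A solution that hits the right $\alpha_i$-class at $i$ and stays in the old $\beta_k$-classes on that intersection is essentially what you are trying to build. Block-minimality only gives you solutions through single constraint tuples, not through such partial assignments. Similarly, in (3) the identity $h(\mathbf a)=\mathbf a$ does not move the $i$-th coordinate into the target $\alpha_i$-class.

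The missing idea is to anchor everything to the one solution $\varphi:=f_{i,\alpha_i,\beta_i}$ of $P|_W$, where $W:=W_{i,\alpha_i,\beta_i}$, and then to glue. An arbitrary reference triple in $\vr w$ will not do, because its coherent set need not contain the coordinates where the collapsing polynomial is non-constant. The construction for $U:=W_{j,\gamma,\delta}\ni i$ with old solution $\psi$ runs as follows.
\begin{itemize}
\item Apply Lemma~\ref{collapsingexist} directly to the solution relation $\m S_U$ of $P|_U$, which is subdirect by block-minimality. This gives a collapsing $h$ with $h(\psi)=\psi$ and $h_i(\varphi(i))\equiv_{\alpha_i}\varphi(i)$. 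Do not transport a polynomial from a constraint: a lifted polynomial need not be collapsing on $S_U$.
\item Set $\psi'(k):=h_k(\varphi(k))$ for $k\in U\cap W$, and $\psi'(k):=\psi(k)$ for $k\in U\setminus W$.
\item For $k\in U\setminus W$, every cover at $k$ is separable from $(\alpha_i,\beta_i)$ with respect to $P$. It is therefore separable with respect to $\m S_U$, because $\mathrm{pr}_{\{i,k\}}S_U=R_{\{i,k\}}$. By Corollary~\ref{colltominsingleton}(2), $h_k$ is then the constant $\psi(k)$.
\item For each constraint $(S,R)$ of $P|_U$, choose $\varphi_1\in R$ agreeing with $\varphi$ on $S\cap W$. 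Then $\psi'|_S=h|_S(\varphi_1)\in R$. Hence $\psi'\in S_U$, even though no single solution of $P|_U$ extending $\varphi|_{U\cap W}$ need exist.
\item On $U\cap W$ you get $\psi'(k)\equiv_{\beta_k}h_k(\psi(k))=\psi(k)$, and at $i$ you get $\psi'(i)\in[\varphi(i)]_{\alpha_i}$. This gives (1) and (2).
\end{itemize}
Condition (3) follows from the same argument with $\m R$ in place of $\m S_U$. Take $h$ collapsing for $\m R$ and fixing the old witness $\mathbf a$, and use $h(\mathbf b)$, where $\mathbf b\in R$ agrees with $\varphi$ on $S\cap W$.

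With this anchoring, a separate collapsing polynomial for each $U$ and for each constraint is enough. The single global polynomial you worry about at the end is not needed.
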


\begin{proof}
Let $\vr m=\lb \varphi_{j,\gamma,\delta}:(j,\gamma,\delta)\in \vr i_P\rb$ be a $\overline{\beta}$-ensemble for $P$. Let $\xi:V\rightarrow \prod \m a_j/\beta_j$ be given by: for any $k\in V$, and any $W=W_{j,\gamma,\delta}$ such that $k\in W$, $\xi(k)=[\varphi_{j,\gamma,\delta}(k)]_{\beta_k}$. By the definition of a $\overline{\beta}$-ensemble for $P$, $\xi$ does not depend on the choices of $(j,\gamma,\delta)\in\vr i_P$. If $\xi(i)$ consists of only one $\alpha_i$-class, then $\vr m$ is an $\overline{\alpha}$-ensemble for $P$, and we are done.

So suppose that $\xi(i)$ contains more than one $\alpha_i$-class and select the $\alpha_i$-class $[\varphi_{i,\alpha_i,\beta_i}]_{\alpha_i}$, call it $B$. We want to construct a $\overline{\beta}$-ensemble $\vr m'=\lb \varphi'_{j,\gamma,\delta}:(j,\gamma,\delta)\in \vr i_P\rb$ such that for all $(j,\gamma,\delta)\in \vr i_P$ for which $i\in W_{j,\gamma,\delta}$ holds, $\varphi'_{j,\gamma,\delta}(i)\in B$. This would say that $\vr m'$ is an $\overline{\alpha}$-ensemble and finish our proof. 

We define $\varphi'_{i,\alpha_i,\beta_i}:=\varphi_{i,\alpha_i,\beta_i}$ and if $(j,\gamma,\delta)\in \vr i_P$ is such that $i\notin W_{j,\gamma,\delta}$, we define $\varphi'_{j,\gamma,\delta}:=\varphi_{j,\gamma,\delta}$. To define $\varphi'_{j,\gamma,\delta}$ such that $i\in W_{j,\gamma,\delta}$, we need to work a little. 

Let us fix some notation. Let $W:=W_{i,\alpha_i,\beta_i}$ and $U:=W_{j,\gamma,\delta}$, where $i\in U\cap W$. Denote by $S_U$, $S_W$ and $S_{U\cap W}$ the solution sets to the restricted instances $P|_U$, $P|_W$ and $P|_{U\cap W}$, respectively. According to block-2-consistency, for each of $X=U,W,U\cap W$, $S_X$ is a subdirect subuniverse of $\prod\limits_{k\in X}\m a_k$ such that for all $k,\ell\in X$, $pr_{k,\ell}S_X=R_{\{k,\ell\}}$. Moreover, by the definition of the restriction of an instance, the projections $\pi_{U\cap W}(S_U)$ and $\pi_{U\cap W}(S_W)$ are both contained in $S_{U\cap W}$. Denote also $\varphi:=\varphi_{i,\alpha_i,\beta_i}$ and $\psi:=\varphi_{j,\gamma,\delta}$. Our immediate goal is to define $\varphi'_{j,\gamma,\delta}\in S_U$.

According to Lemma~\ref{collapsingexist} we can select $f\in\poln{1}{S_U}$ which is an $(\alpha_i,\beta_i)$-collapsing polynomial for $S_U$ such that $f(\psi)=\psi$ and $(f_i(\varphi(i)),\varphi(i))\in\alpha_i$, i.e. $f_i(\varphi(i))\in B$. Define 
$$
\begin{gathered}
\varphi'_{j,\gamma,\delta}(k)=f_k(\varphi(k))\text{ for }k\in W\cap U\text{ and}\\
\varphi'_{j,\gamma,\delta}(k)=\psi(k)\text{ for }k\in U\setminus W.
\end{gathered}
$$
From above, $\varphi'_{j,\gamma,\delta}(i)=f_i(\varphi(i))\in B$. We need to prove that $\varphi'_{j,\gamma,\delta}\in S_U$.

Let $P|_U=(U,\{A_i:i\in U\},\vr c|_U)$ be the restriction of $P$ to $U$ and select any $(S,R)\in \vr c|_U$. Since $pr_{U\cap W}(\varphi)\in S_{U\cap W}$, there exists some $\varphi_1\in R$ such that for all $j\in S\cap V$, $\varphi_1(j)=\varphi(j)$. Since $f\in\poln{1}{S_U}$, then each parameter used in the construction of $f$ is a solution to $P|_U$. Hence the restrictions of those parameters to $S$ must be tuples in $R$. Therefore, the restriction of $f$ to $S$ is in $\poln{1}{r}$. We claim that $f|_S(\varphi_1)=\varphi'_{j,\gamma,\delta}|_S$. This breaks down into the folowing two cases:
\begin{itemize}
\item If $k\in S\cap W$, then $\varphi'_{j,\gamma,\delta}(k)=f_k(\varphi(k))=f_k(\varphi_1(k))$, while
\item If $k\in S\setminus W\subseteq U\setminus V$, then first note that $pr_{i,k}(R)=R_{\{i,k\}}=pr_{i,k}(S_U)$ by (2,3)-minimality and block-2-consistency. We know that for every $(k,\eta,\zeta)\in\vr i_{\m r}$, $(\alpha_i,\beta_i)$ can be separated from $(\eta,\zeta)$ with respect to $\m R$, and therefore for every $(k,\eta,\zeta)\in\vr i_{\m S_U}$, $(\alpha_i,\beta_i)$ can be separated from $(\eta,\zeta)$ with respect to $\m S_U$ (here we consider $\m S_U$ as a subalgebra of $\prod\limits_{k\in U}\m a_k$). Since $f$ is an $(\alpha_i,\beta_i)$-collapsing polynomial of $\m s_U$, according to Corollary~\ref{colltominsingleton} (2), $|f_k(A_k)|=1$, and since $f(\psi)=\psi$, it follows that $f_k(A_k)=\{\psi(k)\}$. Therefore, $\varphi'_{j,\gamma,\delta}(k)=\psi(k)=f_k(\varphi_1(k))$.
\end{itemize}
We have proved that $\varphi'_{j,\gamma,\delta}|_S=f|_S(\varphi_1)$ and since $f|_S\in\poln{1}{r}$ and $\varphi_1\in R$, it follows that $\varphi'_{j,\gamma,\delta}|_S\in R$, so $\varphi'_{j,\gamma,\delta}|_S$ is a solution to $P|_U$, i.e. $\varphi'_{j,\gamma,\delta}\in S_U$. Hence the family $\{\varphi'_{j,\gamma,\delta}:(j,\gamma,\delta)\in \vr i_P\}$ satisfies property $(1)$ of $\overline{\alpha}$-ensembles.

For property (2), we have proved above that, whenever $i\in W_{j,\gamma,\delta}$, then $$\varphi'_{j,\gamma,\delta}(i)\in B=[\varphi_{i,\alpha,\beta}]_{\alpha_i}=[\varphi'_{i,\alpha,\beta}]_{\alpha_i}.$$ 
On the other hand, if $k\neq i$, $k\in W_{j,\gamma,\delta}$ and $i\notin W_{j,\gamma,\delta}$, then $\alpha_k=\beta_k$. Moreover, in this case, 
$$\varphi'_{j,\gamma,\delta}(k)=\varphi_{j,\gamma,\delta}(k)\in\xi(k).$$
If $k\neq i$ and $i,k\in W_{j,\gamma,\delta}$, again we have $\alpha_k=\beta_k$ and this breaks down into the following two cases:
\begin{itemize}
\item If $k\in W_{i,\alpha_i,\beta_i}$, then $\varphi'_{j,\gamma,\delta}(k)=f_k(\varphi(k))\equiv_{\beta_k} f_k(\psi(k))=\psi(k)\in \xi(k)$, while
\item If $k\notin W_{i,\alpha_i,\beta_i}$, then $\varphi'_{j,\gamma,\delta}(k)=\psi(k)\in \xi(k)$.
\end{itemize}
Hence we have proved that $\varphi'_{j,\gamma,\delta}(i)\in B$ and whenever $k\neq i$, $\varphi'_{j,\gamma,\delta}(k)\in \xi(k)$, so family $\{\varphi'_{j,\gamma,\delta}:(j,\gamma,\delta)\in \vr i_P\}$ satisfies property $(2)$ of $\overline{\alpha}$-ensembles. Define $\xi'(i)=B$ and for $k\neq i$, $\xi'(k)=\xi(k)$.

Now suppose that $(S,R)\in\vr c$. We need to find $r\in R$ such that for all $k\in S$, $r(k)\in \xi'(k)$. If $i\notin S$, such an $r$ exists by condition $(3)$ for $\overline{\beta}$-ensembles, since $\xi'|_S=\xi|_S$. So suppose that $i\in S$. We essentially repeat the proof of (1), but with $R$ playing the role of $S_U$. Since $\vr m$ is a $\overline{\beta}$-ensemble, we know that there exists some $\mathbf{a}\in R$ such that $\mathbf{a}(j)\in\xi(j)$ for all $j\in S$. We still denote $W_{i,\alpha_i,\beta_i}$ by $W$. Since $\varphi=\varphi'_{i,\alpha_i,\beta_i}\in 
S_W$, there must exist some $\mathbf{b}\in R|_{S\cap W}$ such that $\mathbf{b}|_{S\cap W}=\varphi|_{S\cap W}$. In particular, $\mathbf{b}(i)\in B$ and for all $j\in S\cap W$, $j\neq i$, we have $\mathbf{b}(j)\in \xi(j)=\xi'(j)$.

Let $f\in\poln{1}{r}$ be the $(\alpha_i,\beta_i)$-collapsing polynomial such that $f(\mathbf{a})=\mathbf{a}$ and $f_i(\mathbf{b}(i))\in B$, where $f$ exists by Lemma~\ref{collapsingexist}. We define $\mathbf{c}=f(\mathbf{b})$. Clearly, $\mathbf{c}\in R$. Moreover,
\begin{itemize}
\item $\mathbf{c}(i)=f_i(\mathbf{b}(i))\in B=\xi'(i)$.
\item For $j\in W\cap S$ and $j\neq i$, we note that $\mathbf{b}(j),\mathbf{a}(j)\in \xi(j)=\xi'(j)$, and hence $$\mathbf{c}(j)=f_j(\mathbf{b}(j))\equiv_{\alpha_j}f_j(\mathbf{a}(j))=\mathbf{a}(j)\in\xi'(j).$$
\item Finally, for $j\in S\setminus W$, and every $(j,\gamma,\delta)\in\vr i_{\m r}$, $(\alpha_i,\beta_i)$ can be separated from $(\gamma,\delta)$ with respect to $\m r$. Since $f$ is an $(\alpha_i,\beta_i)$-collapsing polynomial of $\m r$, according to Corollary~\ref{colltominsingleton} (2), $|f_j(A_j)|=1$, and since $f(\mathbf{a})=\mathbf{a}$, it follows that $f_k(A_j)=\{\mathbf{a}(k)\}$. Therefore, $$\mathbf{c}(j)=f_j(\mathbf{b}(j))=\mathbf{a}(j)\in \xi'(j).$$
\end{itemize}
\end{proof}

\begin{thm}\label{blockminsoln}
Let $\vr t$ be a template of regular SMB algebras, $P=(V,D,\vr c)$ a (2,3)-minimal and block-2-consistent multisorted instance of $CSP(\vr t)$ and assume that all constraint relations of $P$ are nonempty. Then $\vr t$ has a solution.
\end{thm}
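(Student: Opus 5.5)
The plan is the standard ensemble refinement. We start with the coarsest family $\overline{\beta}=\lb\theta_i:i\in V\rb$ of Rees congruences and build a $\overline{\theta}$-ensemble. We then refine one variable and one covering step at a time with Lemma~\ref{tightenensemble}, until every $\beta_i$ is $0_{\m a_i}$. From a $\overline{0}$-ensemble we read off the solution.

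\textbf{Base case: a $\overline{\theta}$-ensemble.} For each $(j,\gamma,\delta)\in\vr i_P$ with coherent set $W=W_{j,\gamma,\delta}$, we need a solution of $P|_W$ lying in $\prod_{i\in W}\mathrm{min}(\m a_i)$.
\begin{enumerate}
\item Block-2-consistency, together with nonemptiness of the constraints, gives at least one solution $g$ of $P|_W$.
\item The solution set $S_W$ is subdirect by block-2-consistency. Lemma~\ref{minR} then gives $\mathrm{min}(\m s_W)=S_W\cap\prod_{i\in W}\mathrm{min}(\m a_i)\neq\emptyset$, obtained by meeting enough solutions.
\item Take $f_{j,\gamma,\delta}$ to be any element of this set. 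Condition (1) of Definition~\ref{ensembledef} then holds.
\item Condition (2) holds because $\theta_i$ identifies all elements of $\mathrm{min}(\m a_i)$.
\item For condition (3), take any $\mathbf{r}\in R$ and $\mathbf{b}\in\mathrm{min}(\m r)$, which exists by Lemma~\ref{minR}. Then $\mathbf{a}=\mathbf{b}$ has all coordinates in the minimal blocks, hence is $\theta_i$-related to the ensemble values.
\end{enumerate}

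\textbf{Refinement.} Fix maximal chains $0_{\m a_i}=\eta_{i,0}\prec\dots\prec\eta_{i,k_i}=\theta_i$. Walk down these chains one variable at a time: from the current $\overline{\beta}$, lower a single coordinate $\beta_i=\eta_{i,s}$ to $\alpha_i=\eta_{i,s-1}$. Apply Lemma~\ref{tightenensemble} at each step. After $\sum_i k_i$ steps we reach a $\overline{0}$-ensemble $\lb f_{j,\gamma,\delta}\rb$.

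\textbf{Main obstacle.} Lemma~\ref{tightenensemble} is stated for block-minimal $P$, while we assume only block-2-consistency. So we must re-check its proof. On inspection, the proof uses block-minimality only for two facts:
\begin{enumerate}
\item each $S_X$ for $X=U,W,U\cap W$ is subdirect with $\mathrm{pr}_{k,\ell}S_X=R_{\{k,\ell\}}$;
\item the equality $\mathrm{pr}_{i,k}R=\mathrm{pr}_{i,k}S_U$, which transfers separability from $\m r$ to $\m s_U$ via Definition~\ref{sep2}.
\end{enumerate}
Both follow from block-2-consistency and $(2,3)$-minimality alone. The remaining facts used are also available here: collapsing polynomials exist on $\m s_U$ and on $\m r$ by Lemma~\ref{collapsingexist}, and Corollary~\ref{colltominsingleton}(2) applies. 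I would re-verify each use carefully; I expect this, rather than the base case, to be the delicate part.

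\textbf{Extracting the solution.} With all $\beta_i=0$, condition (2) of Definition~\ref{ensembledef} says the values $f_{j,\gamma,\delta}(i)$ agree for all triples whose coherent set contains $i$. So they define a partial map $\xi$ on $V_0=\bigcup W_{j,\gamma,\delta}$ with values in the minimal blocks. Every $i$ with $\theta_i\neq0_{\m a_i}$ lies in $V_0$.

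Condition (3) gives, for each $(S,R)$, a tuple $\mathbf{a}\in R$ with $\mathbf{a}(i)=\xi(i)$ for $i\in S\cap V_0$. The variables outside $V_0$ all have $\theta_i=0$, i.e.\ $|\mathrm{min}(\m a_i)|=1$.
\begin{enumerate}
\item If we choose $\mathbf{a}\in\mathrm{min}(\m r)$ (by meeting with a minimal tuple, which preserves the $V_0$ coordinates since they are already minimal), then every coordinate $i\notin V_0$ equals the unique element of $\mathrm{min}(\m a_i)$.
\item Defining $\xi(i)$ to be that element for $i\notin V_0$, every restriction $\xi|_S$ lies in $R$.
\end{enumerate}
Hence $\xi$ is a solution of $P$.
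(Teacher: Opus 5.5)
Your proposal is correct and follows the paper's proof: build a $\overline{\theta}$-ensemble from minimal-block solutions of the restrictions $P|_W$ to coherent sets, iterate Lemma~\ref{tightenensemble} down to a $\overline{0}$-ensemble, and read off the solution. Your two refinements are sound and worth keeping. First, the proof of Lemma~\ref{tightenensemble} only uses block-2-consistency; the paper applies it under that hypothesis without comment. Second, your treatment of variables with $\theta_i=0_{\m a_i}$, which lie in no coherent set, fills a small hole in the paper's last step: there the paper selects some $(i,\alpha,\beta)\in\vr i_P$ for every $i\in V$, although no such triple exists for these $i$.
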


\begin{proof}
For all $i\in V$, let $\theta_i$ be the Rees congruence of $\m a_i$ and let $\overline{\theta}=\lb \theta_i:i\in V\rb$. First we prove that there exists a $\overline{\theta}$-ensemble.

Select any $(i,\alpha,\beta)\in\vr i_P$ and denote by $W:=W_{i,\alpha,\beta}$. Using block-2-consistency of $P$, select $f_{i,\alpha,\beta}$ which is a solution of $P|_W$ and such that for all $j\in W_{i,\alpha,\beta}$, $f_{i,\alpha,\beta}(j)\in \mathrm{min}(\m a_j)$. We will prove that $\lb f_{i,\alpha,\beta}:(i,\alpha,\beta)\in \vr i_P\rb$ is a $\overline{\theta}$-ensembie.

Clearly, $f_{i,\alpha,\beta}$ is a solution of the subinstance $P|_{W}$ such that for all $j\in W$, $f_{i,\alpha,\beta}(j)\in \mathrm{min}(\m a_j)$. Moreover, for any $(i,\alpha,\beta),(i',\alpha',\beta')\in\vr i_P$ and any $j\in W_{i,\alpha,\beta}\cap W_{i',\alpha',\beta'}$, we have that $f_{i,\alpha,\beta}(j)$ and $f_{i',\alpha',\beta'}(j)$ are both in $\mathrm{min}(\m a_j)$, so $(f_{i,\alpha,\beta}(j),f_{i',\alpha',\beta'}(j)\in \theta_j$ by the definition of the Rees congruence. Finally, if $(S,R)\in \vr c$, select some $\mathbf{a}\in\mathrm{min}(R)$. According to Lemma~\ref{minR}, for all $j\in S$, $\mathbf{a}(j)\in\mathrm{min}(\m a_j)$. Since for any $(i,\alpha,\beta)\in\vr i_P$ such that $j\in W_{i,\alpha,\beta}$, $f_{i,\alpha,\beta}(j)\in \mathrm{min}(\m a_j)$, we conclude that $(f_{i,\alpha,\beta}(j),\mathbf{a}(j))\in\gamma_j$, by the definition of the Rees congruence. As all three conditions of Definition~\ref{ensembledef} are fulfilled, we conclude that $\lb f_{i,\alpha,\beta}:(i,\alpha,\beta)\in \vr i_P\rb$ is a $\overline{\theta}$-ensemble for $P$.

Now we can successively apply Lemma~\ref{tightenensemble} to obtain $\overline{\gamma}$-ensembles for $P$ with ever smaller congruences $\gamma_i$, until inductively we get that there exists a $\overline{0}$-ensemble for $P$, where $\overline{0}:=\lb 0_{\m a_i}:i\in V\rb$. Let $\lb g_{i,\alpha,\beta}:(i,\alpha,\beta)\in \vr i_P\rb$ be a $\overline{0}$-ensemble for $P$.

But then $P$ has a solution $f$ defined by: for each $i\in V$, select $(i,\alpha,\beta)\in \vr i_P$ arbitrarily and define $f(i):=f_{i,\alpha,\beta}(i)$. According to Definition~\ref{ensembledef} (2), $f(i)$ is well-defined, while according to Definition~\ref{ensembledef} (3), $f$ is a solution of the instance $P$.
\end{proof}

\begin{cor}
Let $\vr t$ be a template of regular SMB algebras. Then $CSP(\vr t)$ is tractable.
\end{cor}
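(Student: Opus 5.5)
We describe a recursive algorithm $SOLVE$ and prove by induction on $Size(P)$ (Definition~\ref{sizedef}) that it runs in polynomial time and answers correctly. The template $\vr t$ is finite, so $Size(P)$ takes boundedly many values and the recursion depth is a constant depending only on $\vr t$. Given an instance $P$ of $CSP(\vr t)$, we first apply the $(2,3)$-minimality algorithm and reject if some constraint relation becomes empty.

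If $Size(P)$ is undefined, that is, if every domain is Mal'cev (its Rees congruence is the full relation), we solve $P$ directly with the Mal'cev algorithm. This is the base case.

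Otherwise we invoke Theorem~\ref{M-irrweaker}, in its multisorted form, to solve $P$ outright or pass to an equivalent weakly M-irreducible instance. The elimination argument behind it (Corollary~\ref{SMBeliminated} and Lemma~\ref{l:withC}) only makes recursive calls to $SOLVE$ on decompositions $t(P)$. The domains of $t(P)$ are images $t_a(A_i)$ of non-surjective retractions, hence smaller, so these calls either have strictly smaller size or shrink the largest non-Mal'cev domains. We would add a secondary induction on the number of variables carrying domains of size $Size(P)$ to handle this.

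Next we enforce block-2-consistency. For every coherent set $W$ (there are at most $|V|\cdot|A|^2$ of them, computable from the separation relation of Definition~\ref{sep2}) we run $CHKCOHSET(P,W)$ from the preceding theorem, replace the constraints by the output, and restore $(2,3)$-minimality and weak M-irreducibility. Each pass either removes a tuple or shrinks a domain, so we reach a fixed point after polynomially many passes. At the fixed point $P$ is $(2,3)$-minimal and block-2-consistent, and by Theorem~\ref{blockminsoln} it has a solution exactly when all constraint relations are nonempty. The reductions preserve solutions, because $CHKCOHSET$ deletes only tuples that extend to no solution of $P|_W$, and every solution of $P$ restricts to a solution of $P|_W$. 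Hence the output is correct.

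The main obstacle is the polynomial bound on the recursion. $CHKCOHSET$ calls $SOLVE$ only on instances of strictly smaller size (Proposition~\ref{linkmax}), so each level of the recursion costs a polynomial factor. Because the recursion has constant depth, the total running time is a constant-degree polynomial. The delicate point, and the one to check first, is that the reduction of Theorem~\ref{M-irrweaker} never recurses on an instance of the same size with the same set of maximal domains.
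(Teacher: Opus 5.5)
Your algorithm is the paper's. It reduces to a $(2,3)$-minimal, weakly M-irreducible instance via Theorem~\ref{M-irrweaker}, then runs $CHKCOHSET$ over all coherent sets and re-minimizes until nothing changes, recursing when $Size$ drops, and concludes with Theorem~\ref{blockminsoln} at the fixed point. Your explicit Mal'cev base case is something the paper needs but leaves implicit. If you take Theorem~\ref{M-irrweaker} as a black box, exactly as the paper does, your argument is complete. The problem is only in the paragraph where you open that theorem up, because $SOLVE$ and the elimination procedure call each other.

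There your justification does not work as written.

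\textbf{The non-surjectivity claim is false.} It is not true that every $t_a$ is non-surjective. On a unital domain of size $Size(P)$, $t_{1_i}$ is the identity, and this is exactly why $t(P)$ can have the same size as $P$.

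\textbf{The secondary induction does not give constant depth.} Your induction on the number of variables carrying domains of size $Size(P)$ does not yield the constant recursion depth you invoke afterwards. That number is bounded only by $|V|$. Each decomposition multiplies the number of variables by up to $\max_i|A_i|$, so $|V|$ nested decompositions would be exponential.

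\textbf{The missing observation: one level of decomposition suffices.}
\begin{enumerate}
\item Take a regular SMB domain that is not unital: it has no largest block, or a largest block with at least two elements, or a largest block $\{b\}$ with $x\mapsto b\wedge x$ not a permutation. On such a domain every $t_a$ is non-surjective; this is what the proof of Corollary~\ref{SMBeliminated} shows.
\item Mal'cev domains collapse to singletons in $t(P)$.
\item So the only non-Mal'cev domains of size $Size(P)$ in $t(P)$ are the copies $A_{i,1_i}=A_i$ of the unital ones. Hence $t(P)$ is already weakly M-irreducible, or it has smaller size.
\item This persists after adding the unary constraints of Lemma~\ref{l:withC} and enforcing consistency. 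It also holds for the instance in which all non-unital domains of maximal size are replaced at once by $\Sg(C)$, which is a proper subalgebra.
\end{enumerate}

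Consequently:
\begin{itemize}
\item every call made by the elimination lands in the part of $SOLVE$ that recurses only on strictly smaller sizes;
\item the retraction loop of Lemma~\ref{l:withC} is an iteration of at most $\sum_i|A_i|$ rounds, not a recursion;
\item the recursion depth is at most two per value of $Size$, hence bounded by a constant depending only on $\vr t$.
\end{itemize}
Equivalently, induct over the finitely many subtemplates, as Definition~\ref{defeliminated} intends, rather than over a count of variables.
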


\begin{proof}
The main algorithm, $SOLVE(P)$, inputs the instance $P=(V,D,\vr c)$ of $\vr t$ and performs the following steps:
\begin{enumerate}
\item[{\bf Step 1.}] Replace $P$ with an equivalent $(2,3)$-minimal and weakly M-irredu\-cible instance, then set $M:=Size(P)$.
\item[{\bf Step 2.}] Flag all $(i,\alpha,\beta)\in \vr i_P$ as 0.
\item[{\bf Step 3.}] For any $(i,\alpha,\beta)\in \vr i_P$ with flag 0 do
\begin{enumerate}
\item[{\bf Step 3.1.}] Set $W:=\{i\}$ and flag $(i,\alpha,\beta)$ as 1.
\item[{\bf Step 3.2.}] For any $(j,\gamma,\delta)\in \vr i_P$ do
\begin{enumerate}
\item[{\bf Step 3.2.1.}] If $(\alpha,\beta)$ can't be separated from $(\gamma,\delta)$ wrt. $P$, then set $W:=W\cup\{j\}$ and flag $(j,\gamma,\delta)$ as 1.
\end{enumerate}
\item[{\bf Step 3.3.}] $CHKCOHSET(P,W)$.
\item[{\bf Step 3.4.}] If for any $i,j\in W$, $T_{\{i,j\}}=\emptyset$, output $NO$ and stop.
\item[{\bf Step 3.5.}] If for any $i,j\in W$, $|T_{\{i,j\}}|<|R_{\{i,j\}}|$ then
\begin{enumerate}
\item[{\bf Step 3.5.1.}] Update $P$ with $R_{\{i,j\}}:=T_{\{i,j\}}$ for all $i,j\in W$.
\item[{\bf Step 3.5.2.}] Perform $(2,3)$-minimality on $P$.
\item[{\bf Step 3.5.3.}] If $Size(P)<M$, then $SOLVE(P)$ and stop.
\item[{\bf Step 3.5.4.}] Go to Step 2.
\end{enumerate}
\item[{\bf Step 3.6.}] Go to the next element of $\vr i_P$ in Step 3.
\end{enumerate}
\item[{\bf Step 4.}] Output '$YES$' and stop.
\end{enumerate}
After initially tightening the instance to a $(2,3)$-minimal and weakly M-irreducible one (which can be done according to Theorem~\ref{M-irrweaker}), the algorithm $SOLVE$ next computes all coherent sets. After applying the procedure $CHKCOHSET$ to $P$ and a coherent set $W$, the algorithm either concludes there is no solution since even $P|_W$ has no solution, or the instance $P$ is tightened to an instance of smaller size which can recursively be solved, or the instance $P$ is tightened to a same-sized one (which thus must still be weakly M-irreducible, while $(2,3)$-minimality has been established in the previous step) and the whole checking has to start over, or the instance is unchanged which means that $P|_W$ has a solution through any edge. If the procedure reached Step 4, this means that for every coherent set $W$, $P|_W$ has a solution through any edge, i.e. that $P$ is block-2-consistent. Then by Theorem~\ref{blockminsoln}, $P$ has a solution.

There is a linear number of times an instance can be tightened, and each coherent set is checked exactly once if there are no tightenings, so there is at most a quadratic number of applications of $CHKCOHSET$. Moreover, checking for separation between covering pairs in $\vr i_P$ takes a constant amount of time, so the running time of the algorithm $SOLVE$ is polynomial.
\end{proof}

\section{Concluding remarks}

We have (re)proved tractability of the Constraint Satisfaction Problem over SMB algebras in two ways. Along the way, we uncovered several similarities between the Dichotomy Theorem proofs by Bulatov and by Zhuk. Firstly, Zhuk's irreducibility and Bulatov's block minimality are related properties, they serve the same purpose in the two proofs, and it is easy to find a common generalization. Secondly, there is the benefit one gets from going to a term reduct, like we did by going from an SMB algebra to its reduct which is a regular SMB algebra. This idea was taken to its logical extreme in the paper \cite{dreamteam} by L. Barto, Z. Brady, A.Bulatov, M. Kozik and D. Zhuk, which introduced the idea of minimal Taylor algebras. In minimal Taylor algebras the approaches of Bulatov and Zhuk are connected and there we may find a fertile ground to generalize out idea from Theorem~\ref{Aljareduce}. The main drawback is that we still lean on Theorem~\ref{Zhlemma}, the only proof of which uses the full proof of the Dichotomy Theorem by Zhuk. So our first open problem is

\begin{prb}\label{prb1}
Prove Theorem~\ref{Zhlemma} without resorting to the full power of Zhuk's dichotomy proof.
\end{prb}

In minimal Taylor algebras, the subset $\mathsf{umax}(\m a)$, which is the sole sink strong component of Bulatov's directed graph of the algebra, is actually a binary absorbing subuniverse. Using Theorem \ref{Zhlemma}, we may assume that in a multisorted instance, $\m a_i=\mathsf{umax}(\m a_i)$ for each domain of a variable $\m a_i$. To move further in using our proof as a template, it would be beneficial that the subuniverses $R\leq \prod\limits_{i=1}^k\m a_i$ were somehow regular. In our case of regular SMB algebras we have $\mathsf{umax}(\m a)=\mathrm{min}(\m a)$, and therefore the relations $R$ we are interested in are subuniverses of products of Mal'cev algebras, which are very well-behaved. Bulatov proves something similar for $R\leq \prod\limits_{i=1}^k\m a_i$ in the case $\m a_i=\mathsf{amax}(\m a_i)$, where $\mathsf{amax}(\m a_i)\subseteq \mathsf{umax}(\m a_i)$ is a well-behaved subset. We thus formulate our second problem, solving which (together with Problem~\ref{prb1}) would probably simplify the proof of the Dichotomy Theorem:

\begin{prb}\label{prb2}
Either prove that the multisorted instances of CSP over minimal Taylor algebras can be reduced to the $\mathsf{amax}(\m a_i)$, or develop a theory of subuniverses of products of minimal Taylor algebras all of which satisfy the condition $\m a_i=\mathsf{umax}(\m a_i)$.
\end{prb}

\section*{Acknowledgement}

We owe great thanks to Professor Andrei Bulatov for pointing out one mistake in our proofs and several other helpful suggestions and hints.

\section*{Declarations}

\subsection*{Author contributions}

This is original research, and all authors have contributed to it. The results included in this paper have not been submitted to, or published in, other journals.

\subsection*{Ethical statement}

Ethical approval is not needed for this research since it is purely theoretical.

\subsection*{Conflict of interest statement}

The authors declare that they have no conflicts of interest. 

\subsection*{Data availability statement}

Data sharing is not applicable to this article as datasets were neither generated nor analyzed.

\subsection*{Funding statement}

We have listed all sources of financial support in the first page.


\begin{thebibliography}{99}
\bibitem{barto-collapsewidth}
	Barto, L.:
	The collapse of the bounded width hierarchy.
	J. Logic Comput.
	{\bf 26},
	923--943
	(2016)

\bibitem{Barto-Valerioteconj} 
	Barto, L.:
	Finitely related algebras in congruence modular varieties have few subpowers.
	J. Eur. Math. Soc.
	{\bf 20},
	1439--1471
	(2018)
	
\bibitem{dreamteam} 	
	Barto, L., Brady, Z, Bulatov, A., Kozik, M., Zhuk, D.: 
	Unifying the Three Algebraic Approaches to the CSP via Minimal Taylor Algebras, manuscript.
	\url{https://arxiv.org/abs/2104.11808}

\bibitem{bk} 	
	Barto, L., Kozik, M.: 
	Constraint satisfaction problems solvable by local consistency methods.
	Journal of the ACM 
	{\bf 61} 1:03,
	19 pp., 
	(2014)

\bibitem{BIMMVW} 
	Berman, J., Idziak, P., Markovi\'c, P., McKenzie, R., Valeriote, M., Willard, R.:
	Varieties with few subalgebras of powers.
	Trans. Amer. Math. Soc.
	{\bf 362}
	1445--1473
	(2010)

\bibitem{3el} Bulatov, A.:
	A dichotomy theorem for constraints on a three-element set.
	J. ACM 
	{\bf 53},
	66--120
	(2006)

\bibitem{BulatovMalcev}
	Bulatov, A.:
	Complexity of Maltsev Constraints. 
	[Russian]
	Algebra i Logika, 
	{\bf 45}
	655--686
	(2006) 

\bibitem{Bulatovcons} 
	Bulatov, A.:
	Complexity of Conservative Constraint Satisfaction Problems.
	ACM Trans. Comput. Logic 
	{\bf 12} 
	Article no. 24
	66pp.
	(2011)

\bibitem{BuSMB} 
	Bulatov, A.:
	Constraint Satisfaction Problems over semilattice block Mal'tsev algebras,
	Information and Computation {\bf 268}
	Article no. 104437,
	14 pp., 
	(2019)
	
\bibitem{BulatovGraph1} Bulatov, A.:
	Local structure of idempotent algebras I, manuscript.
	\url{https://arxiv.org/abs/2006.09599}

\bibitem{BulatovGraph2}
	Bulatov, A.:
	Local structure of idempotent algebras II, manuscript.
	\url{https://arxiv.org/abs/2006.10239}

\bibitem{BulatovGraph3}
	Bulatov, A.:
	Graphs of relational structures: restricted types.
	In: Proc. 31th IEEE Ann. Symp. on Logic in Comput. Sci. (LICS)
	New York, USA, 2016,
	pp. 642--651.
	doi: 10.1145/2933575.2933604
	IEEE Computer Society,
	ISBN:978-1-4503-4391-6

\bibitem{Buconf} Bulatov, A.: 
	A dichotomy theorem for nonuniform CSPs
	In: Proc. 2017 IEEE 58th Annual Symp. on Foundations of Comput. Sci. (FOCS),
	Berkeley, CA (USA, October 2017),
	pp. 319-330, doi: 10.1109/FOCS.2017.37,
	IEEE Computer Society Conference Publishing Service
	Los Alamitos, CA.

\bibitem{Budich}
	Bulatov, A.:
	A dichotomy theorem for nonuniform CSPs, manuscript.
	\url{https://arxiv.org/abs/1703.03021}

\bibitem{Bulect}
	Bulatov, A.:
	Constraint Satisfaction Problems: Complexity and Algorithms, 
	In: Klein, S., Martín-Vide, C., Shapira, D. (eds) 
	Language and Automata Theory and Applications. LATA 2018. 
	Ramat Gan, (Israel, April 2018),	
	pp. 1--25, doi: 10.1007/978-3-319-77313-1
	Lecture Notes in Comput. Sci., 
	vol 10792, (2018)
	Springer, New York.

\bibitem{BulatovDalmau}
	Bulatov A., Dalmau, V.:
	Mal'tsev constraints are tractable.
	SIAM J. Comput.
	{\bf 36}
	16--27
	(2006)

\bibitem{BJK}
	Bulatov, A., Jeavons P., Krokhin, A.,
	Classifying the complexity of constraints using finite algebras.
	SIAM J. Comp.
	{\bf 34}
	720--742
	(2005)

\bibitem{BDJN}
	Bulin, J., Deli\'c, D., Jackson M., Niven, T.:
	A finer reduction of constraint problems to digraphs.
	Log. Meth. Comput. Sci.
	{\bf 11}, 
	Article no. 18,
	33 pp.
	(2015)

\bibitem{burris-sank} 
	Burris, S., Sankappanavar, H.P.:
	A course in universal algebra.
	Graduate Texts in Mathematics, vol. 78.
	Springer,
	New York
	(1981)

\bibitem{SMB1} 
	\Dj api\'{c}, P., Markovi\'{c}, P., McKenzie, R., Proki\'{c} A.:
	SMB Algebras I: On the variety of SMB algebras.
	manuscript.

\bibitem{FV}
	Feder, T., Vardi, M. Y.:
	The computational structure of monotone monadic SNP and constraint satisfaction: A study through Datalog and group theory. 
	SIAM J. Comput.
	{\bf 28}
	57--104 (1999)

\bibitem{tct} 
	Hobby, D., McKenzie, R.:
	The structure of finite algebras.
	Contemporary Mathematics, vol. 76.
	American Mathematical Society, Providence (1988)

\bibitem{IMMVW}
 	Idziak, P., Markovi\'{c}, P., McKenzie, R., Valeriote, M., Willard, R.:
	Tractability and learnability arising from algebra with few subpowers. 
	SIAM J. Comput.
	{\bf 39}, 
	3023--3037 (2010)

\bibitem{J}
	Jeavons, P. G.:
	On the algebraic structure of combinatorial problems.
	Theor. Comp. Sci. {\bf 200}
	185--204
	(1998)

\bibitem{K}
	Kun, G.:
	Constraints, {MMSNP}, and Expander Relational Structures
	Combinatorica
	{\bf 33}
	335--347
	(2013)

\bibitem{RP}
	Markovi\'{c}, P., McKenzie, R.:
	On the Constraint Satisfaction Problem over semilattices of Mal'cev blocks.
	(early version untitled),
	manuscript.

\bibitem{MMontop}
	M. Mar\'{o}ti,
	Maltsev on top.
	manuscript,
	\url{http://www.math.u-szeged.hu/~mmaroti/pdf/200x%20Maltsev%20on%20top.pdf}.

\bibitem{M2}
	Mar\'{o}ti, M.:
	Tree on top of Maltsev, manuscript.
	\url{http://www.math.u-szeged.hu/~mmaroti/pdf/200x%20Tree%20on%20top%20of%20Maltsev.pdf}

\bibitem{MM}
	Mar\'{o}ti, M., McKenzie, R. N.:
	Existence theorems for weakly symmetric operations.
	Algebra Universalis
	{\bf 59}
	463--489
	(2008)

\bibitem{alvin} 
	McKenzie, R., McNulty, G., Taylor, W.:
	Algebras, lattices, varieties. Vol. I.
	The Wadsworth \& Brooks/Cole Mathematics Series,
	Wadsworth \& Brooks/Cole Advanced Books \& Software, Monterey (1987)

\bibitem{Zhconf} Zhuk, D.: 
	A proof of CSP dichotomy conjecture, 
	In: Proc. 2017 IEEE 58th Annual Symp. on Foundations of Comput. Sci. (FOCS),
	Berkeley, CA (USA, October 2017),
	pp. 331-342, doi: 10.1109/FOCS.2017.38,
	IEEE Computer Society Conference Publishing Service
	Los Alamitos, CA.

\bibitem{Zhdich} Zhuk, D.:
	A Proof of the CSP Dichotomy Conjecture,
	Journal of the ACM 
	{\bf 67} 5:30,
	78 pp., 
	(2020)
\end{thebibliography}
\end{document}